\documentclass[letter,11pt]{article}
\pdfoutput=1

\usepackage{tcs}

\newcommand{\snote}[1] {} %

\bibliography{hopkins-lifetime-refs}
\bibliography{main}

\begin{document}

\title{Additive Approximation Schemes for \\
Low-Dimensional Embeddings}

\author{
Prashanti Anderson\footnote{Supported by NSF award no. 2238080 and Google.}\\
\texttt{paanders@csail.mit.edu} \\
MIT
\and
Ainesh Bakshi\footnote{Supported by the NSF TRIPODS program (award DMS-2022448).} \\
\texttt{aineshbakshi@nyu.edu} \\
NYU
\and
Samuel B. Hopkins\footnote{Supported by NSF award no. 2238080 and Google.} \\
\texttt{samhop@mit.edu} \\
MIT
}
\maketitle

\begin{abstract}
We consider the task of fitting low-dimensional embeddings to high-dimensional data.
In particular, we study the $k$-Euclidean Metric Violation problem (\textsf{$k$-EMV}), where the input is $D \in \mathbb{R}^{\binom{n}{2}}_{\geq 0}$ and the goal is to find the closest vector $X \in \mathbb{M}_{k}$, where $\mathbb{M}_k \subset  \mathbb{R}^{\binom{n}{2}}_{\geq 0}$ is the set of all $k$-dimensional Euclidean metrics on $n$ points, and closeness is formulated as the following optimization problem, where $\| \cdot \|$ is the entry-wise $\ell_2$ norm:
\begin{equation*}
    \OPT_{\textrm{EMV}} = \min_{X \in \mathbb{M}_{k} } \norm{ D - X }_2^2\,.
\end{equation*}
Cayton and Dasgupta~\cite{cayton2006robust} showed that this problem is NP-Hard, even when $k=1$. Dhamdhere~\cite{dhamdhere2004approximating} obtained a $\bigO{\log(n)}$-approximation for \textsf{$1$-EMV} and leaves finding a PTAS for it as an open question (reiterated recently by Lee~\cite{lee2025mincsp}).
Although \textsf{$k$-EMV} has been studied in the statistics community for over 70 years, under the name ``multi-dimensional scaling,'' there are no known efficient approximation algorithms for $k > 1$, to the best of our knowledge. 

We provide the first polynomial-time additive approximation scheme for \textsf{$k$-EMV}.
In particular, we obtain an embedding with objective value $\OPT_{\textrm{EMV}} + \eps \norm{D}_2^2$ in $(n\cdot B)^{\poly(k, \eps^{-1})}$ time, where each entry in $D$ can be represented by $B$ bits.
We believe our algorithm is a crucial first step towards obtaining a PTAS for \textsf{$k$-EMV}.
Our key technical contribution is a new analysis of \emph{correlation rounding} for Sherali-Adams / Sum-of-Squares relaxations, tailored to low-dimensional embeddings.
We also show that our techniques allow us to obtain additive approximation schemes for two related problems: a weighted variant of \textsf{$k$-EMV} and $\ell_p$ low-rank approximation for $p>2$.

\end{abstract}
\thispagestyle{empty}

\newpage
\thispagestyle{empty}

\tableofcontents

\newpage

\setcounter{page}{1}

\section{Introduction}

\emph{Dimensionality reduction} aims to find low-dimensional approximations to high-dimensional datasets while preserving the geometric structure of the original dataset, as well as possible.
Dimensionality reduction is pervasive in theory and practice, and is often the first step taken with a high-dimensional noisy dataset before running further learning or inference algorithms \cite{Wikipedia,VanDerMaaten2009}.
It is widely used to reduce noise \cite{Wikipedia}, select training features \cite{Wikipedia2025}, accelerate learning and data-retrieval procedures \cite{Rahimi2007,HIM12}, and visualize high-dimensional data.

In this paper, we study dimensionality reduction algorithms which fit low-dimensional metrics to high-dimensional data.
Our main focus is \textsf{$k$-Euclidean Metric Violation} (\textsf{$k$-EMV}), where given a set of $\binom{n}{2}$ nonnegative numbers $D \in \R^{\binom{n}{2}}_{\geq 0}$, the goal is find the closest $k$-dimensional Euclidean metric $X \in \mathbb{M}_{k}$, where $\mathbb{M}_{k} \subseteq \R^{\binom{n}{2}}_{\geq 0}$ is the set of all $k$-dimensional Euclidean metrics on $n$ points.
If $D_{ij} = \|y_i - y_j\|$ for some high-dimensional dataset $y_1,\ldots,y_n$, this amounts to a non-linear dimension-reduction problem.\footnote{However, in general, the set $D$ does not necessarily correspond to a metric and might violate triangle inequality.} 
In this work, we measure closeness via mean-squared error, so that $\min_{X \in  \mathbb{M}_{k}}  \hspace{0.05in} \norm{ D - X }^2_2$ captures the projection of $D$ to $\mathbb{M}_k$.  

\textsf{$k$-EMV} and several closely related problems have a long history in computer science and beyond.
There has been extensive work on finding the nearest (possibly non-Euclidean) metric to $D$ (``\textsf{Metric Violation}'') \cite{brickell2008metric,FanRB18,FanGRSB20, cohen2025fitting,fan2018metric}.
An even older problem is finding the nearest metric from a specific class of structured metrics.
For instance, finding the nearest ultra-metric or tree metric has been studied for nearly 60 years, originally because of its application to phylogenetic tree reconstruction~\cite{cavalli1967phylogenetic,kvrivanek1986np,day1987computational,farach1993robust,cohen2024fitting,ailon2011fitting,charikar2024improved}.
These works study closeness under a variety of entrywise $\ell_p$ norms $\|D - X\|_p$, with $p=2$ playing a prominent role from the start \cite{cavalli1967phylogenetic}.

\textsf{$k$-EMV} has also been studied in the statistics and data science literature for over 70 years, under the name \emph{multidimensional scaling}, and is broadly used as a nonlinear dimensionality reduction technique for feature extraction, visualization, and more~\cite{DIMACS2001,cox2000multidimensional,kruskal1978multidimensional,borg2012applied,young2013multidimensional,borg2005modern,kruskal1964nonmetric,handbook-of-statistics-MDS,shepard1962analysis-1,shepard1962analysis-2,de2011multidimensional,torgerson1952multidimensional,richardson1938multidimensional,kruskal1964multidimensional,kruskal1964multidimensional,demaine2021multidimensional, bakshi2025metric,badoiu2005approximation}.\footnote{The term ``multidimensional scaling'' typically encompases a number objective functions measuring closeness of a $k$-Euclidean metric to an input $D$, as well as a variety of heuristic methods for fitting such metrics.} 
Implementations of heuristic methods for multidimensional scaling are present in major software packages such as R and scikit-learn \cite{scikit-learn,de2011multidimensional}.

Despite extensive literature on fitting structured metrics to data, our theoretical understanding of \textsf{$k$-EMV} remains fairly limited.
Cayton and Dasgupta~\cite[Section 5.1.1]{cayton2006robust} show NP-hardness, even when $k=1$, under both $\ell_1$ and $\ell_2^2$ closeness. Dhamdhere obtains an $\bigO{\log(n)}$-approximation to \textsf{$1$-EMV} and leaves finding a PTAS as an open question~\cite[Section 5]{dhamdhere2004approximating}, reiterated recently by Lee~\cite{lee2025mincsp}.
For the case $p = \infty$ and $k=1$, H{\aa}stad, Ivansson, and Lagergren~\cite{haastad2003fitting} obtain a $2$-approximation, and show hardness of $(1.4-\delta)$ approximation for any $\delta > 0$.
Bartal, Fandina, and Neiman \cite{bartal2019dimensionality} observe that when $k=n$ the problem becomes convex and can be solved efficiently via semidefinite programming.

Our main result is the first polynomial-time \emph{additive approximation scheme} for \textsf{$k$-EMV}. 
We obtain our approximation scheme via a new analysis of the Sherali-Adams linear programming hierarchy and Sum of Squares semidefinite programming hierarchy \cite{sherali1990hierarchy,lasserre2001global,parrilo2003semidefinite}, generalizing the \emph{global correlation rounding} paradigm.
Global correlation rounding was originally developed in the context of rounding linear (LP) and semidefinite (SDP) programs for constraint satisfaction problems \cite{Barak2011,GS12} and has since become a staple of the toolkit for rounding LPs and SDPs in combinatorial optimization \cite{jain2019mean,davies2020scheduling,Raghavendra2012,CohenAddad2023}.
But, as we discuss in~\cref{sec:techniques}, there are a number of basic obstacles to using global correlation rounding to solve optimization problems over continuous and unbounded domains.
We introduce several new techniques to obtain strong approximation and running-time guarantees when using global correlation rounding for low-dimensional embedding.

Our techniques extend to two further problems, for which we also obtain new additive approximation schemes.
The first problem is a weighted variant of \textsf{$k$-EMV}, where each pair $i,j$ is weighted by a non-negative $w_{ij} \in [0,1]$ in the objective function $\E_{ij} w_{ij} (d_{ij} - \|x_i - x_j\|)^2$.
The weighted variant is long-studied in the multi-dimensional scaling literature, where it has a natural interpretation -- if the input consists of pairwise similarity data among $n$ objects, then only a subset of similarities might be observed~\cite{saeed2018survey}.
Charikar and Guo recently introduced a similar weighted variant for \textsf{Ultra-metric Violation}~\cite{charikar2024improved}.
The weighted case brings significant new technical challenges, which we overcome with a new application of decompositions of dense graphs into few expanders, following Oveis Gharan and Trevisan~\cite{Gharan2014} (see Section~\ref{sec:techniques} for details).

The second problem is \textsf{Entrywise Low-rank Approximation} of matrices (\textsf{$\ell_p$-LRA}), which has a rich algorithmic history \cite{chierichetti2017algorithms,Mahankali2021,Ban2019,cohen2024ptas,ke2003robust,brooks2013l1,kwak2008principal,brooks2012pcal1,park2016iteratively,Clarkson2013,dan2019optimal}.
Given a matrix $A \in \R^{n \times m}$, the goal is to find a rank-$k$ matrix $B \in \R^{n \times m}$ minimizing $\|A-B\|_p$, the entrywise $\ell_p$ norm of $A - B$.
When $p=2$ this problem is solved in polynomial time by singular value decomposition.
But $p=2$ is not always the best choice.  
Larger or smaller $p$ can yield a better low-dimensional approximation of $A$, depending on the respect in which $A$ could be close to low rank -- many entries perturbed mildly compared to a low-rank matrix, in which case large $p$ is appropriate, or few entries perturbed drastically, in which case small $p$ is appropriate.

Both the $p > 2$ and $p < 2$ cases have received significant attention. \textsf{$\ell_p$-LRA} is known to be NP-hard when $p = 0$, $1$ or $\infty$~\cite{Gillis2017,Gillis2015}.
In the extreme case $p=\infty$, the problem is closely related to computing the \textsf{Approximate Rank} of $A$, which has connections across theoretical computer science \cite{Alon2013}.
After a long line of work gradually improving approximation factors, the setting where $p \in [0,2]$ now admits a PTAS~\cite{Ban2019,cohen2024ptas}.
In this work, we focus on the $p > 2$ case, which is comparatively less understood, and as we discuss below, there are some serious roadblocks to translating the techniques from the PTAS in the $p < 2$ case to $p > 2$.
Instead, the state-of-the-art is a $(3+\eps)$ approximation algorithm due to Ban, Bhattiprolu, Bringmann, Kolev, Lee and Woodruff~\cite{Ban2019}.
As a first step towards a PTAS for the $p > 2$ case, we obtain a quasi-polynomial time additive approximation scheme, a result which seems out of reach for prior techniques.

Our work opens several avenues for further progress.
One is to pursue PTASs for all the problems we consider here: we believe that the same rounding algorithms we use here are actually likely to yield PTASs (though possibly with different discretization schemes), but this will require significant new analytical ideas.
The second, more broadly, is to expand the reach of LP/SDP hierarchies to solve approximation algorithms problems beyond combinatorial optimization, by extending global correlation rounding beyond discrete optimization.

\subsection{Results}

\paragraph{$k$-Euclidean Metric Violation.}

We begin by formally defining the problem of finding the closest $k$-dimensional Euclidean metric under the $\ell_2^2$-norm.
We write $\E_{i \sim [n]} f(i)$ for the uniform average of a function $f \, : \, [n] \rightarrow \R$.

\begin{problem}[Euclidean Metric Violation]
\label{problem:rs}
Given a set of $\binom{n}{2}$ distances $\Set{d_{ij}}_{ij \in[n]}$, the \textsf{$k$-Euclidean Metric Violation} (\textsf{$k$-EMV}) objective finds an embedding $x_1,\ldots,x_n$ in $\mathbb{R}^k$ such that 
\begin{equation*}
    \OPT_{\textrm{EMV}}(d) = \min_{ x_1, \ldots , x_n \in \R^k } \E_{ij \sim [n]} \Paren{  d_{ij} - \norm{x_i - x_j }_2 }^2
\end{equation*}
\end{problem}

We aim to design approximation algorithms in the regime $k = O(1)$ -- recall, even the $k=1$ case is NP-hard to solve exactly \cite{cayton2006robust}.
Aside from being of interest in its own right, the constant $k$ case is most relevant to visualization applications.
Furthermore, as we discuss below, linear dimensionality reduction techniques like random linear projections give particularly poor guarantees when $k=O(1)$.
This seems to force any algorithm with strong provable guarantees for \textsf{$k$-EMV} to find non-linear dimension reductions, a goal of interest in its own right in the long-term project of obtaining a theoretical understanding of widespread methods like Isomap, t-SNE, locally linear embeddings, and autoencoders \cite{hinton2002stochastic,Hinton2006,tenenbaum2000global,roweis2000nonlinear}.

Despite its natural formulation, almost all algorithmic work on $\textsf{$k$-EMV}$ is heuristic -- there are only a handful of approaches with provable guarantees.
Beyond the $\bigO{\log n}$-approximation of \cite{dhamdhere2004approximating} when $k=1$, we are aware of two possible approaches for the constant-$k$ regime.
First, one can take the optimal $n$-dimensional solution (computed via SDP \cite{bartal2019dimensionality}) and then use random projections to dimension-reduce the resulting $y_1,\ldots,y_n \in \R^n$.
This yields a solution of objective value at most $\OPT_{\textrm{EMV}}(d) + \bigO{\E_{ij} d_{ij}^2/{k}}$.
Since $\OPT_{\textrm{EMV}}(d) \leq \E_{ij} d_{ij}^2$ (consider the embedding that maps all the $x_i$'s to the origin), the overall scaling of this additive error is sensible, but the error cannot be driven towards zero unless $k \rightarrow \infty$.

Another approach, due to Demaine, Hesterberg, Koehler, Lynch, and Urschel \cite{demaine2021multidimensional}, produces a non-linear embedding by adapting a polynomial-time approximation scheme (PTAS) for solving dense constraint satisfaction problems.\footnote{The main result of \cite{demaine2021multidimensional} is phrased for a different objective function, ``Kamada-Kawai''. Here we state the guarantees that their proof approach would naturally provide using the same ideas for EMV.}
They give an additive approximation scheme, whose additive error can be driven to zero by spending additional running time, but with the serious drawback that the running time depends \emph{doubly-exponentially on the input size}.
Concretely, if each $d_{ij}$ can be represented using $B$ bits, then for any $\eps > 0$ the algorithm of \cite{demaine2021multidimensional} runs in time $n^{O(1)} \cdot \exp((k \cdot 2^B/\eps)^{\mathcal{O}(1)})$\footnote{The doubly exponential dependence on bit complexity of the approach of \cite{demaine2021multidimensional} is phrased as an exponential dependence on $\Delta$, or the aspect ratio, in the original work.} and returns a solution of quality $\OPT_{\textrm{EVM}}(d) + \eps \cdot \E_{ij} d_{ij}^2$.\footnote{In fact, once we are spending double-exponential time with respect to $B$, we can take the additive error to be $\eps$ rather than $\eps \E_{ij} d_{ij}^2$, but we prefer the latter normalization in our work.}
At a high level, this double-exponential dependence arises because algorithms for CSPs, and the usual analyses thereof, are not naturally suited to finding real-valued solutions to non-Boolean optimization problems.

Our main result is the first truly polynomial-time additive approximation scheme for \textsf{$k$-EMV}:

\begin{theorem}[Additive Approximation Scheme for \textsf{$k$-EMV}, see~\cref{thm:raw-stress-main}]
\label{thm:RS-main-intro}
    Given input $\calD = \{d_{ij}\}_{i,j \in [n]}$, where each $d_{ij}$ can be represented by $B$ bits, and $0<\eps<1$, there exists an algorithm that runs in time $B^{\poly(k,\eps^{-1})} \cdot \poly(n)$,  and with probability at least $0.99$, outputs an embedding in $\R^k$ with \textsf{$k$-EMV} objective value at most $$\OPT_{\textrm{EMV}}(d) + \eps \cdot \E_{ij} d_{ij}^2 \, .$$
\end{theorem}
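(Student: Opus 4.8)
The plan is to combine a geometric discretization with a Sherali--Adams (equivalently, low-degree Sum-of-Squares) relaxation, and to round it using a variant of global correlation rounding whose analysis exploits the rigidity of $k$-dimensional embeddings. \emph{Discretization.} After rescaling we may assume the smallest nonzero $d_{ij}$ equals $1$, so every $d_{ij}$ lies in $\{0\}\cup[1,\Delta]$ with aspect ratio $\Delta = 2^{\bigO{B}}$. Any near-optimal embedding $x^\star$ satisfies $(d_{ij}-\norm{x^\star_i-x^\star_j})^2 \le \binom n2\,\OPT_{\textrm{EMV}}(d) \le \binom n2\,\E_{ij} d_{ij}^2$ for \emph{every} pair, so after centering at the centroid, $\norm{x^\star_i} = \bigO{n\Delta}$. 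Snapping each coordinate to the nearest multiple of $\delta \approx \eps\sqrt{\E_{ij}d_{ij}^2}/\sqrt k$ perturbs each $\norm{x_i-x_j}$ by at most $\sqrt k\,\delta$, hence the objective by at most $2\sqrt k\,\delta\sqrt{\OPT_{\textrm{EMV}}(d)}+k\delta^2 \le \tfrac\eps2\,\E_{ij}d_{ij}^2$. It therefore suffices to optimize over the grid $G=(\delta\mathbb Z)^k\cap[-\bigO{n\Delta},\bigO{n\Delta}]^k$, which satisfies $\log|G| = \bigO{k(\log n + B + \log(k\eps^{-1}))}$. We then write the objective as $\E_{ij} c_{ij}(x_i,x_j)$ with $c_{ij}(a,b)=(d_{ij}-\norm{a-b})^2$ a lookup table on $G\times G$, and take the level-$\poly(k,\eps^{-1})$ Sherali--Adams relaxation over $G^n$; this is solvable in $(n|G|)^{\poly(k,\eps^{-1})}=(n\cdot B)^{\poly(k,\eps^{-1})}$ time and returns locally consistent pseudo-marginals whose value $\OPT_{\mathrm{SA}}$ is at most $\OPT_{\textrm{EMV}}(d)+\tfrac\eps2\E_{ij}d_{ij}^2\le 2\,\E_{ij}d_{ij}^2$. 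Using only its degree-$2$ information together with $\norm{x_i-x_j}^2\le 2d_{ij}^2+2(d_{ij}-\norm{x_i-x_j})^2$, the total pseudo-variance $\E_i \mathrm{Var}(x_i)$, where $\mathrm{Var}(x_i):=\widetilde\E\norm{x_i}^2-\norm{\widetilde\E x_i}^2$ and $\widetilde\E$ is the pseudo-expectation, is $\bigO{\E_{ij}d_{ij}^2}$.

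\emph{Rounding via conditional means.} The core step is to produce, from the relaxation, an actual embedding without incurring the usual multiplicative-by-$\norm{c}_\infty$ loss of correlation rounding (here $\norm{c}_\infty \approx (n\Delta)^2$ is enormous). To that end I will prove a \emph{conditioning lemma}: there is a coordinate set $S$ with $|S|=\poly(k,\eps^{-1})$ such that, for $s$ drawn from the pseudo-marginal on $S$, $\E_s\,\E_i\,\mathrm{Var}(x_i\mid x_S=s) \le \tfrac{\eps^2}{16}\,\E_{ij}d_{ij}^2$. Granting this, put $m_i(s):=\widetilde\E[x_i\mid x_S=s]\in\R^k$ and output the embedding $\{m_i(s)\}_i$ for the best $s\in G^S$ (or a uniformly random $s$, which succeeds with constant probability by Markov). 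The analysis rests on a single inequality: for $Y=x_i-x_j$ with mean $\bar Y = m_i(s)-m_j(s)$ under the conditioned pseudo-distribution, $(d_{ij}-\norm{\bar Y})^2 \le \widetilde\E[(d_{ij}-\norm{Y})^2] + 2d_{ij}\sqrt{\mathrm{Var}(Y)}$, because $\norm{\bar Y}^2\le\widetilde\E\norm{Y}^2$ and $\widetilde\E\norm{Y}-\norm{\bar Y}\le\sqrt{\mathrm{Var}(Y)}$ (Jensen for pseudo-expectations). Summing over pairs, using $\mathrm{Var}(x_i-x_j)\le 2\mathrm{Var}(x_i)+2\mathrm{Var}(x_j)$, Cauchy--Schwarz over $ij$, and Jensen over $s$, the rounded objective is at most $\OPT_{\mathrm{SA}}+4\sqrt{\E_{ij}d_{ij}^2}\cdot\sqrt{\E_s\E_i\mathrm{Var}(x_i\mid x_S=s)} \le \OPT_{\textrm{EMV}}(d)+\eps\,\E_{ij}d_{ij}^2$. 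Note this rounding is \emph{deterministic} given $s$, so --- unlike standard correlation rounding --- we never need the pairwise marginals to be near-product, only the variance bound; this is precisely what lets us sidestep the unbounded payoff.

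\emph{Main obstacle: the conditioning lemma.} This is where $k$-dimensionality must genuinely enter. The generic entropy/mutual-information argument would force $|S|\gtrsim \log|G|/\eps^2 \approx k(\log n+B)/\eps^2$ conditionings, which would put $n$ and $B$ in the exponent of the running time; to get $|S|=\poly(k,\eps^{-1})$ independent of $\log|G|$, I will track the covariance potential $\Phi(S)=\E_i\mathrm{Var}(x_i\mid x_S)$ together with the trace-cross-covariances $C_{ij}=\widetilde\E\langle x_i,x_j\rangle-\langle\widetilde\E x_i,\widetilde\E x_j\rangle$, and argue a dichotomy at each step: \emph{either} a random further conditioning decreases $\Phi$ by a $\poly(\eps/k)$ fraction --- here the geometric fact that in $\R^k$ the variance of $x_j$ explained by conditioning on $x_i$ is at least $\Omega(C_{ij}^2/(k\,\mathrm{Var}(x_i)))$ ensures $k$ enters only polynomially --- \emph{or} the average $|C_{ij}|$ is small, so the conditioned pseudo-distribution is close to a product of its marginals, whence $\E_{ij}\widetilde\E\norm{x_i-x_j}^2 \gtrsim \Phi$ and thus $\OPT_{\mathrm{SA}}\gtrsim \Phi-\bigO{\E_{ij}d_{ij}^2}$, which combined with $\OPT_{\mathrm{SA}}\le 2\,\E_{ij}d_{ij}^2$ caps $\Phi$. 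Iterating drives $\Phi$ below $\tfrac{\eps^2}{16}\E_{ij}d_{ij}^2$ after $\poly(k,\eps^{-1})$ conditionings.

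\emph{Loose ends and the hard part.} Several routine points must be checked: compactness so that the optimum is attained; that $\log|G|$ truly contributes only $\poly(k,\eps^{-1})$ to the exponent; that the conditioning lemma is compatible with the merely locally consistent (rather than genuine-distribution) guarantees of Sherali--Adams, which is why we only ever look at marginals on sets of size $|S|+1$; and a careful split of the $\eps$ budget between discretization, relaxation error, and rounding. The genuinely hard part is making the dichotomy in the conditioning lemma quantitatively tight --- in particular obtaining the right polynomial dependence on $k$ and $\eps^{-1}$ and handling the interaction between shrinking $\Phi$ and preserving the objective value under conditioning (the latter being automatic only \emph{on average} over $s$, via the tower property for pseudo-distributions). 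This is also the step I expect one would need to sharpen --- likely with a more refined discretization and a second-moment-preserving rounding --- to upgrade the additive scheme to a genuine PTAS.
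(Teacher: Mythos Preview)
Your proposal has two genuine gaps, one in the rounding and one in the discretization.

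\textbf{The conditioning lemma is false, so mean-rounding cannot work.} Take $k=1$, $d_{ij}=D$ for all $i\neq j$, and let $\mu$ be the product measure with each $x_i$ uniform on $\{0,D\}$. Then $\pE_\mu\,\E_{ij}(d_{ij}-|x_i-x_j|)^2=D^2/2\le \E_{ij}d_{ij}^2$, so $\mu$ is a legitimate near-optimal LP solution (the solver is free to return it). Because $\mu$ is already a product, conditioning on any $S$ leaves $\Var(x_i\mid x_S)=D^2/4$ for every $i\notin S$; hence $\E_i\Var(x_i\mid x_S)\approx D^2/4$ regardless of $|S|$, and your target $\tfrac{\eps^2}{16}\E_{ij}d_{ij}^2$ is unreachable. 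Rounding to conditional means then puts every unconditioned point at $D/2$, so all embedded distances are $0$ and the cost is $\approx D^2$, an additive loss of $\tfrac12\E_{ij}d_{ij}^2$. Your dichotomy does not save this: its second branch, ``$|C_{ij}|$ small $\Rightarrow\OPT_{\mathrm{SA}}\gtrsim\Phi-O(\E_{ij}d_{ij}^2)$'', only caps $\Phi$ at $O(\E_{ij}d_{ij}^2)$, never at $\eps^2\E_{ij}d_{ij}^2$, and the first branch never fires on a product measure. The paper's fix is to round by \emph{independent sampling} from the conditioned marginals (which on this example already returns cost $D^2/2$ in expectation) and to control the rounding error not through variance but by splitting $(d_{ij}-\|x_i-x_j\|)^2$ into a quadratic piece $\E_{ij}|\Cov(x_i,x_j)|$ (handled by a variance potential) and a linear piece $\E_{ij}d_{ij}\bigl|\pE\|x_i-x_j\|-\pE^{\otimes}\|x_i-x_j\|\bigr|$ (handled by an entropy potential on \emph{truncated} variables $\widetilde x_i$, so the effective alphabet is $\poly(k/\eps)$ independent of $B$).

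\textbf{Your grid is exponential in $B$, not polynomial.} A uniform $\delta$-grid of $[-O(n\Delta),O(n\Delta)]^k$ has $|G|=(n\Delta/\delta)^{\Theta(k)}=2^{\Theta(k(B+\log n))}$, so the level-$t$ Sherali--Adams LP has $(n|G|)^{O(t)}=2^{B\cdot\poly(k,\eps^{-1})}\cdot n^{\poly(k,\eps^{-1})}$ variables; the identity $(n|G|)^{\poly(k,\eps^{-1})}=(nB)^{\poly(k,\eps^{-1})}$ you wrote is simply incorrect. The paper uses a \emph{geometrically spaced} grid centered at an anchor $x_a$, with only $O(\eps^{-1}\log(\Delta k/\eps))=O(B/\eps)$ values per coordinate, which is what brings the alphabet down to $\poly(B)$. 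Finally, even $(nB)^{\poly(k,\eps^{-1})}$ is weaker than the stated $B^{\poly(k,\eps^{-1})}\cdot\poly(n)$; the last exponential-in-$n$ factor is removed in the paper by casting the algorithm as a local rounding procedure and invoking the Guruswami--Sinop fast solver, which you do not mention.
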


Recent work by Bakshi, Cohen, Hopkins, Jayaram, and Lattanzi~\cite{bakshi2025metric} gave algorithms for a related problem, the Kamada-Kawai objective, which reduces this double-exponential dependence in $B$ to a single-exponential dependence. However, their approximation guarantees are quantitatively weak; they produce a solution of value $B\cdot \left(\OPT\right)^{1/k} + \eps$ (where for Kamada-Kawai $\OPT \leq 1$ and thus $\OPT^{1/k} \ggg \OPT$). Furthermore, their techniques do not straightforwardly extend to the $\textsf{$k$-EMV}$ problem. Intuitively, this is due to the fact that Kamada-Kawai and \textsf{$k$-EMV} emphasize different desiderata of the resulting embedding; $\textsf{$k$-EMV}$ emphasizes preserving all distances well up to additive error (which may result in a large multiplicative gap between the input $d_{ij}$ and the embedded distance for small $d_{ij}$), while Kamada-Kawai emphasizes preserving all distances well up to multiplicative factors and favors contractions over expansions.

Our second result is an algorithm for the significantly more challenging \emph{weighted} \textsf{$k$-EMV}, where the input consists of $\{d_{ij}\}_{ij \in [n]}$ and weights $w_{ij} \in [0,1]$, and the objective function is $$\OPT_{\textrm{WEMV}}(w,d) = \min_{x_1, \ldots, x_n \in \R^k} \hspace{0.1in} \E_{ij \sim w} (d_{ij} - \|x_i - x_j\|)^2 \,,$$
where the notation $\E_{ij \sim w}$ means that a pair $(i,j)$ is drawn with weight proportional to $w_{ij}$.
We obtain an algorithm with approximation guarantees and running time depending on the density of the weights, when viewed as a weighted graph.
In the weighted case, the running time of our algorithm depends slightly super-polynomially on the magnitudes of the distances $d_{ij}$, when the weighted graph is dense and regular -- this still makes an exponential improvement over the double-exponential bit-complexity dependence incurred by the approach of \cite{demaine2021multidimensional}.
We discuss in~\cref{sec:techniques} the origin of these dependencies on density and magnitudes of the distances.

\begin{theorem}[Weighted \textsf{EMV}, see~\cref{thm:regular-raw-stress}]
\label{thm:RS-weighted-intro}
Given input $\{d_{ij}\}_{i,j \in [n]}$, weights $w_{ij} \in [0,1]$ such that each row/column sum of the matrix $\{w_{ij}\}_{ij \in [n]}$ is equal to $\delta n$, $d_{ij}$ and $w_{ij}$ can be represented by $B$ bits, and $0<\eps<1$,  there exists an algorithm that runs in $\left(n\cdot 2^B\right)^{\poly(k,\eps^{-1},\delta^{-1})}$ time and outputs an embedding $x_1,\ldots,x_n \in \R^k$ with weighted \textsf{$k$-EMV} objective value at most 
$$\OPT_{\textrm{WEMV}}(w,d) + \eps \cdot \E_{ij \sim w} d_{ij}^2,$$
\end{theorem}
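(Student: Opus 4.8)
The plan is to reduce weighted \textsf{$k$-EMV} to a $2$-local optimization problem over a bounded alphabet, solve its Sherali--Adams relaxation, and round the resulting pseudo-distribution by a version of global correlation rounding adapted to $\R^k$-valued solutions and to the weighted constraint graph. First I would bound the diameter: taken to have minimal diameter, the optimal embedding has diameter $2^{O(B)}\cdot\poly(n,\delta^{-1})$, since a $w$-edge stretched far beyond $\max_{ij}d_{ij}\le 2^B$ costs more than $\OPT_{\textrm{WEMV}}$, while disconnected pieces of the graph $\{w_{ij}>0\}$ may be placed adjacent to one another. (Unlike the unweighted case, a single rare edge with large $d_{ij}$ and tiny $w_{ij}$ really can force the optimal diameter to be much larger than $\sqrt{\E_{ij\sim w}d_{ij}^2}$; this is ultimately the source of the $2^B$ in the running time.) Snapping every coordinate to a grid fine enough that the objective changes by at most $O(\eps\cdot\E_{ij\sim w}d_{ij}^2)$ --- using $|(d-\|x-x'\|)^2-(d-\|y-y'\|)^2|\le O((d+\mathrm{diam})\sqrt{k}\,\rho)$ for grid side $\rho$ --- reduces the problem to optimizing, over assignments of grid points to $[n]$, the average over $w$-weighted pairs of the function $F_{ij}(\sigma,\tau)=(d_{ij}-\|\sigma-\tau\|)^2$: a $2$-local objective over a finite alphabet $\Sigma$ of size $(2^{O(B)}\cdot\poly(n,k,\eps^{-1},\delta^{-1}))^{k}$.

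\textbf{The relaxation.}
Let $\widetilde{\mathbb{E}}$ be the pseudo-expectation from solving the degree-$t$ Sherali--Adams LP over assignments of $\Sigma$ to $[n]$, with $t=\poly(k,\eps^{-1},\delta^{-1})$ fixed below; this LP has $(n|\Sigma|)^{O(t)}=(n\cdot 2^{B})^{\poly(k,\eps^{-1},\delta^{-1})}$ variables and is solved in that much time, with value at most the discretized optimum, hence at most $\OPT_{\textrm{WEMV}}+O(\eps\cdot\E_{ij\sim w}d_{ij}^2)$. Two structural facts drive the rounding. First, Cauchy--Schwarz for $\widetilde{\mathbb{E}}$ gives $\E_{ij\sim w}\widetilde{\mathbb{E}}\|x_i-x_j\|^2=O(\E_{ij\sim w}d_{ij}^2+\OPT_{\textrm{WEMV}})=O(\E_{ij\sim w}d_{ij}^2)$: on $w$-average, embedded pairs are short under the pseudo-distribution. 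Second, because $\Sigma$ is finite, all degree-$\le t$ local distributions are genuine, so conditioning on the values of a few $x_i$ is well defined and the conditional covariances $\mathrm{Cov}(x_i\mid x_S)$ and $\mathrm{Cov}(x_i-x_j\mid x_S)$ are genuine positive-semidefinite $k\times k$ matrices.

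\textbf{Expander decomposition and rounding.}
Following Oveis Gharan and Trevisan~\cite{Gharan2014}, decompose $[n]$ into $r=\poly(\delta^{-1},\eps^{-1})$ parts, each inducing a spectral expander with gap $\gamma=\poly(\delta,\eps)$, so that only an $\eps$-fraction of the (possibly relaxation-reweighted) $w$-weight is cut between parts; this is where density and regularity of $w$ enter. Condition $\widetilde{\mathbb{E}}$ on a random set $S$ of $O(r)\cdot\poly(k,\eps^{-1})=\poly(k,\eps^{-1},\delta^{-1})$ embedded points, a few drawn uniformly from each part (this fixes $t:=|S|+2$), and output $\hat x_i:=\widetilde{\mathbb{E}}[x_i\mid x_S]$ snapped to the grid. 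The core of the proof is to show that after conditioning
\[
Q:=\E_{ij\sim w}\ \mathrm{tr}\,\mathrm{Cov}(x_i-x_j\mid x_S)\ \le\ \eps^2\cdot\E_{ij\sim w}d_{ij}^2 .
\]
Within a part $V_p$, expansion bounds the $w$-edge average of $\mathrm{tr}\,\mathrm{Cov}(x_i-x_j\mid x_S)$ over $V_p$ by $\poly(\gamma^{-1})$ times the all-pairs average over $V_p$, which is essentially the per-part variance potential $\Phi_p:=\E_{i\in V_p}\mathrm{tr}\,\mathrm{Cov}(x_i\mid x_S)$; a law-of-total-variance chain rule shows each additional conditioning on a uniform point of $V_p$ decreases $\Phi_p$ by at least $\frac1k$ times the squared average within-$V_p$ conditional covariance, and $\Phi_p$ starts at $\poly(\gamma^{-1})\cdot O(\E_{ij\sim w}d_{ij}^2)$ --- crucially $B$-free --- so $\poly(k,\eps^{-1},\delta^{-1})$ steps suffice. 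The $\eps$-fraction of cut edges is handled by choosing the decomposition with respect to a reweighting of $w$ that accounts for the relaxation's edge lengths, so that the cut edges contribute negligibly to $Q$. For the rounding error, one bounds it edge-by-edge using $\widetilde{\mathbb{E}}\|x_i-x_j\|-\|\widetilde{\mathbb{E}}(x_i-x_j)\|\le\sqrt{\mathrm{tr}\,\mathrm{Cov}(x_i-x_j\mid x_S)}$ to tame the square root, obtaining $(d_{ij}-\|\hat x_i-\hat x_j\|)^2-\widetilde{\mathbb{E}}[(d_{ij}-\|x_i-x_j\|)^2\mid x_S]\le O\!\left((d_{ij}+\sqrt{\widetilde{\mathbb{E}}[\|x_i-x_j\|^2\mid x_S]})\cdot\sqrt{\mathrm{tr}\,\mathrm{Cov}(x_i-x_j\mid x_S)}\right)$; averaging over $ij\sim w$, applying Cauchy--Schwarz and the structural facts gives total rounding error $O(\sqrt{\E_{ij\sim w}d_{ij}^2}\cdot\sqrt Q+Q)=O(\eps\cdot\E_{ij\sim w}d_{ij}^2)$. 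Thus $\E_{S}\E_{ij\sim w}(d_{ij}-\|\hat x_i-\hat x_j\|)^2\le\OPT_{\textrm{WEMV}}+O(\eps\cdot\E_{ij\sim w}d_{ij}^2)$; since there are only $(n\cdot2^B)^{\poly(k,\eps^{-1},\delta^{-1})}$ choices of $S$ and of the conditioned values, enumerating and taking the best makes the algorithm deterministic. Rescaling $\eps$ by a constant finishes the proof.

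\textbf{Main obstacle.}
I expect the crux to lie in keeping the number of conditioning steps --- hence the Sherali--Adams degree, hence the exponent of $n$ in the running time --- equal to $\poly(k,\eps^{-1},\delta^{-1})$, with no dependence on $B$ or $n$. The naive global potential $\frac1n\sum_i\mathrm{tr}\,\mathrm{Cov}(x_i)$ is only bounded a priori by $\mathrm{diam}^2=2^{\Theta(B)}$ --- for a regular but poorly connected $w$, e.g.\ a union of well-separated cliques, the uniform-average embedded squared distance genuinely can be $\mathrm{diam}^2$ even though the $w$-average is $O(\E_{ij\sim w}d_{ij}^2)$ --- so driving it down directly would cost $\Omega(B)$ steps; the expander decomposition is precisely what localizes the potential to $B$-free scales, confining all $B$-dependence of the running time to $|\Sigma|$. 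The second, and probably most delicate, point is controlling the $\eps$-fraction of cut edges, whose embedded lengths under the relaxation are a priori only bounded by the (huge) diameter and which lie outside every expander piece; this seems to require choosing the decomposition adaptively to the relaxation and/or an additional conditioning step, and getting the accounting right is where most of the work goes. Finally, the non-smoothness of $x\mapsto\|x\|$ forces the whole argument to be run with the $k\times k$ covariance $\mathrm{Cov}(x_i-x_j\mid x_S)$ rather than a total-variation or mutual-information correlation measure over $\Sigma$ (which would lose a $\log|\Sigma|=\Theta(kB+k\log n)$ factor in the degree), so the chain rule and the expander-mixing bound must both be phrased spectrally in $\R^k$.
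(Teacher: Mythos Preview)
The proposal contains a fundamental gap in the rounding step. You round to conditional means, $\hat x_i := \pE[x_i\mid x_S]$, and claim that after conditioning on $|S|=\poly(k,\eps^{-1},\delta^{-1})$ points the quantity $Q=\E_{ij\sim w}\tr\Cov(x_i-x_j\mid x_S)$ can be made at most $\eps^2\E_{ij\sim w}d_{ij}^2$. This is false. Global correlation rounding drives the \emph{cross}-covariances $\Cov(x_i,x_j\mid x_S)$ small relative to the potential $\Phi=\E_i\tr\Var(x_i\mid x_S)$, but it does not drive the individual variances---and hence $\Phi$ itself---toward zero. Since $Q=2\Phi-2\E_{ij\sim w}\tr\Cov(x_i,x_j\mid x_S)$, once the cross terms vanish you are left with $Q\approx 2\Phi$, which is in general $\Theta(\E_{ij\sim w}d_{ij}^2)$. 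A concrete instance: $k=1$, all $d_{ij}=1$, $w$ complete, and the LP-optimal pseudo-distribution that makes each $x_i$ an independent $\mathrm{Bernoulli}(1/2)$. Here $\OPT\approx 1/2$, all cross-covariances are already zero, $\Phi=1/4$ no matter how many points you condition on, and rounding to means puts every unconditioned $\hat x_i$ at $1/2$, yielding cost $\approx 1=2\cdot\OPT$. The paper's rounding instead \emph{samples} each $\hat x_i$ independently from its conditional marginal; this preserves $\pE\|x_i\|^2$ and $\pE\|x_j\|^2$ exactly, so the error comes only from the cross term $\langle x_i,x_j\rangle$ (handled by the variance potential exactly as you sketch) and from the non-polynomial linear term $d_{ij}\|x_i-x_j\|$. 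For the latter the paper uses a separate entropy potential on truncated variables $\tilde x_i$ together with a weighted total-variation bound (their ``Lipschitz functions are potential-aligned'' lemma)---an ingredient your proposal omits entirely.

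A secondary issue: you work with Sherali--Adams, but the initial bound $\Phi_p\le\poly(\gamma^{-1})\E_{ij\sim w}d_{ij}^2$ requires the Poincar\'e inequality $\E_{i,a\sim V_p}\pE\|x_i-x_a\|^2\le\gamma^{-1}\E_{i,a\sim w|_{V_p}}\pE\|x_i-x_a\|^2$ to hold under pseudo-expectation. This is a PSD-quadratic-form inequality that Sum-of-Squares certifies (the paper invokes exactly this, via its Fact on nonnegative quadratics) but Sherali--Adams does not, since SA local distributions need not assemble into a global PSD moment matrix. So either the relaxation must be upgraded to SoS, or the anchoring step needs a different argument.
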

\noindent The algorithm in the theorem above also runs in time $\left(n\cdot B \cdot 2^\kappa \right)^{\poly(k,\eps^{-1},\delta^{-1})}$, where $\kappa = \sqrt{ \max_{i, j \in [n]} d_{ij}^2 / \E_{i, j \sim [n]} d_{ij}^2}$. This can be significantly faster than $\left(n\cdot 2^B\right)^{\poly(k,\eps^{-1},\delta^{-1})}$ time on certain instances (for example, when most of the distances are very large), but there exist worst-case instances where $\kappa = \Theta(B)$.

By comparison, using the techniques of \cite{demaine2021multidimensional} in the weighted setting would also give an additive approximation scheme with exponential running-time dependence on $\delta^{-1}$, while incurring a double-exponential running-time dependence on input size, though allowing for the regularity assumption on the weights to be removed.

\paragraph{Rank-One Approximation.}
The new analysis of global correlation rounding which leads to Theorems~\ref{thm:RS-main-intro} and~\ref{thm:RS-weighted-intro} also leads to a new algorithm for entry-wise \textsf{$\ell_p$-LRA} in the challenging $p > 2$ regime.

\begin{problem}[$\ell_p$-Rank-One Approximation (LRA)]
\label{prob:lra}
Given an $n \times m$ matrix $A$, find $u \in \R^n, v \in \R^m$ minimizing $\|A - uv^\top\|_p$, the entrywise $\ell_p$ norm of $A - uv^\top$.
We define
\begin{equation*}
    \OPT_{\textrm{LRA}} = \min_{ u\in \R^{n} , v \in \R^{m} } \Norm{  A - u v^\top  }_{p}^{p}\,,
\end{equation*}   
\end{problem}

\textsf{$\ell_p$-LRA} has a PTAS for $p \in [0,2]$, sometimes under bit-complexity assumptions \cite{Ban2019,cohen2024ptas}, but for $p > 2$, the best known efficient algorithm is a $(3+\eps)$-approximation in time $(nm)^{\poly(\eps^{-1})}$ for matrices with entries bounded by $\poly(n,m)$, again due to \cite{Ban2019}.
The PTAS for $p \in (0,2)$ relies on the existence of $p$-stable random variables, which don't exist for $p > 2$, a major roadblock to using the same techniques in the $p > 2$ case \cite{wiki:stable_distribution}. 
(Both of these algorithms extend to rank-$k$ approximation for $k > 1$, with exponential and doubly exponential dependence on $k$ in the running time respectively.)\footnote{Bicriteria approximations, which in this case would allow outputting a rank-$3$ matrix which competes with the best rank-one solution in objective value, are known which can avoid these bit complexity assumptions \cite{Mahankali2021}.}

We give a quasi-polynomial time additive approximation scheme for $\ell_p$ rank-one approximation when $p$ is any even integer, without assumptions on the magnitudes of $A$'s entries.

\begin{theorem}[$\ell_p$-LRA Additive Approximation, see~\cref{thm:lra}]
\label{thm:LRA-intro}
Given as input a matrix $A \in \R^{n \times m}$ such that each entry can be represented by $B$ bits, for every even $p \in \N$ and every $\eps > 0$ there exists an algorithm that runs in time $(mn\cdot B)^{(\eps^{-1} \log n)^{C_p}}$, where $C_p$ is a constant depending only on $p$, and outputs $\hat{u} \in \R^n$ and $\hat{v} \in \R^m$ such that with probability at least $0.99$, 
  \[
  \Norm{A - \hat{u} \hat{v}^\top}_p^p \leq \OPT_{\textrm{LRA}} + \eps \cdot \Norm{A}_p^p \, .
  \]
\end{theorem}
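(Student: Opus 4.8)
The plan is to reduce $\ell_p$-rank-one approximation (for even $p$) to a dense bipartite problem over two families of variables, each ranging over a polynomial-size grid, relax it with Sum-of-Squares, and round via a new global correlation rounding argument. \emph{Discretization.} If $A=0$, output $\hat u=\hat v=0$; otherwise $\|A\|_p\ge 2^{-B}>0$. Using the scaling freedom $uv^\top=(cu)(c^{-1}v)^\top$ I would normalize an optimal $(u^*,v^*)$ so that $\|v^*\|_\infty=1$. Since $\OPT_{\mathrm{LRA}}\le\|A\|_p^p$ (take $u=v=0$), picking $j^*$ with $|v^*_{j^*}|=1$ gives, for every $i$, $|u^*_i|=|u^*_iv^*_{j^*}|\le|A_{ij^*}|+\|A-u^*v^{*\top}\|_p\le 2\|A\|_p$, so $\|u^*\|_\infty\le 2\|A\|_p$. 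I then take $\mathcal G_v$ an additive grid on $[-1,1]$ of resolution $\eps/\poly(nm)$ and $\mathcal G_u$ an additive grid on $[-2\|A\|_p,2\|A\|_p]$ of resolution $\eps\|A\|_p/\poly(nm)$; rounding $(u^*,v^*)$ into them perturbs each product $u_iv_j$ by $O(\eps\|A\|_p/\poly(nm))$, and since $t\mapsto(A_{ij}-t)^p$ has derivative $O_p(\|A\|_p^{p-1})$ on $|t|\le 3\|A\|_p$, the objective changes by at most $\tfrac\eps2\|A\|_p^p$ for a suitable $\poly$. Crucially the $\|A\|_p$ factors cancel, so $|\mathcal G_u|,|\mathcal G_v|=\poly(nm/\eps)$ (only the bit-lengths of grid points, hence the cost of arithmetic, are $\poly(B)$), every feasible grid point has $|u_iv_j|\le 2\|A\|_p$, and $\OPT_{\mathrm{grid}}\le\OPT_{\mathrm{LRA}}+\tfrac\eps2\|A\|_p^p$.

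\emph{Relaxation and error decomposition.} Encode each $u_i$ by indicators $z_{i,g}=\mathbf{1}[u_i=g]$, $g\in\mathcal G_u$, and each $v_j$ by $w_{j,h}$, $h\in\mathcal G_v$. Since $p$ is even, $(A_{ij}-u_iv_j)^p=\sum_{a=0}^p\binom pa(-1)^aA_{ij}^{p-a}u_i^av_j^a$, and $u_i^a=\sum_g g^az_{i,g}$ is linear in the indicators, so $\|A-uv^\top\|_p^p$ is a degree-$2$ polynomial in $\{z_{i,g},w_{j,h}\}$; every integral solution is feasible for the degree-$D$ Sum-of-Squares relaxation, whose value is thus at most $\OPT_{\mathrm{grid}}$. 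Given a pseudo-distribution $\tilde\mu$, I would condition it on the values of a random batch of variables drawn from $\tilde\mu$ (split appropriately between $u$- and $v$-variables) to get $\mu'$, then sample $(u,v)$ from the product of single-variable marginals $\nu:=\bigotimes_i\mu'_{u_i}\otimes\bigotimes_j\mu'_{v_j}$. Because the objective is multilinear of low degree in the $u_i^a,v_j^a$, the independent-rounding error decomposes \emph{exactly} into cross-covariance matrices $M^{(a)}_{ij}:=\mathrm{Cov}_{\mu'}(u_i^a,v_j^a)$:
\[
\E_{(u,v)\sim\nu}\big[\|A-uv^\top\|_p^p\big]-\E_{\mu'}\big[\|A-uv^\top\|_p^p\big]=\sum_{a=1}^p\binom pa(-1)^{a+1}\big\langle A^{\circ(p-a)},M^{(a)}\big\rangle,
\]
where $A^{\circ q}$ denotes the entrywise $q$-th power. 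Taking expectation over the conditioning, $\E_{\mathrm{cond}}\big[\E_{\mu'}[\|A-uv^\top\|_p^p]\big]\le\OPT_{\mathrm{grid}}$ by the tower property (once $D$ exceeds the number of conditioned variables plus the objective degree), so it suffices to force $\sum_a\binom pa|\langle A^{\circ(p-a)},M^{(a)}\rangle|\le\tfrac\eps2\|A\|_p^p$.

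\emph{The main obstacle, and finishing.} I expect the hard part to be bounding these inner products: the standard correlation-rounding toolkit is far too lossy here. That toolkit only makes the \emph{average} over pairs of the mutual informations $I(u_i;v_j)$ small; via $|M^{(a)}_{ij}|\lesssim(2\|A\|_p)^a\sqrt{I(u_i;v_j)}$ (Pinsker) and Cauchy--Schwarz over the $nm$ pairs, this gives $\sum_a\binom pa|\langle A^{\circ(p-a)},M^{(a)}\rangle|\lesssim 3^p\,nm\,\|A\|_p^p\sqrt{\E_{ij}I(u_i;v_j)}$ --- the extra factor $nm$ is fatal, since a single per-entry term $(A_{ij}-u_iv_j)^p$ can already equal $\Theta(\|A\|_p^p)$, the entire error budget, while we may condition on at most $n+m$ variables. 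The new analysis must instead bound the \emph{aggregate} error over all pairs simultaneously, e.g.\ via $|\langle A^{\circ(p-a)},M^{(a)}\rangle|\le\|A^{\circ(p-a)}\|_*\,\|M^{(a)}\|_{\mathrm{op}}$ and a \emph{spectral} form of correlation rounding that drives the operator norms of the cross-covariance matrices to zero; certifying this requires Sum-of-Squares of degree $D=(\eps^{-1}\log n)^{C_p}$ --- pairwise decorrelation of the $(u_i,v_j)$ does not by itself give decorrelation in every direction --- which is what makes the running time quasi-polynomial, namely $(nmB)^{O(D)}$. Granting this, $\E_\nu[\|A-uv^\top\|_p^p]\le\OPT_{\mathrm{grid}}+\tfrac\eps2\|A\|_p^p$ in expectation over the conditioning; since the objective is nonnegative and $\OPT_{\mathrm{grid}}\le\|A\|_p^p$, Markov's inequality yields a solution of value $\le\OPT_{\mathrm{grid}}+\eps\|A\|_p^p$ with probability $\Omega(\eps)$, and $O(\eps^{-1})$ independent repetitions, returning the best exactly-evaluated solution, boost this to $0.99$. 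Combining with the discretization loss and rescaling $\eps$ by a constant completes the proof.
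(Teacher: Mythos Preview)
Your setup through the binomial expansion is sound and close to the paper's, but the heart of the argument---controlling $\sum_a \binom{p}{a}\,|\langle A^{\circ(p-a)}, M^{(a)}\rangle|$---is left as an assertion (``Granting this''). The proposed route via $\|A^{\circ(p-a)}\|_*\cdot\|M^{(a)}\|_{\mathrm{op}}$ and a ``spectral form of correlation rounding'' is not substantiated: no conditioning lemma is known that drives the \emph{operator norm} of a cross-covariance matrix to zero after $\poly(\eps^{-1},\log n)$ conditionings, and you give no argument for one. The nuclear-norm factor is also dangerous: for $a=p$ the matrix $A^{\circ 0}$ is the all-ones matrix, with nuclear norm $\sqrt{nm}$, so you would need $\|M^{(p)}\|_{\mathrm{op}}\le \eps\,\|A\|_p^p/\sqrt{nm}$, yet under your normalization $|u_i|\le 2\|A\|_p$ a single entry $\Cov(u_i^p,v_j^p)$ can already be of order $\|A\|_p^p$. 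So both the lemma you invoke and the inequality you would feed it are unjustified.

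The paper does not use spectral bounds at all; its argument differs from yours in two essential ways. First, it enumerates over guesses $s_u,s_v$ for $\sum_i u_i^p$ and $\sum_j v_j^p$ and imposes these as \emph{equality} constraints on the relaxation (written so they survive conditioning). This forces $\sum_{ij}u_i^pv_j^p=s_us_v$ identically, so the $a=p$ term---the one your spectral bound handles worst---contributes zero rounding error. Second, for each remaining $1\le a\le p-1$ the error on $\sum_{ij}A_{ij}^{p-a}(u_iv_j)^a$ is handled by a term-by-term H\"older argument (exponents $a/p$ and $(p-a)/p$) together with truncated variables $\tilde u_i,\tilde v_j$ and two new potentials: a \emph{weighted entropy} $\sum_j H(\tilde v_j)\,\pE_\mu v_j^p$ for $a>p/2$, and moment-variance potentials $\sum_j \Var(v_j^{a})^{p/(2a)}$ for $a\le p/2$. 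The norm constraints on $\sum_i u_i^p$, $\sum_j v_j^p$ are precisely what bound these potentials by $O_p(\log n)\cdot\|A\|_p^{p/2}$ initially, which is where the $O_p(\log^2 n/\poly_p(\eps))$ rounds of conditioning come from. Your diagnosis that naive correlation rounding loses a $\poly(nm)$ factor is correct; the fix, however, is this weighted-potential/H\"older machinery, not an operator-norm bound.
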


We are not aware of prior work obtaining a comparable guarantee even in subexponential time.
Although strictly speaking Theorem~\ref{thm:LRA-intro} is incomparable to the $(3+\eps)$-approximation of \cite{Ban2019}, we hope that it represents a step towards a PTAS.
Indeed, for the related problem of $\ell_0$ low rank approximation, an additive approximation scheme was one of the key innovations required to obtain a PTAS \cite{cohen2024ptas}.

\subsection{Related Work}

\paragraph{Dense CSPs.}
Designing algorithms for dense and structured max-CSPs remains a central topic in theoretical computer science, leading to a proliferation of techniques, including sampling-based algorithms~\cite{arora1995polynomial, alon2003random, MS08}, algorithms leveraging regularity lemmas~\cite{frieze1996regularity, coja2010efficient}, and correlation rounding of convex hierarchies~\cite{Barak2011, GS12, yoshida2014approximation}. While the first two approaches so far lead to the fastest running times~\cite{yaroslavtsev2014going}, correlation rounding can extend to sparse CSPs where the underlying graph is an expander~\cite{Barak2011,Alev2019}.
In contrast, very little is known about min-CSPs, even in the complete case. Recently, Anand, Lee, and Sharma~\cite{anand2025min} obtained a constant factor approximation for \textsf{Min $2$-SAT} in the complete case. 
Our work borrows techniques from the dense CSP toolkit.

\paragraph{Metric Violation Distance and Metric Embeddings.}
\textsf{Metric Violation} was first introduced by Brickell, Dhillon, Sra and Tropp~\cite{brickell2008metric}, and they show that when $\mathbb{M}$ is the set of all metrics on $n$ points, there exists a polynomial-time algorithm to find the closest metric under any $\ell_p$ norm ($1\leq p\leq \infty$), via convex programming. For the $\ell_0$ variant of \textsf{Metric Violation} the state-of-the-art is a $\bigO{\log(n)}$-approximation due to Cohen-Addad, Fan, Lee, and De Mesmay~\cite{cohen2025fitting} and APX-Hardness of $2$ under the Unique Games conjecture due to Fan, Raichel, and Van Buskirk~\cite{fan2018metric}.  

Fitting structured metrics, such as ultra-metrics or tree metrics, has been studied for even longer, due to their application in reconstructing phylogenetic trees.
Cavalli-Sforza and Edwards~\cite{cavalli1967phylogenetic} introduce the problem and consider finding the closest ultra-metric in $\ell_2^2$ norm.
Finding the closest ultra-metric under various $\ell_p$ norms has been well-studied in theoretical computer science for over three decades. K{\v{r}}iv{\'a}nek and Mor{\'a}vek~\cite{kvrivanek1986np} and Day~\cite{day1987computational} showed NP-Hardness for \textsf{Ultra-metric Violation} when $p = 1, 2$. Farach-Colton, Kannan and Warnow~\cite{farach1993robust} gave an exact polynomial time algorithm when $p = \infty$, Cohen-Addad, Das, Kipouridis, Parotsidis, and Thorup~\cite{cohen2024fitting} gave a constant factor approximation when $p =1$ and Ailon and Charikar~\cite{ailon2011fitting} gave a $\bigO{ (\log(n) \log\log(n))^{1/p}  }$-approximation for any $p \geq 1$. Recently, Charikar and Gao~\cite{charikar2024improved} obtained a $5$-approximation when $p = 0$. This problem is also closely related to correlation clustering, see~\cite{cohen2025fitting} and references therein. 

\paragraph{$\ell_p$-Low Rank Approximation.} Entry-wise low-rank approximation has been well-studied in the theory and machine learning communities~\cite{koller1993constructing, ke2003robust, Clarkson2013, markopoulos2013some, musco2015randomized}. The special case of $p=2$ results in a convex optimization problem, which admits an input-sparsity time algorithm~\cite{Clarkson2013}. It is known that the problem is NP-hard when $p = 0$, $1$ or $\infty$~\cite{Gillis2017,Gillis2015}. For $p \in (0,2)$~\cite{Ban2019} obtained a PTAS under bit-complexity assumptions.
They also obtained a $(3+\eps)$-approximation for any $p>2$.
For $p\in [1,2)$, Mahankali and Woodruff obtained an additive approximation scheme with running time $\poly(n, m) + 2^{\poly(k/\eps)}$~\cite{Mahankali2021}, without bit complexity assumptions.
Finally, for $p=0$, \cite{cohen2024ptas} recently obtained a PTAS that scales polynomially in the bit complexity of the input.

\section{Technical Overview}
\label{sec:techniques}
We now give an overview of our additive approximation schemes, which rely on a new way to analyze \emph{global correlation rounding} \cite{Barak2011}.
Throughout this section, we take $k=1$.
We focus on \textsf{$1$-EMV} here, since our results for \textsf{$\ell_p$-LRA} employ largely the same techniques.

\subsection{Warmup: $n^{2^B}$-time Euclidean Metric Violation via Sherali Adams}
We start by describing an ``orthodox'' use of techniques from dense constraint satisfaction problems (CSPs) to obtain an additive approximation scheme for Euclidean Metric Violation whose running time is polynomial in $n$ but doubly-exponential in the input size, à la \cite{demaine2021multidimensional}.
This guarantee can be obtained by black-box use of any PTAS for dense Max-CSPs, but it will be convenient for us to describe a Sherali-Adams linear programming approach, which we will later modify to obtain our algorithms.

\paragraph{Discretization and Sherali-Adams LP.}
Recall,  $\{d_{ij}\}_{ij \in [n]}$ is an arbitrary matrix such that each $d_{i,j} \geq 0$. 
To obtain an LP relaxation of $\min_{x_1,\ldots,x_n \in \R} \E_{ij} (d_{ij} - |x_i - x_j|)^2$,
we start with a standard discretization argument.
Without loss of generality, we can assume that either $d_{i,j}= 0$ or $1 \leq d_{ij} \leq 2^B$ for a bit-complexity parameter $B$.
It is not hard to show that there is a solution $x$ with objective value $\OPT_{\textrm{EMV}} + \eps \cdot \E_{ij} d_{ij}^2$ in which each $x_i$ assumes values in a discrete subset $S$ of $\R$ with around $\poly\left(2^B / \eps\right)$ points, each of magnitude at most $\poly\left(2^B / \eps\right)$.
(A slightly more sophisticated argument with a geometrically-spaced grid actually shows that $\bigO{B/\eps}$ values per variable $x_i$ suffices -- we will return to this later).

A solution to the degree-$t$ Sherali-Adams linear programming relaxation is a \emph{degree-$t$ pseudo-distribution} over $S^n$.
A degree-$t$ pseudo-distribution $\mu$ consists of \emph{local distributions} $\{ \mu_T \}_{T \in \binom{n}{t}}$, where each $\mu_T$ is a distribution on $S^t$, and the distributions are locally consistent, meaning that $\mu_T, \mu_U$ induce identical marginal distributions on $T \cap U$ -- we often write $\mu_{T \cap U}$ for this marginal.
We denote the objective value associated to $\mu$ by
\[
  \pE_{\mu} \E_{ij} (d_{ij} - |x_i - x_j|)^2 := \E_{ij} \E_{(x_i,x_j) \sim \mu_{ij}}(d_{ij} - |x_i - x_j|)^2 \, 
\]
where ``$\pE_\mu$'' is the ``pseudo-expectation operator'' associated to $\mu$.
The degree-$t$ Sherali-Adams LP has $(nS)^{\mathcal{O}(t)}$ variables and constraints, and so can be solved in $(nS)^{\mathcal{O}(t)}$ time.
Our goal is to round a solution $\mu$ to this relaxation to a solution $y_1,\ldots,y_n \in \R$. 
The crucial operation we can perform on a pseudo-distribution is \emph{conditioning} on the value of a variable (see \cref{sec:prelims} for details).
For any $x_i$ and any value $z \in \supp(\mu_i)$, we can obtain a degree $t-1$ pseudo-distribution $\mu'$ by conditioning $\mu$ on the event $x_i = z$.
This entails conditioning each of the local distributions $\mu_T$ on this event (throwing out those local distributions which do not contain $i$).
A rounding scheme which needs to condition on $s$ variables will result in an LP of size $(nS)^{\mathcal{O}(s)}$.

\paragraph{Global Correlation Rounding.}
The key insight of \cite{Barak2011} (anticipated by earlier works in statistical physics \cite{Montanari2008,Ioffe2000}) is (1) iterated conditioning produces approximately pairwise-independent pseudo-distributions, and (2) approximately pairwise-independent distributions are easy to round for objective functions which are sums of functions each depending on only two variables.
More precisely, a simple potential function argument shows the following:
\begin{lemma}[Pinning Lemma, Total Variation Version]
\label{lem:pinning-lemma-intro}
  Let $\mu$ be a degree-$t$ pseudodistribution on variables $x_1,\ldots,x_n$, taking values in a discrete set $S$.
  For every $s \leq t-2$, there exists a set $T \subseteq [n]$ with $|T| \leq s$ such that if $z_T \sim \mu_T$ and $\mu'$ is given by conditioning on the event $x_T = z_T$, then
  \[
  \E_{z_T \sim \mu_T} \E_{ij \sim [n]} \tv(\mu'_{ij}, \mu'_i \otimes \mu'_j) \leq \sqrt{\frac{\log |S|}{s}} \, .
  \]
\end{lemma}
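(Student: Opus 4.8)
The plan is to run the standard potential-function argument underlying global correlation rounding, using Shannon entropy as the potential and Pinsker's inequality to convert the resulting mutual-information bound into a total-variation bound. The key preliminary observation is that the statement only involves single-variable and pair marginals of pseudodistributions obtained by conditioning on at most $s \le t-2$ variables, and such pseudodistributions have degree at least $2$; hence \emph{every} marginal we manipulate (on $\{i\}$, on $\{i,j\}$, on $\{i\}$ after conditioning on $\le s$ coordinates, etc.) is a genuine probability distribution, so the usual identities for entropy, conditional entropy, and mutual information — and their nonnegativity — apply verbatim.

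Concretely, for a pseudodistribution $\nu$ on $x_1,\dots,x_n$, define the potential $\Phi(\nu) = \E_{i \sim [n]} H(\nu_i)$, where $H(\nu_i)$ is the Shannon entropy of the marginal of $\nu$ on $x_i$. First I would record two facts. \emph{(i) Boundedness:} $0 \le \Phi(\nu) \le \log|S|$ for every $\nu$, since each $\nu_i$ is a genuine distribution supported on $S$. \emph{(ii) One-step drop:} for any degree-$\ge 2$ pseudodistribution $\nu$, if we pick $j \sim [n]$ uniformly, draw $z_j \sim \nu_j$, and let $\nu^{(j,z_j)}$ denote $\nu$ conditioned on $x_j = z_j$, then
\[
\Phi(\nu) - \E_{j \sim [n]} \E_{z_j \sim \nu_j} \Phi\!\left(\nu^{(j,z_j)}\right) \;=\; \E_{i,j \sim [n]} I_{\nu}(x_i ; x_j) \;\ge\; 0 ,
\]
which is just $\E_{z_j}H(x_i \mid x_j = z_j) = H(x_i \mid x_j)$ together with $H(x_i) - H(x_i\mid x_j) = I(x_i;x_j) \ge 0$, applied on the genuine pair marginal $\nu_{ij}$, and then averaged over $i$ and over the uniform choice of $j$.

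Now iterate. Let $j_1,\dots,j_s$ be i.i.d.\ uniform in $[n]$, and define a random sequence $\mu = \mu^{(0)}, \mu^{(1)}, \dots, \mu^{(s)}$, where $\mu^{(r)}$ is obtained from $\mu^{(r-1)}$ by drawing a value for $x_{j_r}$ from its current marginal and conditioning on it (if $j_r$ has already been conditioned on, set $\mu^{(r)} = \mu^{(r-1)}$, which only helps). Since $s \le t-2$, each $\mu^{(r)}$ for $r \le s$ has degree $\ge 2$, so fact (ii) applies at every step; writing $\Phi_r = \E\,\Phi(\mu^{(r)})$ for the expectation over all the randomness, and using that $j_r$ is independent of the prior conditioning, we get $\Phi_{r-1} - \Phi_r = \E\big[\E_{i,j \sim [n]} I_{\mu^{(r-1)}}(x_i ; x_j)\big] \ge 0$ for each $r \le s$. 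These telescope, and by fact (i) their sum is $\Phi_0 - \Phi_s \le \log|S|$, so some $r^\star \in \{1,\dots,s\}$ satisfies $\Phi_{r^\star-1} - \Phi_{r^\star} \le \frac{\log|S|}{s}$. Averaging out the randomness then yields a \emph{fixed} set $T \subseteq [n]$ with $|T| \le r^\star - 1 \le s$ such that, with $\mu'$ denoting $\mu$ conditioned on $x_T = z_T$, we have $\E_{z_T \sim \mu_T}\E_{i,j} I_{\mu'}(x_i;x_j) \le \frac{\log|S|}{s}$. Finally, Pinsker's inequality gives $\tv(\mu'_{ij}, \mu'_i \otimes \mu'_j) \le \sqrt{\frac12 I_{\mu'}(x_i;x_j)}$, and two applications of Jensen (concavity of $\sqrt{\cdot}$, over $(i,j)$ and over $z_T$) give
\[
\E_{z_T \sim \mu_T} \E_{i,j \sim [n]} \tv\!\left(\mu'_{ij}, \mu'_i \otimes \mu'_j\right) \;\le\; \sqrt{\frac12 \, \E_{z_T}\E_{ij} I_{\mu'}(x_i;x_j)} \;\le\; \sqrt{\frac{\log|S|}{2s}} \;\le\; \sqrt{\frac{\log|S|}{s}} ,
\]
as claimed (constants not optimized, and the base of the logarithm is immaterial up to absorbing it into this slack).

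\textbf{Main obstacle.}
None of the steps is genuinely hard; the only points requiring care are the degree bookkeeping — checking that conditioning on $\le s \le t-2$ coordinates leaves enough degree for every entropy/mutual-information identity to be valid on a true distribution (this is exactly the role of the hypothesis $s \le t-2$) — and verifying that in the telescoping sum the ``fresh'' index $j_r$ is drawn independently of the prior conditioning, with repeated indices contributing nonnegatively. The only conceptual content is the (by-now standard) choice of entropy as the potential and the use of Pinsker to bridge to total variation; if one wished to avoid entropy entirely, the identical argument runs with the variance potential $\E_i \operatorname{Var}_{\mu_i}[f]$ for any bounded statistic $f$, at the cost of a somewhat different and, for total variation, less clean final inequality.
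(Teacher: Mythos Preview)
Your proposal is correct and is exactly the standard argument the paper has in mind: the paper does not give a detailed proof of this lemma but simply states that ``a simple potential function argument shows the following'' and attributes the result to \cite{Barak2011}, recording Pinsker's inequality (Fact~\ref{fact:pinsker's}) in the preliminaries precisely for this purpose. Your entropy-potential telescoping argument, followed by Pinsker and Jensen, is that argument, and your bookkeeping on the degree constraint $s \le t-2$ and on repeated indices is correct.
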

\noindent Here $\tv$ is total variation distance, and $\mu'_i \otimes \mu'_j$ is the product of the distributions $\mu'_i$ and $\mu'_j$.

Suppose given a degree-$t$ pseudodistribution $\mu$ of degree $s$ with objective value at most $\OPT_{\textrm{EMV}} + \eps \cdot  \E_{ij} d_{ij}^2$, for some $s$ we will choose later.
By searching over all subsets $T$ of size at most $t-2$ and choices for values $z_T \in S^T$, we can find a conditioned pseudo-distribution $\mu'$ with objective value at most $\OPT_{\textrm{EMV}} + \eps \cdot \E_{ij} d_{ij}^2$ and $\E_{ij \sim [n]} \tv(\mu_{ij}', \mu_i' \otimes \mu_j') \leq \bigO { \log |S|) / \sqrt{t} }$. Then, we can obtain $y$  by sampling each $y_i \sim \mu'_i$ independently. Let $\mu^{' \otimes}$ denote the distribution $\mu'_1\otimes \mu'_2\otimes \ldots \otimes \mu'_n$. We can then use the pinning lemma to analyze the rounding cost as follows: 

\begin{equation*}
    \begin{split}
         \E_{z_T \sim \mu_T}  & \E_{ij \sim [n]} \pE_{\mu^{'\otimes}} \Paren{ d_{ij} - |x_i  - x_j |  }^2 \\
         & = \E_{z_T \sim \mu_T}  \E_{ij \sim [n]} \pE_{\mu_{ij}} \Paren{ d_{ij} - |x_i  - x_j |  }^2 + \Paren{\E_{z_T \sim \mu_T}  \E_{ij \sim [n]} \E_{\mu'_i \otimes \mu'_j} (d_{ij} - |x_i - x_j|^2) - \E_{\mu'_{ij}} (d_{ij} - |x_i - x_j|)^2}  \\
         & \leq \OPT_{\textrm{EMV}} + \eps \cdot \E_{ij} d_{ij}^2 + \Paren{\E_{z_T \sim \mu_T}  \E_{ij \sim [n]} \E_{\mu'_i \otimes \mu'_j} (d_{ij} - |x_i - x_j|^2) - \E_{\mu'_{ij}} (d_{ij} - |x_i - x_j|)^2} \\
         & \leq \OPT_{\textrm{EMV}} + \underbrace{\eps \cdot \E_{ij} d_{ij}^2}_{\text{discretization error}} + \underbrace{\frac{2^{2B}}{\eps^2} \cdot O\Paren{\E_{z_T \sim \mu_T} \E_{ij \sim [n]} \tv(\mu'_{ij}, \mu'_i \otimes \mu'_j) }}_{\text{rounding error}}\, ,
    \end{split}
\end{equation*}
where we used:
\begin{itemize}
    \item for every $x_i$ in the support of $\mu'_i$ and $x_j$ in the support of $\mu'_j$, the function $(d_{ij} - |x_i - x_j|)^2$ is at most $O(2^{2B} / \eps^2)$, and
    \item if $\alpha,\beta$ are distributions and $f$ is a real-valued function, $|\E_{x \sim \alpha} f(x) - \E_{x \sim \beta} f(x)| \leq \|f\|_{\infty} \cdot \tv(\alpha,\beta)$, by the definition of total variation.
\end{itemize}
Then we can apply the pinning lemma (Lemma~\ref{lem:pinning-lemma-intro}) to conclude that if we choose $T$ appropriately of size $2^{O(B)} / \eps^{O(1)}$, the rounded solution has expected objective value $\OPT +  \eps \cdot \E_{ij} d_{ij}^2$.
Since we had to pick $|T|$ exponentially large in $B$, and the size of the Sherali-Adams LP is exponential in $|T|$, we get double-exponential dependence on $B$.

Our rounding schemes always follow this plan: search over all subsets $T$ of coordinates of a given fixed size, draw a sample $z_T \sim \mu_T$, condition on $x_T = z_T$, and then draw independently from each of the $1$-local conditional distributions to obtain a rounded solution.
Our technical contributions are new analyses of this canonical rounding scheme, showing that it obtains much smaller error for low-dimensional embedding problems than the aforementioned analysis suggests.

\subsection{Euclidean Metric Violation in Polynomial Time}
Next we describe the ideas needed to reduce the running time to $n^{\poly(B,\eps^{-1})}$, removing one exponential in $B$, by showing that in the rounding scheme above we can actually take $|T| = \eps^{-\mathcal{O}(1)}$. Note, this still isn't polynomial time, and we eke out a further exponential improvement on $B$ using a better discretization argument later in this section. 
Our main goal is to avoid the pessimistic bound of $2^{\mathcal{O}(B)}/\eps^2$ on the maximum possible value of $(d_{ij} - |x_i - x_j|)^2$ used in the analysis above.\footnote{This by itself would not be enough to achieve the bound on $|T|$ given above but is the main obstacle in removing the doubly exponential dependence on $B$.}
Note that there are initial pseudo-distributions $\mu$ with optimal objective value for which the $2^B/\eps$ bound really could be tight.
For example, imagine that $\mu$ corresponds to an actual distribution over optimal solutions $x_1,\ldots,x_n \in \R$, obtained by taking a single optimal solution and then outputting $x_i + \sigma \cdot 2^B$, where $\sigma$ is a random sign (chosen identically for each $i$).
Then under the product distribution $\mu_i' \otimes \mu_j'$, we expect $|x_i - x_j| \approx 2^B$.

\paragraph{Anchoring the embedding by conditioning once.}
The first observation is that after conditioning on a single $x_a$, each local distribution $\mu_i'$ will place most of the probability mass of each $x_i$ in a ball of radius $\bigO{ \sqrt{ \E_{i,j \sim [n]} d_{ij}^2} }$, which can be significantly smaller than $2^B/\eps$.\footnote{While~\cite{bakshi2025metric} also utilizes an ``anchoring'' procedure, their approach requires that most points are very strongly anchored. In contrast, our initial anchoring procedure is much weaker and only serves as the first step in our analysis.}
To see why, it is convenient to think of $\mu$ as an actual probability distribution over $1$-dimensional metrics (i.e., $\mu$ supported on $\R^n$), and consider choosing a random $a \sim [n]$ and conditioning $\mu$ on the location of $x_a$.
This ``anchors'' the distribution to $x_a$, and $99\%$ of the probability mass of each $x_i$ will now be contained in a ball of radius $\bigO{\sqrt{\E_{\mu_{ia}} |x_i - x_a|^2}}$ centered at $x_a$, by Chebyshev's inequality.
This escapes the bad example of a distribution $\mu$ on solutions which are randomly shifted by $\pm 2^B$ -- conditioning on a single $x_a$ fixes such a global random shift.
By comparing $\E_{i\sim [n]} \E_\mu |x_i - x_a|^2$ to the objective function, we can also conclude that these balls are not too big on average: $\E_i \E_a \E_{\mu} |x_i - x_a|^2 \leq  \bigO{ \E_{ij} d_{ij}^2}$, since
\begin{equation}
\label{eqn:avg-variance-intro}
    \begin{split}
        \E_{a \sim [n]} \E_{i \sim [n]} \E_{\mu_{ia}} (| x_i - x_a | \pm d_{ij} )^2 \leq 2 \pE_{\mu} \E_{i, a \sim [n]}  \Paren{ d_{ia} - |x_i - x_a|}^2 + 2 \E_{i, a \sim [n]} d_{ia}^2 \leq 4 \E_{ij} d_{ij}^2 \, ,
    \end{split}
\end{equation}
using in the last step that $\OPT_{\textrm{EMV}} \leq \E_{i, a \sim [n]} d_{ia}^2$.
Furthermore, the same argument holds even if $\mu$ is a Sherali-Adams pseudo-distribution.

Optimistically, if we imagine that after conditioning on $x_a$, each $x_i$ is with probability $1$ contained in a ball of radius $\bigO{\sqrt{\E_\mu |x_i - x_a|}}$, and furthermore, for each $i$, we have $\E_{\mu_{ia}} |x_i - x_a|^2 \leq \bigO{ \E_{ij} d_{ij}^2}$, then we can use the same $\tv$-based analysis as above, but now we get to use the bound $\bigO{\E_{ij} d_{ij}^2}$ in place of the $2^{2B} / \eps^2$ upper bound on $(d_{ij} - |x_i - x_j|)^2$.
This leads to rounded objective value
\[
\OPT_{\textrm{EMV}} + \underbrace{\eps \E_{ij} d_{ij}^2}_{\text{discretization error}} + \underbrace{\E_{ij} d_{ij}^2 \cdot \mathcal{O} \Paren{\E_{z_T \sim \mu_T} \E_{ij \sim [n]} \tv(\mu'_{ij}, \mu'_i \otimes \mu'_j) }}_{\text{rounding error}}\, .
\]
Then we would deploy the pinning lemma with $|T| = (B/\eps)^{\mathcal{O}(1)}$, which, under these optimistic assumptions, would lead to objective value $\OPT_{\textrm{EMV}} + \eps \cdot \E_{ij} d_{ij}^2$ in time $n^{\poly(B,\eps^{-1})}$.

\paragraph{Splitting into Linear and Quadratic Terms.}
The sketch above makes an unrealistic assumption: that after we condition on a single $x_a$, there is a ball of radius $\bigO{ \sqrt{\E_{ij} d_{ij}^2}}$ containing the entire support of each $x_i$.
But really we only know a bound on the second moment $\E_{\mu_{ia}} |x_i - x_a|^2$, not a probability-1 upper bound on $|x_i - x_a|$, and even that only on average over $i \in [n]$ (recall \cref{eqn:avg-variance-intro}).
So, we need a version of the argument using only these second-moment bounds.

The first step is to expand each term of our objective function:
\[
(d_{ij} - |x_i - x_j|)^2 = d_{ij}^2 - 2 d_{ij} |x_i - x_j | + x_i^2 + x_j^2 - 2 x_i x_j \, .
\]
Since $\E_{\mu'_{ij}} d_{ij}^2 = \E_{\mu'_i \otimes \mu'_j} d_{ij}^2$ and similarly for $x_i^2$ and $x_j^2$, our rounding scheme incurs zero expected error on these terms.
The remaining rounding error can then be broken down into:
\begin{align*}
\underbrace{\E_{ij} d_{ij} \cdot \Abs{ \E_{\mu'_{ij}} |x_i - x_j| - \E_{\mu_i' \otimes \mu_j'} |x_i - x_j| } }_{\text{linear term}} \quad \text{ and } \quad \E_{ij} \Abs{ \E_{\mu'_{ij}} x_i x_j - \E_{\mu'_i \otimes \mu'_j} x_i x_j } = \underbrace{\E_{ij} \Abs{ \Cov_{\mu'_{ij}}(x_i,x_j)}}_{\text{quadratic term}}\, .
\end{align*}

\paragraph{The Linear Term: Splitting into Tail and Bulk Contributions.}
For the linear term, we will impose the probability-one upper bound on $|x_i - x_j|$ ``by force'', by splitting each $x_i$ up into a bulk part and a tail part.
Let $E_i$ be the event that $|x_i - x_a| \leq \eps^{-\mathcal{O}(1)} \sqrt{\E_{\mu_{ia}} |x_i - x_a|^2}$, where $x_a$ is the anchor point we conditioned on previously.
We can split the function $|x_i - x_j|$ up as
\[
|x_i - x_j| = |x_i - x_j| \cdot \Ind(E_i, E_j) + |x_i - x_j| \cdot \Ind(\overline{E_i} \text{ or } \overline{E_j}) \, .
\]

Using Chebyshev's inequality, we can show that the tail parts can't contribute much to the objective value of $\mu$ or the objective value of the rounded solution we obtain by from each $\mu_i'$ independently:
\[
\E_{ij} d_{ij}\cdot  \E_{\mu'_{ij}} |x_i - x_j| \cdot \Ind(\overline{E_i} \text{ or } \overline{E_j}) \leq \eps  \E_{ij} d_{ij}^2 \quad \text{ and } \quad
\E_{ij} d_{ij} \cdot \E_{\mu'_i \otimes \mu'_j} |x_i - x_j| \cdot \Ind(\overline{E_i} \text{ or } \overline{E_j}) \leq \eps  \E_{ij} d_{ij}^2 \, .
\]

At the same time, for the ``bulk part'' $|x_i - x_j| \cdot \Ind(E_i,E_j)$, there is a probability-$1$ upper bound of $\eps^{-\mathcal{O}(1)} (\sqrt{\E_{\mu_{ia}} |x_i - x_a|^2} + \sqrt{\E_{\mu_{ja}} |x_j - x_a|^2} )$, so we can bound
\begin{equation*}
    \begin{split}
        & (\text{rounding error for }   |x_i - x_j| \cdot \Ind(E_i,E_j))  \\ 
        & \qquad \leq \eps^{-\mathcal{O}(1)} \cdot \Paren{\sqrt{\E_{\mu_{ia}} |x_i - x_a|^2} + \sqrt{\E_{\mu_{ja}} |x_j - x_a|^2} } \cdot \tv(\mu_{ij}', \mu_i' \otimes \mu_j') \, .
    \end{split}
\end{equation*}
Putting things together, to get rounding error down to $\bigO{\eps \cdot \E_{ij} d_{ij}^2 }$ for the linear term $\E_{ij} d_{ij} |x_i - x_j|$, our task now boils down to proving a variant of the pinning lemma which offers a bound on a weighted average of $\tv(\mu'_{ij},\mu'_i \otimes \mu'_j)$.
Substituting an appropriately-chosen method of selecting $T$ into the potential-function argument used to prove Lemma~\ref{lem:pinning-lemma-intro}, along with a tighter bound on the \emph{effective alphabet size} dependence\footnote{Naively, we would still require $|T|$ to scale with $B$, which arises due to an initial bound on the average entropy of $x_i$. By instead analyzing relevant functions of the $x_i$ in the potential function argument, we avoid this dependence.} in the number of rounds of conditioning, we are able to show that after conditioning on $\delta^{-\mathcal{O}(1)}$ choices of $x_i$,
\[
\E_{ij} d_{ij} \cdot \Paren{\sqrt{\E_{\mu_{ij}} |x_i - x_a|^2} + \sqrt{\E_{\mu_{ij}} |x_j - x_a|^2} } \cdot \tv(\mu_{ij}', \mu_i' \otimes \mu_j') \leq \delta \E_{ij} d_{ij}^2\,
\]
at which point we can choose $\delta = \eps^{\mathcal{O}(1)}$ (see \cref{lem:rs-gcr-error} and the lemmas that lead up to it).\footnote{The specific approximate pairwise independence statement in~\cref{sec:rs} is slightly different then what is presented here. The above statements are simplified for the sake of exposition.}
The conclusion is that there is a set of $\eps^{-O(1)}$ indices $x_i$ we can condition on so that afterwards independently sampling each $y_i \sim \mu_i'$ leads to rounding error $ \bigO{ \eps \cdot  \E_{ij} d_{ij}^2}$ on the linear term.

\paragraph{The Quadratic Term.}
For the quadratic term, we can use another variant of the pinning lemma, due to \cite{Barak2011}, which shows via a distinct potential function argument for any pseudo-distribution $\mu$ on variables $x_1,\ldots,x_n$, there is a set $T \subseteq [n]$ of indices such that if $\mu'$ is the result of conditioning $\mu$ on $x_T = z_T$, with $z_T\sim \mu_T$,
\[
\E_{z_T \sim \mu_T} \E_{ij} |\Cov_{\mu'_{ij}}(x_i,x_j)| \leq \eps \cdot \E_{i\sim[n]} \Var_\mu(x_i) \, .
\]
In our case, if $\mu$ is the pseudo-distribution we obtain after the anchoring step, $\E_{i \sim [n]} \Var_{\mu}(x_i) \leq \E_{i \sim [n]} \E_{\mu_{ia}} |x_i - x_a|^2 \leq  \bigO{ \E_{ij} d_{ij}^2}$.
This means that by conditioning on $\eps^{-\mathcal{O}(1)}$ indices, we can push the rounding error on the quadratic term to $\eps \cdot \E_{ij} d_{ij}^2$.

To make the overall argument successfully, we have to show that we can condition on a single set of $\eps^{-\mathcal{O}(1)}$ indices to get small rounding error simultaneously for the linear and quadratic terms.
Luckily, we can combine the potential functions from both variants of the pinning lemma into a single potential function (see \cref{lem:gcr_linear_combo} for a formal statement).

\paragraph{From $n^B$ to $\poly(n,B)$ Running Time.} 
Now we describe how to improve the algorithm to run in truly polynomial time, for fixed $\eps > 0$.
There are two sources of the exponential dependence on $B$ in the analysis we described so far.
First, we discretized $\R$ into around $2^B / \eps$ points, meaning that our LP is at least exponential-in-$B$ sized.
Second, Lemma~\ref{lem:pinning-lemma-intro} requires taking our conditioning set $T$ to have size at least $\log(\text{support size of $x_i$}) \geq \log(2^B) \geq B$. 

To address the first issue, we show that there is actually a discretization of $\R$ containing only $\bigO{B/\eps}$ points which suffices to preserve solution value $\OPT_{\textrm{EMV}} + \eps \cdot \E_{ij} d_{ij}^2$.
We rely again on the anchoring argument to show that a grid with geometrically increasing interval length can be used, centered at a single anchor point -- this simple argument appears in Lemma~\ref{lem:discretization-rs}.

At this point, we could get an algorithm with running time $n^{\poly(\log B, \eps^{-1})}$ by using the rest of the analysis unchanged.
To get from quasi-polynomial to polynomial dependence on $B$, we use one more idea.
We apply the pinning lemma to a slightly different choice of variables -- rather than $x_i$, we use the ``truncated'' versions of $x_i$, i.e. $\tilde{x}_i = (x_i -\E x_i )\cdot E_i + \E x_i$.
This has the result that each $x_i$ now assumes values in a smaller set, and is enough to show that we can use $|T| = \eps^{-\mathcal{O}(1)}$, with no dependence at all on $B$, leading to our truly polynomial-time algorithm.

\subsection{Weighted $k$-Euclidean Metric Violation}
We turn next to Theorem~\ref{thm:RS-weighted-intro}, on weighted $k$-EMV with dense weights.
The algorithmic guarantees for Max-CSPs (whose toolkit we are borrowing here) differ little between complete instances, where all possible tuples of variables participate in a constraint, and densely-weighted instances. In fact, the analysis even extends to sparse instances, as long as the underlying constraint graph enjoys some mild expansion properties.
Since our algorithm relies on tools from the CSP literature, one might expect that similarly the analysis of our algorithm could remain essentially unchanged when generalizing to the weighted $k$-EMV case, but this is not to be the case for our algorithm.
Rather, the key anchoring step in our argument breaks down. 
To see why, consider two simple examples, both allowable weight matrices under our assumptions.

The first example: the weights $w_{ij} \in [0,1]$ are a disjoint union of two cliques, on $1,\ldots,n/2$ and $n/2+1,\ldots,n$.
Then consider the following distribution $\mu$ over optimal solutions $x_1,\ldots,x_n$ which we could be required to round.
Start with a single optimal solution $x_1,\ldots,x_n$, and pick two random signs $\sigma,\sigma'$.
Shift $x_1,\ldots,x_{n/2}$ by $2^B \sigma$ and $x_{n/2+1},\ldots,x_n$ by $2^B \sigma'$.
At best we can hope to anchor half of the points by conditioning on any single $x_a$.
In this example, the remedy is clear: condition on two points, $x_a,x_b$, to anchor both clusters.

The second example: the weights $w_{ij} \in [0,1]$ form an expander graph.
In this case our algorithm can be modified so that our anchoring step works by conditioning on $x_a$, but we need to replace our Sherali-Adams LP with a Sum of Squares SDP.
Once we do so, we can argue as follows.
If we condition on a randomly-chosen $x_a$, then a typical $i$ is anchored to a ball of radius around $\sqrt{\E_{i,a \sim [n]} \E_{\mu_{ia}} |x_i - x_a|^2}$.
Previously we were able to compare this quantity to the objective function to conclude a bound of $\bigO{\sqrt{\E_{ij} d_{ij}^2}}$.
But in the weighted case, the objective function only allows us to bound $\E_{i,a \sim w} \E_{\mu_{ia}} |x_i - x_a|^2$.
If $\mu$ were an actual distribution over $\R^n$, and $w$ a spectral expander, we could resolve our issue via
\[
  \E_{\mu} \E_{i,a \sim [n]} |x_i - x_a|^2 \leq \bigO{1} \cdot \E_{\mu} \E_{i,a \sim w} |x_i - x_a|^2 \, ,
\]
since the right-hand side is the quadratic form of the Laplacian of $w$.
It turns out that even if $\mu$ is an SoS-pseudo-distribution, this inequality still holds (see \cref{fact:nonnegative-quadratic}), allowing us to complete our anchoring argument in the expander case.\footnote{While the above sketch might make it sound like our analysis would go through if $w$ is any expander our analysis will work, and this is true of the anchoring step, the ``linear term'' step requires $w$ to be dense. Density can be circumvented in the corresponding step of Max-CSP algorithms \cite{Barak2011}, but we do not know how to translate the ``local-to-global'' technique used in those analyses to our setting.}

Now we return to the case of general weights $w$. We require a rather stringent assumption of $\delta$-regularity (every row sums to $\delta n$), which does not appear in the CSP literature. Under this assumption, we can use results on partitioning low threshold-rank graphs into expanders~\cite{Gharan2014} to show that the weight matrix, viewed as a graph, can be partitioned into $\delta^{-\mathcal{O}(1)}$ node-induced expanding subgraphs.
This implies that the ideas in the preceding examples are enough to show that by conditioning on $\delta^{-\mathcal{O}(1)}$ many $x_a$'s, we can obtain a conditioned pseudo-distribution $\mu'$ such that $\E_i \Var_{\mu'} (x_i) \leq  \delta^{-\mathcal{O}(1)} \cdot \E_{ij} d_{ij}^2$ (see \cref{cor:magnitude-points-wrs} for details). While this is larger than the initial variance bound for \textsf{$k$-EMV}, we can drive down the rounding error at the expense of running time, resulting in a $n^{\poly(B,\eps^{-1},\delta^{-1})}$ time algorithm.
The improved discretization argument we use in the non-weighted case to remove the single-exponential dependence on $B$, however, does not extend to the weighted case.

\section{Preliminaries}
\label{sec:prelims}

Throughout the paper we use the $\expecf{i,j\sim [n]}{\cdot}$ to denote $ \sum_{i,j \in [n]} (\cdot) / n^2$. For a $n \times n$ matrix $W$ with entries $0 \leq w_{i,j} \leq 1$, we use the notation $\expecf{i, j \sim W }{\cdot}$ to denote $ \sum_{i,j \in [n]} w_{ij}(\cdot) / ( \sum_{i,j \in [n]} w_{ij} )$. We also use $\norm{x} = \Paren{ \sum_{i} x_i^2 }^{1/2} $ to denote the $\ell_2$ norm of a vector and $\norm{\cdot}_p = \Paren{ \sum_{i} x_i^p }^{1/p} $ to denote the $\ell_p$ norm. We also use the notation $ \mathbb{1}[E]$ to denote the indicator for the event $E$ and $\overline{E}$ to denote the complement.  %

\paragraph{Information theory.}
For random variables $x, y$ with alphabet size $q$,  we use $\textsf{H}(x)$ to denote the $q$-ary entropy of $x$ and $\mathsf{I}(x;y) = \mathsf{H}(x) - \mathsf{H}(x \mid y)$ to denote mutual information. Assuming $x \sim \mu_1$ and $y \sim \mu_2$, we use the notation $\textsf{TV}(x,y) =\frac{1}{2} \sup_{ \norm{f}_{\infty} \leq 1 } \expecf{x \sim \mu_1}{ f(x) } -  \expecf{y \sim \mu_2}{ f(y) } $ to denote total variation distance between $\mu_1$ and $\mu_2$.

\begin{fact}[Pinsker's Inequality]
\label{fact:pinsker's}
Given random variables $x\sim \mu_1$ and $y \sim \mu_2$, $\tv(\mu_1, \mu_2 )^2  \leq \mathsf{I}(x, y)$.
\end{fact}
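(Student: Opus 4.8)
This is the classical Pinsker inequality, and the plan is to establish the (slightly stronger) bound $2\,\tv(\mu_1,\mu_2)^2 \le D_{\mathrm{KL}}(\mu_1\|\mu_2)$ for an arbitrary pair of probability distributions $\mu_1,\mu_2$ on a common alphabet, and then specialize. The specialization is just a matter of unwinding definitions: mutual information is itself a relative entropy, so in the instance relevant here the right-hand side equals $\mathsf{I}(x;y)$; and since the stated inequality is weaker by a factor of $2$ it follows a fortiori. So the entire content is in the general inequality, which I would prove by the two standard moves.

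\emph{Step 1: reduce to a two-point alphabet.} Let $A=\{a:\mu_1(a)\ge\mu_2(a)\}$ be the set achieving the variational definition of total variation, so that $\tv(\mu_1,\mu_2)=\mu_1(A)-\mu_2(A)=:p-q$ with $0\le q\le p\le 1$. Coarsening along $a\mapsto\mathbb{1}[a\in A]$ leaves the total variation unchanged and sends $\mu_1,\mu_2$ to $\mathrm{Ber}(p)$ and $\mathrm{Ber}(q)$; and applying the log-sum inequality separately on the cells $A$ and $A^{c}$ shows the relative entropy can only decrease, $D_{\mathrm{KL}}(\mu_1\|\mu_2)\ge D_{\mathrm{KL}}(\mathrm{Ber}(p)\|\mathrm{Ber}(q))$. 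It therefore suffices to prove the scalar inequality $D_{\mathrm{KL}}(\mathrm{Ber}(p)\|\mathrm{Ber}(q))\ge 2(p-q)^2$ for all $0\le q\le p\le 1$.

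\emph{Step 2: the scalar inequality.} Fix $p$ and set $g(q)=p\ln\frac{p}{q}+(1-p)\ln\frac{1-p}{1-q}-2(p-q)^2$. Then $g(p)=0$, and differentiating in $q$ collapses cleanly to $g'(q)=(p-q)\bigl(4-\frac{1}{q(1-q)}\bigr)$. Since $q(1-q)\le\frac14$ on $[0,1]$, the second factor is non-positive, so $g'$ has the opposite sign to $p-q$: $g$ is non-increasing on $(0,p)$ and non-decreasing on $(p,1)$, hence $g\ge g(p)=0$ throughout. Chaining with Step 1 gives $2\,\tv(\mu_1,\mu_2)^2=2(p-q)^2\le D_{\mathrm{KL}}(\mathrm{Ber}(p)\|\mathrm{Ber}(q))\le D_{\mathrm{KL}}(\mu_1\|\mu_2)$, and recognizing $D_{\mathrm{KL}}(\mu_1\|\mu_2)$ as the mutual information $\mathsf{I}(x;y)$ in the relevant instance (and discarding the factor $2$) yields the stated inequality.

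I do not expect a genuinely hard step here; the argument is textbook. The only two places that want a little care are (i) in Step 1, picking exactly the total-variation-optimal set, so the coarsening is lossless for $\tv$ while the log-sum inequality keeps the relative entropy under control (this is precisely the ``data-processing'' ingredient, which the log-sum inequality supplies directly, with no appeal to the chain rule), and (ii) in Step 2, arranging the differentiation so the factorization $(p-q)\bigl(4-\frac{1}{q(1-q)}\bigr)$ appears, after which the sign analysis is a single line.
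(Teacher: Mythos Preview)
The paper states this as a \emph{Fact} without proof, treating it as a standard result from information theory. Your proposal supplies the classical textbook proof (reduction to the Bernoulli case via the log-sum inequality, followed by the scalar calculus argument), which is correct and complete; there is nothing to compare it against in the paper itself.
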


Next, recall the following standard fact for lower bounding the decrease in variance between random variables $u, v$ by conditioning on one of them. 

\begin{fact}[Variance Reduction]
\label{fact:var-reduction}
    Given scalar random variables $u, v$ drawn from the joint distribution $\calD$, the decrease in variance of $u$ when conditioning on $v$ can be lower bounded as follows:
    \begin{equation*}
        \Var(u) - \E_v \Var(u \vert v) \geq \frac{\Cov^2(u, v)}{4\cdot \Var(v)}\,.
    \end{equation*}
\end{fact}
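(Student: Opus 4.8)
\textbf{Proof plan for Fact~\ref{fact:var-reduction} (Variance Reduction).}
The plan is to reduce the statement to the well-known identity for conditional variance, $\Var(u) - \E_v \Var(u \mid v) = \Var_v\bigl(\E[u \mid v]\bigr)$ (the law of total variance), and then lower bound the right-hand side using a one-dimensional regression / Cauchy--Schwarz argument. Concretely, once we have $\Var(u) - \E_v \Var(u \mid v) = \Var_v(\E[u\mid v])$, it suffices to show $\Var_v(\E[u \mid v]) \geq \Cov^2(u,v)/(4\Var(v))$; in fact the sharper bound $\Var_v(\E[u\mid v]) \geq \Cov^2(u,v)/\Var(v)$ holds, and the factor of $4$ in the statement gives us slack.

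First I would establish the law of total variance. Writing $g(v) = \E[u \mid v]$, we have $\Var(u) = \E\bigl[(u - \E u)^2\bigr]$; decomposing $u - \E u = (u - g(v)) + (g(v) - \E u)$ and expanding the square, the cross term $\E\bigl[(u - g(v))(g(v) - \E u)\bigr]$ vanishes by the tower property (conditioning on $v$ makes $u - g(v)$ have mean zero while $g(v) - \E u$ is $v$-measurable). The first squared term is $\E\bigl[(u - g(v))^2\bigr] = \E_v \Var(u \mid v)$ and the second is $\Var_v(g(v))$, giving the identity.

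Next I would lower bound $\Var_v(g(v))$. The key observation is that $\Cov(u, v) = \Cov(g(v), v)$: indeed $\Cov(u,v) = \E[uv] - \E[u]\E[v] = \E_v\bigl[v\,\E[u\mid v]\bigr] - \E[u]\E[v] = \E[g(v)v] - \E[g(v)]\E[v] = \Cov(g(v), v)$, again using the tower property. Now apply Cauchy--Schwarz to the random variables $g(v) - \E g(v)$ and $v - \E v$: $\Cov(g(v),v)^2 \leq \Var_v(g(v)) \cdot \Var(v)$. Rearranging gives $\Var_v(g(v)) \geq \Cov(u,v)^2 / \Var(v)$, and combining with the total-variance identity yields $\Var(u) - \E_v\Var(u\mid v) \geq \Cov^2(u,v)/\Var(v) \geq \Cov^2(u,v)/(4\Var(v))$, which is the claimed bound (with room to spare). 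One should note the degenerate case $\Var(v) = 0$, in which case $v$ is a.s. constant, $\Cov(u,v) = 0$, and the inequality reads $\Var(u) - \E_v\Var(u\mid v) \geq 0$, which holds trivially since conditioning never increases variance.

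This argument is entirely standard and I do not anticipate a genuine obstacle; the only point requiring mild care is making sure the conditional expectation $\E[u\mid v]$ is well-defined and all second moments are finite, which is automatic in the discrete finite-support setting where this fact will be applied. The factor of $4$ in the paper's statement suggests the authors may prove it via a slightly lossier route (for instance, bounding via a single conditioning step written as a sign or truncation argument), but the tight argument above subsumes it.
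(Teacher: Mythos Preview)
Your proof is correct; in fact you obtain the sharper inequality $\Var(u) - \E_v \Var(u \mid v) \geq \Cov^2(u,v)/\Var(v)$, so the factor $4$ is unnecessary. The paper itself does not give a proof of this statement --- it is recorded in the preliminaries as a standard fact to be recalled --- so there is nothing to compare against beyond noting that your argument via the law of total variance plus Cauchy--Schwarz is the canonical one.
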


\paragraph{Sherali-Adams Linear Programming.}
We review the local-distribution view of the Sherali-Adams linear programming hierarchy.
Let $\Omega$ be a finite set and $n \in \N$.
For $t \in \N$, the level-$t$ Sherali-Adams LP relaxation of $\Omega^n$ is as follows.
(For a more formal account, see e.g. \cite{Barak2011}.)

\paragraph{Variables and constraints.} For each $T \in \binom{n}{t}$, we have $|\Omega|^t$ non-negative variables, $\mu_T(x)$ for $x \in \Omega^t$, collectively denoted $\mu_T$.
For each $T$, we constrain $\sum_x \mu_T(x) = 1$ so that $\mu_T$ is a probability distribution over $\Omega^t$.
Then, for every $S,T \in \binom{n}{t}$, we add linear constraints so that the distribution $\mu_T$ restricted to $S \cap T$ is identical to the distribution $\mu_S$ restricted to $S \cap T$.
Overall, we have $(|\Omega|n)^{O(t)}$ variables and constraints.
We sometimes call the collection of local distributions $\{ \mu_T \}$ satisfying these constraints a ``degree-$t$ Sherali-Adams pseudo-distribution''.

\paragraph{Pseudo-expectation view.}
If we have a Sherali-Adams solution which is clear from context, for any function $f$ of at most $t$ variables out of $x_1,\ldots,x_n$ (a.k.a. ``$t$-junta''), we write $\pE f(x_1,\ldots,x_n)$ for $ \E_{x \sim \mu_T} f(x_T)$, where $\mu_T$ is the local distribution on the subset $T \subseteq [n]$ of variables on which $f$ depends.
We can extend $\pE$ to a linear operator on the linear span of $t$-juntas.
Observe that $\pE f$ is a \emph{linear} function of the underlying LP variables.

\paragraph{Conditioning.}
We will frequently pass from a Sherali-Adams solution $\{\mu_T\}$ to a \emph{conditional} solution $\{\mu_T \, | \, x_{i_1} = a_1,\ldots,x_{i_{t'}} = a_{t'} \}$, for $t' < t$.
We obtain the conditional local distribution on $x_{j_1},\ldots,x_{j_{t-t'}}$ by conditioning the local distribution on $x_{j_1},\ldots,x_{j_{t-t'}}, x_{i_1},\ldots,x_{i_{t'}}$ on the event $\{ x_{i_1} = a_1,\ldots,x_{i_{t'}} = a_{t'} \}$.

\begin{definition}
\label{def:product-marginals-pe}
    Let $\mu$ be a pseudo-distribution with indeterminates $x_1, \ldots, x_n$ where $x_i$ is $\R^k$-valued.
    We define $\mu^{\otimes}$ to be the distribution which is the independent product of marginals on each $x_i$. 
\end{definition}

\begin{definition}
    Let $\mu$ be a pseudo-distribution on $x_1,\ldots,x_n$.
    For any set $\calT \subseteq [n]$, we define $\mu_{\calT}$ to be the pseudo-distribution generated by sampling $\hat{x_i}$ jointly for all $i \in \calT$ and conditioning on $x_i = \hat{x_i}$ for $i \in \calT$.
    Hence $\mu_\calT$ is itself a random variable, assuming values in the set of pseudo-distributions.
    Furthermore, let $\E_{\calT}$ denote the expectation over the randomness in the sampling of the values of $x_i$ for $i \in \calT$.
\end{definition}

 Note that since conditioning on a value drawn from the marginal of a single variable maintains the pseudo-distribution in expectation, we have the following fact:

 \begin{fact}
     For any pseudo-distribution $\mu$ and any set $\calT \subseteq [n]$ we have that $\E_{\calT} \pE_{\mu_{\calT}}  = \pE_{\mu}$.
 \end{fact}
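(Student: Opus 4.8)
The plan is to reduce the identity to the single-variable case and then induct on $|\calT|$. First, observe that if $\mu$ has degree $t$, then $\mu_{\calT}$ has degree $t - |\calT|$, so $\pE_{\mu_{\calT}}$ is only defined on functions depending on at most $t - |\calT|$ of the indeterminates; thus the precise statement I would prove is $\E_{\calT} \pE_{\mu_{\calT}} f = \pE_{\mu} f$ for every such $f$. Since both sides are linear functionals of $f$, it suffices to check this when $f$ is a single monomial supported on a set $S \subseteq [n]$ with $|S| \le t - |\calT|$.

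The heart of the argument is the case $\calT = \{a\}$. Here, by the definition of conditioning for Sherali--Adams solutions, the local distribution of $\mu_{\{a\}}$ on $S$ is the conditional of the (honest) local distribution of $\mu$ on $S \cup \{a\}$ given the value of $x_a$; note $|S \cup \{a\}| \le (t-1) + 1 = t$, so this local distribution exists. Writing $\mu_{S \cup \{a\}}$ for it and $\mu_a$ for the local distribution on $\{a\}$, I would then compute
\[
\E_{\{a\}} \pE_{\mu_{\{a\}}} f \;=\; \E_{\hat x_a \sim \mu_a}\; \E_{x \sim (\mu_{S \cup \{a\}} \mid x_a = \hat x_a)} f(x) \;=\; \E_{x \sim \mu_{S \cup \{a\}}} f(x) \;=\; \E_{x \sim \mu_S} f(x) \;=\; \pE_{\mu} f,
\]
where the middle equality is the tower property applied inside the genuine probability distribution $\mu_{S \cup \{a\}}$ — here one uses that $\mu_a$ is exactly the marginal of $\mu_{S \cup \{a\}}$ on $\{a\}$, which is Sherali--Adams local consistency — and the last equality is again local consistency (the marginal of $\mu_{S \cup \{a\}}$ on $S$ equals $\mu_S$). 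For the inductive step, I would write $\calT = \calT' \cup \{a\}$ with $a \notin \calT'$, and use the chain rule for the honest local distribution of $\mu$ on $\calT$ to see that drawing $\hat x_{\calT}$ and conditioning is the same as producing $\mu_{\calT'}$ first and then drawing $\hat x_a$ from the marginal of $\mu_{\calT'}$ on $\{a\}$ and conditioning again; this gives $\E_{\calT} \pE_{\mu_{\calT}} f = \E_{\calT'} \E_{\hat x_a} \pE_{(\mu_{\calT'})_{\{a\}}} f$. Since $\mu_{\calT'}$ has degree $t - |\calT'|$ and $f$ depends on at most $t - |\calT| = (t - |\calT'|) - 1$ indeterminates, the single-variable case applies to $\mu_{\calT'}$ and yields $\E_{\hat x_a} \pE_{(\mu_{\calT'})_{\{a\}}} f = \pE_{\mu_{\calT'}} f$; the induction hypothesis then gives $\E_{\calT'} \pE_{\mu_{\calT'}} f = \pE_{\mu} f$, completing the step.

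I do not expect a genuine obstacle here; the substantive content is exactly the tower property applied within each honest local distribution, and everything else is bookkeeping. The only points requiring care are (i) tracking degrees so that every local distribution invoked (on $S \cup \calT$ and on the intermediate sets arising in the induction) has size at most $t$ and hence genuinely exists, and (ii) verifying at each step that the distribution from which $\hat x_a$ is sampled really is the correct marginal of the pseudo-distribution being conditioned — both of which are guaranteed by the local-consistency constraints of the Sherali--Adams hierarchy. (The same proof applies verbatim to Sum-of-Squares pseudo-distributions, since the only features used are local consistency and the existence of honest low-degree local distributions.)
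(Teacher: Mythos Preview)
Your proposal is correct and matches the paper's approach: the paper does not actually give a proof, but states the fact as an immediate consequence of the single-variable case (``conditioning on a value drawn from the marginal of a single variable maintains the pseudo-distribution in expectation''), which is exactly what you establish via the tower property inside the honest local distribution and then extend by induction on $|\calT|$. Your write-up simply fills in the bookkeeping the paper omits.
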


We note that we can extend the pseudo-expectation view to the Lasserre/sum-of-squares hierarchy (see~\cite{barak2016proofs, fleming2019semialgebraic} for a formal treatment). 

\begin{definition}[Constrained pseudo-distributions]
\label{def:constrained-pseudo-distributions}
Let $\calA = \Set{ p_1\geq 0 , p_2\geq0 , \dots, p_r\geq 0}$ be a system of $r$ polynomial inequality constraints of degree at most $d$ in $m$ variables.
Let $\mu$ be a degree-$\ell$ pseudo-distribution over $\mathbb{R}^m$.
We say that $\mu$ \emph{satisfies} $\calA$ at degree $\ell \ge1$ if for every subset $\calS \subset [r]$ and every sum-of-squares polynomial $q$ such that $\deg(q) + \sum_{i \in \calS } \max\Paren{ \deg(p_i), d} \leq \ell$, $\pexpecf{\mu}{ q \prod_{i \in \calS} p_i } \geq 0$.
Further, we say that $\mu$ \emph{approximately satisfies} the system of constraints $\calA$ if the above inequalities are satisfied up to additive error $\pexpecf{\mu}{ q \prod_{i \in \calS} p_i } \geq -2^{-n^{\ell} } \norm{q} \prod_{i \in \calS} \norm{p_i}$, where $\norm{\cdot}$ denotes the Euclidean norm of the coefficients of the polynomial, represented in the monomial basis.  
\end{definition}

Crucially, there's an efficient separation oracle for moment tensors of constrained pseudo-distributions. 

\begin{fact}[\cite{shor1987approach, parrilo2000structured}]
    \label{fact:sos-separation-efficient}
    For any $m,\ell \in \N$, the following convex set has a $m^{\bigO{\ell}}$-time weak separation oracle, in the sense of \cite{grotschel1981ellipsoid}:\footnote{
        A separation oracle of a convex set $S \subset \R^M$ is an algorithm that can decide whether a vector $v \in \R^M$ is in the set, and if not, provide a hyperplane between $v$ and $S$.
        Roughly, a weak separation oracle is a separation oracle that allows for some $\eps$ slack in this decision.
    }:
    \begin{equation*}
        \Set{  \pexpecf{\mu(x)} { (1,x_1, x_2, \ldots, x_m)^{\otimes \ell } } \Big\vert \text{ $\mu$ is a degree-$\ell$ pseudo-distribution over $\R^m$}}
    \end{equation*}
\end{fact}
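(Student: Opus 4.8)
The plan is to exhibit the set in question as (a linear image of) a spectrahedron and then invoke the ellipsoid method of Gr\"otschel, Lov\'asz, and Schrijver \cite{grotschel1981ellipsoid}. First observe that a degree-$\ell$ pseudo-distribution $\mu$ over $\R^m$ is determined by the $\binom{m+\ell}{\ell} = m^{\bigO{\ell}}$ numbers $\pexpecf{\mu}{x^\alpha}$, over multi-indices $\alpha$ with $|\alpha| \leq \ell$, and every entry of the tensor $(1, x_1, \ldots, x_m)^{\otimes \ell}$ is one such monomial. Hence the target convex set $\mathcal{M}$ sits inside the affine subspace of $\R^{(m+1)^\ell}$ cut out by the ``consistency'' equations forcing tensor entries indexed by the same monomial to coincide, together with the normalization $\pexpecf{\mu}{1} = 1$; these are $m^{\bigO{\ell}}$ affine constraints, each checkable in $m^{\bigO{\ell}}$ time, and any violated one immediately furnishes a separating hyperplane.

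The only remaining constraint is positivity: a vector of moments $\Set{y_\alpha}_{|\alpha| \leq \ell}$ arises from an honest pseudo-distribution if and only if $\pexpecf{\mu}{p^2} \geq 0$ for every polynomial $p$ of degree at most $d := \lfloor \ell/2 \rfloor$. Expanding $p$ in the monomial basis, this is exactly the assertion that the \emph{moment matrix} $M(y)$, indexed by monomials of degree at most $d$ with $M(y)_{\alpha,\beta} = y_{\alpha+\beta}$, is positive semidefinite. Since $M(y)$ has dimension $\binom{m+d}{d} = m^{\bigO{\ell}}$ and depends linearly on $y$, the set $\mathcal{M}$ is the affine image of the preimage of the PSD cone under a linear map, hence convex.

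This yields the separation oracle. On input $y$ (which we first test against the linear constraints above), form $M(y)$ and compute, in time polynomial in its dimension, i.e.\ $m^{\bigO{\ell}}$, its minimum eigenvalue $\lambda$ and a unit eigenvector $u$. If $\lambda \geq -\eps$, declare $y$ approximately feasible; otherwise output the hyperplane $y' \mapsto \langle uu^\top, M(y') \rangle \geq 0$. This is valid since $M(y') \succeq 0$ for every feasible $y'$ while $\langle uu^\top, M(y) \rangle = \lambda < -\eps$, and it is a genuine linear functional of $y'$ because $M(\cdot)$ is linear. The $\eps$-slack, and hence the ``weak'' qualifier, absorbs the fact that eigenvalue computations --- and the ellipsoid iterations themselves --- are carried out only to finite precision; this is also the origin of the $2^{-n^\ell}$ additive error in the ``approximately satisfies'' notion of \cref{def:constrained-pseudo-distributions}. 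Feeding this oracle to the ellipsoid method \cite{grotschel1981ellipsoid}, after the routine steps of enclosing $\mathcal{M}$ in a ball and passing to its affine hull if it is not full-dimensional, gives the claimed $m^{\bigO{\ell}}$-time weak separation (and weak optimization) over $\mathcal{M}$. The same argument handles constrained pseudo-distributions as in \cref{def:constrained-pseudo-distributions}: each inequality $p_i \geq 0$ contributes an additional ``localizing'' moment matrix that must be PSD, and these are treated identically.

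I expect the genuinely delicate part to be not the convex-geometric skeleton above, which is standard, but the numerical bookkeeping needed to make ``weak'' rigorous: bounding the bit complexity of the moments of a near-optimal pseudo-distribution, controlling how the $\eps$-slack in the oracle propagates through the ellipsoid run, and handling the case (typical here) in which $\mathcal{M}$ fails to be full-dimensional. None of this is special to pseudo-distributions --- it is the usual cost of invoking the ellipsoid method for semidefinite feasibility --- but it is where all the care goes.
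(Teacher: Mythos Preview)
The paper does not prove this statement; it is stated as a cited fact from \cite{shor1987approach, parrilo2000structured} and used as a black box. Your proposal is exactly the standard argument underlying those references---identify the set with an affine slice of the PSD cone via the moment matrix, then separate by computing a minimum eigenvector---and is correct as written.
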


Given a system of polynomial constraints, denoted by $ \calA$, we say that it is \emph{explicitly bounded} if it contains a constraint of the form $\{ \|x\|^2 \leq 1\}$. Then, the following fact follows from  \cref{fact:sos-separation-efficient} and \cite{grotschel1981ellipsoid}:

\begin{theorem}[Efficient optimization over pseudo-distributions]
    \label{fact:eff-pseudo-distribution}
There exists an $(m+r)^{O(\ell)} $-time algorithm that, given any explicitly bounded and satisfiable system $ \calA$ of $r$ polynomial constraints in $m$ variables, outputs a degree-$\ell$ pseudo-distribution that satisfies $ \calA$ approximately, in the sense of~\cref{def:constrained-pseudo-distributions}.\footnote{
    Here, we assume that the bit complexity of the constraints in $ \calA$ is $(m+t)^{O(1)}$.
}
\end{theorem}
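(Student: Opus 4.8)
The plan is to combine the weak separation oracle of~\cref{fact:sos-separation-efficient} with the ellipsoid method of Gr\"otschel, Lov\'asz, and Schrijver~\cite{grotschel1981ellipsoid}. First I would observe that the set of degree-$\ell$ pseudo-moment vectors is convex, and that membership in it is exactly the feasibility question we want to solve after we impose the constraints in $\calA$. By~\cref{fact:sos-separation-efficient}, the unconstrained version of this set---the set of all valid degree-$\ell$ pseudo-moment tensors over $\R^m$---admits an $m^{\bigO{\ell}}$-time weak separation oracle. The constraints $\calA$ can be written, via~\cref{def:constrained-pseudo-distributions}, as a collection of linear inequalities on the moment vector: for each subset $\calS \subseteq [r]$ with $\sum_{i\in\calS}\max(\deg(p_i),d)\le \ell$ (a choice that also bounds the allowed degree of the SoS multiplier $q$) we require $\pE_\mu[q \prod_{i\in\calS} p_i]\ge 0$ for all sum-of-squares $q$ of appropriate degree. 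Each such requirement is itself a semidefinite constraint on a submatrix of the moment matrix, hence convex, and separation for it reduces to an eigenvalue computation on an $m^{\bigO{\ell}}$-dimensional matrix. Since there are at most $2^r$ subsets $\calS$, but only those of bounded total degree matter, and in any case we can afford to enumerate the polynomially-many relevant constraint-products (using that $\ell$ bounds the total degree), the combined separation oracle runs in $(m+r)^{\bigO{\ell}}$ time.

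Next I would feed this combined weak separation oracle into the ellipsoid algorithm. The standard guarantee of~\cite{grotschel1981ellipsoid} is that, given a weak separation oracle for a convex body that is contained in a ball of radius $R$ and (if nonempty) contains a ball of radius $r$, the ellipsoid method finds a point in (a slight enlargement of) the body in time polynomial in $\log(R/r)$, the dimension, and the oracle cost. Here the explicit boundedness assumption---that $\calA$ contains a constraint $\|x\|^2 \le 1$---is what supplies the outer radius bound $R$: it forces every coordinate of the moment vector to lie in $[-1,1]$, so the feasible set sits inside a ball of radius $m^{\bigO{\ell}}$. The inner-radius/bit-complexity bookkeeping is handled by the ``approximately satisfies'' relaxation in~\cref{def:constrained-pseudo-distributions}: rather than insisting on exact feasibility we accept the $2^{-n^\ell}$-slack version, which is exactly the kind of weak feasibility the ellipsoid method delivers, and the assumption that the constraints in $\calA$ have bit complexity $(m+t)^{\bigO{1}}$ ensures the whole computation stays within the required precision. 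The output moment vector is then interpreted as the moment tensor of a degree-$\ell$ pseudo-distribution satisfying $\calA$ approximately.

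The main obstacle, and the step I would be most careful about, is verifying that the combined object---``valid pseudo-moments'' intersected with ``satisfies $\calA$''---still has a separation oracle of the claimed complexity, and in particular that enforcing the products $q\prod_{i\in\calS}p_i \succeq 0$ does not blow up the degree beyond $\ell$ or the running time beyond $(m+r)^{\bigO{\ell}}$. This is where the degree bookkeeping in~\cref{def:constrained-pseudo-distributions} is essential: the condition $\deg(q)+\sum_{i\in\calS}\max(\deg(p_i),d)\le \ell$ is precisely what keeps every constraint expressible as a PSD condition on a submatrix of the single degree-$\ell$ moment matrix, so no new variables are introduced and the oracle cost is dominated by one eigendecomposition of an $m^{\bigO{\ell}}\times m^{\bigO{\ell}}$ matrix per constraint family. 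Everything else---convexity, the reduction of SoS membership to PSD-ness, and the ellipsoid guarantee itself---is standard and I would cite~\cite{grotschel1981ellipsoid, shor1987approach, parrilo2000structured, barak2016proofs} rather than reprove it.
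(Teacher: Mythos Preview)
Your proposal is correct and matches the paper's approach: the paper itself does not give a detailed proof but simply states that the theorem ``follows from \cref{fact:sos-separation-efficient} and \cite{grotschel1981ellipsoid},'' i.e., exactly the combination of the weak separation oracle for pseudo-moment tensors with the ellipsoid method that you outline. Your write-up supplies the standard details (convexity, PSD constraints for the $\calA$-multipliers, explicit boundedness giving the outer radius) that the paper leaves implicit.
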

\begin{remark}[Bit complexity and approximate satisfaction]
    \label{rmk:tedium}
    We will eventually apply this result to a constraint system that can be defined with numbers with $\log(t)$ bits, where $t$ is the sample complexity of the algorithm (scaling polynomially with $\terms$).
    Consequently, we can run this algorithm efficiently, and the errors incurred here (which is exponentially small in $\qubits$) can be thought of as a ``machine precision'' error, and is dominated by the sampling errors incurred elsewhere.
    We can therefore safely ignore precision issues in the rest of our proof.

    The pseudo-distribution $D$ found will satisfy $\calA$ only approximately, but provided the bit complexity of the sum-of-squares proof of $\calA \sststile{r'}{} B$, i.e.\ the number of bits required to write down the proof, is bounded by $\terms^{\bigO{\ell}}$ (assuming that all numbers in the input have bit complexity $\terms^{\bigO{1}}$), we can compute to sufficiently good error in polynomial time that the soundness will hold approximately.
    All of our sum-of-squares proofs will have this bit complexity.
\end{remark}

Finally, 
we require that any spectral inequality admits a sum-of-squares proof:

\begin{fact}[Quadratic polynomial inequalities admit SoS proofs (see Fact 2.35 in~\cite{bakshi2024learning})]
    \label{fact:nonnegative-quadratic}
Let $p$ be a polynomial in the indeterminates $x \in \R^m$ such that $p$ has degree $2$ and $p \geq 0$ for all $x \in \mathbb{R}^m$. Then $\sststile{2}{x} \Set{ p(x) \geq 0  }$. 
\end{fact}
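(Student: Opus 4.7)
The plan is to show directly that any non-negative degree-2 polynomial admits a sum-of-squares decomposition in which every squared polynomial has degree at most $1$; since such a decomposition is precisely what a degree-$2$ SoS proof of non-negativity amounts to, this suffices. Write $p(x) = x^\top A x + 2 b^\top x + c$ with $A \in \R^{m \times m}$ symmetric, $b \in \R^m$, and $c \in \R$.

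First I would argue that $A \succeq 0$. If not, there exists $v \in \R^m$ with $v^\top A v < 0$, and then $p(tv) = t^2 (v^\top A v) + 2 t (b^\top v) + c$ is a univariate quadratic in $t$ with negative leading coefficient, which tends to $-\infty$ as $|t| \to \infty$, contradicting $p \geq 0$. Next I would argue that $b \in \mathrm{range}(A)$: for any $u \in \ker(A)$ we have $p(tu) = 2 t (b^\top u) + c$, a linear function of $t$, which is bounded below only if $b^\top u = 0$. Hence $b \perp \ker(A) = \mathrm{range}(A)^\perp$, so we can pick $\beta \in \R^m$ with $A\beta = b$.

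With $\beta$ in hand, I would complete the square to obtain the identity
\begin{equation*}
p(x) \;=\; (x+\beta)^\top A (x+\beta) + \bigl(c - \beta^\top A \beta\bigr).
\end{equation*}
Evaluating at $x = -\beta$ and using $p \geq 0$ gives $c - \beta^\top A \beta \geq 0$. Since $A \succeq 0$, diagonalize $A = \sum_i \lambda_i u_i u_i^\top$ with $\lambda_i \geq 0$; then $(x+\beta)^\top A (x+\beta) = \sum_i \bigl(\sqrt{\lambda_i}\, u_i^\top (x+\beta)\bigr)^2$ is a sum of squares of affine (degree-$1$) polynomials, and $c - \beta^\top A \beta = \bigl(\sqrt{c - \beta^\top A \beta}\bigr)^2$ is the square of a constant. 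Adding these gives $p(x) = \sum_j q_j(x)^2$ with every $q_j$ of degree at most $1$, which is a degree-$2$ SoS certificate of $p(x) \geq 0$.

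There is no real obstacle here; this is a classical fact about non-negative quadratic forms, and the only mildly delicate step is handling the case when $A$ is singular, which is resolved by the null-space argument showing $b \in \mathrm{range}(A)$. The entire derivation is an identity in $\R[x]$ with non-negative coefficients on the squares, so it goes through verbatim inside the sum-of-squares proof system.
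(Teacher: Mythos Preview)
Your proof is correct and is the standard classical argument. The paper itself does not supply a proof of this fact; it is stated as a cited preliminary (referencing Fact~2.35 in \cite{bakshi2024learning}) and used as a black box. Your completion-of-squares argument, including the null-space step to handle singular $A$, is exactly the textbook route and constitutes a valid degree-$2$ SoS certificate.
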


\section{A Global Correlation Framework}

In this section, we describe a simple framework for analyzing global correlation rounding.
This framework allows us a technically convienient way to obtain pseudoexpectations which simultaneously satisfy several distinct notions of approximate pairwise independence.

\begin{definition}[Pseudo-distribution potentials]
\label{def:pseudo-distribution-potential}
    A function $\Phi$ from pseudo-distributions over $\mathbb{R}^n$ to $\mathbb{R}$ is a pseudo-distribution potential if
    \begin{enumerate}
        \item For all pseudo-distributions $\mu$ we have that $\Phi(\mu) \geq 0$.
        \item For a fixed $i \in [n]$, let $\mu'$ be the pseudo-distribution given by sampling a random value $\hat{x}_i$ from the marginal of $x_i$ and conditioning on the event $\Set{ x_i = \hat{x}_i}$. Then,
        \[ \E_{x_i} \left[\Phi(\mu) - \Phi(\mu')\right] \geq 0\,.\]
    \end{enumerate}
\end{definition}

The typical potential functions used in global correlation rounding by \cite{Raghavendra2012,Barak2011} are $\E_{i \sim [n]} H(\{x_i\})$ and $\E_{i \sim [n]} \Var(x_i)$.
Both are pseudo-distribution potentials in the sense of Defintion~\ref{def:pseudo-distribution-potential}.

\begin{definition}[Potential aligned functions]
\label{def:potential-aligned}
    A function $f: (\mathbb{R}^k)^n \rightarrow \mathbb{R}$ is $(\delta, \delta')$-potential aligned with a pseudo-distribution potential $\Phi$, initial pseudo-distribution $\mu_0$, and family of polynomials $p_1,\ldots,p_m$ which have $\pE_\mu p_i = 0$ if for all distributions over pseudo-distributions $\mu$ such that 
    \begin{enumerate}
        \item $\E_{\{\mu\}} \mu = \mu_0$, 
        \item all $\mu'$ in the support of $\{\mu\}$ satisfy $\pE_{\mu'} p_i = 0$ for all $i \leq m$, and 
        \item $ \left\vert \E_{\{\mu\}} \pE_{\mu} f(x_1,\ldots, x_n) - \pE_{\mu^{\otimes}} f(x_1,\ldots, x_n) \right\vert \geq \delta\,,$
    \end{enumerate}
    we have that there exists an index $i \in [n]$ such that
    \[ \E_{ \{ \mu \}} \E_{\hat{x_i} \sim \{x_i \} } \left[\Phi(\mu) - \Phi(\mu') \right] \geq \delta'\,,\]
    where $\mu'$ is the pseudo-distribution given by sampling a random value $\hat{x}_i$ from the marginal of $x_i$ and conditioning on $x_i = \hat{x}_i$.
\end{definition}

In our language, prior works like \cite{Barak2011, Raghavendra2012} can be interpreted as showing that any  function $f$ of the form $\E_{i,j \sim [n]} f_{ij}(x_i,x_j)$ where $f_{ij}$'s are $[-1,1]$-valued is potential-aligned with respect to $\Phi = \E_{i} H(\{x_i\})$. 
And, functions of the form $f = \E_{i,j \sim [n]} A_{ij} \langle x_i, x_j\rangle$ are potential-aligned with respect to $\Phi = \E_{i \sim [n]} \Var(x_i)$; this follows almost directly from prior work on global correlation rounding (see~\cite{Barak2011}). 

\begin{lemma}[Degree 2 Polynomials are Potential Aligned]
\label{fact:deg-2-potential-aligned}
    Let $\epsilon > 0$, $k \in \mathbb{N}^{>0}$, and let $\mu_0$ be a pseudo-distribution over $\left(\mathbb{R}^k\right)^n$ such that $\E_i \tr(\widetilde{\Sigma}_i) \leq \eta$, where $\widetilde{\Sigma} \in \mathbb{R}^{k \times k}$ is the pseudo-covariance of $x_i$ with respect to $\mu_0$. Then for any matrix $A \in \R^{n \times n}$, the function $\E_{i,j \sim [n]} A_{ij} \langle x_i, x_j\rangle$ is 
    \[
    \Paren{ \epsilon \cdot \E_{i,j \sim[n]} A_{ij}^2 , \hspace{0.1in} \frac{\eps^2}{ \eta  k} \cdot  \E_{i,j \sim [n]} A_{ij}^2 }-\textrm{potential aligned}
    \]
    for the potential $\Phi = \E_i \tr(\widetilde{\Sigma}_i)$ and pseudo-distribution $\mu$.
\end{lemma}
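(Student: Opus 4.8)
The plan is to exhibit an explicit index $i \in [n]$ on which conditioning decreases the potential $\Phi = \E_i \tr(\widetilde\Sigma_i)$ by the required amount, whenever the rounding error of $f = \E_{i,j\sim[n]} A_{ij}\langle x_i,x_j\rangle$ exceeds $\delta = \eps\cdot\E_{ij}A_{ij}^2$. First I would unpack condition (3) of Definition~\ref{def:potential-aligned}: for a distribution over pseudo-distributions $\{\mu\}$ with $\E_{\{\mu\}}\mu = \mu_0$, the quantity $\E_{\{\mu\}}\pE_\mu f - \pE_{\mu^\otimes} f$ equals $\E_{\{\mu\}} \E_{i,j\sim[n]} A_{ij}\,\langle\Cov_\mu(x_i,x_j)\rangle$ in the appropriate $k$-dimensional sense (here the cross-covariance is a $k\times k$ matrix and $\langle\cdot\rangle$ denotes the relevant trace pairing), since $\pE_{\mu^\otimes}\langle x_i,x_j\rangle = \langle \pE x_i,\pE x_j\rangle$. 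Note the $p_\ell$-constraints play no role here beyond being preserved, since $f$ itself is degree two. So assume $\bigl|\E_{\{\mu\}}\E_{ij} A_{ij}\langle\Cov_\mu(x_i,x_j)\rangle\bigr| \geq \eps\cdot\E_{ij}A_{ij}^2$.

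Next I would apply Cauchy–Schwarz over the pair $(i,j)$ (and over the randomness of $\{\mu\}$): the left side is at most $\bigl(\E_{ij}A_{ij}^2\bigr)^{1/2}\cdot\bigl(\E_{\{\mu\}}\E_{ij}\|\Cov_\mu(x_i,x_j)\|^2\bigr)^{1/2}$, where the norm on the covariance can be taken to be Frobenius (so that the CS step is tight for the trace pairing, up to dimension factors — this is where a factor of $k$ will enter). Rearranging gives $\E_{\{\mu\}}\E_{ij}\|\Cov_\mu(x_i,x_j)\|_F^2 \geq \eps^2\cdot\E_{ij}A_{ij}^2$. Now I invoke the variance-reduction fact (Fact~\ref{fact:var-reduction}), applied coordinatewise to the $k$ components of $x_i$ and $x_j$: conditioning on $x_j$ decreases $\tr\widetilde\Sigma_i$ by at least $\|\Cov(x_i,x_j)\|_F^2/(4\,\tr\widetilde\Sigma_j)$, summing the scalar bound over the $k^2$ entries of the cross-covariance and the $k$ diagonal entries of $\Sigma_j$; I would absorb the resulting $k$-dependence carefully. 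Averaging over $j\sim[n]$, and using $\E_j \tr\widetilde\Sigma_j \leq \eta$ together with a final Cauchy–Schwarz / Jensen step to pull the $\tr\widetilde\Sigma_j$ out of the denominator (bounding $\E_{ij}\|\Cov\|_F^2/\tr\widetilde\Sigma_j \geq \E_{ij}\|\Cov\|_F^2 / \eta$ after the averaging is arranged correctly), I get that there exists some $i$ (in fact the average over $i$ works) for which $\E_{\{\mu\}}\E_{j}[\Phi(\mu)-\Phi(\mu')] \gtrsim \eps^2\cdot\E_{ij}A_{ij}^2/(\eta k)$, which is the claimed $\delta'$.

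The main obstacle I anticipate is bookkeeping the dimension factor $k$: Fact~\ref{fact:var-reduction} is a scalar statement, and I need to lift it to $\R^k$-valued variables in a way that produces exactly $\eps^2/(\eta k)$ and not, say, $\eps^2/(\eta k^2)$. The right route is probably to write $\Cov(x_i,x_j)$ as a $k\times k$ matrix, bound $\langle A_{ij}\Cov(x_i,x_j)\rangle$ by $|A_{ij}|\cdot\|\Cov\|_{\mathrm{op}}$ or by $|A_{ij}|\cdot\|\Cov\|_F$ depending on which gives the cleaner constant, and then relate $\|\Cov(x_i,x_j)\|$ to the single worst scalar pair $\Cov((x_i)_a,(x_j)_b)$ (losing at most a $k$), to which Fact~\ref{fact:var-reduction} applies directly against $\Var((x_j)_b) \leq \tr\widetilde\Sigma_j$. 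The other mild subtlety is that all of this must be phrased for pseudo-distributions rather than genuine distributions; but Fact~\ref{fact:var-reduction}, being a consequence of a quadratic (hence degree-2, SoS-provable) inequality, transfers to the SoS setting by Fact~\ref{fact:nonnegative-quadratic}, and the identity $\E_\calT\pE_{\mu_\calT} = \pE_\mu$ ensures the conditioning operations behave as expected in expectation. I would remark that this lemma is essentially a restatement of the classical SDP global-correlation-rounding analysis of \cite{Barak2011} in the language of Definition~\ref{def:potential-aligned}, so no genuinely new idea is needed — only the translation.
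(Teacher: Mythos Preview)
Your approach is the same global-correlation argument in spirit and reaches the right bound, but one step does not go through as written. After you establish $\E_{\{\mu\}}\E_{ij}\|\Cov_\mu(x_i,x_j)\|_F^2 \gtrsim \eps^2 \E_{ij}A_{ij}^2/k$ and the drop estimate $\tr\widetilde\Sigma_i - \E_{x_j}\tr(\widetilde\Sigma_i\mid x_j) \geq \|\Cov(x_i,x_j)\|_F^2/(4\tr\widetilde\Sigma_j)$, you propose to ``pull $\tr\widetilde\Sigma_j$ out of the denominator'' via Jensen, asserting $\E_{ij}\|\Cov\|_F^2/\tr\widetilde\Sigma_j \geq \E_{ij}\|\Cov\|_F^2/\eta$. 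That inequality is false in general: if $\E_i\|\Cov(x_i,x_j)\|_F^2$ is positively correlated with $\tr\widetilde\Sigma_j$ (e.g.\ proportional to $(\tr\widetilde\Sigma_j)^2$), the inequality reverses. The repair is easy --- sample $j$ with probability proportional to $\tr\widetilde\Sigma_j$ rather than uniformly; then the reweighted average of $\|\Cov\|_F^2/\tr\widetilde\Sigma_j$ equals $(\E_j\tr\widetilde\Sigma_j)^{-1}\E_{ij}\|\Cov\|_F^2 \geq \eps^2\E_{ij}A_{ij}^2/(\eta k)$, and a single good $j$ exists by averaging --- but this is not the ``Jensen'' step you describe.

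The paper sidesteps the denominator entirely by reversing your order of operations. It applies Fact~\ref{fact:var-reduction} \emph{first}, in the form $|\Cov((x_i)_\ell,(x_j)_\ell)| \leq \bigl[(\Var((x_i)_\ell)-\E\Var((x_i)_\ell\mid (x_j)_\ell))\cdot\Var((x_j)_\ell)\bigr]^{1/2}$, and only then applies Cauchy--Schwarz twice: once over $i$, which separates out the factor $(\E_j\tr\widetilde\Sigma_j)^{1/2}\leq\sqrt\eta$ multiplicatively with no denominator, and once over $j$ inside the remaining square, which separates $A_{ij}^2$ from the potential drop and produces the weighted distribution $\psi$ on indices. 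The paper also works exclusively with the diagonal covariances $\Cov((x_i)_\ell,(x_j)_\ell)$ rather than the full $k\times k$ cross-covariance, so Fact~\ref{fact:var-reduction} is invoked purely in its stated scalar form; your full-matrix inequality $\|\Cov\|_F^2 \leq 4\,\tr\widetilde\Sigma_j\cdot(\text{drop})$ is correct, but it follows from a max-versus-weighted-average argument rather than by ``summing the scalar bound over the $k^2$ entries'' as you suggest.
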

\begin{proof}
Consider the following potential:
    \[ \Phi(\mu) = \E_{i\sim[n]} \tr(\tilde{\Sigma}_i(\mu)\,,\]
    where $\tilde{\Sigma}_i(\mu)$ is the pseudo-covariance of $x_i$ with respect to the pseudo-distribution $\mu$.
    Let $\{ \mu\}$ be a distribution over pseudo-distributions satisfying the hypotheses of the Defintion~\ref{def:potential-aligned}.
    We will aim to show that whenever the difference in expectation under $\mu$ versus $\mu^{\otimes}$ is at least $\epsilon \cdot \E_{i,j\sim[n]} A_{ij}^2$, then choosing $j$ at random and conditioning the value of $x_j$ significantly reduces the potential function $\Phi$ in expectation. In particular, this implies that there must also be a fixed $x_j$ which also achieves the same decrease in expectation over the random sample of $x_j$.
    
    For any fixed $\mu$, we have
    \begin{equation*}
        \Abs{  \E_{i,j\sim[n]} \pE_{\mu^{\otimes}} A_{ij} \langle x_i, x_j\rangle - \E_{i,j\sim[n]} \pE_\mu A_{ij} \langle x_i, x_j\rangle } \leq \E_{i,j\sim[n]} |A_{ij}| \sum_{\ell \in [k]} \Abs{\widetilde{\Cov}_\mu ((x_i)_\ell, (x_j)_\ell)}\,,
    \end{equation*}
    where we use the notation $\widetilde{\Cov}_\mu$ to denote the pseudo-covariance.
    Applying~\cref{fact:var-reduction} we have that
    \begin{equation}
    \label{eqn:canonical-two-cauchy-schwarz}
        \begin{split}
            & \E_{i,j\sim[n]}   |A_{ij}| \cdot \sum_{\ell \in [k]}  \Abs{\widetilde{\Cov} ((x_i)_\ell, (x_j)_\ell)} \\
            & \leq \E_{i,j\sim[n]} \left[|A_{ij}| \cdot \sum_{\ell \in [k]}  \E_{(x_j)_\ell \sim \mu_{j\ell}}\left[\Var_\mu((x_i)_\ell) - \Var_\mu((x_i)_\ell \vert (x_j)_\ell)\right] ^{1/2} \cdot \Var_\mu((x_j)_\ell)^{1/2} \right] \\
            & \leq \expecf{i \sim [n]}{  \Paren{ \expecf{j \sim [n]}{ |A_{ij}| \sum_{\ell \in [k]}  \E_{(x_j)_\ell \sim \mu_{j \ell}}\left[\Var_\mu((x_i)_\ell) - \Var_\mu((x_i)_\ell \vert (x_j)_\ell)\right]^{\frac12}  } }^2   }^{\frac12 } \expecf{j \sim [n]}{ \sum_{\ell \in [k]}\Var_\mu((x_j)_\ell) }^{\frac12}  \\
            & \leq \underbrace{\expecf{i \sim [n]}{ \expecf{j\sim [n]}{ A_{ij}^2} \expecf{j \sim [n]}{ \Paren{\sum_{\ell \in [k]} \E_{(x_j)_\ell \sim \mu_{j \ell}}\left[\Var_\mu((x_i)_\ell) - \Var_\mu((x_i)_\ell \vert (x_j)_\ell)\right]^{\frac12} }^2 }  }^{\frac12}}_{\eqref{eqn:canonical-two-cauchy-schwarz}.(1) }  \underbrace{ \expecf{j \sim [n]}{ \sum_{\ell \in [k]}\Var_\mu((x_j)_\ell) }^{\frac12}}_{\eqref{eqn:canonical-two-cauchy-schwarz}.(2)}  \,.
        \end{split}
    \end{equation}
    Writing the sum over $\ell$ as an average any applying Jensen's we can bound term \eqref{eqn:canonical-two-cauchy-schwarz}.(1) as follows:
    \begin{equation}
    \begin{split}
       \eqref{eqn:canonical-two-cauchy-schwarz}.(1) & \leq \sqrt{k} \cdot \expecf{i \sim [n]}{ \expecf{j\sim [n] }{A_{ij}^2 } \expecf{j \sim [n]}{\sum_{\ell \in [k] }{\expecf{(x_j)_\ell \sim \mu_{j \ell}}{\Var_\mu((x_i)_\ell) - \Var_\mu((x_i)_\ell \vert (x_j)_\ell) } } } }^{\frac12} \\
       & \leq  \sqrt{k} \cdot \expecf{i \sim [n]}{ \expecf{j\sim [n] }{A_{ij}^2 } \expecf{j \sim [n]}{ \tr(\widetilde{\Sigma}_i (\mu))  - \E_{x_j} \tr(\widetilde{\Sigma}_j(\mu \, | \, x_j)) } }^{\frac12} 
       \end{split}
    \end{equation}
    where the last inequality follows from the definition of trace and the fact that conditioning $(x_i)_\ell$ on $(x_j)_{\ell'}$ for $\ell' \neq \ell$ only decreases variance on average.

    Furthermore, by assumption we have that $ \E_{\{\mu\}} \eqref{eqn:canonical-two-cauchy-schwarz}.(2) \leq \left( \E_{\{\mu\}} \E_{i}\sum_{\ell \in [k]} \Var_\mu((x_j)_\ell)\right)^{1/2} \leq \sqrt{\eta}$. Thus, we have that when the difference in expectation of $\E_{i,j \sim [n]} A_{ij} \langle x_i, x_j \rangle$ under $\mu$ versus $\mu^{\otimes}$ is at least $\epsilon \cdot \expecf{i,j \sim [n]} {A_{ij}^2}$, then 
    \[ \epsilon \cdot \expecf{i,j \sim [n]}{ A_{ij}^2} \leq \sqrt{\eta \cdot k} \cdot \expecf{i \sim [n]}{ \expecf{j\sim [n] }{A_{ij}^2 } \expecf{j \sim [n]}{ \tr(\widetilde{\Sigma}_i(\mu))  - \E_{x_j} \tr(\widetilde{\Sigma}_i(\mu \mid x_j))} }^{\frac12}  \,.\]
    Note that we can rewrite the LHS expectation over $i$ drawn uniformly from $[n]$ instead as an expectation over $i$ drawn from $\psi$ which samples $i$ proportional to $\E_{j \sim [n]} A_{ij}^2$. This gives us that
    \[ \epsilon \expecf{i,j \sim [n]} {A_{ij}^2} \leq \sqrt{\eta \cdot k} \cdot \left( \expecf{i,j \sim [n]} {A_{ij}^2} \right)^{1/2} \left(\E_{i \sim \psi}  \left( \E_{j \sim [n]}  \tr(\widetilde{\Sigma}_j)  - \E_{x_j} \tr(\widetilde{\Sigma}_i \vert x_j)\right)\right)^{1/2}\,\]
    and by rearranging that
    \[ \expecf{i \sim \psi }{ \expecf{j \sim [n]}{ \tr(\widetilde{\Sigma}_i(\mu))  - \E_{x_j} \tr(\widetilde{\Sigma}_i(\mu \mid x_j))} }  \geq \Paren{ \frac{\epsilon^2 }{\eta k} } \cdot  \expecf{i,j\sim [n]}{A_{ij}^2} \,.\]
    Note that this implies that the expected decrease in potential when conditioning on some $i$ from this distribution is at least $\Paren{ \frac{\epsilon^2 }{\eta k} } \cdot  \expecf{i,j\sim [n]}{A_{ij}^2} $ and thus there must exist some $j$ such that when conditioning on $x_j$ the expected decrease over the randomness in the sample of $x_j$ is sufficiently large. Thus, we have shown that this function is potential aligned.
\end{proof}

We now show how to combine different potential functions to analyze correlation rounding.

\begin{lemma}[Handling several potential aligned functions]
\label{lem:gcr_linear_combo}
    Let $\epsilon \geq 0$. Let $\{f_{j}\}_{j \in [\ell]}$ be a collection of functions from $(\mathbb{R}^k)^n$ such that $f_j$ is $(\epsilon/\ell, \delta_j)$-potential aligned for some psuedo-distribution potential $\Phi_j$ and some initial pseudo-distribution $\mu$. Then there exists a set $\calT \subseteq [n]$ of size at most $\sum_{j \in [\ell]} \Phi_j(\mu)/\delta_j$ such that after conditioning on the values of $x_i \in \calT$,
    \[ \left\vert \E_{\{\mu_\calT\}} \pE_{\mu_{\calT}} \sum_{j \in [\ell]} f_j(x_1,\ldots, x_n) - \pE_{\mu^{\otimes}_{\calT}} \sum_{j \in [\ell]} f_j(x_1,\ldots, x_n) \right\vert \leq \epsilon \,.\]
\end{lemma}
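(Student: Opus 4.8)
The statement is a ``union bound via a joint potential'' argument. The plan is to define the combined potential $\Psi(\mu) = \sum_{j \in [\ell]} \Phi_j(\mu)/\delta_j$, which is itself a pseudo-distribution potential (property 1 of Definition~\ref{def:pseudo-distribution-potential} is additivity of nonnegativity, and property 2 is additivity of the monotonicity-under-conditioning inequality, scaled by the positive constants $1/\delta_j$). We then run the canonical greedy conditioning scheme: starting from $\mu$, repeatedly pick a coordinate $i$ whose conditioning decreases $\Psi$ in expectation by at least $1$, condition on $x_i \sim \{x_i\}$, and recurse. Since $\Psi \geq 0$ and $\Psi(\mu) = \sum_j \Phi_j(\mu)/\delta_j$ initially, and each step decreases the expected value of $\Psi$ by at least $1$, this process terminates after at most $\sum_{j \in [\ell]} \Phi_j(\mu)/\delta_j$ steps, producing a distribution over conditioned pseudo-distributions $\{\mu_\calT\}$ with $\E_{\{\mu_\calT\}} \mu_\calT = \mu$ (using the fact that $\E_\calT \pE_{\mu_\calT} = \pE_\mu$ from the excerpt).

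**Key steps.**
First I would verify that the process cannot get stuck prematurely while the conclusion still fails. Suppose after conditioning on $\calT$ we still have $\bigl|\E_{\{\mu_\calT\}}\pE_{\mu_\calT}\sum_j f_j - \pE_{\mu_\calT^{\otimes}}\sum_j f_j\bigr| > \epsilon$. I would argue that then \emph{some} individual $f_{j^\star}$ must have $\bigl|\E_{\{\mu_\calT\}}\pE_{\mu_\calT} f_{j^\star} - \pE_{\mu_\calT^{\otimes}} f_{j^\star}\bigr| > \epsilon/\ell$, by the triangle inequality over the $\ell$ terms. (One has to be slightly careful: $\pE_{\mu^{\otimes}}\sum_j f_j = \sum_j \pE_{\mu^{\otimes}} f_j$ because $\mu^{\otimes}$ is an honest product distribution, so linearity is unproblematic; and $\E_{\{\mu_\calT\}}$ passes through the sum by linearity of expectation.) Next, the current conditioned pseudo-distribution $\{\mu_\calT\}$ satisfies the three hypotheses of Definition~\ref{def:potential-aligned} for $f_{j^\star}$ with its potential $\Phi_{j^\star}$ — hypothesis 1 ($\E_{\{\mu_\calT\}}\mu_\calT = \mu$) holds by the tower property of conditioning, hypothesis 2 ($\pE p_i = 0$) is preserved under conditioning since each $p_i$ is a constraint the relaxation enforces throughout, and hypothesis 3 is exactly the failure assumption with $\delta = \epsilon/\ell$. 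So potential-alignedness gives a coordinate $i$ with $\E_{\{\mu_\calT\}}\E_{\hat x_i}[\Phi_{j^\star}(\mu_\calT) - \Phi_{j^\star}(\mu_\calT')] \geq \delta_{j^\star}$, hence $\E[\Psi(\mu_\calT) - \Psi(\mu_\calT')] \geq 1$ (the $j^\star$ term alone contributes $\delta_{j^\star}/\delta_{j^\star} = 1$, and all other terms contribute $\geq 0$ by property 2 of Definition~\ref{def:pseudo-distribution-potential}). Finally, I would wrap this into a clean termination argument: let $\calT$ be the conditioning set at the first moment the conclusion holds (or when no progress is possible); since $\E\Psi$ strictly drops by $\geq 1$ each step while the conclusion fails and $\Psi \geq 0$ always, we cannot take more than $\Psi(\mu) = \sum_j \Phi_j(\mu)/\delta_j$ steps, so $|\calT| \leq \sum_j \Phi_j(\mu)/\delta_j$ and the conclusion must hold.

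**Main obstacle.**
The routine calculations are genuinely routine here; the one subtlety I would be careful about is the bookkeeping of ``distributions over pseudo-distributions.'' At each conditioning step we are not conditioning a single pseudo-distribution but a \emph{random} one (drawn from $\{\mu_\calT\}$), and we further randomize over the value $\hat x_i$; the potential-alignment definition is stated precisely to accommodate this, but one must make sure that after the step the new object is still of the form ``a distribution over pseudo-distributions whose mean is $\mu$,'' so that the hypotheses of Definition~\ref{def:potential-aligned} can be re-applied in the next iteration. This is where the $\E_\calT \pE_{\mu_\calT} = \pE_\mu$ fact does the work. A second minor point is that the choice of which $j^\star$ fails, and hence which coordinate is selected, may differ from step to step — but this is fine, since $\Psi$ is a single fixed potential and the per-step decrease of $\geq 1$ holds regardless of which $j^\star$ triggered it. I do not anticipate needing any new ideas beyond carefully chaining these observations.
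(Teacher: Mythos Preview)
Your proposal is correct and follows essentially the same approach as the paper's proof. The only cosmetic difference is that you package the argument via a single combined potential $\Psi = \sum_j \Phi_j/\delta_j$ that drops by at least $1$ per step, whereas the paper tracks each $\Phi_j$ separately and counts, for each $j$, how many times $f_j$ can be the ``bad'' summand (at most $\Phi_j(\mu)/\delta_j$ times, since $\Phi_j$ drops by $\delta_j$ whenever $f_j$ fires and never increases otherwise); these are the same argument.
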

\begin{proof}
    Suppose $\calT \subseteq [n]$ is such that
    \[ \left\vert \E_{\{\mu_\calT\}} \pE_{\mu_{\calT}} \sum_{j \in [\ell]} f_j(x_1,\ldots, x_n) - \pE_{\mu^{\otimes}_{\calT}} \sum_{j \in [\ell]} f_j(x_1,\ldots, x_n) \right\vert \geq \epsilon \,.\]
    Then, by averaging, there exists $j \in [\ell]$ such that
    \[ \left\vert \E_{\{\mu_\calT\}} \pE_{\mu_{\calT}} f_j(x_1,\ldots, x_n) - \pE_{\mu^{\otimes}_{\calT}} f_j(x_1,\ldots, x_n) \right\vert \geq \epsilon/\ell \,.\]
    Since $f_j$ is $(\epsilon/\ell, \delta_i)$ potential aligned and by definition of conditioning $\{ \mu_\calT \} = \mu$, by~\cref{def:potential-aligned} we have that there exists $i \in [n]$ such that
    \[ \E_{\{ \mu_\calT \}} \E_{x_i} \left[\Phi_j(\mu_\calT) - \Phi_j(\mu'_\calT) \right] \geq \delta_j\,,\]
    where $\mu'_{\calT}$ is $\mu_{\calT}$ conditioned on $x_i$.
    We will pick this $i$ to add to the set $\calT$ which we condition on. Since all $\Phi_j$ are psuedo-distribution potentials, conditioning on $x_i$ does not increase any of them in expectation. Thus, at each step of conditioning the potential functions either decrease by $\delta_j$ in expectation or are non-increasing. Therefore, each $f_j$ can only have a large difference in expectation on $\mu$ vs $\mu^{\otimes}$ at most $\Phi_j(\mu)/\delta_j$ many times, since after this happens mores times then the expectation of $\Phi_j$ is negative, which contradicts the fact that pseudo-distribution potential are always positive.
\end{proof}

\section{Euclidean Metric Violation}
\label{sec:rs}

In this section, we will show that there exist additive approximation for the \textsf{$k$-EMV} objective (\cref{problem:rs}) and its weighted variant (under regularity assumptions). Our algorithm is based on global correction, and the key technical innovation in both results is showing that Lipschitz functions are \emph{potential aligned}.

\subsection{Rounding Lipschitz Functions}
 In this section, we will show that Lipschitz functions are potential aligned, for a carefully chosen variant of the average entropy potential. These functions will appear in both the weighted and unweighted Raw Stress objective functions and showing that they satisfy the potential aligned definition will allow us to control their rounding error via~\cref{lem:gcr_linear_combo}.

\begin{lemma}[Lipschitz Functions are Potential Aligned]
\label{lem:lipschitz-potential-aligned}
    Let $k \in \mathbb{N}^{>0}$. Let $\{f_{ij}\}_{i,j \in [n]} : \left( \mathbb{R}^k \right)^2 \rightarrow \mathbb{R}$ be a collection of Lipschitz functions with Lipschitz constants $L_{ij}$. Let $\mu$ be a pseudo-distribution, and for all $i \in [n]$, let $E_i$ be the event that $\norm{x_i - \pE_{\mu} x_i} \leq C \cdot \sqrt{\pE_{\mu} \norm{x_i - \pE_{\mu} x_i}^2}$ and let 
    \[\widetilde{x_i} = x_i E_i + (\pE_{\mu}[x_i] ) \overline{E}_i \,.\]
    Then $\expecf{i,j \sim [n]}{f_{ij}(x_i, x_j)}$ is $$\Paren{ \bigO{\epsilon^{1/4}} \cdot \left(\E_{i \sim [n]} \pE_\mu \norm{x_i - \pE_\mu x_i}^2\right)^{1/2} \cdot \left( \E_{i,j\sim[n]} L_{ij}^2\right)^{1/2},  \hspace{0.1in}\epsilon }-\textrm{potential aligned}$$  for the average entropy potential, $\Phi = \expecf{i\sim [n]}{ \mathsf{H}(\tilde{x}_i)}$, and $\mu$. %
\end{lemma}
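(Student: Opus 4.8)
The plan is to run the ``orthodox'' potential-function argument behind Lemma~\ref{lem:pinning-lemma-intro}, but on the \emph{truncated} variables $\tilde x_i$ rather than on the $x_i$ themselves, using the truncated-entropy potential $\Phi(\mu)=\E_{i}\mathsf{H}(\tilde x_i)$. The entire reason for truncating is to replace the useless worst-case bound $\|f_{ij}\|_\infty \le 2^{O(B)}/\eps^{O(1)}$ (which is what forces the conditioning set to be exponential in $B$) with a bound in terms of the pseudo-variances $v_i^2 := \pE_\mu\|x_i-\pE_\mu x_i\|^2$, which after the anchoring step are small on average. I would first record that $\Phi$ is a valid pseudo-distribution potential in the sense of Definition~\ref{def:pseudo-distribution-potential}: entropy is non-negative, and revealing a coordinate cannot increase $\E_i\mathsf{H}(\tilde x_i)$ in expectation since $\mathsf{H}(\tilde x_i)-\E_{x_j}\mathsf{H}(\tilde x_i\mid x_j)=\mathsf{I}(\tilde x_i;x_j)\ge 0$.

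The core of the proof has two steps. \textbf{(1) Reduction to the truncated objective.} Since $\tilde x_i$ differs from $x_i$ only on $\overline{E_i}$, where it equals $\pE_\mu x_i$, Cauchy--Schwarz together with Chebyshev applied to the one-variable marginal of $x_i$ (a genuine distribution, after discretization; the sum-of-squares case uses Fact~\ref{fact:nonnegative-quadratic}) gives $\pE_\mu\|x_i-\tilde x_i\| \le v_i\sqrt{\Pr[\overline{E_i}]}\le v_i/C$, and the same under the product of one-variable marginals. Hence, for any distribution $\{\mu\}$ averaging to $\mu_0$, replacing $f_{ij}(x_i,x_j)$ by $f_{ij}(\tilde x_i,\tilde x_j)$ changes both the pseudo-expectation of $\E_{ij}f_{ij}$ under $\mu$ and under the product of its marginals by at most $\tfrac1C\E_{ij}L_{ij}(v_i+v_j) \le \tfrac{2}{C}(\E_{ij}L_{ij}^2)^{1/2}(\E_i v_i^2)^{1/2}$; choosing $C=\Theta(\eps^{-1/4})$ makes this at most half of the target slack $\delta=O(\eps^{1/4})(\E_i v_i^2)^{1/2}(\E_{ij}L_{ij}^2)^{1/2}$, so it suffices to prove $(\delta/2,\eps)$-alignment for $\tilde f := \E_{ij}f_{ij}(\tilde x_i,\tilde x_j)$. \textbf{(2) Correlation rounding for $\tilde f$.} After subtracting the constant $f_{ij}(\pE_\mu x_i,\pE_\mu x_j)$, the function $f_{ij}(\tilde x_i,\tilde x_j)$ is supported on a set of diameter $O(L_{ij}C(v_i+v_j))$ because $\|\tilde x_i-\pE_\mu x_i\|\le Cv_i$ always; so by $|\E_\alpha g-\E_\beta g|\le\|g\|_\infty\,\tv(\alpha,\beta)$, Pinsker's inequality (Fact~\ref{fact:pinsker's}), and the data-processing inequality $\mathsf{I}_\mu(\tilde x_i;\tilde x_j)\le \mathsf{I}_\mu(\tilde x_i;x_j)$,
\begin{equation*}
\Big|\, \E_{\{\mu\}}\pE_\mu \tilde f - \E_{\{\mu\}}\pE_{\mu^{\otimes}} \tilde f \,\Big|
\;\leq\; \bigO{1}\cdot C\cdot \E_{\{\mu\}}\,\E_{ij} L_{ij}(v_i+v_j)\,\sqrt{\mathsf{I}_\mu(\tilde x_i; x_j)}\,.
\end{equation*}
Now apply Cauchy--Schwarz over the pair $(i,j)$ (reweighting the coordinate measure by a factor $\propto \E_j L_{ij}^2(v_i+v_j)^2$ so the inequality is tight, exactly as in the proofs of Lemma~\ref{fact:deg-2-potential-aligned} and Lemma~\ref{lem:pinning-lemma-intro}), and use $\mathsf{I}_\mu(\tilde x_i;x_j)=\mathsf{H}(\tilde x_i)-\E_{x_j}\mathsf{H}(\tilde x_i\mid x_j)$; averaging over $j$ identifies the right-hand side with the expected drop of $\Phi$ under revealing a coordinate. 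Consequently, if the original rounding error is at least $\delta$, then the truncated rounding error is at least $\delta/2$, and the chain forces $\E_{\{\mu\}}\E_i[\mathsf{H}(\tilde x_i)-\E_{x_j}\mathsf{H}(\tilde x_i\mid x_j)]\ge\eps$ for some coordinate $j$ --- which is precisely the conclusion of Definition~\ref{def:potential-aligned} with $\delta'=\eps$.

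I expect the \textbf{main obstacle} to be the bookkeeping in step (2): getting the Cauchy--Schwarz and the averaging over the revealed coordinate to produce exactly the trade-off $\big(O(\eps^{1/4})(\E_iv_i^2)^{1/2}(\E_{ij}L_{ij}^2)^{1/2},\ \eps\big)$, in particular controlling the cross term $\E_{ij}L_{ij}^2(v_i+v_j)^2$ against $(\E_iv_i^2)(\E_{ij}L_{ij}^2)$ up to constants --- this is what dictates the reweighting of the coordinate distribution and the exponent $\tfrac14$ (which arises from compounding the $1/C\sim\eps^{1/4}$ loss in the truncation step with the square-root losses from Pinsker and from Cauchy--Schwarz). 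A secondary point to handle with care is that the entropies $\mathsf{H}(\tilde x_i)$ must be finite, which is why the lemma is invoked only after discretizing each $x_i$ to a finite grid; the payoff of truncating \emph{before} taking entropy --- i.e.\ of using $\Phi=\E_i\mathsf{H}(\tilde x_i)$ rather than $\E_i\mathsf{H}(x_i)$ --- is that with the geometric discretization of Lemma~\ref{lem:discretization-rs} the support of each $\tilde x_i$ lies in a ball containing only $\poly(\eps^{-1})$ grid points, so that $\Phi(\mu_0)$, and hence the number of conditioning rounds produced by Lemma~\ref{lem:gcr_linear_combo}, is independent of the bit-complexity $B$.
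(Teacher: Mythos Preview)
Your proposal is correct and follows essentially the same route as the paper's proof: the paper also (i) passes to the truncated variables $\tilde x_i$ and bounds the swap error by $O(1/C)(\E_i v_i^2)^{1/2}(\E_{ij}L_{ij}^2)^{1/2}$ via Cauchy--Schwarz and Chebyshev (Lemma~\ref{lem:tilde-vs-not-diff}), (ii) bounds the truncated rounding error by $C\cdot \|f_{ij}(\tilde x_i,\tilde x_j)\|_\infty\cdot\tv$, applies two rounds of Cauchy--Schwarz to separate the $v_i$ and the $L_{ij}$ factors (your ``cross-term'' worry is handled by the symmetry $\E_{ij}L_{ij}\tv\,(v_i+v_j)=2\E_{ij}L_{ij}\tv\,v_i$), and (iii) applies Pinsker to turn the weighted $\tv^2$ lower bound into a $\ge\eps$ drop of $\Phi=\E_i\mathsf{H}(\tilde x_i)$ under the reweighted coordinate distribution $\psi(i)\propto\E_j L_{ij}^2$, then sets $C=\eps^{-1/4}$. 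Your inclusion of the data-processing step $\mathsf{I}(\tilde x_i;\tilde x_j)\le \mathsf{I}(\tilde x_i;x_j)$, to pass from conditioning on $\tilde x_j$ to conditioning on $x_j$ as Definition~\ref{def:potential-aligned} requires, is a point the paper leaves implicit.
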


We will prove this via showing both that (a) the difference in expectation on $\mu$ versus $\mu^{\otimes}$ is tied to an approximate pairwise independence property, and then (b) showing that whenever this property fails the potential in question decreases in expectation when conditioning on some carefully chosen $x_i$.

\begin{lemma}[Characterizing Lipschitz Rounding Error]
\label{lem:lipschitz-rounding-error}
    Let $0 < \epsilon \leq 1$ and let $\mu$ be some pseudo-distribution. Suppose $\mu'$ is a random psuedodistribution such that $\E \mu' = \mu$.
    Let $f_{ij} \, : \, \left(\R^k\right)^2 \rightarrow \R$ be $L_{ij}$-Lipschitz.
    Further, let $E_i$ be the event that $\norm{x_i - \pE_\mu x_i} \leq C \cdot \sqrt{\pE_\mu \norm{x_i - \pE_\mu x_i}^2}$ and let 
    $$\widetilde{x_i} = x_i E_i + (\pE_{\mu}[ x_i] ) \overline{E}_i .$$
    And suppose that 
    \[ \E_{\{\mu'\}} \E_{i \sim [n]} \left[\left( \E_{j \sim [n]} L_{ij}^2 \right) \left( \E_{j \sim [n]} \tv \Paren{ \mu'_{\{\widetilde{x_i}, \widetilde{x_j}\}}, \mu'_{\{\widetilde{x_i}\} \otimes \{ \widetilde{x_j} \}}}^2 \right)\right] \leq \eps \cdot \E_{i,j \sim [n]} L_{ij}^2\,.\]
    Then
    \begin{align*}
        &\left\vert \E_{\{\mu'\}} \E_{i,j \sim [n]} \left[ \pE_{\mu'} f_{ij}(x_i, x_j) - \pE_{(\mu')^{\otimes}}f_{ij}(x_i, x_j) \right] \right\vert \\
        &\qquad\leq  \bigO{ 1/C + C \sqrt{\epsilon}} \cdot \left(\E_{i \sim [n]} \pE_\mu \norm{x_i - \pE_\mu x_i}^2\right)^{1/2} \cdot \left( \E_{i,j\sim[n]} L_{ij}^2\right)^{1/2} \,.
    \end{align*}
\end{lemma}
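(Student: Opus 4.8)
The plan is to decompose $f_{ij}(x_i,x_j)$ into a ``bulk'' part supported on the event $E_i \cap E_j$ and a ``tail'' part, exactly as suggested in the technical overview, and then control the bulk part using the hypothesized averaged total-variation bound together with Lipschitzness, and the tail part using Chebyshev. Concretely, I would first write $f_{ij}(x_i,x_j) = f_{ij}(\widetilde x_i, \widetilde x_j) + \left(f_{ij}(x_i,x_j) - f_{ij}(\widetilde x_i, \widetilde x_j)\right)$, both under $\mu'$ and under $(\mu')^\otimes$. The error term $f_{ij}(x_i,x_j) - f_{ij}(\widetilde x_i, \widetilde x_j)$ is nonzero only on $\overline{E_i}\cup\overline{E_j}$, and on that event $L_{ij}$-Lipschitzness bounds it by $L_{ij}\left(\|x_i - \pE_\mu x_i\| + \|x_j - \pE_\mu x_j\|\right)$ (since $\widetilde x_i$ equals either $x_i$ or $\pE_\mu x_i$). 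Using Cauchy–Schwarz to split off $L_{ij}^2$, then the fact that $\Pr[\overline{E_i}] \leq 1/C^2$ by Chebyshev's inequality applied to $\pE_\mu\|x_i - \pE_\mu x_i\|^2$, and a further Cauchy–Schwarz over $i,j$, bounds the contribution of this tail term — under both $\mu'$ and $(\mu')^\otimes$ — by $O(1/C)\cdot\left(\E_i \pE_\mu\|x_i - \pE_\mu x_i\|^2\right)^{1/2}\left(\E_{ij}L_{ij}^2\right)^{1/2}$. Here I need that the marginal of $x_i$ (and $\widetilde x_i$) under $\mu'$, and its second moment about $\pE_\mu x_i$, behaves correctly in expectation over $\{\mu'\}$; this follows from $\E\mu' = \mu$ and convexity/linearity of pseudo-expectation, so $\E_{\{\mu'\}}\pE_{\mu'}\|x_i - \pE_\mu x_i\|^2 = \pE_\mu\|x_i - \pE_\mu x_i\|^2$, and for the tail event I would use a conditional-variance-type bound (variance only drops under conditioning, so the tail mass stays controlled in expectation).

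The main work is the bulk term $\E_{\{\mu'\}}\E_{ij}\left[\pE_{\mu'}f_{ij}(\widetilde x_i,\widetilde x_j) - \pE_{(\mu')^\otimes}f_{ij}(\widetilde x_i,\widetilde x_j)\right]$. I would bound its absolute value by $\E_{\{\mu'\}}\E_{ij} |A_{ij}|\cdot\tv\left(\mu'_{\{\widetilde x_i,\widetilde x_j\}},\, \mu'_{\{\widetilde x_i\}}\otimes\mu'_{\{\widetilde x_j\}}\right)$, where $|A_{ij}|$ is the oscillation of $f_{ij}$ on the (random) bulk support — by Lipschitzness this is at most $L_{ij}\cdot\left(\mathrm{diam}(\widetilde x_i) + \mathrm{diam}(\widetilde x_j)\right)$, and the diameter of the support of $\widetilde x_i$ is at most $2C\sqrt{\pE_\mu\|x_i - \pE_\mu x_i\|^2}$ by definition of $E_i$. (The standard inequality $|\E_\alpha g - \E_\beta g| \le \|g - c\|_\infty \tv(\alpha,\beta)$ for any constant $c$ is what lets us use oscillation rather than sup-norm.) Plugging this in, the bulk error is at most
\[
O(C)\cdot \E_{\{\mu'\}}\E_{ij} L_{ij}\left(\sqrt{\pE_\mu\|x_i - \pE_\mu x_i\|^2} + \sqrt{\pE_\mu\|x_j - \pE_\mu x_j\|^2}\right)\tv\left(\mu'_{\{\widetilde x_i,\widetilde x_j\}},\mu'_{\{\widetilde x_i\}}\otimes\mu'_{\{\widetilde x_j\}}\right).
\]
Now I apply Cauchy–Schwarz twice: once to pull out $\left(\E_i \pE_\mu\|x_i - \pE_\mu x_i\|^2\right)^{1/2}$, and once (in the form $\E_{ij} L_{ij}\cdot(\cdots)\tv(\cdots) \le (\E_{ij}L_{ij}^2)^{1/2}(\E_{ij}(\cdots)^2\tv^2(\cdots))^{1/2}$, after first using $ab \le \frac12(a^2+b^2)$-style rebalancing to separate the $\sqrt{\pE_\mu\|x_i\|^2}$ factor into an $\E_i$) to reach a quantity of the form $\left(\E_{\{\mu'\}}\E_i (\E_j L_{ij}^2)(\E_j \tv^2(\cdots))\right)^{1/2}$. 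The hypothesis says precisely that this is at most $\sqrt{\eps}\cdot(\E_{ij}L_{ij}^2)^{1/2}$, yielding a bulk bound of $O(C\sqrt\eps)\cdot\left(\E_i\pE_\mu\|x_i - \pE_\mu x_i\|^2\right)^{1/2}\left(\E_{ij}L_{ij}^2\right)^{1/2}$. Adding the tail and bulk contributions gives the claimed $O(1/C + C\sqrt\eps)$ prefactor.

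The step I expect to be the main obstacle — or at least the one requiring the most care — is the sequence of Cauchy–Schwarz applications in the bulk term: the factors $\sqrt{\pE_\mu\|x_i - \pE_\mu x_i\|^2}$ and $\sqrt{\pE_\mu\|x_j - \pE_\mu x_j\|^2}$ are attached to different indices than the one the final average is taken over, so the rebalancing has to be arranged so that the $L_{ij}^2$ weights end up grouped with $\E_j\tv^2(\cdots)$ in exactly the shape the hypothesis provides, and so that only a clean $\E_i\pE_\mu\|x_i - \pE_\mu x_i\|^2$ survives outside. A second subtlety is bookkeeping with pseudo-expectations rather than genuine expectations: $\widetilde x_i$ is defined via the event $E_i$ which depends on $\pE_\mu$-moments (fixed, not random in $\{\mu'\}$), so $\widetilde x_i$ is a fixed function of $x_i$, and the total-variation quantities $\tv(\mu'_{\{\widetilde x_i,\widetilde x_j\}},\cdots)$ make sense because the $2$-local marginals of a pseudo-distribution are genuine distributions; I would note explicitly that all manipulations only touch $1$- and $2$-local marginals so no positivity-of-$\pE$ issues arise. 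Everything else (Chebyshev, the constant-shift trick for TV, linearity of $\pE_\mu$ under $\E_{\{\mu'\}}$) is routine.
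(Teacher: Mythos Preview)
Your proposal is correct and follows essentially the same approach as the paper: decompose into a bulk term on $(\widetilde x_i,\widetilde x_j)$ and a tail term, bound the tail via Chebyshev and Cauchy--Schwarz to get the $O(1/C)$ contribution (this is the paper's Lemma~\ref{lem:tilde-vs-not-diff}), and bound the bulk via the sup-norm-times-TV inequality followed by two applications of Cauchy--Schwarz to reach the hypothesized shape, giving the $O(C\sqrt{\eps})$ contribution. The paper handles the constant-shift/oscillation issue by assuming WLOG that $f_{ij}(\pE_\mu x_i,\pE_\mu x_j)=0$, which is exactly your ``$\|g-c\|_\infty$'' trick, and for the tail it simply uses $\E_{\{\mu'\}}\pE_{\mu'}=\pE_\mu$ to get $\Pr_\mu[\overline{E_i}]\le 1/C^2$ directly, so the conditional-variance aside you mention is not needed.
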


In order to prove~\cref{lem:lipschitz-rounding-error}, we will need to bound the difference in expectation between $f_{ij}$ under $x_i$ versus under $\tilde{x}_i$. 

\begin{lemma}[Truncated Variables can be rounded]
\label{lem:tilde-vs-not-diff}
    Let $0 < \epsilon \leq 1$ and let $\mu$ be some pseudo-distribution. Suppose $\mu'$ is a random psuedodistribution such that $\E \mu' = \mu$.
    Let $f_{ij} \, : \, \R^k  \times  \R^k  \rightarrow \R$ be $L_{ij}$-Lipschitz.
    Then
    \begin{align*}
        \left\vert \E_{\{\mu'\}} \E_{i,j \sim [n]} \pE_{\mu'} \left[\left( f_{ij}(x_i, x_j) - f_{ij}(\widetilde{x}_i,\widetilde{x}_j)\right)\right] \right\vert\ \leq O(1/C) \cdot \left(\E_{i \sim [n]} \pE_\mu \norm{x_i - \pE_\mu x_i}^2\right)^{1/2} \cdot \left( \E_{i,j\sim[n]} L_{ij}^2\right)^{1/2}
    \end{align*}
    and 
    \begin{align*}
        \left\vert \E_{\{\mu'\}} \E_{i,j \sim [n]} \E_{(\mu')^\otimes} \left[\left( f_{ij}(x_i, x_j) - f_{ij}(\widetilde{x}_i,\widetilde{x}_j)\right)\right] \right\vert\ \leq O(1/C) \cdot \left(\E_{i \sim [n]} \pE_\mu \norm{x_i - \pE_\mu x_i}^2\right)^{1/2} \cdot \left( \E_{i,j\sim[n]} L_{ij}^2\right)^{1/2}\,.
    \end{align*}
\end{lemma}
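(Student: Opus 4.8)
The idea is to reduce both inequalities to a single pointwise Lipschitz estimate and then integrate with Cauchy--Schwarz and Markov's inequality. By definition $\widetilde{x_i} - x_i = -(x_i - \pE_\mu x_i)\cdot\Ind(\overline{E_i})$, so one has the exact pointwise identity $\norm{x_i - \widetilde{x_i}} = \norm{x_i - \pE_\mu x_i}\cdot\Ind(\overline{E_i})$. Since $f_{ij}$ is $L_{ij}$-Lipschitz on $\R^k\times\R^k$ and $\norm{(u,u')-(v,v')}\le\norm{u-v}+\norm{u'-v'}$, this gives the pointwise bound
\[
\Abs{f_{ij}(x_i,x_j) - f_{ij}(\widetilde{x_i},\widetilde{x_j})} \;\le\; L_{ij}\Paren{h_i(x_i) + h_j(x_j)}, \qquad\text{where } h_i(x_i) := \norm{x_i - \pE_\mu x_i}\cdot\Ind(\overline{E_i}),
\]
and the right-hand side is a sum of a function of $x_i$ alone and a function of $x_j$ alone. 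After discretization, $\mu$, each realization $\mu'$, and all their one- and two-coordinate marginals are honest probability distributions, so $\pE$ of the (non-polynomial) quantities above acts as an ordinary expectation and absolute values may be moved inside $\pE$.

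Next I would observe that both inequalities reduce to bounding $\E_{i,j\sim[n]} L_{ij}\Paren{\pE_\mu[h_i(x_i)] + \pE_\mu[h_j(x_j)]}$. For the joint case, writing $g_{ij} := f_{ij}(x_i,x_j) - f_{ij}(\widetilde{x_i},\widetilde{x_j})$, which is a two-coordinate function, $\E\mu'=\mu$ gives $\E_{\{\mu'\}}\pE_{\mu'}[g_{ij}] = \pE_\mu[g_{ij}]$, so $\Abs{\E_{\{\mu'\}}\E_{i,j}\pE_{\mu'}[g_{ij}]} \le \E_{i,j}\pE_\mu[\Abs{g_{ij}}] \le \E_{i,j}L_{ij}\pE_\mu[h_i(x_i)+h_j(x_j)]$ by the pointwise bound. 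For the product case one cannot interchange $\E_{\{\mu'\}}$ with forming $(\mu')^\otimes$; instead the pointwise bound is applied first, giving $\Abs{\E_{(\mu')^\otimes}[g_{ij}]} \le \E_{(\mu')^\otimes}[L_{ij}(h_i(x_i)+h_j(x_j))] = L_{ij}\Paren{\pE_{\mu'}[h_i] + \pE_{\mu'}[h_j]}$ since $h_i,h_j$ are single-coordinate, and now $\E_{\{\mu'\}}$ only sees single-coordinate functionals, on which $\E_{\{\mu'\}}\pE_{\mu'}=\pE_\mu$ again holds, landing once more on $\E_{i,j}L_{ij}\pE_\mu[h_i(x_i)+h_j(x_j)]$.

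To finish, I would bound $\pE_\mu[h_i(x_i)]$: Cauchy--Schwarz under $\mu_i$ with $\Ind(\overline{E_i})^2=\Ind(\overline{E_i})$ gives $\pE_\mu[h_i(x_i)] \le \sqrt{\pE_\mu\norm{x_i-\pE_\mu x_i}^2}\cdot\sqrt{\pE_\mu\Ind(\overline{E_i})}$, and by the definition of $E_i$ together with Markov applied to $\norm{x_i - \pE_\mu x_i}^2$ we get $\pE_\mu\Ind(\overline{E_i}) \le 1/C^2$, so $\pE_\mu[h_i(x_i)] \le \tfrac1C\sqrt{\pE_\mu\norm{x_i-\pE_\mu x_i}^2}$. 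Symmetrizing $i\leftrightarrow j$ and then one more Cauchy--Schwarz over the pair $(i,j)$ yields $\E_{i,j}L_{ij}\pE_\mu[h_i(x_i)+h_j(x_j)] \le \tfrac2C\Paren{\E_{i,j}L_{ij}^2}^{1/2}\Paren{\E_i\pE_\mu\norm{x_i-\pE_\mu x_i}^2}^{1/2}$, which is exactly the claimed $\bigO{1/C}$ bound in both cases.

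The computation is largely routine; the one delicate point is the product-distribution bound, where one must resist commuting $\E_{\{\mu'\}}$ with the nonlinear operation of forming $(\mu')^\otimes$, instead passing to absolute values via the pointwise Lipschitz estimate first and exploiting that the resulting error involves only single-coordinate functionals of the $x_i$ (on which products versus the joint distribution make no difference and $\E_{\{\mu'\}}\pE_{\mu'}=\pE_\mu$ is valid). A secondary point is justifying that $\pE$ of the non-polynomial quantities $\Ind(E_i)$, $h_i$, and $f_{ij}(\widetilde{x_i},\widetilde{x_j})$ behaves like an ordinary expectation, which is immediate since after discretization every local distribution on one or two coordinates is a genuine probability distribution.
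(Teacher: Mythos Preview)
Your proof is correct and in fact somewhat cleaner than the paper's. The paper proceeds by splitting on $\Ind[E_i\wedge E_j]$ versus $\Ind[\overline{E_i}\vee\overline{E_j}]$, then on the tail event uses the cruder bound $|f_{ij}(\tilde x_i,\tilde x_j)-f_{ij}(x_i,x_j)|\le |f_{ij}(\tilde x_i,\tilde x_j)|+|f_{ij}(x_i,x_j)|$ together with the (implicit) normalization $f_{ij}(\pE_\mu x_i,\pE_\mu x_j)=0$ to bound each term by $L_{ij}$ times distances to the center. This produces cross terms of the form $\Ind[\overline{E_i}]\cdot\|x_j-\pE_\mu x_j\|$, which the paper then handles with an extra Cauchy--Schwarz before invoking $\E_{\{\mu'\}}\pE_{\mu'}=\pE_\mu$ and Chebyshev.

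Your route applies Lipschitz directly to $f_{ij}(x_i,x_j)-f_{ij}(\tilde x_i,\tilde x_j)$ and exploits the exact identity $\|x_i-\tilde x_i\|=\|x_i-\pE_\mu x_i\|\cdot\Ind(\overline{E_i})$, so the bound is immediately a sum of single-coordinate functions $h_i(x_i)+h_j(x_j)$. This avoids the normalization assumption, eliminates the cross terms entirely, and lets the product-distribution case collapse to the joint case with one line (since $(\mu')^\otimes$ agrees with $\mu'$ on single-coordinate functionals). The final Cauchy--Schwarz plus Markov step is the same in both proofs. Your observation about not commuting $\E_{\{\mu'\}}$ with forming $(\mu')^\otimes$ is exactly the right care to take, and your workaround through single-coordinate functionals is the clean way to handle it.
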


\begin{proof}
    We will bound 
    \[ \left\vert \E_{\{\mu'\}} \E_{i,j \sim [n]} \pE_{\mu'} \left[ \left( f_{ij}(\widetilde{x}_i,\widetilde{x}_j) - f_{ij} (x_i, x_j) \right) \right] \right\vert\,,\] 
    and the expression with $(\mu')^{\otimes}$ can be bounded via the same sequence of inequalities. We break up this term as follows:
    \begin{equation}
    \label{eqn:introduce-indicators-for-E}
    \begin{split}
        \left\vert \E_{\{\mu'\}} \E_{i,j \sim [n]} \pE_{\mu'} \left[ \left( f_{ij}(\widetilde{x}_i,\widetilde{x}_j) - f_{ij} (x_i, x_j) \right) \right] \right\vert &\leq \underbrace{ \left\vert \E_{\{\mu'\}} \E_{i,j \sim [n]} \pE_{\mu'} \left[ \mathbb{1}[E_i \wedge E_j]\left( f_{ij}(\widetilde{x}_i,\widetilde{x}_j) - f_{ij} (x_i, x_j) \right) \right] \right\vert}_{\eqref{eqn:introduce-indicators-for-E}.(1) } \\
        &\hspace{0.2in} \quad+ \underbrace{ \left\vert \E_{\{\mu'\}} \E_{i,j \sim [n]} \pE_{\mu'} \left[ \mathbb{1}[\overline{E}_i \vee \overline{E}_j]\left( f_{ij}(\widetilde{x}_i,\widetilde{x}_j) - f_{ij} (x_i, x_j) \right) \right] \right\vert}_{\eqref{eqn:introduce-indicators-for-E}.(2)} \,.
    \end{split}
    \end{equation}
    Note that when $E_i$ and $E_j$ both occur then $\widetilde{x}_i = x_i$ and $\widetilde{x}_j = x_j$, so thus $\eqref{eqn:introduce-indicators-for-E}.(1) =0$, and it remains to bound term \eqref{eqn:introduce-indicators-for-E}.(2). Using the Lipschitz property of $f_{ij}$ we have:
    \begin{align*}
        &\left\vert \E_{\{\mu'\}} \E_{i,j \sim [n]} \pE_{\mu'} \left[ \mathbb{1}[\overline{E}_i \vee \overline{E}_j]\left( f_{ij}(\widetilde{x}_i,\widetilde{x}_j) - f_{ij} (x_i, x_j) \right) \right] \right\vert \\
        &\leq \E_{\{\mu'\}} \E_{i,j \sim [n]} \pE_{\mu'} \left[ \mathbb{1}[\overline{E}_i \vee \overline{E}_j]\left( \vert f_{ij}(\widetilde{x}_i,\widetilde{x}_j) \vert + \vert f_{ij} (x_i, x_j) \vert \right) \right] \\
        &\leq \E_{\{\mu'\}} \E_{i,j \sim [n]} \pE_{\mu'} L_{ij} \left[ \mathbb{1}[\overline{E}_i \vee \overline{E}_j]\left( \norm{\widetilde{x_i} - \pE_\mu x_i} + \norm{\widetilde{x_j} - \pE_\mu x_j} + \norm{x_i - \pE_\mu x_i} + \norm{x_j - \pE_\mu x_j} \right) \right]\,.
    \end{align*}
    Note that since $\widetilde{x_i}$ just contracts values towards $\pE_\mu x_i$ when they fall outside some range, we have that $\norm{\widetilde{x_i} - \pE_\mu x_i} \leq \norm{x_i - \pE_\mu x_i}$ and thus
    \begin{equation}
    \label{eqn:}
    \begin{split}
        &\left\vert \E_{\{\mu'\}} \E_{i,j \sim [n]} \pE_{\mu'} \left[ \mathbb{1}[\overline{E}_i \vee \overline{E}_j]\left( f_{ij}(\widetilde{x}_i,\widetilde{x}_j) - f_{ij} (x_i, x_j) \right) \right] \right\vert \\
        &\leq O(1) \cdot \E_{\{\mu'\}} \E_{i,j \sim [n]} \pE_{\mu'} L_{ij} \left[ \mathbb{1}[\overline{E}_i \vee \overline{E}_j]\left(\norm{x_i - \pE_\mu x_i} + \norm{x_j - \pE_\mu x_j} \right) \right] \\
        &\leq O(1) \cdot \E_{\{\mu'\}} \E_{i,j \sim [n]} \pE_{\mu'} L_{ij} \left[ (\mathbb{1}[\overline{E}_i] + \mathbb{1}[ \overline{E}_j]) \cdot \left(\norm{x_i - \pE_\mu x_i} + \norm{x_j - \pE_\mu x_j} \right) \right] \,.
    \end{split}
    \end{equation}
    We focus on bounding
    \[ \E_{\{\mu'\}} \E_{i,j \sim [n]} \pE_{\mu'} L_{ij} \left[ \mathbb{1}[\overline{E}_i]  \cdot \left(\norm{x_i - \pE_\mu x_i} + \norm{x_j - \pE_\mu x_j} \right) \right]\,,\]
    and the other term follows analogously. Note that we have that by Cauchy-Schwarz,
    \begin{align*}
        &\E_{\{\mu'\}} \E_{i,j \sim [n]} \pE_{\mu'} L_{ij} \left[ \mathbb{1}[\overline{E}_i]  \cdot \left(\norm{x_i - \pE_\mu x_i} + \norm{x_j - \pE_\mu x_j} \right) \right] \\
        &\leq O(1) \cdot \E_{ij} L_{ij} \left( \E_{\{\mu'\}(T)} \pE_{\mu'} \mathbb{1}[\overline{E}_i]^2 \right)^{1/2} \left( \E_{\{\mu'\}} \pE_{\mu'} \norm{x_i - \pE_\mu x_i}^2 + \norm{x_j - \pE_\mu x_j}^2\right)^{1/2}\,.
    \end{align*}
    Furthermore, since $\E_{\{\mu'\}} \pE_{\mu'} = \pE_{\mu}$ we have that
    \begin{align*}
        &\E_{\{\mu'\}} \E_{i,j \sim [n]} \pE_{\mu'} L_{ij} \left[ \mathbb{1}[\overline{E}_i]  \cdot \left(\norm{x_i - \pE_\mu x_i} + \norm{x_j - \pE_\mu x_j} \right) \right] \\
        &\leq O(1) \cdot \E_{i,j \sim [n]} L_{ij} \left( \pE_{\mu} \mathbb{1}[\overline{E}_i]^2 \right)^{1/2} \left(\pE_{\mu} \norm{x_i - \pE_\mu x_i}^2 + \norm{x_j - \pE_\mu x_j}^2\right)^{1/2}\,.
    \end{align*}
    Note that we have that $\pE_{\mu} \mathbb{1}[\overline{E}_i]^2 = \Pr_{\mu} (\overline{E}_i) \leq 1/C^2$ for all $i \in [n]$ via Chebyshev, so 
    \begin{align*}
        &\E_{\{\mu'\}} \E_{i,j \sim [n]} \pE_{\mu'} L_{ij} \left[ \mathbb{1}[\overline{E}_i]  \cdot \left(\norm{x_i - \pE_\mu x_i} + \norm{x_j - \pE_\mu x_j} \right) \right] \\
        &\leq O(1) \cdot \E_{i,j \sim [n]} L_{ij}/C \left(\pE_{\mu} \norm{x_i - \pE_\mu x_i}^2 + \norm{x_j - \pE_\mu x_j}^2\right)^{1/2}\,.
    \end{align*}
    Applying Cauchy-Schwartz again, we have that
    \begin{align*}
        &\E_{\{\mu'\}} \E_{i,j \sim [n]} \pE_{\mu'} L_{ij} \left[ \mathbb{1}[\overline{E}_i]  \cdot \left(\norm{x_i - \pE_\mu x_i} + \norm{x_j - \pE_\mu x_j} \right) \right] \\
        &\leq O(1/C) \cdot \left(\E_{i,j \sim [n]} L_{ij}^2\right)^{1/2} \left( \E_{i \sim [n]} \pE_\mu \norm{x_i - \pE_\mu x_i}^2 \right)^{1/2} \,.
    \end{align*} 
\end{proof}

We now return to the proof of~\cref{lem:lipschitz-rounding-error}.
\begin{proof}[Proof of \cref{lem:lipschitz-rounding-error}]
    Without loss of generality, let $f_{ij}(\pE_\mu x_i, \pE_\mu x_j) = 0$; otherwise, you can apply an additive shift to each function without changing the quantity we are bounding. Furthermore, note that we can decompose the expression we want to bound as follows: 
    \begin{align*}
        &\left\vert \E_{\{\mu'\}} \E_{i,j \sim [n]} \left[\pE_{(\mu')^{\otimes}} f_{ij}(x_i,x_j) - \pE_{\mu'} f_{ij}(x_i,x_j) \right] \right\vert \\
        &= \Biggl\vert \E_{\{\mu'\}} \E_{i,j \sim [n]} \biggl[\left[\pE_{(\mu')^{\otimes}} f_{ij} (\widetilde{x}_i, \widetilde{x}_j) - \pE_{\mu'} f_{ij} (\widetilde{x}_i, \widetilde{x}_j)\right] \\
        &\qquad+ \pE_{\mu'} \left[ f_{ij}(\widetilde{x}_i,\widetilde{x}_j) - f_{ij} (x_i, x_j) \right] + \pE_{(\mu'){^\otimes}}\left[ f_{ij}(x_i, x_j) - f_{ij}(\widetilde{x}_i,\widetilde{x}_j)\right] \biggr]\Biggr\vert \\
        &\leq  \left\vert \E_{\{\mu'\}} \E_{i,j \sim [n]} \left[\pE_{(\mu')^{\otimes}} f_{ij} (\widetilde{x}_i, \widetilde{x}_j) - \pE_{\mu'} f_{ij} (\widetilde{x}_i, \widetilde{x}_j)\right] \right\vert + \left\vert \E_{\{\mu'\}} \E_{i,j \sim [n]} \pE_{\mu} \left[ \left( f_{ij}(\widetilde{x}_i,\widetilde{x}_j) - f_{ij} (x_i, x_j) \right) \right] \right\vert \\
        &\qquad+ \left\vert \E_{\{\mu'\}} \E_{i,j \sim [n]} \pE_{(\mu')^{\otimes}} \left[\left( f_{ij}(y_i, y_j) - f_{ij}(\widetilde{y}_i,\widetilde{y}_j)\right)\right] \right\vert\,. 
    \end{align*}
    Using that $\widetilde{x}_i = (x_i(T) -\expecf{0}{x_i}) E_i + \expecf{0}{x_i}$ and thus is within a $C \cdot \sqrt{\pE_\mu \norm{x_i - \pE_\mu x_i}^2}$ radius ball with probability $1$ we have always that
    \begin{equation}
    \label{eqn:shift-and-lipschitz}
    \begin{split}
        f_{i,j}(\widetilde{x}_i,\widetilde{x}_j) &\leq f_{i,j} \Paren{ \pE_\mu {x_i}, \pE_{\mu} x_j} + L_{ij} C \Paren{ \sqrt{\pE_\mu \norm{x_i - \pE_\mu x_i}^2} + \sqrt{\pE_\mu \norm{x_j - \E x_j}^2} } \\
        & \leq L_{ij} C \Paren{ \sqrt{\pE_\mu \norm{x_i - \pE_\mu x_i}^2} + \sqrt{\pE_\mu \norm{x_j - \E x_j}^2} }\,,
    \end{split}
    \end{equation}
    using that $f_{ij}$ is $L_{ij}$-Lipschitz. We can thus bound the first term as follows:
    \begin{align*}
        &\left\vert \E_{\{\mu'\}} \E_{i,j \sim [n]} \left[\pE_{(\mu')^{\otimes}} f_{ij} (\widetilde{x}_i, \widetilde{x}_j) - \pE_{\mu'} f_{ij} (\widetilde{x}_i, \widetilde{x}_j)\right] \right\vert \\
        &\leq \E_{\{\mu'\}} \E_{i,j \sim [n]} \left[ \tv \Paren{ \mu'_{\{\widetilde{x_i}, \widetilde{x_j}\}}, \mu'_{\{\widetilde{x_i}\} \otimes \{ \widetilde{x_j} \}}} \cdot 2L_{ij} \cdot C \cdot \left(\sqrt{\pE_\mu \norm{x_i - \pE_\mu x_i}^2} + \sqrt{\pE_\mu \norm{x_j - \E x_j}^2} \right) \right] \\
        &\leq 4C \E_{\{\mu'\}} \E_{i \sim [n]} \left[ \sqrt{\pE_\mu \norm{x_i - \pE_\mu x_i}^2} \cdot \E_{j \sim [n]} \left[ \tv \Paren{ \mu'_{\{\widetilde{x_i}, \widetilde{x_j}\}}, \mu'_{\{\widetilde{x_i}\} \otimes \{ \widetilde{x_j} \}}} \cdot L_{ij} \right] \right]\,,
    \end{align*}
    where the first inequality follows from \cref{eqn:shift-and-lipschitz}. Applying Cauchy-Schwarz twice, we have that
    \begin{align*}
        &\left\vert \E_{\{\mu'\}} \E_{i,j \sim [n]} \left[\pE_{(\mu')^{\otimes}} f_{ij} (\widetilde{x}_i, \widetilde{x}_j) - \pE_{\mu'} f_{ij} (\widetilde{x}_i, \widetilde{x}_j)\right] \right\vert \\
        &\leq 4C \left(\E_{\{\mu'\}} \E_{i \sim [n]} \pE_\mu \norm{x_i - \pE_\mu x_i}^2\right)^{1/2} \left( \E_{i \sim [n]} \left( \E_{j \sim [n]} \tv \Paren{ \mu'_{\{\widetilde{x_i}, \widetilde{x_j}\}}, \mu'_{\{\widetilde{x_i}\} \otimes \{ \widetilde{x_j} \}}} \cdot L_{ij} \right)^2\right)^{1/2} \\
        &\leq 4C \left(\E_{i \sim [n]} \pE_\mu \norm{x_i - \pE_\mu x_i}^2\right)^{1/2} \left( \E_{\{\mu'\}} \E_{i \sim [n]} \left(\E_{j \sim [n]} L_{ij}^2\right)  \left(\E_{j \sim [n]} \tv \Paren{ \mu'_{\{\widetilde{x_i}, \widetilde{x_j}\}}, \mu'_{\{\widetilde{x_i}\} \otimes \{ \widetilde{x_j} \}}}^2 \right)\right)^{1/2}\,.
    \end{align*}
    By our lemma assumption, we have that this can be further bounded by
    \begin{equation*}
        \begin{split}
            & \left\vert  \E_{\{\mu'\}}  \E_{i,j \sim [n]} \left[\pE_{(\mu')^{\otimes}} f_{ij} (\widetilde{x}_i, \widetilde{x}_j) - \pE_{\mu'} f_{ij} (\widetilde{x}_i, \widetilde{x}_j)\right] \right\vert \\
            & \quad \leq \bigO{ C \sqrt{\epsilon}} \cdot \left(\E_{i \sim [n]} \pE_\mu \norm{x_i - \pE_\mu x_i}^2\right)^{1/2} \cdot \left( \E_{i,j\sim[n]} L_{ij}^2\right)^{1/2}\,.
        \end{split}
    \end{equation*}
    It remains to bound the remaining two terms in the sum. Applying~\cref{lem:tilde-vs-not-diff} we have that they are both at most $O(1/C) \cdot \left(\E_{i,j \sim [n]} L_{ij}^2\right)^{1/2} \left( \E_{i \sim [n]} \pE_\mu \norm{x_i - \pE_\mu x_i}^2 \right)^{1/2}$, which completes the proof.
\end{proof}

Given~\cref{lem:lipschitz-rounding-error} we can now easily prove~\cref{lem:lipschitz-potential-aligned}.
\begin{proof}[Proof of~\cref{lem:lipschitz-potential-aligned}]
    Recall that we let $E_i$ be the event that $\norm{x_i - \pE_{\mu} x_i} \leq C \cdot \sqrt{\pE_{\mu} \norm{x_i - \pE_{\mu} x_i}^2}$ and let 
    \[\widetilde{x_i} = x_i E_i + (\pE_{\mu}[x_i] ) \overline{E}_i \,.\]
    The goal is to show that $\E_{i,j \in [n]} f_{ij}(x_i, x_j)$ is potential aligned.
    
    Consider a distribution over pseudo-distributions $\{\mu'\}$ such that $\E_{\{\mu'\}} \{\mu'\} = \mu$ for some initial pseudo-distribution $\mu$. When we have that
    \begin{align*}
        &\left\vert \E_{\{\mu'\}} \E_{i,j \sim [n]} \left[ \pE_{\mu'} f_{ij}(x_i, x_j) - \E_{(\mu')^{\otimes}}f_{ij}(x_i, x_j) \right] \right\vert \\
        &\qquad> O(\epsilon^{1/4}) \cdot \left(\E_{i \sim [n]} \pE_\mu \norm{x_i - \pE_\mu x_i}^2\right)^{1/2} \cdot \left( \E_{i,j\sim[n]} L_{ij}^2\right)^{1/2} \,,
    \end{align*}
    then by~\cref{lem:lipschitz-rounding-error} (with setting $C = \epsilon^{-1/4}$) we must have that 
    \[ \E_{\{\mu'\}} \E_{i \sim [n]} \left[\left( \E_{j \sim [n]} L_{ij}^2 \right) \left( \E_{j \sim [n]} \tv \Paren{ \mu'_{\{\widetilde{x_i}, \widetilde{x_j}\}}, \mu'_{\{\widetilde{x_i}\} \otimes \{ \widetilde{x_j} \}}}^2 \right)\right] > \eps \cdot \E_{i,j \sim [n]} L_{ij}^2\,.\]
    Let $\psi$ be the distribution over $[n]$ which samples $i$ proportional to $\E_{j \sim [n]} L_{ij}^2$. Then equivalently, we can write the above as follows:
    \[ \E_{\{\mu'\}} \E_{i \sim \psi} \left[\E_{j \sim [n]} \tv \Paren{ \mu'_{\{\widetilde{x_i}, \widetilde{x_j}\}}, \mu'_{\{\widetilde{x_i}\} \otimes \{ \widetilde{x_j} \}}}^2 \right] > \eps \,.\]
    By Pinskers, we have that $\tv \Paren{ \mu'_{\{\widetilde{x_i}, \widetilde{x_j}\}}, \mu'_{\{\widetilde{x_i}\} \otimes \{ \widetilde{x_j} \}}}^2 \leq I(\mu'_{\{\widetilde{x_i}, \widetilde{x_j}\}}; \mu'_{\{\widetilde{x_i}\} \otimes \{ \widetilde{x_j} \}})$ and thus
    \[ \E_{\{\mu'\}} \E_{i \sim \psi}  \E_{j \sim [n]} H(\widetilde{x}_j) - H(\widetilde{x}_j \vert \widetilde{x}_i) = \E_{\{\mu'\}} \E_{i \sim \psi} \left[\E_{j \sim [n]} I(\mu'_{\{\widetilde{x_i}, \widetilde{x_j}\}}; \mu'_{\{\widetilde{x_i}\} \otimes \{ \widetilde{x_j} \}}) \right] > \eps \,.\]
    We have therefore shown that whenever the rounding error is large, the potential function decreases in expectation for a random $i$. Notably, there must exist a fixed $i$ which achieves this decrease in expectation over the sample from $x_i$ and thus we have shown that Lipschitz functions are potential aligned with the given potential. 
\end{proof}

\subsection{Euclidean Metric Violation: Algorithm and Analysis}
In this section we will show that there exists an additive approximation scheme for \textsf{$k$-EMV}. 

\begin{theorem}[Additive Approximation Scheme for \textsf{$k$-EMV}]
\label{thm:raw-stress-main}
Given a set of distances $\calD = \Set{d_{i,j}}_{i,j \in [n]}$, $k \in \mathbb{N}$, where each $d_{ij}$ can be represented in $B$ bits, and $0<\epsilon <1$, there exists an algorithm that runs in $B^{\mathcal{O}( k^2\log(k/\eps)/\eps^4)} \cdot \poly(n)$ time and outputs points $\Set{ \hat{x}_1, \ldots, \hat{x}_n }_{i \in [n]} \in \R^k$ such that with probability at least $0.99$, 
\begin{equation*}
    \E_{i,j\sim[n]} \Paren{ d_{i,j} - \norm{ \hat{x}_i - \hat{x}_j }_2 }^2 \leq \OPT_{\textrm{EMV}} + \eps  \cdot \E_{i,j\sim[n]} d_{i,j}^2. 
\end{equation*}
\end{theorem}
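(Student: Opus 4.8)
The plan is to instantiate the global-correlation framework developed above.
First I would reduce to a finite alphabet: a discretization argument (\cref{lem:discretization-rs}), based on snapping each coordinate onto a geometric grid anchored at a well-chosen embedded point, produces a set $S \subset \R^k$ with $|S| = (B/\eps)^{O(k)}$ such that $\min_{x \in S^n} \E_{i,j \sim [n]} (d_{ij} - \norm{x_i - x_j})^2 \le \OPT_{\textrm{EMV}} + \frac{\eps}{10}\E_{i,j\sim[n]} d_{ij}^2$.
I then run the degree-$t$ Sherali--Adams / sum-of-squares relaxation over $S^n$ via \cref{fact:eff-pseudo-distribution}, for a degree $t = O(k\log(k/\eps)/\eps^4)$ justified below, obtaining a pseudo-distribution $\mu_0$ whose pseudo-expectation of the objective is at most $\OPT_{\textrm{EMV}} + \frac{\eps}{10}\E_{i,j\sim[n]} d_{ij}^2$.
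Since $|S| = (B/\eps)^{O(k)}$, this step runs in $(n|S|)^{O(t)} = B^{O(k^2 \log(k/\eps)/\eps^4)} \poly(n)$ time.

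\textbf{Anchoring.}
Next I condition on a single variable to bound the spread of the rest.
Conditioning $\mu_0$ on the value of a uniformly random $x_a$ (drawn from its own marginal) preserves the objective in expectation, and the triangle-inequality computation of \cref{eqn:avg-variance-intro} (which extends verbatim to $\R^k$), together with $\OPT_{\textrm{EMV}} \le \E_{i,j\sim[n]} d_{ij}^2$, shows $\E_a \E_{i \sim [n]} \pE_\mu \norm{x_i - \pE_\mu x_i}^2 = O(\E_{i,j\sim[n]} d_{ij}^2)$, where $\mu$ denotes the conditioned pseudo-distribution.
Both the conditional objective value and this average variance are low-degree moment functionals of $\mu_0$, so by Markov a constant fraction of pairs $(a, \hat{x}_a) \in [n]\times S$ simultaneously keep the conditional objective at $\OPT_{\textrm{EMV}} + O(\eps)\E_{i,j\sim[n]}d_{ij}^2$ and make $\eta := \E_{i\sim[n]} \pE_\mu \norm{x_i - \pE_\mu x_i}^2 = O(\E_{i,j\sim[n]}d_{ij}^2)$; I enumerate over $a$ and $\hat{x}_a$ to fix such a $\mu$.

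\textbf{Splitting the objective and rounding.}
Expanding $(d_{ij} - \norm{x_i - x_j})^2 = d_{ij}^2 + \norm{x_i}^2 + \norm{x_j}^2 - 2\langle x_i, x_j\rangle - 2 d_{ij}\norm{x_i - x_j}$, the first three terms are constant or one-variable functions and so cost nothing under any product-of-marginals distribution $\mu^{\otimes}_\calT$; only the quadratic part $f_2 := \E_{i,j\sim[n]}(-2)\langle x_i, x_j\rangle$ and the linear part $f_1 := \E_{i,j\sim[n]}(-2d_{ij})\norm{x_i - x_j}$ require global correlation rounding.
Since $(x_i,x_j)\mapsto -2d_{ij}\norm{x_i - x_j}$ is $O(d_{ij})$-Lipschitz on $(\R^k)^2$ and $\E_{i,j\sim[n]}(O(d_{ij}))^2 = O(\E_{i,j\sim[n]}d_{ij}^2)$, \cref{lem:lipschitz-potential-aligned}, applied with the anchored $\mu$ --- and with the truncated variables $\widetilde{x_i}$ additionally re-discretized to a $(k/\eps)^{O(k)}$-point grid inside their radius-$C\sqrt{\pE_\mu\norm{x_i - \pE_\mu x_i}^2}$ balls, which perturbs the objective by $O(\eps)\E_{i,j\sim[n]}d_{ij}^2$ but forces the truncated-entropy potential $\Phi_1 = \E_i \mathsf{H}(\widetilde{x_i})$ to be $O(k\log(k/\eps))$ and, crucially, independent of $B$ --- shows $f_1$ is $\big(O(\eps_0^{1/4})\,\eta^{1/2}(\E_{i,j\sim[n]}d_{ij}^2)^{1/2},\, \eps_0\big)$-potential aligned; taking $\eps_0 = \Theta(\eps^4)$ and using $\eta = O(\E_{i,j\sim[n]}d_{ij}^2)$ makes its error parameter at most $\frac{\eps}{8}\E_{i,j\sim[n]}d_{ij}^2$.
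By \cref{fact:deg-2-potential-aligned} with $A_{ij} \equiv -2$ and the anchoring bound $\E_i \tr(\widetilde{\Sigma}_i(\mu)) \le \eta$, $f_2$ is $\big(2\eps_1,\, \Theta(\eps_1^2/(\eta k))\big)$-potential aligned for $\Phi_2 = \E_i\tr(\widetilde{\Sigma}_i)$; taking $\eps_1 = \Theta(\eps\,\E_{i,j\sim[n]}d_{ij}^2)$ makes its error parameter at most $\frac{\eps}{8}\E_{i,j\sim[n]}d_{ij}^2$.
Feeding $\{f_1,f_2\}$ into \cref{lem:gcr_linear_combo} with target $\frac{\eps}{4}\E_{i,j\sim[n]}d_{ij}^2$ yields a single conditioning set $\calT$ with $|\calT| \le \Phi_1(\mu)/\eps_0 + \Phi_2(\mu)/\Theta(\eps_1^2/(\eta k)) = O(k\log(k/\eps)/\eps^4) + O(k/\eps^2) = O(k\log(k/\eps)/\eps^4)$, which justifies the choice $t = |\calT| + O(1)$.

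\textbf{Output, amplification, and the main obstacle.}
I then draw $\hat{x}_\calT \sim \mu_\calT$, condition to get $\nu$, and output $\widehat{x}_i$ drawn independently from each single-variable marginal $\nu_i$.
The constant and one-variable terms match under $\nu$ and $\nu^\otimes$ and contribute in expectation exactly their pseudo-expectation under $\mu$, while \cref{lem:gcr_linear_combo} bounds the $f_1+f_2$ discrepancy by $\frac{\eps}{4}\E_{i,j\sim[n]}d_{ij}^2$, so $\E[\,\textrm{obj}(\widehat{x})\,] \le \OPT_{\textrm{EMV}} + O(\eps)\E_{i,j\sim[n]}d_{ij}^2$; rescaling $\eps$ by a constant and using $\textrm{obj}(\widehat{x}) \ge \OPT_{\textrm{EMV}}$, Markov gives probability $\ge 3/4$ of landing within $\OPT_{\textrm{EMV}} + \eps\E_{i,j\sim[n]}d_{ij}^2$, and $O(1)$ independent repetitions (exactly evaluating the objective of each candidate in $\poly(n)$ time and keeping the best) push this to $0.99$; the enumerations over $(a,\hat x_a)$ and over sets $\calT$ of the stated size are subsumed by the running time.
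The delicate point throughout --- and the reason the running time is $B^{\poly(k,\eps^{-1})}$ rather than doubly exponential in $B$ as in the warmup --- is keeping $|\calT|$ free of $B$: this rests on (i) the anchoring step, which replaces the pessimistic $2^{2B}/\eps^2$ bound on $(d_{ij} - \norm{x_i - x_j})^2$ by the variance bound $\eta = O(\E_{i,j\sim[n]}d_{ij}^2)$; (ii) the split that leaves only a Lipschitz term and a degree-$2$ term for correlation rounding; and (iii) the re-discretization of the truncated variables, which makes $\Phi_1(\mu)$ --- hence the number of conditioning rounds for the linear term --- scale as $k\log(k/\eps)$ rather than $k\log(B/\eps)$.
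Checking that these three ingredients combine cleanly under one conditioning set via \cref{lem:gcr_linear_combo}, and that the anchoring bound holds for a single fixed $\mu$, is the remaining work.
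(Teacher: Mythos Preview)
Your rounding analysis is essentially the paper's: anchor by conditioning on one $x_a$, split the objective into degree-two and Lipschitz parts, invoke \cref{fact:deg-2-potential-aligned} and \cref{lem:lipschitz-potential-aligned}, and combine via \cref{lem:gcr_linear_combo}. The entropy bound you obtain by ``re-discretizing'' $\widetilde{x_i}$ to a $(k/\eps)^{O(k)}$-point grid is a workable variant of what the paper does in \cref{lem:rs-entropy-bound}, where the geometric grid already has only $O((k/\eps)^k)$ points inside each truncation ball, so no re-discretization is needed; either route gives $\Phi_1(\mu)=O(k\log(k/\eps))$ free of $B$. (Minor bookkeeping: conditioning on one $x_i\in\R^k$ costs degree $k$, so $t$ should be $k\cdot|\calT|=O(k^2\log(k/\eps)/\eps^4)$, matching the theorem's exponent.)

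The genuine gap is the running time. Your claim that solving the degree-$t$ Sherali--Adams LP over $S^n$ takes $(n|S|)^{O(t)} = B^{O(k^2\log(k/\eps)/\eps^4)}\cdot\poly(n)$ is simply false arithmetic: $(n|S|)^{O(t)} = n^{O(t)}\cdot|S|^{O(t)}$, and $n^{O(t)}=n^{O(k\log(k/\eps)/\eps^4)}$ is not $\poly(n)$ with a fixed degree. What your argument actually proves is the $(nB)^{O(k^2\log(k/\eps)/\eps^4)}$-time version, which is exactly what the paper establishes first in \cref{sec:rs} before invoking a separate, nontrivial speedup in \cref{sec:emv-speedup}. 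To get the $\poly(n)$ factor claimed in \cref{thm:raw-stress-main}, the paper recasts the rounding as a \emph{local rounding algorithm} in the sense of Guruswami--Sinop \cite{gs12fastsdp}: the seed set $\calT$ is chosen by sampling from a distribution depending only on $\{d_{ij}\}$ (not on the LP solution), the feasibility check is a local consistency check, and then Theorem~19 of \cite{gs12fastsdp} delivers the LP solution restricted to the $O(n)$ size-$(|\calT|+2)$ subsets actually read, in time $[s^\ell n]^{O(\ell)}$ rather than the full $(n|S|)^{O(t)}$. Without this machinery (or an equivalent), you have only the weaker $(nB)^{\poly(k,\eps^{-1})}$ bound.
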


In this section we will give an algorithm and analysis for an additive approximation scheme that runs in $ (nB)^{\mathcal{O}( k^2\log(k/\eps)/\eps^4)}$ time. In~\cref{sec:emv-speedup} we will show how to improve the runtime of the algorithm to $B^{\mathcal{O}( k^2\log(k/\eps)/\eps^4)} \cdot \poly(n)$. Without loss of generality, we will assume that $\min_{i,j} d_{ij} = 1$ and $\max_{i,j} d_{ij} = \Delta$ for the remainder of this section.

\begin{mdframed}
  \begin{algorithm}[Additive Approximation Scheme for \textsf{$k$-EMV}]
    \label{algo:raw-stress}\mbox{}
    \begin{description}
    \item[Input:] Non-negative numbers $\mathcal{D} = \{d_{ij}\}_{i,j\in[n]}$, target dimension $k \in \mathbb{N}$, target accuracy $0<\epsilon<1$.
    
    \item[Operations:]\mbox{}
    \begin{enumerate}
        \item For each $a \in [n]$ and set $\calT$ of size at most $\bigO{ k^2 \log (k/\epsilon) /\epsilon^4}$:
        \begin{enumerate}
            \item Let %
            \[ \Sigma = \{0\} \cup \left\{\eps \cdot \sqrt{\frac{1}{k}} (1+\epsilon)^t \, \vert \, t \in [12\log_{1+\epsilon} (k\Delta/\epsilon)] \right\}\,.\]
            Let $\mu$ be a $O(k^2 \log (k/\epsilon) /\epsilon^4)$-degree Sherali-Adams pseudo-distribution over $\left(\Sigma^k\right)^n$ such that $\pE_{\mu}$ optimizes 
            \begin{equation}
            \label{eqn:main-opt-rs}
            \begin{split}
                & \min_{\pE_{\mu} }   \hspace{0.1in}\pE_{\mu}  \E_{i,j\sim[n]} (d_{ij} - \norm{x_i - x_j})^2\\
                & \textrm{s.t.}  \hspace{0.1in} x_a = 0 , \hspace{0.1in}\expecf{i \sim [n]}{ \norm{x_i}^2 } \leq 6 \E_{i,j\sim[n]} d_{ij}^2
            \end{split}
            \end{equation}
            \item Let $\hat{x}_\calT$ be a draw from the local distribution $\{x_{\calT}\}$. Let $\mu_\calT$ be the pseudodistribution obtained by conditioning on $\{x_i = \hat{x}_i\}_{i \in \calT}$.
            \item Let $\hat{X}_a$ be the embedding $\{\hat{x}_i\}_{i \in [n]}$ where $\hat{x}_i$ is sampled independently from the the $k$-degree local distribution of $\{x_i\}$ in $\mu_\calT$.
        \end{enumerate}
    \end{enumerate}
    \item[Output:] The embedding $\hat{X}_a$ with the lowest \textsf{$k$-EMV}  objective value.
    \end{description}
  \end{algorithm}
\end{mdframed}

\paragraph{Discretization preserves objective value.}

In this section, we will show that a discretized variant of the Euclidean Metric Violation objective with alphabet size $\bigO{\epsilon \cdot \log (\Delta k /\epsilon)}$ preserves the objective value up to an additive error of $\epsilon \E_{i,j\sim[n]} d_{ij}^2$

\begin{lemma}[Discretization incurs small error]
\label{lem:discretization-rs}
    Given a set of distances $\{ d_{i,j} \}_{i,j \in [n]}$ such that $\Delta = \max_{i,j} d_{i,j}$ and $1 = \min_{i,j}d_{i,j}$, $k \in \mathbb{N}$,  and $0<\epsilon <1$, there exists a fixed set $\Sigma \subseteq \mathbb{R}^k$ depending only on $\Delta$ and $\epsilon$ such that the objective
    \[ \OPT_{\textrm{DEMV}} =  \min_{ x_1, \ldots , x_n \in \Sigma } \sum_{i,j \in [n]} \Paren{  d_{i,j} - \norm{x_i - x_j }_2 }^2\]
    satisfies
    \[ \OPT_{\textrm{DEMV}} \leq \OPT_{\textrm{EMV}} + 16 \epsilon \cdot \E_{i,j\sim[n]} d_{ij}^2\,.\]
    Furthermore,
    \[ \vert \Sigma \vert \leq \bigO{\epsilon \cdot \log (\Delta k /\epsilon)} \,.\]
\end{lemma}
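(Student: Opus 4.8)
The plan is to produce an embedding all of whose coordinates lie in the scalar grid $\Sigma$ used by Algorithm~\ref{algo:raw-stress} and whose objective is within $16\epsilon\cdot\E_{ij}d_{ij}^2$ of $\OPT_{\textrm{EMV}}$. I would take an \emph{exactly optimal} real embedding, translate it so one point sits at a well-chosen ``anchor'', show that the optimal embedding is then automatically confined to a ball of radius $O(\Delta)$ about the anchor, and round each coordinate onto the geometric grid. Write $M:=\E_{ij\sim[n]}d_{ij}^2$, and note $\OPT_{\textrm{EMV}}\le M$ (map every point to the origin) and $1/2\le M\le\Delta^2$. We may assume $\epsilon<1/16$, since otherwise the all-zeros embedding already yields $\OPT_{\textrm{DEMV}}\le M\le16\epsilon M$.

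\emph{Anchoring.} The objective is continuous and, once a translation is fixed, coercive, so a minimizer $x^*$ exists. Since $\E_{a,i\sim[n]}\|x_i^*-x_a^*\|^2\le 2\E_{ai}(d_{ai}-\|x_i^*-x_a^*\|)^2+2\E_{ai}d_{ai}^2=2\OPT_{\textrm{EMV}}+2M\le4M$, there is an anchor $a$ with $\E_i\|x_i^*-x_a^*\|^2\le4M$; translate so $x_a^*=0$, giving $\E_i\|x_i^*\|^2\le4M$. (Algorithm~\ref{algo:raw-stress} enumerates over all $a\in[n]$, so it finds a good one.)

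\emph{The optimum is bounded (the main step).} The crux, and what I expect to be the main obstacle, is to show this anchored optimum already satisfies $\|x_i^*\|\le18\Delta$ for every $i$ --- this is exactly what allows $\Sigma$ to have size $\mathrm{poly}(\Delta/\epsilon)$ rather than $\mathrm{poly}(n)$. By Markov, more than $n/2$ indices $j$ have $\|x_j^*\|\le3\sqrt M$. If some $i_0$ had $\|x_{i_0}^*\|>\max(6\sqrt M,4\Delta)$, then for each such $j$ the triangle inequality gives $\|x_{i_0}^*-x_j^*\|\ge\|x_{i_0}^*\|/2$ and $d_{i_0 j}\le\|x_{i_0}^*\|/4$, hence $(d_{i_0 j}-\|x_{i_0}^*-x_j^*\|)^2\ge\|x_{i_0}^*\|^2/16$; summing, the pairs through $i_0$ contribute at least $\|x_{i_0}^*\|^2/(16n)$ to the objective. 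But replacing $x_{i_0}^*$ by $0$ makes them contribute at most $\tfrac4{n^2}\bigl(\sum_j d_{i_0 j}^2+\sum_j\|x_j^*\|^2\bigr)\le 20\Delta^2/n$, so optimality of $x^*$ forces $\|x_{i_0}^*\|^2\le320\Delta^2$; in either case $\|x_i^*\|\le18\Delta$. I emphasize that the naive alternative --- keep a merely \emph{near}-optimal embedding and ``pull in'' its faraway points --- does not work: moving a point that lies on a heavy row of $d$ to the anchor can cost $\Omega(M)$ additive error, so one genuinely needs per-coordinate optimality of $x^*$ here.

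\emph{Rounding and bookkeeping.} Take $\Sigma=\{0\}\cup\{\pm\,\epsilon k^{-1/2}(1+\epsilon)^t:1\le t\le12\log_{1+\epsilon}(k\Delta/\epsilon)\}$; since $\epsilon<1/16$ and $\Delta\ge1$, its largest element exceeds $18\Delta$, so every coordinate of every $x_i^*$ lies in its span. Rounding each coordinate to the nearest element of $\Sigma$ gives $\hat x_i\in\Sigma^k$ with coordinatewise error at most $\max(2\epsilon k^{-1/2},\epsilon|(x_i^*)_\ell|)$, so $e_i:=\|\hat x_i-x_i^*\|$ obeys $e_i^2\le\epsilon^2(4+\|x_i^*\|^2)$ and $\E_i e_i^2\le8\epsilon^2M$. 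Using $\bigl|\|\hat x_i-\hat x_j\|-\|x_i^*-x_j^*\|\bigr|\le e_i+e_j$ and expanding,
\[
(d_{ij}-\|\hat x_i-\hat x_j\|)^2-(d_{ij}-\|x_i^*-x_j^*\|)^2\le 2\,\lvert d_{ij}-\|x_i^*-x_j^*\|\rvert\,(e_i+e_j)+(e_i+e_j)^2,
\]
and averaging over $i,j$ with Cauchy--Schwarz (and $\E_{ij}(d_{ij}-\|x_i^*-x_j^*\|)^2=\OPT_{\textrm{EMV}}\le M$) bounds the total increase by $O\bigl(\sqrt{M}\sqrt{\E_i e_i^2}\bigr)+O(\E_i e_i^2)=O(\epsilon M)$; tracking constants (or shrinking the grid spacing to $\epsilon/c$ for a constant $c$) gives the stated $16\epsilon\cdot\E_{ij}d_{ij}^2$. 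Finally $|\Sigma|=1+2\cdot12\log_{1+\epsilon}(k\Delta/\epsilon)=O\bigl(\epsilon^{-1}\log(\Delta k/\epsilon)\bigr)$ since $\ln(1+\epsilon)\ge\epsilon/2$; one also checks $\E_i\|\hat x_i\|^2<6M$, so this embedding is in fact feasible for the constrained program~\eqref{eqn:main-opt-rs}.
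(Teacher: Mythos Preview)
Your proposal is correct, and the rounding/error analysis in your final step (reverse triangle inequality, $\E_i e_i^2 = O(\eps^2 M)$, Cauchy--Schwarz on the cross term) is essentially the paper's. Where you diverge is in the ``main step'' bounding the location of the points. The paper does \emph{not} work with an exact minimizer; it first invokes Fact~\ref{fact:rs-diam-opt}, which takes an optimal solution, uses Markov to show all but an $\eps^2/\Delta^4$ fraction of pairs satisfy $\|x_i^*-x_j^*\|\le\Delta+\Delta^3/\eps$, collects the ``good'' points into a ball of diameter $O(\Delta^3/\eps)$ via a common-neighbor argument, and then \emph{projects} the remaining bad points onto that ball at additive cost $\eps\,\E_{ij}d_{ij}^2$; only afterwards does it anchor via Lemma~\ref{lem:magnitude-points-rs} and round. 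Your route --- establish existence of an exact minimizer by coercivity, anchor it, and then use a per-point optimality argument (moving a far-out $x_{i_0}^*$ to $0$ would strictly decrease the objective) to force every $\|x_i^*\|\le 18\Delta$ --- is cleaner and gives the sharper radius $O(\Delta)$ instead of $(\Delta/\eps)^{O(1)}$. The price is that your argument genuinely needs exact optimality (as you note, pulling in outliers of a merely near-optimal embedding can cost $\Omega(M)$), whereas the paper's projection construction works from any near-optimal starting point; for the purposes of this lemma both are fine, and both feed into the same geometric grid $\Sigma$.
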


We first describe the set $\Sigma$. We have that:
\[ \Sigma = \{0\} \cup \left\{\eps \cdot \sqrt{\frac{1}{k}} (1+\epsilon)^t \, \vert \, t \in [12\log_{1+\epsilon} (k\Delta/\epsilon)] \right\}\,.\]

In order to show that $\OPT_{\mathrm{DEMV}}$ is small, we will bound the value of the objective when taking an (approximately) optimal low diameter solution to the original \textsf{$k$-EMV} problem and assigning each point $x_i$ to its closest point in $\Sigma$. 

First, we must show that such an approximately optimal low-diameter solution exists.
\begin{fact}
\label{fact:rs-diam-opt}
    Let $\{ d_{i,j} \}_{i,j \in [n]}$ be a set of distances such that $\Delta = \max_{i,j} d_{i,j}$ and $1 =\min_{i,j}d_{i,j}$. Then for any $k \in \mathbb{N}$,  and $0<\epsilon <1$, there exists a solution $x_1,\ldots, x_n$ for the \textsf{$k$-EMV}  objective of value $\OPT_{\mathrm{EMV}} + \epsilon \E_{i,j \sim [n]} d_{ij}^2$ where the diameter of the embedding is at most $\left(\Delta/\eps\right)^{O(1)}$.
\end{fact}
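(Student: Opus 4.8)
The plan is to take an exactly optimal embedding, re-center it at its centroid, and then \emph{radially project} every point onto the origin-centered ball of a carefully chosen radius $R = (\Delta/\eps)^{\bigO{1}}$; the projected embedding automatically has diameter at most $2R$, so the whole task reduces to showing that this projection inflates the \textsf{$k$-EMV} objective by at most $\eps \cdot \E_{ij} d_{ij}^2$. The first step is a second-moment bound on the optimal solution. Let $x_1^\ast, \ldots, x_n^\ast$ be optimal, translated so that $\E_{i} x_i^\ast = 0$. Sending every point to the origin gives $\OPT_{\mathrm{EMV}} \le \E_{ij} d_{ij}^2 =: \sigma^2$, and from $\norm{x_i^\ast - x_j^\ast}^2 \le 2 d_{ij}^2 + 2(d_{ij} - \norm{x_i^\ast - x_j^\ast})^2$ we get $\E_{ij} \norm{x_i^\ast - x_j^\ast}^2 \le 4\sigma^2$; since the centroid is at the origin, $\E_{ij}\norm{x_i^\ast - x_j^\ast}^2 = 2\,\E_i \norm{x_i^\ast}^2$, yielding the key estimate $\E_i \norm{x_i^\ast}^2 \le 2\sigma^2$. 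We also record $\sigma^2 \le \Delta^2$ and $\sigma^2 = \Omega(1)$, both immediate from $1 = \min_{ij} d_{ij} \le \max_{ij} d_{ij} = \Delta$.

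Next, for a radius $R > 0$ let $\widetilde x_i$ be the nearest-point projection of $x_i^\ast$ onto the ball of radius $R$ about the origin, so $\norm{\widetilde x_i} \le R$ and $\norm{x_i^\ast - \widetilde x_i} = (\norm{x_i^\ast} - R)_+$. Since projection onto a convex set is $1$-Lipschitz, $\norm{\widetilde x_i - \widetilde x_j} \le \norm{x_i^\ast - x_j^\ast}$ for all $i, j$. I would then invoke the elementary inequality $(d - t')^2 - (d - t)^2 \le 2d(t - t')$ for $0 \le t' \le t$ with $t = \norm{x_i^\ast - x_j^\ast}$ and $t' = \norm{\widetilde x_i - \widetilde x_j}$, followed by $t - t' \le \norm{x_i^\ast - \widetilde x_i} + \norm{x_j^\ast - \widetilde x_j}$, to conclude that each term of the objective increases by at most $2 d_{ij}\,\big( (\norm{x_i^\ast} - R)_+ + (\norm{x_j^\ast} - R)_+ \big)$. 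Averaging over $i, j$ and bounding $d_{ij} \le \Delta$, the objective value of $\widetilde x$ exceeds $\OPT_{\mathrm{EMV}}$ by at most $4\Delta\, \E_i (\norm{x_i^\ast} - R)_+$.

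It remains to choose $R$, which I would do by averaging. Since $\int_0^\infty (\norm{x_i^\ast} - R)_+\, dR = \frac{1}{2}\norm{x_i^\ast}^2$, we have $\int_0^\infty \E_i (\norm{x_i^\ast} - R)_+ \, dR \le \sigma^2$, so some $R^\ast \in [0, T]$ satisfies $\E_i (\norm{x_i^\ast} - R^\ast)_+ \le \sigma^2 / T$; taking $T = 4\Delta^3/(\eps \sigma^2)$ makes the inflation at most $4\Delta \cdot \sigma^2/T = \eps \sigma^4/\Delta^2 \le \eps \sigma^2 = \eps\, \E_{ij} d_{ij}^2$. Thus $\widetilde x$ has objective at most $\OPT_{\mathrm{EMV}} + \eps\,\E_{ij}d_{ij}^2$ and diameter at most $2R^\ast \le 2T = \bigO{\Delta^3 / (\eps \sigma^2)} = (\Delta/\eps)^{\bigO{1}}$, using $\sigma^2 = \Omega(1)$, which is what we wanted.

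I expect the main obstacle to be exactly this truncation step. The naive approach --- teleport every point with $\norm{x_i^\ast} > R$ straight to the origin (or to a single anchor) --- fails, because a $\Theta(\sigma^2 / R^2)$ fraction of the pairs then acquire a new cost of order $R^2$, an $\bigO{\sigma^2}$ loss no matter how $R$ is picked. The two ingredients that circumvent this are: (i) using the $1$-Lipschitz radial projection, so distances only contract and the change in each term is controlled \emph{linearly} by the displacement rather than quadratically; and (ii) selecting $R$ via the averaging identity $\int_0^\infty \E_i (\norm{x_i^\ast} - R)_+\, dR = \frac{1}{2}\E_i\norm{x_i^\ast}^2$, which needs only the second moment established above.
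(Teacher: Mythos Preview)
Your proof is correct, and it takes a genuinely different route from the paper's. The paper proceeds combinatorially: from $\OPT_{\mathrm{EMV}} \le \Delta^2$ and Markov's inequality it isolates a set of ``good'' points (those with few far-away partners), shows via a shared-neighbor triangle-inequality argument that good points lie in a ball of radius $\bigO{\Delta^3/\eps}$, and projects the remaining $\bigO{\eps/\Delta^2}$-fraction of ``bad'' points onto that ball; since projection only contracts distances and each affected pair pays at most $\Delta^2$, the total added cost is at most $\eps$.

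Your argument instead leverages the second-moment bound $\E_i \norm{x_i^\ast}^2 \le 2\sigma^2$ directly (essentially the content of the paper's Lemma~\ref{lem:magnitude-points-rs}, which the paper does \emph{not} use inside this proof), radially projects \emph{all} points, bounds the per-pair cost increase linearly via $(d-t')^2 - (d-t)^2 \le 2d(t-t')$ together with the $1$-Lipschitz property of projection, and then selects $R$ by the integral identity $\int_0^\infty \E_i(\norm{x_i^\ast}-R)_+\,dR = \tfrac12\E_i\norm{x_i^\ast}^2$. This is cleaner and more analytic; it also makes transparent why naive truncation (teleporting outliers to a fixed point) fails, a point you correctly flag. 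The paper's approach, on the other hand, avoids the averaging-over-$R$ trick and gives an explicit radius in one shot. Either way the final bound is $(\Delta/\eps)^{\bigO{1}}$. One minor remark: your choice $T = 4\Delta^3/(\eps\sigma^2)$ works but is larger than necessary; already $T = 4\Delta/\eps$ gives inflation $\le \eps\sigma^2$ directly.
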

\begin{proof}
    Take some optimal solution $x_1^*, \ldots, x_n^*$. We will modify the solution to get a new solution satisfying the diameter conditions with value at most $\OPT_{\mathrm{EMV}} + \epsilon \E_{i,j \sim [n]} d_{ij}^2$.

    Note that we have that $\E_{i, j \sim [n]} (d_{ij} - \norm{x_i^* - x_j^*})^2 \leq \OPT_{\mathrm{EMV}} \leq \Delta^2$. Thus, by Markov's Inequality we have that for all but an $\epsilon^2 / \Delta^4$-fraction of pairs we have that $\norm{x_i^* - x_j^*} \leq \epsilon^{-1} \Delta^3 + \Delta$. Let the set of good points $x_i^*$ be all points that participate in at most $\epsilon/\Delta^2$ pairs such that $\norm{x_i^* - x_j^*}^2 > \epsilon^{-1} \Delta^3 + \Delta$. Note that there are at least $1-\eps/\Delta^2$ many good points. 
    
    Furthermore, every pair of good points $x_i^*, x_j^*$ are not too far apart, specifically $\norm{x_i^* - x_j^*} \leq 2 \left(\epsilon^{-1} \Delta^3 + \Delta\right)$. Note that $i,j$ both have at most $\epsilon/\Delta^2$ neighbors such that $\norm{x_i^* - x_j^*}^2 > \epsilon^{-1} \Delta^3 + \Delta$. Thus, they must have a shared neighbor $\ell$ such that $\norm{x_i^* - x_\ell^*}^2 \leq \epsilon^{-1} \Delta^3 + \Delta$ and $\norm{x_j^* - x_\ell^*}^2 \leq \epsilon^{-1} \Delta^3 + \Delta$. Therefore, by Triangle Inequality $\norm{x_i^* - x_j^*} \leq 2 \left(\epsilon^{-1} \Delta^3 + \Delta\right)$.

    We now note that we can construct a low diameter solution by first taking the solution induced by the good points above and projecting the remaining points to the $2 \left(\epsilon^{-1} \Delta^3 + \Delta\right)$ radius ball containing the good points. The cost on all pairs that only contain good points is preserved. Furthermore, there are only $\eps/\Delta^2 n^2$ many pairs that contain at least one bad point. Note that the procedure above only contracts distances, so the additional cost incurred on pairs after a contraction is at most $\Delta^2$. Thus, the cost of the solution increased by at most $\eps \leq \eps \E_{i, j\sim [n]} d_{ij}^2$, as desired.
\end{proof}

Furthermore, note that we must have that any optimal or approximately optimal solution must be on average contained within a not too large ball. 

\begin{lemma}[Points are contained in a ball]
\label{lem:magnitude-points-rs}
    Let $\{ d_{i,j} \}_{i,j \in [n]}$ be a set of distances such that $\Delta = \max_{i,j} d_{i,j}$ and $1 =\min_{i,j}d_{i,j}$. Then for any $k \in \mathbb{N}$,  and $0<\epsilon <1$, and any embedding $x_1,\ldots, x_n$ with Euclidean Metric Violation objective value at most $2\E_{i,j \sim [n]} d_{ij}^2$, there exists a fixed point $c$ such that
    \[ \E_{i\sim[n]} \norm{x_i - c}^2 \leq 6\E_{i,j\sim[n]} d_{ij}^2\,.\]
    Furthermore, the point $c = x_a$ for some $a \in [n]$.
\end{lemma}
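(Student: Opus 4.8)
The plan is a one-line averaging argument. Write $e_{ij} = d_{ij} - \norm{x_i - x_j}$, so that $\norm{x_i - x_j} = d_{ij} - e_{ij}$, and apply the elementary inequality $(a-b)^2 \leq 2a^2 + 2b^2$ to get, for every pair $i,j$,
\[
\norm{x_i - x_j}^2 \leq 2 d_{ij}^2 + 2 e_{ij}^2 \, .
\]
Averaging this over $i,j \sim [n]$ and using the hypothesis that the embedding has \textsf{$k$-EMV} objective value $\E_{i,j\sim[n]} e_{ij}^2 = \E_{i,j\sim[n]}(d_{ij} - \norm{x_i - x_j})^2 \leq 2 \E_{i,j\sim[n]} d_{ij}^2$, I would conclude
\[
\E_{i,j\sim[n]} \norm{x_i - x_j}^2 \leq 2 \E_{i,j\sim[n]} d_{ij}^2 + 4 \E_{i,j\sim[n]} d_{ij}^2 = 6 \E_{i,j\sim[n]} d_{ij}^2 \, .
\]

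Next I would observe that $\E_{i,j\sim[n]} \norm{x_i - x_j}^2 = \E_{a\sim[n]} \bigl( \E_{i\sim[n]} \norm{x_i - x_a}^2 \bigr)$, since both sides are just the average of $\norm{x_i - x_a}^2$ over independent uniform indices $i$ and $a$ (the diagonal terms contribute $0$). Hence there must exist some index $a \in [n]$ for which $\E_{i\sim[n]} \norm{x_i - x_a}^2$ is at most this average, i.e.\ at most $6 \E_{i,j\sim[n]} d_{ij}^2$. Taking $c = x_a$ for such an $a$ gives both conclusions of the lemma simultaneously.

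There is no real obstacle here: the only point worth noting is that the lemma demands $c$ be one of the input points $x_a$ rather than, say, the centroid $\bar{x} = \E_{i} x_i$ (which would in fact save a factor of two, since $\E_i \norm{x_i - \bar{x}}^2 = \tfrac12 \E_{i,j\sim[n]} \norm{x_i - x_j}^2$) — the averaging step above is exactly what lets us pick out such an index $a$, and a factor of $6$ rather than $3$ is harmless for the downstream application. The section's normalization $\min_{ij} d_{ij} = 1$, $\max_{ij} d_{ij} = \Delta$ is not used in this argument.
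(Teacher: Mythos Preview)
Your proof is correct and is essentially identical to the paper's: both add and subtract $d_{ij}$ inside $\norm{x_i-x_j}^2$, apply $(a-b)^2\le 2a^2+2b^2$, average over $i,j$, and then pick an $a$ achieving at most the average. The paper phrases the first step as an ``almost triangle inequality'' but it is the same computation.
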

\begin{proof}
    We will show that in expectation, if we pick $c$ as a random $x_a$, the inequality holds, and thus there exists a fixed point $c = x_a$ for some $a$ such that it is true. Specifically, if we sample a uniformly random index $a \in [n]$, then we have by Almost Triangle Inequality that
    \[ \E_{a\sim[n]} \E_{i\sim[n]} \norm{x_i - x_a}^2 = \E_{ai} \left(\norm{x_i-x_a} \pm d_{ia}\right)^2 \leq \E_{ai} 2\left(d_{ia} - \norm{x_i-x_a}\right)^2 + 2d_{ia}^2\,.\]
    The first term is the Euclidean Metric Violation objective value, which is bounded by $2\E_{i,j\sim[n]} d_{ij}^2$. Thus, we have that
    \[\E_{a\sim[n]} \E_{i\sim[n]} \norm{x_i - x_a}^2 \leq 6 \E_{i,j \sim[n]} d_{ij}^2\,.\]
\end{proof}

We can now proceed to the proof of~\cref{lem:discretization-rs}.
\begin{proof}[Proof of~\cref{lem:discretization-rs}]
    Consider an approximately optimal solution to the \textsf{$k$-EMV}  problem with low diameter, which must exist by~\cref{fact:rs-diam-opt}. Note that this solution has low objective value and thus without loss of generality shift it such that the fixed point $c$ satisfying the conditions of~\cref{lem:magnitude-points-rs} is $0$. We then generate a discretized solution by taking each $x_i$ and assigning $i$ to the point $\hat{x}_i$ which is closest in $\Sigma$. We have that
    \begin{equation}
    \label{eqn:main-discretization-triangle-inequality}
    \begin{split}
        \OPT_{\mathrm{DEMV}} &\leq \E_{i,j\sim[n]} (d_{ij} - \norm{\hat{x}_i - \hat{x}_j})^2 \\
        &= \E_{i,j\sim[n]} (d_{ij} - \norm{\hat{x}_i -\hat{x}_j)} \pm \norm{x_i - x_j})^2 \\
        &= \OPT_{\mathrm{EMV}} + \eps \E_{i,j \sim [n]} d_{ij}^2 + 2 \E_{i,j\sim[n]} d_{ij} \\
        &\qquad +\left(\norm{\hat{x}_i - \hat{x}_j} - \norm{x_i - x_j}\right) + \E_{i,j\sim[n]} \left(\norm{\hat{x}_i - \hat{x}_j} - \norm{x_i - x_j}\right)^2 \,.
    \end{split}
    \end{equation}
    Note that by Cauchy-Schwarz, we have that
    \begin{equation}
    \label{eqn:cauchy-schwarz-on-middle-term}
        2 \E_{i,j\sim[n]} d_{ij} \left(\norm{\hat{x}_i - \hat{x}_j} - \norm{x_i - x_j}\right) \leq 2 \sqrt{\E_{i,j\sim[n]} d_{ij}^2} \cdot \sqrt{\E_{i,j\sim[n]} \left(\norm{\hat{x}_i - \hat{x}_j} - \norm{x_i - x_j}\right)^2}  \,,
    \end{equation}
    and thus it suffices to bound $\E_{i,j\sim[n]} \left(\norm{\hat{x}_i - \hat{x}_j} - \norm{x_i - x_j}\right)^2$. Using reverse triangle inequality, we have that
    \begin{equation*}
    \begin{split}
        \E_{i,j\sim[n]} \left(\norm{\hat{x}_i - \hat{x}_j} - \norm{x_i - x_j}\right)^2  & \leq \E_{i,j\sim[n]} \Paren{ \norm{ \hat{x}_i - \hat{x}_j - x_i + x_j } }^2 \\
        & \leq 4 \E_{i\sim[n]} \norm{\hat{x}_i - x_i}^2 \, .
    \end{split}
    \end{equation*}
    
    For any point $x_i$, the amount that each coordinate $\ell$ shifts by is either at most $\epsilon \sqrt{\frac{1}{k} \cdot \E_{i,j \sim [n]} d_{ij}^2}$ (if the value is in the range $\left[-\epsilon \sqrt{\frac{1}{k} \cdot \E_{i,j \sim [n]} d_{ij}^2}, \epsilon \sqrt{\frac{1}{k} \cdot \E_{i,j \sim [n]} d_{ij}^2}\right]$) or $ \epsilon (x_i)_\ell$ (if the value is larger in magnitude than $\epsilon \sqrt{\frac{1}{k} \cdot \E_{i,j \sim [n]} d_{ij}^2}$) since there is always a point in the second case which is at most an $\epsilon$ factor larger in any direction (since the solution has diameter at most $n\Delta$ along any basis vector by~\cref{fact:rs-diam-opt}). Thus, we have that
    \[ \E_{i\sim[n]} \norm{\hat{x}_i - x_i}^2 \leq \E_{i\sim[n]} \left(\epsilon^2 \norm{x_i}^2 + \epsilon^2 \E_{i,j \sim [n]} d_{ij}^2\right)\,.\]
    By~\cref{lem:magnitude-points-rs} we have that the first term is at most $4\epsilon^2 \E_{i,j\sim[n]} d_{ij}^2$ and therefore, $$\E_{i,j\sim[n]} \left(\norm{\hat{x}_i - \hat{x}_j} - \norm{x_i - x_j}\right)^2 \leq 5 \eps^2 \E_{i,j\sim[n]} d_{ij}^2\,.$$
    Substituting this back into \cref{eqn:cauchy-schwarz-on-middle-term} and \cref{eqn:main-discretization-triangle-inequality}, we can conclude that
    \begin{equation*}
        \OPT_{\mathrm{DEMV}} \leq \OPT_{\mathrm{EMV}} + 16 \eps \E_{i,j\sim[n]} d_{ij}^2 \,.  
    \end{equation*}
    Finally, we note that $\vert \Sigma \vert \leq 1 + 2 \log_{1+\eps} (\Delta^{O(1)} \sqrt{k}/\eps^2)\leq \bigO{\epsilon \cdot \log (\Delta k /\epsilon)}$, which completes the proof.
\end{proof}

\paragraph{Constrained LP has Small Value.}

Using~\cref{lem:magnitude-points-rs} it is now easy to prove that there is some $a$ such that the LP in~\cref{algo:raw-stress} has small objective value.

\begin{lemma}
\label{lem:rs-pe-low-value}
    Let $\calD = \{ d_{i,j} \}_{i,j \in [n]}$ be a set of a distance such that $\Delta = \max_{i,j} d_{i,j}$ and $1 = \min_{i,j}d_{i,j}$, $k \in \mathbb{N}$, $k$ be a constant, and $0<\epsilon <1$. Then there exists $a \in [n]$ such that there is an embedding $\hat{x}_1, \ldots, \hat{x}_n$ satisfying
    \begin{enumerate}
        \item $\hat{x}_a = 0$,
        \item $\E_{i\sim[n]} \norm{x_i}^2 \leq 6\E_{i,j\sim[n]} d_{ij}^2$,
        \item $\E_{i,j\sim[n]} (d_{ij} - \norm{\hat{x}_i - \hat{x}_j})^2 \leq OPT_{\mathrm{EMV}} + \epsilon \E_{i,j\sim[n]} d_{ij}^2$.
    \end{enumerate}
\end{lemma}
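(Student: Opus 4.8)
The plan is to produce the required embedding by revisiting the construction already used in the proof of \cref{lem:discretization-rs} and recording the extra structural properties it enjoys. First I would invoke \cref{fact:rs-diam-opt} to fix a near-optimal embedding $x_1^*,\ldots,x_n^*$ of diameter $(\Delta/\eps)^{O(1)}$ and objective value at most $\OPT_{\textrm{EMV}}+\eps\E_{i,j\sim[n]}d_{ij}^2$. Since the all-zeros embedding has objective value exactly $\E_{i,j\sim[n]}d_{ij}^2$, we have $\OPT_{\textrm{EMV}}\le\E_{i,j\sim[n]}d_{ij}^2$, so this embedding has objective value at most $2\E_{i,j\sim[n]}d_{ij}^2$ and \cref{lem:magnitude-points-rs} applies: there is an index $a\in[n]$ with $\E_{i\sim[n]}\norm{x_i^*-x_a^*}^2\le 6\E_{i,j\sim[n]}d_{ij}^2$.

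Next I would translate the embedding so that this anchor sits at the origin, i.e.\ replace $x_i^*$ by $x_i^*-x_a^*$. Translation changes neither the pairwise distances $\norm{x_i^*-x_j^*}$ nor the objective value, and it makes the properties ``$x_a=0$'' and ``$\E_{i\sim[n]}\norm{x_i}^2\le 6\E_{i,j\sim[n]}d_{ij}^2$'' hold simultaneously; if a real-valued embedding is all that is needed, this already completes the proof. To additionally land in $\Sigma^k$ — as one wants for feasibility of the Sherali--Adams relaxation in \cref{algo:raw-stress} — I would apply the coordinate-wise rounding from the proof of \cref{lem:discretization-rs}, assigning each translated $x_i^*$ to its nearest point $\hat x_i$ in $\Sigma^k$. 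The objective bound is then exactly the conclusion of \cref{lem:discretization-rs} after rescaling $\eps$ by an absolute constant, and since $0\in\Sigma$ and $0$ is its own nearest point, the anchor is preserved, so $\hat x_a=0$.

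The one point that needs care is that the norm bound survives the rounding with the stated constant $6$. The proof of \cref{lem:discretization-rs} shows $\E_{i\sim[n]}\norm{\hat x_i-x_i^*}^2 = O(\eps^2)\cdot\E_{i,j\sim[n]}d_{ij}^2$, so an AM-GM/triangle-inequality step gives only $\E_{i\sim[n]}\norm{\hat x_i}^2\le (1+O(\eps))\cdot 6\E_{i,j\sim[n]}d_{ij}^2$. To absorb the $(1+O(\eps))$ slack I would re-run the short averaging computation inside \cref{lem:magnitude-points-rs} starting from the sharper bound $\OPT_{\textrm{EMV}}\le(1+\eps)\E_{i,j\sim[n]}d_{ij}^2$ (rather than $2\E_{i,j\sim[n]}d_{ij}^2$), which yields $\E_{i\sim[n]}\norm{x_i^*-x_a^*}^2\le(4+O(\eps))\E_{i,j\sim[n]}d_{ij}^2$ before rounding, hence at most $6\E_{i,j\sim[n]}d_{ij}^2$ afterwards for $\eps$ below an absolute constant. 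I expect this constant-chasing to be the only mild obstacle; the rest is an immediate composition of \cref{fact:rs-diam-opt}, \cref{lem:magnitude-points-rs}, and \cref{lem:discretization-rs}.
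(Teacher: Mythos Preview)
Your proposal is correct and follows essentially the same approach as the paper: invoke \cref{lem:magnitude-points-rs} to find the anchor $a$, translate so $x_a=0$, then apply the discretization from \cref{lem:discretization-rs}, noting that $0\in\Sigma$ maps to itself. The paper's proof is terser and does not explicitly address whether the norm bound in condition~(2) survives discretization with the stated constant~$6$; your constant-chasing paragraph (re-running the averaging in \cref{lem:magnitude-points-rs} with the sharper $\OPT_{\mathrm{EMV}}\le(1+\eps)\E_{i,j}d_{ij}^2$ to get slack before rounding) is a correct way to close that gap and is in fact more careful than the paper's own argument.
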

\begin{proof}
    The proof follows almost directly from~\cref{lem:magnitude-points-rs}, which shows that in the continuous variant of the problem that there exists $a \in [n]$ satisfying the first two conditions (after shifting the embedding such that $x_a = 0$). Furthermore, we note that the process described in the proof of~\cref{lem:discretization-rs}, which transforms an optimal solution to the continuous Euclidean Metric violation problem into a solution over our discrete alphabet, maps points at $0$ to $0$. Thus, by applying this transformation, we have that the third condition is satisfied. 
\end{proof}
Note that this lemma implies that one of our Sherali-Adams LPs will both be feasible and have objective value comparable (up to additive loss) to $\OPT_{\mathrm{EMV}}$.

\paragraph{Analyzing global correlation rounding with potential functions.}

We now show that after $\bigO{k/\epsilon^2}$ many further conditionings, the additive error of independent rounding is small. Note that we can rewrite our objective value as 
\[ \E_{i,j\sim[n]} (d_{ij} - \norm{x_i - x_j})^2  = \E_{i,j\sim[n]} \left[d_{ij}^2 - 2 d_{ij} \norm{x_i - x_j} + \norm{x_i}^2 + \norm{x_j}^2 - 2 \langle x_i, x_j\rangle \right]\,.\]
The contribution to the objective value of the terms $d_{ij}^2, \norm{x_i}^2, \norm{x_j}^2$ is preserved when independently rounding $x_i$, since they depend on the distribution of at most one embedding point. Thus, it remains to argue that after conditioning independent rounding approximately preserves the objective value of $2 d_{ij} \norm{x_i - x_j}$ and $-2\langle x_i, x_j\rangle$.

We have that all the remaining parts of our objective function are potential aligned for appropriately chosen potential functions and thus we will apply~\cref{lem:gcr_linear_combo}. Note that $\Set{ d_{ij} \norm{x_i - x_j}}_{i,j \in [m]}$ is a family of Lipschitz functions with Lipschitz constants $\Set{d_{ij}}_{i,j \in [n]}$ and thus we know by~\cref{lem:lipschitz-potential-aligned} that this portion of the objective function is potential aligned. Furthermore, we have by~\cref{fact:deg-2-potential-aligned} that under bounded variance conditions (which our psuedodistribution satisfies) the remainder of the objective function is also potential aligned. The main remaining piece is to show upper bounds on the initial potential.

\begin{lemma}[Average Entropy Bound]
\label{lem:rs-entropy-bound}
Let $\mu$ be a pseudo-distribution over $\Sigma^k$ which satisfies 
    \[ \E_i \pE_{\mu} \norm{x_i}^2 \leq \E_{i,j \sim [n]} d_{ij}^2\,.\]
    Let $\widetilde{x_i}$ is $x_i(T) E_i + (\pE_\mu x_i) \overline{E}_i$ where the event $E_i$ is $E_i = (\norm{x_i - \E_\mu x_i} \leq C \cdot \sqrt{\pE_\mu \norm{x_i - \pE_\mu x_i}^2})$. Then we have that
    \[ \E_i H(\widetilde{x}_i) \leq O\left(k \log (C k/\epsilon)\right)\,.\]
\end{lemma}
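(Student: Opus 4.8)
The plan is to bound $H(\widetilde{x}_i)$ by $\log$ of the size of the support of $\widetilde{x}_i$ and to exploit that this support lies in a ball of radius $C\sigma_i$ about $\pE_\mu x_i$, where $\sigma_i^2 := \pE_\mu\norm{x_i - \pE_\mu x_i}^2$; since the nonzero points of $\Sigma$ form a geometric progression, only a controlled number of them lie in any coordinate interval of length $2C\sigma_i$, which keeps the support small. First I would note that on the event $E_i$ we have $\widetilde{x}_i = x_i \in \Sigma^k$ with $\norm{x_i - \pE_\mu x_i} \le C\sigma_i$ by definition of $E_i$, and on $\overline{E}_i$ we have $\widetilde{x}_i = \pE_\mu x_i$, a single point, so setting $\mathbf{B}_i := \{y \in \R^k : \norm{y - \pE_\mu x_i} \le C\sigma_i\}$ we get $H(\widetilde{x}_i) \le \log(|\Sigma^k \cap \mathbf{B}_i| + 1)$. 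Enclosing $\mathbf{B}_i$ in the axis-aligned cube of side $2C\sigma_i$ centered at $\pE_\mu x_i$ gives $|\Sigma^k \cap \mathbf{B}_i| \le \prod_{\ell \in [k]} |\Sigma \cap [(\pE_\mu x_i)_\ell - C\sigma_i,\ (\pE_\mu x_i)_\ell + C\sigma_i]|$.

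The heart of the argument is to count the points of $\Sigma$ in an arbitrary interval of length $2C\sigma_i$. The nonzero points of $\Sigma$ lie on a geometric grid of ratio $1+\epsilon$ with smallest term $\Theta(\epsilon/\sqrt{k})$, so the worst case is an interval abutting the accumulation point of the grid near $0$: then the largest and smallest nonzero grid points it contains differ multiplicatively by at most $O(C\sigma_i\sqrt{k}/\epsilon)$, whence the interval holds at most $1 + \log_{1+\epsilon}(O(C\sigma_i\sqrt{k}/\epsilon)) = O(\epsilon^{-1}\log(2 + C\sigma_i\sqrt{k}/\epsilon))$ points of $\Sigma$ (using $\ln(1+\epsilon) \ge \epsilon/2$ for $\epsilon \in (0,1]$), together with possibly the point $0$. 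Substituting into the product bound and taking logs, $H(\widetilde{x}_i) \le k\log(O(\epsilon^{-1}\log(2 + C\sigma_i\sqrt{k}/\epsilon))) + O(1) = O(k\log(1/\epsilon) + k\log\log(2 + C\sigma_i\sqrt{k}/\epsilon))$; it is crucial that we pay only the \emph{logarithm} of the grid-point count, so that the $\epsilon^{-1}$ there becomes a harmless $\log(1/\epsilon)$.

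Finally I would average over $i \in [n]$. The $k\log(1/\epsilon)$ term survives unchanged. Since $t \mapsto \log\log(2+t)$ is concave on $[0,\infty)$, Jensen's inequality gives $\E_i\log\log(2 + C\sigma_i\sqrt{k}/\epsilon) \le \log\log(2 + (C\sqrt{k}/\epsilon)\E_i\sigma_i) \le \log\log(2 + (C\sqrt{k}/\epsilon)\sqrt{\E_i\sigma_i^2})$, the last inequality by Cauchy--Schwarz. Then $\sigma_i^2 = \pE_\mu\norm{x_i - \pE_\mu x_i}^2 = \pE_\mu\norm{x_i}^2 - \norm{\pE_\mu x_i}^2 \le \pE_\mu\norm{x_i}^2$, so the hypothesis yields $\E_i\sigma_i^2 \le \E_i\pE_\mu\norm{x_i}^2 \le \E_{i,j\sim[n]}d_{ij}^2$; plugging this into the double logarithm gives $\E_i H(\widetilde{x}_i) = O(k\log(Ck/\epsilon))$ (the double logarithm of $\E_{ij}d_{ij}^2$ contributing only a lower-order term that we absorb), as claimed.

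The step I expect to be the main obstacle is the grid-point count in the second paragraph: one must isolate the worst-case interval straddling the accumulation point of $\Sigma$ near the origin — this is exactly where the factor $\epsilon^{-1}$ inside the logarithm is incurred — and make sure that the ball radius $C\sigma_i$, which is controlled only on average and only through $\E_{ij}d_{ij}^2$, enters the final bound solely inside a double logarithm after averaging, so that no $\mathrm{poly}(1/\epsilon)$ or $\mathrm{poly}(\Delta)$ factor leaks through.
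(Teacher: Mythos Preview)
Your approach is correct and shares the paper's skeleton: bound $H(\widetilde x_i)$ by the logarithm of its support size, bound that support by the grid points of $\Sigma^k$ lying in the radius-$C\sigma_i$ ball about $\pE_\mu x_i$, and finish by averaging with Jensen and the hypothesis $\E_i\sigma_i^2 \le \E_{ij}d_{ij}^2$. The paper's execution differs only in how it counts grid points: instead of your careful geometric-progression count, it simply uses the minimum coordinate spacing $\delta$ of $\Sigma$ to bound the number of grid points in the cube of side $2C\sigma_i$ by $(2C\sigma_i/\delta)^k$, giving $H(\widetilde x_i)\le k\log(2C\sigma_i/\delta)$ directly; Jensen on a single logarithm (no double log) then finishes. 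Your count is sharper in principle but yields the same final bound after the outer $\log$.

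One point to tighten: your last step claims the double logarithm of $\E_{ij}d_{ij}^2$ is a ``lower-order term that we absorb'' into $O(k\log(Ck/\epsilon))$, but $\log\log\sqrt{\E_{ij}d_{ij}^2}$ can be as large as $\Theta(\log B)$, which is not bounded by $\log(Ck/\epsilon)$ in general. The paper sidesteps this because its minimum spacing $\delta$ already carries a factor $\sqrt{\E_{ij}d_{ij}^2}$ (this is precisely how the discretization was chosen), so the ratio $\sigma_i/\delta$ is, after averaging, $O(C\sqrt{k/\epsilon})$ with no residual dependence on the $d_{ij}$'s. In your argument the corresponding fix is that the smallest nonzero element of $\Sigma$ should be taken as $\Theta((\epsilon/\sqrt{k})\sqrt{\E_{ij}d_{ij}^2})$ rather than $\Theta(\epsilon/\sqrt{k})$; with that scale inserted into your interval count, the $\sqrt{\E_{ij}d_{ij}^2}$ cancels against $\E_i\sigma_i$ and your double-log becomes $\log\log(2+C\sqrt{k}/\epsilon)$, which is genuinely $O(\log(Ck/\epsilon))$.
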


\begin{proof}
    Note that $H(\widetilde{x}_i)$ is at most $\log \gamma_i$, where $\gamma_i$ is the support size of $\widetilde{X}_i$. By definition, we have that $\widetilde{x}_i$ has support contained in a $k$-dimensional cube with side length $2C \sqrt{\E_\mu \norm{x_i - \E_\mu x_i}^2}$ and all points in $\Sigma^k$ differ by at least $\sqrt{\epsilon/k \cdot \E_{i,j \sim [n]} d_{ij}^2}$ in each index, and thus
    \[ H(\widetilde{x}_i) \leq \log \gamma_i \leq \log \left( \left( \frac{2C \sqrt{\E_\mu \norm{x_i - \E_\mu x_i}^2}}{\sqrt{\epsilon/k \cdot \E_{i,j \sim [n]} d_{ij}^2}} \right)^k \right) \leq k \log \left( \frac{2C \sqrt{E \norm{x_i - \E x_i}^2}}{\sqrt{\epsilon/k \cdot \E_{i,j \sim [n]} d_{ij}^2}} \right)\,. \]
    By Jensens, we have that
    \begin{align*}
        \E_i H(\widetilde{x}_i) &\leq \E_i k \log \left( \frac{2C \sqrt{E_\mu \norm{x_i - \E_\mu x_i}^2}}{\sqrt{\epsilon/k \cdot \E_{i,j \sim [n]} d_{ij}^2}} \right) \leq k \log \left( \frac{2C \sqrt{\E_{i\sim[n]} \E_\mu \norm{x_i - \E_\mu x_i}^2}}{\sqrt{\epsilon/k \cdot \E_{i,j \sim [n]} d_{ij}^2}} \right)\,.
    \end{align*}
    Finally, by our bound on the average squared norm of $x_i$, we conclude that
    \[ \E_i H(\widetilde{x}_i) \leq k \log \left( \frac{2C \sqrt{\E_{i\sim[n]} \E_\mu \norm{x_i - \E_\mu x_i}^2}}{\sqrt{\epsilon/k \cdot \E_{i,j \sim [n]} d_{ij}^2}} \right) \leq O\left(k \log (C k/\epsilon)\right)\,.\]
\end{proof}

Now we can prove the following lemma, which is an easy application of~\cref{lem:gcr_linear_combo}.

\begin{lemma}[Rounding \textsf{$k$-EMV} ]
\label{lem:rs-gcr-error}
    Let $\epsilon \geq 0$ and let $\mu$ be an initial pseudo-distribution which satisfies
    \[ \E_{i \sim [n]} \pE_\mu \norm{x_i}^2 \leq 6 \E_{i,j \sim [n]} d_{ij}^2\,.\]
    Then there exists a set $\calT$ of size at most $O\left(k \log (k/\epsilon) /\epsilon^4 \right)$ such that after conditioning on the values of $x_i \in \calT$,
    \[ \left\vert \E_{\{\mu_\calT\}} \pE_{\mu_{\calT}} \E_{i,j \sim [n]} (d_{ij} - \norm{x_i - x_j})^2 - \pE_{\mu^{\otimes}_{\calT}} \E_{i,j \sim [n]} (d_{ij} - \norm{x_i - x_j})^2 \right\vert \leq \epsilon \E_{i,j\sim[n]} d_{ij}^2 \,.\]
\end{lemma}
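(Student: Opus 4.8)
The plan is to expand the squared objective, isolate the terms on which independent (product) rounding is exact, and bound the rounding error on the two remaining terms by showing each is potential aligned and then invoking \cref{lem:gcr_linear_combo}. Expand
\begin{align*}
  \E_{i,j\sim[n]}\Paren{d_{ij}-\norm{x_i-x_j}}^2
  &= \underbrace{\E_{i,j\sim[n]} d_{ij}^2 \;+\; 2\,\E_{i\sim[n]}\norm{x_i}^2}_{\text{zero rounding error}}\\
  &\qquad\underbrace{-\,2\,\E_{i,j\sim[n]} d_{ij}\norm{x_i-x_j}}_{=:\,f^{(1)}}\ \underbrace{-\,2\,\E_{i,j\sim[n]}\langle x_i,x_j\rangle}_{=:\,f^{(2)}}\,.
\end{align*}
Since $d_{ij}^2$, $\norm{x_i}^2$, and $\norm{x_j}^2$ each depend on at most one of $x_i,x_j$, the quantities $\pE_{\mu_\calT}$ and $\pE_{\mu^{\otimes}_\calT}$ agree on them exactly for every $\calT$ (the one-variable marginals of $\mu_\calT$ and $\mu^{\otimes}_\calT$ coincide by the definition of $\mu^{\otimes}$), so it suffices to find a small $\calT$ controlling the rounding error of $f^{(1)}+f^{(2)}$. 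Throughout I use the variance bound $\E_i\pE_\mu\norm{x_i-\pE_\mu x_i}^2\le\E_i\pE_\mu\norm{x_i}^2\le 6\,\E_{ij}d_{ij}^2$, the first inequality because $\pE_\mu\norm{x_i-\pE_\mu x_i}^2 = \pE_\mu\norm{x_i}^2-\norm{\pE_\mu x_i}^2$ and the second being the lemma hypothesis.

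For $f^{(1)}$: the map $(x_i,x_j)\mapsto -2d_{ij}\norm{x_i-x_j}$ is $O(d_{ij})$-Lipschitz, so $f^{(1)} = \E_{ij}f^{(1)}_{ij}$ with $\{f^{(1)}_{ij}\}$ Lipschitz and $\E_{ij}L_{ij}^2 = O(\E_{ij}d_{ij}^2)$. Plugging the variance bound into \cref{lem:lipschitz-potential-aligned} with truncation constant $C=\epsilon_1^{-1/4}$ shows $f^{(1)}$ is $\Paren{\,O(\epsilon_1^{1/4})\cdot\E_{ij}d_{ij}^2\,,\ \epsilon_1\,}$-potential aligned for $\Phi_1 := \E_i\textsf{H}(\widetilde x_i)$; pick $\epsilon_1=\Theta(\epsilon^4)$ small enough that the first parameter is $\le\tfrac12\epsilon\cdot\E_{ij}d_{ij}^2$, so $C=\Theta(\epsilon^{-1})$ and \cref{lem:rs-entropy-bound} (applied with the harmless factor-$6$ slack in its variance hypothesis) gives $\Phi_1(\mu)=O(k\log(k/\epsilon))$, hence $\Phi_1(\mu)/\epsilon_1 = O(k\log(k/\epsilon)/\epsilon^4)$. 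For $f^{(2)} = \E_{ij}A_{ij}\langle x_i,x_j\rangle$ with $A_{ij}\equiv -2$, we have $\E_{ij}A_{ij}^2=4$ and $\E_i\tr(\widetilde\Sigma_i) = \E_i\pE_\mu\norm{x_i-\pE_\mu x_i}^2\le 6\,\E_{ij}d_{ij}^2 =: \eta$, so \cref{fact:deg-2-potential-aligned} gives that $f^{(2)}$ is $\Paren{\,4\epsilon_2\,,\ 4\epsilon_2^2/(\eta k)\,}$-potential aligned for $\Phi_2 := \E_i\tr(\widetilde\Sigma_i)$, for any $\epsilon_2>0$. The key is to take $\epsilon_2 = \tfrac18\,\epsilon\cdot\E_{ij}d_{ij}^2$ (nothing requires $\epsilon_2\le1$): then the first parameter is $\le\tfrac12\epsilon\cdot\E_{ij}d_{ij}^2$, and $\Phi_2(\mu)\big/\bigl(4\epsilon_2^2/(\eta k)\bigr) \le \eta^2 k/(4\epsilon_2^2) = O(k/\epsilon^2)$, since the two factors of $\E_{ij}d_{ij}^2$ cancel.

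Now both $f^{(1)},f^{(2)}$ are $\Paren{\tfrac12\epsilon\cdot\E_{ij}d_{ij}^2,\ \delta_j}$-potential aligned (with $\delta_1 = \epsilon_1$, $\delta_2 = 4\epsilon_2^2/(\eta k)$), so \cref{lem:gcr_linear_combo} applied with $\ell = 2$ and target accuracy $\epsilon\cdot\E_{ij}d_{ij}^2$ yields $\calT$ of size $\le\Phi_1(\mu)/\delta_1 + \Phi_2(\mu)/\delta_2 = O(k\log(k/\epsilon)/\epsilon^4)$ on which the rounding error of $f^{(1)}+f^{(2)}$, hence of the whole objective, is at most $\epsilon\cdot\E_{ij}d_{ij}^2$. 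The main delicacy---the substantive work being already packaged into \cref{lem:lipschitz-potential-aligned} and \cref{fact:deg-2-potential-aligned}---is the joint calibration of $\epsilon_1$ and $\epsilon_2$: both alignment errors must drop below $\tfrac12\epsilon\cdot\E_{ij}d_{ij}^2$, and simultaneously $\sum_j\Phi_j(\mu)/\delta_j$ must be free of any dependence on the aspect ratio $\Delta$ (equivalently $B$); this is exactly why $\epsilon_2$ must scale with $\E_{ij}d_{ij}^2$, and why it is essential that the anchoring step provides $\E_i\tr(\widetilde\Sigma_i) = O(\E_{ij}d_{ij}^2)$ rather than merely $\poly(\Delta)$. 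One also checks the routine points that $\Phi_1,\Phi_2$ are pseudo-distribution potentials in the sense of \cref{def:pseudo-distribution-potential} (entropy and pseudo-variance-trace are nonnegative and nonincreasing under conditioning, even though $\widetilde x_i$ is built from the fixed statistics of $\mu$), and that $\mu$ is taken with degree at least $|\calT|+O(1)$, as in \cref{algo:raw-stress}, so all the conditionings are well-defined.
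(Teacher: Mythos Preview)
Your proposal is correct and follows essentially the same approach as the paper: expand the square, note the single-variable terms incur zero rounding error, show the Lipschitz term $-2d_{ij}\|x_i-x_j\|$ is potential aligned for the truncated-entropy potential via \cref{lem:lipschitz-potential-aligned}, show the bilinear term $-2\langle x_i,x_j\rangle$ is potential aligned for the trace-of-pseudocovariance potential via \cref{fact:deg-2-potential-aligned}, and finish with \cref{lem:gcr_linear_combo}. Your explicit calibration of $\epsilon_1=\Theta(\epsilon^4)$ and $\epsilon_2=\Theta(\epsilon\,\E_{ij}d_{ij}^2)$ is in fact a bit more careful than the paper's own write-up (which states the quadratic-term second parameter as $\epsilon^2/(k\,\E_{ij}d_{ij}^2)$ rather than the correct $\Theta(\epsilon^2\,\E_{ij}d_{ij}^2/k)$, a typo that your scaling of $\epsilon_2$ avoids).
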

\begin{proof}
    Note that we can write
    \[ \E_{i,j\sim[n]} (d_{ij} - \norm{x_i - x_j})^2  = \E_{i,j\sim[n]} \left[d_{ij}^2 - 2 d_{ij} \norm{x_i - x_j} + \norm{x_i}^2 + \norm{x_j}^2 - 2 \langle x_i, x_j\rangle \right]\,.\]
    Note that out of these terms the only two which are different in expectation under $\mu_\calT$ versus $\mu_\calT^\otimes$ are $2 d_{ij} \norm{x_i - x_j}$ and $2 \langle x_i, x_j\rangle$ and they are both potential aligned. 
    
    In particular, since the first is an average of Lipschitz functions with Lipschitz constants $d_{ij}$, we have that $2 d_{ij} \norm{x_i - x_j}$ is $(\epsilon \cdot \left( \E_{i \sim [n]} \pE_{\mu} \norm{x_i - \pE_\mu x_i}^2 \right)^{1/2} \cdot \left( \E_{i, j\sim [n]} d_{ij}^2\right)^{1/2}, \epsilon^4)$ potential aligned function for $\mu$ and $\E_{i \sim [n]} H(x_i)$ by~\cref{lem:lipschitz-potential-aligned}. Since we have that $\E_{i \sim [n]} \pE_\mu \norm{x_i}^2 \leq 4 \E_{i,j \sim [n]} d_{ij}^2$ we can further say that it is also $(4\epsilon \cdot \left( \E_{i, j\sim [n]} d_{ij}^2\right), \epsilon^4)$ potential aligned, since 
    \[\E_{i \sim [n]} \pE_{\mu} \norm{x_i - \pE_\mu x_i}^2 \leq \E_{i \sim [n]} \pE_\mu \norm{x_i}^2\,.\]
    Furthermore, we have that $\E_{i,j \sim [n]} 2 \langle x_i, x_j\rangle$ is also $(\epsilon \E_{i,j \sim [n]} d_{ij}^2, \epsilon^2 / (k \cdot \E_{i,j \sim [n]} d_{ij}^2) )$ potential aligned using the fact that $\E_{i \sim [n]} \tr(\widetilde{\Sigma}_i) \leq \E_{i \sim [n]} \pE_\mu \norm{x_i}^2 \leq \E_{i,j \sim [n]} d_{ij}^2$ via our pseudo-distribution constraint in combination with~\cref{fact:deg-2-potential-aligned}.
    Thus, applying~\cref{lem:gcr_linear_combo} we have that 
    \[ \left\vert \E_{\{\mu_\calT\}} \pE_{\mu_{\calT}} \E_{i,j \sim [n]} (d_{ij} - \norm{x_i - x_j})^2 - \pE_{\mu^{\otimes}_{\calT}} \E_{i,j \sim [n]} (d_{ij} - \norm{x_i - x_j})^2 \right\vert \leq \epsilon \E_{i,j\sim[n]} d_{ij}^2 \,,\]
    holds for a set $\calT$ of size 
    \[ \vert \calT \vert \leq O\left(\frac{\E_{i} H(\tilde{x_i})}{\epsilon^4} + \frac{\E_i \tr(\widetilde{\Sigma}_i)}{\epsilon^2/k \cdot \E_{i, j \sim [n]} d_{ij}^2}\right)\,. \]
    Noting that $\E_{i} H(\tilde{x_i}) \leq O(k \log(k/\epsilon)$ by~\cref{lem:rs-entropy-bound} and $\E_{i \sim [n]} \tr(\widetilde{\Sigma}_i) \leq \E_{i \sim [n]} \pE_\mu \norm{x_i}^2 \leq \E_{i,j \sim [n]} d_{ij}^2$ by our psuedodistribution constraint yields that
    \[\vert \calT \vert \leq O\left(k \log (k/\epsilon) /\epsilon^4 \right)\,.\]
\end{proof}

Finally, we can now put the pieces together.
\begin{proof}[Proof of $ (nB)^{\mathcal{O}( k^2\log(k/\eps)/\eps^4)}$ time version of~\cref{thm:raw-stress-main}]
    Note that we can take the best solution produced by rounding all of the pseudo-distributions, so it suffices to show that with high probability, a "good" round of our SDP choice and rounding will produce a solution that is additively close to an optimal solution. 

    Note that by~\cref{lem:rs-pe-low-value} there exists $a$ such that the pseudo-distribution is both feasible and there exists a solution in the support with value at most $\OPT_{\mathrm{EMV}} + \epsilon \E_{i,j\sim[n]} d_{ij}^2$. We will argue that rounding this solution produces an embedding with value at most $\OPT_{\mathrm{EMV}} + \bigO{ \epsilon} \cdot \E_{i,j\sim[n]} d_{ij}^2$ with probability $0.99$. 

    By~\cref{lem:rs-gcr-error} in combination with the fact that conditioning preserves the objective value in expectation, we have that the expected value of the solution sampled in the independent rounding step is $\OPT_{\mathrm{EMV}} + c \cdot \epsilon \cdot \E_{i,j\sim[n]} d_{ij}^2$ for some constant $c$. However, since any possible solution sampled has value at least $\OPT_{\mathrm{EMV}}$, we must have the probability that we sample a solution which has value greater than $\OPT_{\mathrm{EMV}} + 100 c \cdot \epsilon \cdot \E_{i,j\sim[n]} d_{ij}^2$ is at most $0.01$. Finally, letting $\epsilon' = \epsilon / (100 c)$ gives the final claim.

    Finally, note that the total degree required for this process is $k \cdot \vert \calT \vert$ since for each step of conditioning we must fix the value of $k$ variables in the program, one for each coordinate of $x_i$. Thus, the total degree needed is at most $k^2 \log (k/\epsilon) /\epsilon^4$, and thus the degree of our computed pseudo-distributions is sufficiently high. The running time is dominated by the cost at each iteration of finding an optimizing pseudodistribution and thus is bounded by $(n \log \Delta)^{\mathcal{O}( k\log(k/\eps)/\eps^4)}$.
\end{proof}

\subsection{Weighted \textsf{$k$-EMV}: Algorithm and Analysis}

In this section, we extend the algorithm for \textsf{$k$-EMV} to weighted instances, and in particular, every pair of distances need not appear in the objective. 

\begin{theorem}[Weighted \textsf{$k$-EMV}]
\label{thm:regular-raw-stress}
Let $\Set{d_{ij}}_{i,j \in [n]}$ be distances with aspect ratio $\Delta$,  and let $\Set{w_{i,j}}_{i,j \in [n]}$ be weights $0\leq w_{ij} \leq 1$ such that corresponding weighted graph is $(\delta \cdot n)$-regular.
Given an integer $k \in \mathbb{N}$ and $0<\eps<1$, there exists an algorithm that runs in $\Paren{n \Delta}^{ \mathcal{O}(k^2\log(\Delta \cdot k/(\eps \delta)) / (\eps^2 \delta^{11}) ) }$ time and with probability at least $0.99$ outputs an embedding $\Set{ \hat{x}_i }_{i \in [n]}$ such that 
\begin{equation*}
    \expecf{i,j \sim W}{ \Paren{ d_{ij }  - \norm{ \hat{x}_i - \hat{x}_j} }^2  } \leq \OPT_{\textrm{WEMV}} +  \eps \cdot \expecf{i,j\sim W}{ d_{ij}^2 } \,.
\end{equation*}
\end{theorem}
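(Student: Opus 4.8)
The plan is to follow the three-phase structure of the proof of \cref{thm:raw-stress-main} --- discretize, anchor by conditioning, then round via global correlation --- modified in the two places where the weights cause trouble: the discretization can no longer use the geometric grid of \cref{lem:discretization-rs}, and the single-point anchoring of \cref{lem:magnitude-points-rs} must be replaced by an expander decomposition. The constant translation device throughout is $\delta$-regularity: $\E_{ij\sim W} g_{ij} = \E_{ij\sim[n]} \tfrac{w_{ij}}{\delta} g_{ij}$, which turns every weighted average into a uniform one with a bounded reweighting, and in particular gives $\E_{ij\sim[n]}(w_{ij}/\delta)^2 \le \tfrac1\delta$ since $w_{ij}\in[0,1]$. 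For the discretization, a weighted analogue of \cref{fact:rs-diam-opt} (the same Markov-plus-contraction argument, carried out within each connected component of $W$, using that a component is $\delta n$-regular) shows there is a near-optimal embedding in which each component has diameter $(n\Delta/\eps)^{O(1)}$; since $\OPT_{\textrm{WEMV}}(w,d)\le \E_{ij\sim W}d_{ij}^2 \le \Delta^2$ this costs only $\eps\,\E_{ij\sim W}d_{ij}^2$. I would then take $\Sigma\subseteq\R^k$ to be a \emph{uniform} grid of spacing $\Theta\big(\eps\sqrt{\E_{ij\sim W}d_{ij}^2/k}\big)$ and radius $(n\Delta/\eps)^{O(1)}$, so $|\Sigma| = \bigO{1}\cdot\mathrm{poly}(n\Delta k/\eps)$; rounding each coordinate of the low-diameter solution to the nearest grid point changes the objective by $\bigO{\eps}\,\E_{ij\sim W}d_{ij}^2$ exactly as in \cref{lem:discretization-rs}. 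Finally, for every choice of $r=\delta^{-O(1)}$ anchor indices $a_1,\dots,a_r\in[n]$ and every conditioning set $\calT\subseteq[n]$ of the size stated in the theorem, I would use \cref{fact:eff-pseudo-distribution} to compute a degree-$O(k(r+|\calT|))$ Sum-of-Squares pseudo-distribution $\mu$ over $(\Sigma^k)^n$ optimizing $\pE_\mu\,\E_{ij\sim W}(d_{ij}-\|x_i-x_j\|)^2$ subject to an explicit boundedness constraint $\E_i\|x_i\|^2\le\mathrm{poly}(n\Delta/\eps)$, all within the claimed $(n\Delta)^{\mathcal{O}(k^2\log(\Delta k/(\eps\delta))/(\eps^2\delta^{11}))}$ time. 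It is essential here that we use an SoS SDP rather than a Sherali--Adams LP.

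The heart of the argument, replacing \cref{lem:magnitude-points-rs}, is the anchoring step (this is \cref{cor:magnitude-points-wrs}). By \cite{Gharan2014}, a $\delta n$-regular weighted graph has threshold rank $\delta^{-O(1)}$, so $[n]$ partitions into $r=\delta^{-O(1)}$ node-induced pieces $V_1,\dots,V_r$, each a spectral expander. Conditioning on one uniformly random anchor $x_{a_\ell}$ per piece, I would argue piece by piece: on an expander, the uniform Laplacian quadratic form is dominated by a $\delta^{-O(1)}$ multiple of the $W$-weighted one; this is a degree-$2$ polynomial inequality, so by \cref{fact:nonnegative-quadratic} it has an SoS proof of constant degree and therefore holds against $\pE_\mu$. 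Combining this with the fact that conditioning on $x_{a_\ell}$ pins the within-piece variance $\E_{i\in V_\ell}\Var_{\mu}(x_i)$ to within a constant factor of $\E_{i,a\sim W|_{V_\ell}}\pE_\mu\|x_i-x_a\|^2$, and that the latter is controlled by the objective exactly as in \cref{eqn:avg-variance-intro}, summing over the $r$ pieces gives a conditioned pseudo-distribution $\mu'$ with $\E_{i\sim[n]}\Var_{\mu'}(x_i)\le\delta^{-O(1)}\E_{ij\sim W}d_{ij}^2$. Choosing $a_1,\dots,a_r$ to realize this bound (there are only $n^{\delta^{-O(1)}}$ choices, absorbed into the running time) gives the weighted version of \cref{lem:rs-pe-low-value}: some pseudo-distribution in our enumeration is feasible, variance-bounded, and has objective $\le\OPT_{\textrm{WEMV}}(w,d)+\eps\,\E_{ij\sim W}d_{ij}^2$.

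Next I would run global correlation rounding as in \cref{lem:rs-gcr-error}. Expanding $\E_{ij\sim W}(d_{ij}-\|x_i-x_j\|)^2 = \E_{ij\sim[n]}\tfrac{w_{ij}}{\delta}\big[d_{ij}^2 - 2d_{ij}\|x_i-x_j\| + \|x_i\|^2 + \|x_j\|^2 - 2\langle x_i,x_j\rangle\big]$, the $d_{ij}^2$, $\|x_i\|^2$, $\|x_j\|^2$ terms are untouched by independent rounding. The term $\E_{ij\sim[n]}\tfrac{w_{ij}d_{ij}}{\delta}\|x_i-x_j\|$ is a uniform average of $(w_{ij}d_{ij}/\delta)$-Lipschitz functions with $\E_{ij\sim[n]}(w_{ij}d_{ij}/\delta)^2\le\tfrac1\delta\E_{ij\sim W}d_{ij}^2$, so \cref{lem:lipschitz-potential-aligned} applies with the truncated average-entropy potential; the term $\E_{ij\sim[n]}\tfrac{2w_{ij}}{\delta}\langle x_i,x_j\rangle$ is a degree-$2$ form with $\E_{ij}A_{ij}^2\le 4/\delta$ and, by the anchoring bound together with the explicit boundedness constraint, $\E_i\tr(\widetilde\Sigma_i)\le\delta^{-O(1)}\E_{ij\sim W}d_{ij}^2$, so \cref{fact:deg-2-potential-aligned} applies. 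Feeding these two potential-aligned functions into \cref{lem:gcr_linear_combo} with the alignment parameters tuned so each rounding error is $\le\eps\,\E_{ij\sim W}d_{ij}^2$ --- which forces a $(\eps\delta^{O(1)})^{4}$-type rescaling inside the Lipschitz term, the source of the $\delta^{-11}$ --- and bounding the initial potentials by $\E_i \textsf{H}(\widetilde x_i)\le \bigO{k\log(\Delta k/(\eps\delta))}$ (the weighted analogue of \cref{lem:rs-entropy-bound}, where the $\log\Delta$ enters through the size of the uniform grid) and $\E_i\tr(\widetilde\Sigma_i)\le\delta^{-O(1)}\E_{ij\sim W}d_{ij}^2$, yields a conditioning set $\calT$ of the size claimed in the theorem. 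Since conditioning preserves the objective in expectation, taking the best of the rounded embeddings over all anchor choices and all $\calT$ and applying Markov's inequality (as at the end of the proof of \cref{thm:raw-stress-main}) gives an embedding within $\bigO{\eps}\,\E_{ij\sim W}d_{ij}^2$ of $\OPT_{\textrm{WEMV}}(w,d)$ with probability $\ge 0.99$; rescaling $\eps$ finishes.

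I expect the anchoring step to be the main obstacle. It is the only place where the $\delta$-regularity hypothesis is genuinely used --- both to obtain a decomposition into only $\delta^{-O(1)}$ expander pieces (via low threshold rank) and to relate per-piece and global averages --- and it is why the Sherali--Adams LP of the unweighted algorithm must be upgraded to a Sum-of-Squares SDP: the Laplacian-domination inequality driving the per-piece variance bound is quadratic and only survives conditioning because \cref{fact:nonnegative-quadratic} supplies it with a low-degree SoS proof. A secondary difficulty is that the improved geometric discretization of the unweighted case does not survive (there is no single anchor capturing the average squared radius), which is exactly why the running time must carry a genuine $\mathrm{poly}(\Delta)$ rather than $\mathrm{poly}(\log\Delta)$ factor.
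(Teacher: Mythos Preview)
Your proposal matches the paper's approach essentially point-for-point: a uniform (not geometric) grid over a $\mathrm{poly}(n\Delta/\eps)$-radius box; an SoS (not Sherali--Adams) pseudo-distribution so that the per-piece Poincar\'e inequality supplied by the Gharan--Trevisan expander decomposition transfers to $\pE_\mu$ via \cref{fact:nonnegative-quadratic}; the multi-point anchoring bound (this is exactly \cref{cor:magnitude-points-wrs}, proved just as you sketch); and then \cref{lem:gcr_linear_combo} on the Lipschitz and quadratic pieces of the expanded objective. The only presentational difference is that you invoke \cref{lem:lipschitz-potential-aligned} as a black box with $L_{ij}=w_{ij}d_{ij}/\delta$ via the regularity identity $\E_{ij\sim W}[\cdot]=\E_{ij\sim[n]}[(w_{ij}/\delta)\cdot]$, whereas the paper re-runs the truncation-plus-TV argument with weighted averages kept inside (\cref{lem:weighted-rs-gcr-error}); both routes give the same potential $\E_{j\sim[n]}H(\tilde{x}_j)$ and the same dependence on $\eps,\delta$.
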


Our argument proceeds by showing that we can obtain an initial variance bound by decomposing the regular graph into a small number of expanders. Then, we show that the rounding error on the weighted instance can be related to the rounding error on the complete instance. We begin by introducing a new expander decomposition for dense, regular graphs.

\begin{mdframed}
  \begin{algorithm}[Additive Approximation Scheme for Weighted \textsf{$k$-EMV} ]
    \label{algo:weighted-raw-stress}\mbox{}
    \begin{description}
    \item[Input:] Distances $ \{d_{ij}\}_{i,j\in[n]}$, weights $\Set{ w_{ij} }_{i,j \in [n]}$ forming a $\delta \cdot n$ regular graph,  target dimension $k \in \mathbb{N}$, target accuracy $0<\epsilon<1$.
    
    \item[Operations:]\mbox{}
    \begin{enumerate}
        \item For every $\calT \subset [n]$ such that $\abs{ \calT } = \bigO{ k^2\log(\Delta \cdot k/(\eps \delta))/ (\eps^2 \delta^{11}) }$, 
        \begin{enumerate}
            \item Consider a grid of size length $\epsilon \cdot \sqrt{\frac{1}{k} \cdot \E_{i, j \sim W} d_{ij}^2}$ over $[-n\Delta , n\Delta]^k$.  Compute a sum-of-squares pseudo-distribution $\mu$ over the grid with  degree-$\bigO{ k^2\log(\Delta \cdot k/(\eps \delta)/ (\eps^2 \delta^{11}) }$ such that $\pE_{\mu}$ optimizes: 
        \begin{equation*}
            \min_{ \pE_{\mu} } \hspace{0.1in} \pE_{\mu} \E_{i,j \sim W} \Paren{ d_{ij} - \norm{x_i - x_j} }^2
        \end{equation*}
            \item Let $\bar{x}_\calT$ be a draw from the local distribution $\{x_{\calT}\}$. Let $\mu_\calT$ be the pseudo-distribution obtained by conditioning on $\{x_i = \bar{x}_i\}_{i \in \calT}$.
            \item Compute the objective value of the embedding $\hat{X}_\calT  = \{\hat{x}_i\}_{i \in [n]}$ where $\hat{x}_i$ is sampled independently from the the $k$-degree local distribution of $\{x_i\}$ in $\mu_\calT$.
        \end{enumerate}
    
    \end{enumerate}
    \item[Output:] The embedding $\hat{X}_\calT$ with the lowest Weighted \textsf{$k$-EMV}  objective value.
    \end{description}
  \end{algorithm}
\end{mdframed}

\begin{remark}
    Note that our discretization preserves the objective value up to an additive factor of $\eps \leq \eps \E_{i,j \sim W} d_{ij}^2$. The proof of this fact follows via a simpler analog of the proof of~\cref{lem:discretization-rs}.
\end{remark}

\newcommand{\vol}{\text{vol}}
\paragraph{Partitioning.}
In particular, we show that an $\delta \cdot n$ regular graph can be partitioned into $\bigO{1/\delta^2}$ components such that the induced sub-graph on each component is a $\poly(\delta)$-expander and at most half the edges are discarded. To set up notation,  for a weighted graph $G = (V,E)$, we write $E(A,B)$ for the total edge weight from $A \subseteq V$ to $B \subseteq V$.
Following \cite{Tanaka2011,Gharan2014}, we use the following notion of multi-way conductance:
\begin{definition}[Multi-Way Conductance]
    Let $G = (V,E)$ be a graph and $k \in \N$.
    Then, the multi-way conductance is defined as
    \[
    \rho_G(k) = \min_{A_1,\ldots,A_k \, \text{disjoint}} \max_{j \in [k] }  \hspace{0.1in}\phi_G(A_j) \,, 
    \]
    where $A_1,\ldots,A_k \subseteq V$.
\end{definition}
Oveis Gharan and Trevisan~\cite{Gharan2014} prove the following result relating graph partitions to multi-way conductance:

\begin{theorem}[\cite{Gharan2014}, Theorem 1.7]
\label{thm:partitioning-into-expanders}
  Let $G = (V,E)$.
  If $\rho_G(k+1) > (1+\delta ) \rho_G(k)$ for some $0 < \delta < 1$, then there exists a partitioning of $V$ into $P_1,\ldots,P_k$ such that for all $i \leq k$, we have (a) $\phi_G(P_i) \leq k \cdot \rho(k)$ and (b) $\phi(G[P_i]) \geq \delta  \cdot \rho(k+1)/(14 k)$, where $G[P_i]$ is the subgraph of $G$ induced by the vertices $P_i$.
\end{theorem}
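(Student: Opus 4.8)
Since this is Theorem~1.7 of \cite{Gharan2014}, the plan is to follow their argument, which refines an optimal $k$-partition by a local-search / redistribution process whose termination is forced by the conductance gap. First I would fix a partition $\{A_1,\dots,A_k\}$ of $V$ achieving $\rho_G(k)$, so that $\max_i \phi_G(A_i) = \rho_G(k)$, and set the target internal threshold $\gamma := \delta\cdot\rho_G(k+1)/(14k)$. If every induced subgraph $G[A_i]$ already has $\phi(G[A_i]) \geq \gamma$ we are done, since $\phi_G(A_i) = \rho_G(k) \leq k\rho_G(k)$. Otherwise some part $P_\ell$ of the current partition has a cut $(S, P_\ell\setminus S)$ with $\phi_{G[P_\ell]}(S) < \gamma$ and $\vol_{G[P_\ell]}(S) \leq \tfrac12 \vol_{G[P_\ell]}(P_\ell)$, and I would modify the partition by one of two moves: (i) if $S$ carries enough edge-weight out of $P_\ell$, \emph{reassign} $S$ to whichever other part $P_j$ maximizes $E_G(S,P_j)$; a short computation shows that the total cut weight $\sum_i E_G(P_i, V\setminus P_i)$ then changes by $2 E_{G[P_\ell]}(S, P_\ell\setminus S) - 2 E_G(S,P_j)$, which is strictly negative in this case; (ii) otherwise $S$ is almost entirely contained inside $P_\ell$, so $\phi_G(S) < k\gamma$ is itself tiny, and I would split $P_\ell$ into $S$ and $P_\ell\setminus S$ (possibly re-merging the remainder into a neighbouring part to keep the part count at $k$).

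The second ingredient is the conductance bookkeeping that keeps the invariant ``every current part has $\phi_G(P_i) \leq k\rho_G(k)$'' alive: a reassignment move changes each involved part's boundary by only $O(\gamma\,\vol(S))$ while enlarging the receiving part's volume, and a split/merge move produces pieces whose external conductance is controlled by comparing $\vol_G$ to $\vol_{G[P_\ell]}$ on $P_\ell$ (which is legitimate precisely because $\phi_G(P_\ell)$ is already small). The factor $k$ in conclusion~(a) is exactly the budget that absorbs the accumulated $O(\gamma)$-sized increments over the whole process, and the constant $14k$ in the definition of $\gamma$ is tuned so that this accounting closes.

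Finally, the gap $\rho_G(k+1) > (1+\delta)\rho_G(k)$ forces termination: reassignment moves strictly decrease a quantity bounded below (the total cut weight), so they cannot repeat indefinitely, and whenever only split moves remain one uses the gap --- together with the $\phi_G(P_i) \leq k\rho_G(k)$ and $\phi_G(S) < k\gamma$ bounds we maintain --- to rule out ever reaching a family of $k+1$ disjoint sets all controlled enough to contradict the definition of $\rho_G(k+1)$. When the process halts, no current part admits an internal cut below $\gamma$, so each $G[P_i]$ is a $\gamma$-expander and each $\phi_G(P_i) \leq k\rho_G(k)$; padding with empty parts if fewer than $k$ remain gives the desired $k$-partition. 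I expect the bookkeeping in the middle step to be the main obstacle: one must simultaneously track how splitting and reassignment affect both the internal and the external conductances of \emph{all} parts, and choose the pieces $S$ and the redistribution targets so that everything stays within the $k\rho_G(k)$ budget while the potential keeps dropping; the termination argument and the final read-off are comparatively routine once that accounting is in place.
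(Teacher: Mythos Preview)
The paper does not give its own proof of this statement: it is quoted verbatim as Theorem~1.7 of \cite{Gharan2014} and used as a black box in the proof of Lemma~\ref{lem:expander-decomposition}. So there is nothing in the present paper to compare your proposal against.

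That said, your sketch is a faithful outline of the Oveis Gharan--Trevisan argument: start from an optimal witness for $\rho_G(k)$, and while some part has a sparse internal cut $S$, either reassign $S$ to the neighbouring part that absorbs most of its boundary (strictly decreasing total cut weight) or, if $S$ is well-separated from the rest of the graph, split it off; the gap $\rho_G(k+1) > (1+\delta)\rho_G(k)$ is precisely what forbids accumulating $k+1$ disjoint low-conductance pieces and forces termination. Your identification of the conductance bookkeeping as the delicate part is accurate --- in the original proof this is handled by carefully tracking how each move changes both $E_G(P_i,\overline{P_i})$ and $\vol_{G[P_i]}$ simultaneously, and the constants $k$ and $14k$ come out of that accounting. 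One minor correction: the optimal $k$ sets witnessing $\rho_G(k)$ need not partition $V$ (they are only required to be disjoint), so the initialization step should first complete them to a partition, which is harmless since adding vertices to a part can only lower its conductance.
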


We will need a simple lemma about the multi-way conductance of regular dense graphs.

\begin{lemma}
\label{fact:dense-multiway}
    Let $G = (V,E)$ be an $n$-vertex graph with edge weights $w_{ij} \in [0,1]$, where every vertex has degree $\delta n$.
    Then for every $k$, $\rho_G(k) \geq 1 - \tfrac 1 {\delta k}$.
\end{lemma}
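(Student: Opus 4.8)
The plan is a short pigeonhole-and-counting argument. Recall that for a nonempty $A \subseteq V$ one has $\phi_G(A) = E(A, V \setminus A)/\vol(A)$, and that expanding $\vol(A) = \sum_{i \in A} \deg(i) = \sum_{i \in A}\sum_{j \in V} w_{ij}$ and splitting the inner sum according to whether $j \in A$ gives
\[
\vol(A) = 2\sum_{\{i,j\} \subseteq A} w_{ij} + E(A, V\setminus A),
\qquad\text{so}\qquad
\phi_G(A) = 1 - \frac{2\sum_{\{i,j\}\subseteq A} w_{ij}}{\vol(A)}.
\]
So it suffices to bound the internal edge weight of a small set relative to its volume, and for that we will use that a regular graph forces the volume to be large while the $[0,1]$ bound on the weights keeps the internal weight small.

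Now fix any disjoint nonempty $A_1, \dots, A_k \subseteq V$. Since $\sum_j |A_j| \le n$, pigeonhole gives an index $j^\star$ with $s := |A_{j^\star}| \le n/k$. Write $S := A_{j^\star}$. Because every vertex has degree exactly $\delta n$, its volume is $\vol(S) = \delta n s$; and since each $w_{ij} \in [0,1]$, the internal weight satisfies $\sum_{\{i,j\}\subseteq S} w_{ij} \le \binom{s}{2} \le s^2/2$. Plugging these into the identity above,
\[
\phi_G(S) = 1 - \frac{2\sum_{\{i,j\}\subseteq S} w_{ij}}{\vol(S)} \ge 1 - \frac{s^2}{\delta n s} = 1 - \frac{s}{\delta n} \ge 1 - \frac{1}{\delta k}.
\]
Hence $\max_{j \in [k]} \phi_G(A_j) \ge \phi_G(S) \ge 1 - 1/(\delta k)$ for every such tuple, and taking the minimum over all disjoint nonempty $A_1, \dots, A_k$ yields $\rho_G(k) \ge 1 - 1/(\delta k)$, as claimed.

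There is no real obstacle here; the only care needed is with conventions. If conductance is instead defined with $\min(\vol(A), \vol(V\setminus A))$ in the denominator, then since $s \le n/k \le n/2$ for $k \ge 2$ and degrees are uniform we have $\vol(S) \le \vol(V\setminus S)$, so the two conventions agree on $S$; and for $k=1$ the claimed bound $1-1/\delta$ is nonpositive, hence vacuous (as is the uninteresting regime $\delta k \le 1$ in general). One also restricts to nonempty parts in the definition of $\rho_G$, matching the usage in Theorem~\ref{thm:partitioning-into-expanders}.
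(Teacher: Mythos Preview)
Your proof is correct and follows essentially the same route as the paper: pick by pigeonhole a part of size at most $n/k$, bound its internal weight by $\binom{s}{2}$, and use regularity to compute its volume as $\delta n s$, yielding $\phi_G(S) \ge 1 - s/(\delta n) \ge 1 - 1/(\delta k)$. Your additional remarks on the $k=1$ case and the $\min(\vol(A),\vol(V\setminus A))$ convention are sound and match the paper's handling.
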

\begin{proof}
WLOG $k \geq 2$, since the statement is trivial when $k = 1$.
Let $A_1, \ldots, A_k$ be any collection of disjoint subsets of $V$. By disjointness, there exists $A_i$ with $|A_i| \leq n/k$.
The total weight of edges internal to $A$ is at most $\binom{|A|}{2} \leq |A|^2$, while the volume of $A$ is $|A| \delta n$.
So the conductance of $A$ satisfies
\[
\phi_G(A) \geq \frac{\delta n |A| - |A|^2}{\delta n |A|} = 1 - \frac{|A|}{\delta n} \geq 1 - \frac{1}{\delta k} \, ,
\]
and by definition $\rho_G(k) \geq \phi_G(A) \geq (1 - 1/(\delta k)$, as desired. %
\end{proof}

We can put these together to arrive at a partitioning-into-expanders lemma for dense graphs.

\begin{lemma}[Expander Decomposition for regular Graphs]
\label{lem:expander-decomposition}
    Let $G = (V,E)$ be an $n$-vertex regular graph with edges weights $w_{ij} \in [0,1]$, where every vertex has weighted degree $\delta \cdot n$.
    There exists a partition of $V$ into $P_1,\ldots,P_k$ such that each $P_i$ has $\phi(G[P_i]) \geq \Omega(\delta^{10})$, $\phi_G(P_i) \leq 1/4$, and $\sum_{i \neq j} E(P_i,P_j) \leq \tfrac 12 |E|$, for some $k \leq O(1/\delta^2)$.
\end{lemma}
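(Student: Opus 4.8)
The plan is to invoke the expander-decomposition theorem of Oveis Gharan and Trevisan (Theorem~\ref{thm:partitioning-into-expanders}) at a carefully chosen level $k^\star$, using the multi-way conductance lower bound for dense regular graphs (Lemma~\ref{fact:dense-multiway}) to locate where along the nondecreasing sequence $\rho_G(1)\le\rho_G(2)\le\cdots$ that level should be. The tension to resolve is that conclusion~(a) of the theorem (outer conductance $\le k\rho_G(k)$) wants us to apply it where $\rho_G(k)$ is still small, whereas conclusion~(b) (inner conductance $\ge \delta'\rho_G(k+1)/(14k)$) wants $\rho_G(k+1)$ to already be moderately large; the dense-graph bound, which forces $\rho_G$ to climb from small up to $\ge 1/2$ within $O(1/\delta)$ levels, is exactly what lets a single $k^\star$ serve both needs.

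First I would dispose of the trivial case $\phi(G)=\rho_G(1)\ge\delta/252$ by taking $k=1$, $P_1=V$: then $\phi(G[P_1])=\phi(G)=\Omega(\delta)=\Omega(\delta^{10})$, there are no crossing edges, and the outer-conductance condition is vacuous. So assume $\rho_G(1)<\delta/252$. Set $K=\lceil 2/\delta\rceil$, so that $\rho_G(K)\ge 1-\tfrac1{\delta K}\ge\tfrac12$ by Lemma~\ref{fact:dense-multiway}. Let $m_1$ be the largest $k\le K$ with $\rho_G(k)\le\tfrac1{4k}$ — this set of indices is a prefix $\{1,\dots,m_1\}$ because $\rho_G(\cdot)$ is nondecreasing and $\tfrac1{4k}$ nonincreasing, and $1\le m_1\le K-1$ — and let $k_0$ be the smallest $k$ with $\rho_G(k)\ge\delta/252$, so $2\le k_0\le K$. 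A one-line estimate ($\rho_G(k_0-1)<\delta/252<\tfrac\delta{12}\le\tfrac1{4(k_0-1)}$) shows $k_0-1\le m_1$, so the window $W=\{k_0-1,\dots,m_1\}$ is nonempty. I claim $W$ contains some $k^\star$ with $\rho_G(k^\star+1)>(1+\delta)\rho_G(k^\star)$: otherwise, telescoping $\rho_G(k+1)\le(1+\delta)\rho_G(k)$ over all of $W$ (whose length is at most $K\le 3/\delta$, so the accumulated factor is at most $(1+\delta)^{3/\delta}\le e^3<21$) would give $\rho_G(m_1+1)<21\cdot\rho_G(k_0-1)<\tfrac\delta{12}$, contradicting $\rho_G(m_1+1)>\tfrac1{4(m_1+1)}\ge\tfrac\delta{12}$, which holds by maximality of $m_1$ and $m_1+1\le K\le 3/\delta$.

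Now apply Theorem~\ref{thm:partitioning-into-expanders} at this $k^\star$, with the theorem's gap parameter set equal to $\delta$, obtaining a partition $P_1,\dots,P_{k^\star}$ (discard any empty parts). Since $k^\star\le m_1$ we have $\rho_G(k^\star)\le\tfrac1{4k^\star}$, so conclusion~(a) gives $\phi_G(P_i)\le k^\star\rho_G(k^\star)\le\tfrac14$. Since $k^\star\ge k_0-1$ we have $\rho_G(k^\star+1)\ge\delta/252$, so conclusion~(b) gives $\phi(G[P_i])\ge\tfrac{\delta\cdot(\delta/252)}{14\cdot(2/\delta)}=\Omega(\delta^3)=\Omega(\delta^{10})$. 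Also $k^\star\le K-1\le 2/\delta=O(1/\delta)$, which is in particular $O(1/\delta^2)$. For the crossing-edge bound, $\phi_G(P_i)\le\tfrac14$ yields $E(P_i,V\setminus P_i)\le\tfrac14\vol(P_i)$; summing over $i$ and using that regularity gives $\sum_i\vol(P_i)=\vol(V)=\delta n^2=2|E|$, we get $\sum_{i\neq j}E(P_i,P_j)=\sum_i E(P_i,V\setminus P_i)\le\tfrac12|E|$.

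The main obstacle is the middle step: picking $k^\star$ so that \emph{simultaneously} (i) $k^\star\rho_G(k^\star)\le 1/4$, (ii) $\rho_G(k^\star+1)$ is bounded below by a polynomial in $\delta$, (iii) $k^\star$ is polynomially bounded in $1/\delta$, and (iv) there is a genuine multiplicative gap at $k^\star$. The window argument threads this, but one must choose the thresholds ($\delta/252$ as both the Case-1 cutoff and the lower window endpoint, $\tfrac1{4k}$ for $m_1$, $K=\lceil2/\delta\rceil$) so that the telescoping contradiction fires \emph{uniformly for all $\delta\in(0,1]$}, rather than only for $\delta$ below some absolute constant; this is where the bookkeeping needs the most care. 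Everything else — the two easy algebraic consequences of (a) and (b), and the crossing-edge accounting from regularity — is routine.
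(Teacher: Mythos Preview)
Your argument is correct and follows essentially the same strategy as the paper's proof: locate a level $k^\star$ where $\rho_G$ has a multiplicative gap while $k^\star\rho_G(k^\star)\le 1/4$, apply Theorem~\ref{thm:partitioning-into-expanders}, and deduce the crossing-edge bound from $\phi_G(P_i)\le 1/4$ together with regularity. The paper runs the same telescoping/window argument but with looser parameters (gap $1+\delta^3$ and threshold $\delta^5$, searching up to $O(1/\delta^2)$), which is why it lands at $\Omega(\delta^{10})$ and $k=O(1/\delta^2)$; your choices (gap $1+\delta$, threshold $\Theta(\delta)$, search range $O(1/\delta)$) yield the sharper $\Omega(\delta^3)$ and $k=O(1/\delta)$, which of course also establishes the stated lemma. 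One cosmetic point: under the paper's definition $\rho_G(1)$ is trivially $0$ (take $A_1=V$), so your identification $\phi(G)=\rho_G(1)$ does not literally match; the paper handles the ``already an expander'' case via $\rho_G(2)$ instead, but this does not affect the substance of your argument.
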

\begin{proof}
    If $\rho(2) \geq \delta^5$ then we are done, since then $G$ is already an $\delta^5$-expander.
    So assume $\rho(2) \leq \delta^5$, and let $\ell$ be the greatest index such that $\rho(\ell) \leq \delta^5$. Observe, by \cref{fact:dense-multiway}, $\rho(1/ (10\delta)) \geq 0.9$ and therefore $\ell < 1/(10\delta)$. 
    We now find $ \ell \leq k \leq 1/\delta^2 +1$ such that $\rho(k+1) \geq (1+ \delta^3) \rho(k)$ and $k \rho(k) \leq 1/4$. 
    
    Let $k_0$ be the first integer such that $\rho(k_0+1) > \tfrac 1 {4(k_0+1)}$.
    By Fact~\ref{fact:dense-multiway}, $\rho(1/\delta^2) \geq 1-\delta$ and so $k_0 \leq 1/\delta^2$.
    If for every $\ell \leq k \leq k_0$ we had $\rho(k+1) < (1+\delta^3) \rho(k)$, then we would have $\rho(k_0+1) \leq (1+\delta^3)^{k_0} \delta^5 \leq \exp( \delta ) \delta^5 < \delta^2 \leq 1/{2(k_0+1)}$, which goes against the definition of $k_0$.
    So there exists $\ell \leq k \leq k_0$ satisfying $\rho(k+1) \geq (1+\delta^3) \rho(k)$ and $k \rho(k) \leq 1/4$.

    Now we apply Theorem~\ref{thm:partitioning-into-expanders} with this value of $k$, obtaining a partition $P_1,\ldots,P_k$ such that $\phi(G[P_i]) \geq \delta^3 \rho(k+1) / (14 k) = \Omega(\delta^{10})$ while $\phi_G(P_i) \leq 1/4$.

    It just remains to check that this implies that $\sum_{i \neq j} E(P_i,P_j) \leq \tfrac 12 |E|$.
    For each $i$, we have $\sum_{j \neq i} E(P_i,P_j) \leq \phi_G(P_i) \cdot \vol(P_i) \leq \vol(P_i)/4 \leq |P_i| \cdot (\delta n /4)$.
    Summing across all $i$ finishes the argument.
\end{proof}

Next, we show that after conditioning on a single point, the variance is bounded for the weighted objective as well: 

\begin{lemma}[Variance after conditioning]
\label{cor:magnitude-points-wrs}
    Let $\{ d_{i,j} \}_{i,j \in [n]}$ be a set of distances and $\Set{w_{ij}}_{i,j \in [n]}$ be a set of non-negative weights such that for all $i \in [n]$, $\sum_{j \in [n]} w_{i,j} = \delta \cdot n$ for some $0<\delta<1$. Then for any pseudo-distribution over $x_1,\ldots, x_n$ for the Weighted \textsf{$k$-EMV}  objective, there exists a set of $k=\bigO{1/\delta^2}$ points $\Set{ x_{j_{\ell} } }_{ \ell  \in [k] }$ such that 
    \[ \E_{i\sim W} \tilde{\Var}( x_i  \mid     x_{j_1}, \ldots , x_{j_k} )   \leq \bigO{ 1/\delta^{10} } \cdot \E_{i,j\sim W} d_{ij}^2\,,\]
    where the distribution $i \sim W$ samples indices $i \propto \sum_{j} W_{ij}$ and the distribution $i, j \sim W$ samples the pair $i,j \propto W_{ij}$.
\end{lemma}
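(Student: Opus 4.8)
The plan is to build the anchor set $\calT$ by picking one variable inside each expander piece of the weighted graph $W$, and then to bound the leftover pseudo-variance within a piece by the weighted edge energy of that piece, which near-optimality of $\mu$ controls. First I would invoke Lemma~\ref{lem:expander-decomposition} on the $\delta n$-regular weighted graph $W$ to get a partition $P_1,\dots,P_K$ of $[n]$ with $K = O(1/\delta^2)$, each induced graph $W[P_\ell]$ satisfying $\phi(W[P_\ell]) = \Omega(\delta^{10})$ and $\phi_W(P_\ell)\le \tfrac14$. I will take $\calT=\{j_1,\dots,j_K\}$ with a single $j_\ell\in P_\ell$ per piece, where $j_\ell$ is chosen by a probabilistic argument over an internal-degree-weighted distribution $\pi_\ell$ on $P_\ell$ (a random choice satisfies the claimed bound in expectation, so some fixed tuple does).

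Conditioning on all of $\calT$ can only decrease each $\tilde\Var(x_i\mid\cdot)$ compared to conditioning on $x_{j_\ell}$ alone for $i\in P_\ell$ (iterating Fact~\ref{fact:var-reduction}, coordinate by coordinate), so it suffices to bound $\E_{j_\ell\sim\pi_\ell}\E_{\hat x_{j_\ell}}\tilde\Var(x_i\mid x_{j_\ell}=\hat x_{j_\ell})$. Since pseudo-variance is dominated by the pseudo-second-moment about any fixed point, and conditioning on a draw from a variable's own marginal preserves pseudo-expectations,
\[
\E_{j_\ell\sim\pi_\ell}\E_{\hat x_{j_\ell}}\tilde\Var\big(x_i\mid x_{j_\ell}=\hat x_{j_\ell}\big)\ \le\ \E_{j_\ell\sim\pi_\ell}\,\pE_\mu\norm{x_i-x_{j_\ell}}^2,
\]
and averaging $i$ over $P_\ell$ with a matching distribution $\sigma_\ell$ turns the right-hand side into a within-piece ``all-pairs'' average of $\pE_\mu\norm{x_i-x_j}^2$.

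The heart of the argument is converting this all-pairs average into the weighted edge energy of the piece. For every fixed real vector $y\in\R^n$, a Cheeger-type (or short-path / effective-resistance) estimate for $W[P_\ell]$ yields a pointwise-valid degree-two inequality
\[
\E_{i\sim\sigma_\ell,\,j\sim\pi_\ell}(y_i-y_j)^2\ \le\ \frac{C}{\delta^{O(1)}}\cdot\frac{1}{\vol(P_\ell)}\sum_{i,j\in P_\ell}w_{ij}(y_i-y_j)^2,
\]
where $\phi_W(P_\ell)\le\tfrac14$ guarantees that the typical vertex of $P_\ell$ retains a constant fraction of its $\delta n$ edges inside $P_\ell$, so the uniform- and degree-weighted means over $P_\ell$ are comparable. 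Being a nonnegative degree-two polynomial inequality in $y$, it admits a degree-two sum-of-squares proof by Fact~\ref{fact:nonnegative-quadratic}; applying it coordinate by coordinate to $x_1,\dots,x_n\in\R^k$ and taking $\pE_\mu$ upgrades it to the pseudo-distribution. I would then assemble: $\sum_{i,j\in P_\ell}w_{ij}\pE_\mu\norm{x_i-x_j}^2\le\sum_{i,j\in[n]}w_{ij}\pE_\mu\norm{x_i-x_j}^2=\delta n^2\cdot\pE_\mu\E_{i,j\sim W}\norm{x_i-x_j}^2$, which is $O(\delta n^2\cdot\E_{i,j\sim W}d_{ij}^2)$ by the almost-triangle-inequality bound of~\eqref{eqn:avg-variance-intro} applied to the near-optimal $\mu$ produced by Algorithm~\ref{algo:weighted-raw-stress}; combining with $\vol(P_\ell)=\Theta(\delta n\,|P_\ell|)$ (again via $\phi_W(P_\ell)\le\tfrac14$) and summing over the $\le O(1/\delta^2)$ pieces, normalized over the $n$ vertices (recall $i\sim W$ is uniform), gives $\E_{i\sim W}\tilde\Var(x_i\mid x_{j_1},\dots,x_{j_K})=O\!\big(\delta^{-O(1)}\big)\cdot\E_{i,j\sim W}d_{ij}^2$.

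The main obstacle is this spectral reduction step: $W[P_\ell]$ is not regular, so one must carefully track uniform versus degree-weighted averages and, crucially, control vertices of $P_\ell$ with small internal degree — precisely what $\phi_W(P_\ell)\le\tfrac14$ buys — and one must phrase the comparison as a genuinely pointwise degree-two inequality so that Fact~\ref{fact:nonnegative-quadratic} lifts it to $\mu$. Finally, being economical with the conductance dependence here, routing through short paths / effective resistances rather than simply squaring $\phi$ via Cheeger, is what is needed to bring the final exponent down to the claimed $O(1/\delta^{10})$ rather than a larger power of $1/\delta$.
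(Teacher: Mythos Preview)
Your proposal is correct and follows essentially the same route as the paper: invoke the expander decomposition of Lemma~\ref{lem:expander-decomposition}, pick one anchor per piece, bound the conditional pseudo-variance of $x_i$ by $\pE_\mu\|x_i-x_{j_\ell}\|^2$, convert the within-piece all-pairs average to edge energy via a Poincar\'e-type inequality lifted through Fact~\ref{fact:nonnegative-quadratic}, use $\phi_W(P_\ell)\le \tfrac14$ to compare internal volume to $\delta n|P_\ell|$, and finish with the almost-triangle bound on $\pE_\mu\E_{i,j\sim W}\|x_i-x_j\|^2$. The only cosmetic difference is that the paper samples each $j_\ell$ uniformly from $P_\ell$ rather than by internal degree, and simply writes the Poincar\'e step with constant $1/\delta^{10}$ without the effective-resistance detour you flag.
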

\begin{proof}
Observe, since each row has weight $\eta n$, for any fixed $c= x_a$ 
\begin{equation*}
    \expecf{i \sim W}{ \tilde{\Var}( x_i  \mid     x_{j_1}, \ldots , x_{j_k} ) } = \expecf{ i \sim [n] }{ \tilde{\Var}( x_i  \mid     x_{j_1}, \ldots , x_{j_k} ) }
\end{equation*}
Next, it follows from \cref{lem:expander-decomposition} that there exists a partition $\Set{P_1, \ldots, P_k}$ such that $\phi(G[P_i]) \geq \delta^{10}$. Consider the random process where $x_{j_{\ell}}$ is picked uniformly from $P_{\ell}$. Then,
\begin{equation}
\label{eqn:too-tired-to-name-this}
    \begin{split}
        \expecf{ x_{j_1}, \ldots , x_{j_k}  }{ \expecf{i \sim [n] }{ \tilde{\Var}( x_i  \mid     x_{j_1}, \ldots , x_{j_k} ) } } & \leq  \expecf{ x_{j_1}, \ldots , x_{j_k}  }{ \expecf{i \sim [n] }{ \tilde{\Var}( x_i  \mid     x_{j(i)}} }  \\
        & \leq \expecf{ x_{j_1}, \ldots , x_{j_k}  }{ \expecf{i \sim [n] }{  \pE \norm{ x_i - x_{j(i)}}^2  } }  \\
        & = \frac{1}{n} \sum_{ \ell \in [k] } \frac{\abs{P_\ell } }{\abs{P_\ell }^2 }  \sum_{ i, a \in P_{\ell} } \pE \norm{ x_i - x_a }^2 \\
        & = \frac{1}{n} \sum_{ \ell \in [k] } \abs{P_\ell } \cdot   \expecf{i, a \sim P_\ell }{  \pE \norm{ x_i - x_a }^2  }
    \end{split}
\end{equation}
where $x_{j(i)}$ corresponds to the point selected in the same partition as $x_i$. Further, since $P_\ell$ is an expander, by a Poincar\'e inequality combined with \cref{fact:nonnegative-quadratic}, we have 
\begin{equation}
\label{eqn:averaging-over-expanders}
   \frac{1}{n} \sum_{ \ell \in [k] } \abs{P_\ell } \cdot   \expecf{i, a \sim P_\ell }{  \pE \norm{ x_i - x_a }^2  } \leq \Paren{\frac{1}{\delta^{10}} } \Paren{ \frac{1}{n}  \sum_{ \ell \in [k] }   \Paren{ \frac{ \abs{P_\ell }  }{ \sum_{i, a \in P_\ell } w_{ia}   }  } \sum_{i, a \in P_\ell } w_{ia} \pE \norm{ x_i - x_a }^2  }  
\end{equation}
Next, we show that $\sum_{i, a \in P_{\ell}} w_{ia} \geq \frac{3}{4} \cdot \delta n \cdot  \abs{P_\ell} $. Recall, from \cref{lem:expander-decomposition}, we have $\phi_G(P_i) \leq 1/4$ and therefore, 
\begin{equation*}
    \sum_{i, a \in P_{\ell}}  w_{i,a} = \Abs{ E( P_\ell, P_\ell) } \geq \delta n \cdot \Abs{P_\ell} - \frac{1}{4} \textrm{vol}(P_{\ell}) \geq \frac{3}{4} \delta n \cdot \Abs{P_{\ell}}. 
\end{equation*}
Substituting this back into \cref{eqn:averaging-over-expanders}, we have
\begin{equation*}
    \begin{split}
        \frac{1}{n} \sum_{ \ell \in [k] } \abs{P_\ell } \cdot   \expecf{i, a \sim P_\ell }{  \pE \norm{ x_i - x_a }^2  } & \leq \Paren{\frac{4}{3 \delta^{11}} } \Paren{ \frac{1}{n^2} \sum_{ i, a \in [n]}  w_{ia} \cdot \pE \norm{ x_i - x_a }^2  } \\
         & \leq \Paren{\frac{4}{3 \delta^{10}} } \Paren{\expecf{i,a \sim W}{ \pE \norm{ x_i - x_a }^2   }  }
    \end{split}
\end{equation*}
Then we have by Almost Triangle Inequality that
    \[ \E_{a,i \sim W } \norm{x_i - x_a}^2 =  \E_{a,i \sim W } \left(\norm{x_i-x_a} \pm d_{ia}\right)^2 \leq \E_{a,i \sim W } 2\left(d_{ia} - \norm{x_i-x_a}\right)^2 + 2d_{ia}^2\,.\]
    The first term is $\OPT_{\mathrm{WEMV}}$, which is bounded by $\E_{i,j\sim W} d_{ij}^2$ since there exists a trivial solution achieving this value which just maps every point to $0$. Thus, plugging this back into \cref{eqn:too-tired-to-name-this}, we can conclude that 
    \begin{equation*}
        \expecf{ x_{j_1}, \ldots , x_{j_k}  }{ \expecf{i \sim [n] }{ \tilde{\Var}( x_i  \mid     x_{j_1}, \ldots , x_{j_k} ) } } \leq \Paren{ \frac{8}{\delta^{10}} } \Paren{ \expecf{i,j \sim W}{ d_{i,j}^2 } }
    \end{equation*}
\end{proof}

Note that the above lemma also implies a bound on the entropy after conditioning on the initial set of points:
\begin{lemma}[Weighted \textsf{$k$-EMV} has Bounded Entropy]
\label{lem:weighted-rs-entropy}
    Let $\mu$ be a pseudo-distribution over a grid of length $\epsilon$ which satisfies
    \[ \E_{i \sim  W } \pE_\mu  \tilde{\Var}(x_i ) \leq \bigO{1/\delta^{10}} \E_{i,j \sim  W } d_{ij}^2\,.\]
    Let $\widetilde{x_i}$ is $x_i(T) E_i + (\pE_\mu x_i) \overline{E}_i$ where the event $E_i$ is $E_i = (\norm{x_i - \E_\mu x_i} \leq C \cdot \sqrt{\pE_\mu \norm{x_i - \pE_\mu x_i}^2})$. Then we have that
    \[ \E_{i \sim W} H(\widetilde{x}_i) \leq \mathcal{O}\left(k \log (C \Delta k/(\epsilon \delta^11))\right)\,.\]
\end{lemma}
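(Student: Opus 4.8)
The plan is to mirror the unweighted entropy bound \cref{lem:rs-entropy-bound} almost verbatim, the only changes being that the variance control is now the hypothesis inherited from \cref{cor:magnitude-points-wrs} and that the average over indices is taken with respect to $W$. Since $H(\widetilde{x}_i)$ is at most the logarithm of the support size of $\widetilde{x}_i$, it suffices to count the grid points that $\widetilde{x}_i$ can occupy. Recalling that $\widetilde{x}_i = x_i\,\mathbb{1}[E_i] + (\pE_\mu x_i)\,\mathbb{1}[\overline{E}_i]$, the support of $\widetilde{x}_i$ is the single point $\pE_\mu x_i$ together with those grid points at which $E_i$ holds, i.e.\ those inside the Euclidean ball of radius $r_i := C\sqrt{\pE_\mu \norm{x_i - \pE_\mu x_i}^2}$ centered at $\pE_\mu x_i$.

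Next I would use that the grid of \cref{algo:weighted-raw-stress} has spacing $\eta := \epsilon\sqrt{\tfrac1k \E_{i,j\sim W} d_{ij}^2}$ in each of the $k$ coordinates, so a ball of radius $r_i$ --- sitting inside an axis-aligned cube of side $2r_i$ --- meets at most $(2r_i/\eta + 1)^k$ grid points. Hence the support size $\gamma_i$ satisfies $\gamma_i \leq 1 + (2r_i/\eta + 1)^k$, giving
\[
H(\widetilde{x}_i) \;\leq\; \log\gamma_i \;\leq\; k\log\Paren{\frac{2C\sqrt{\pE_\mu \norm{x_i - \pE_\mu x_i}^2}}{\eta} + 2}\,.
\]
Averaging over $i\sim W$ and pushing the expectation inside by Jensen twice --- once by concavity of $\log$, once by concavity of $\sqrt{\cdot}$ --- yields
\[
\E_{i\sim W} H(\widetilde{x}_i) \;\leq\; k\log\Paren{\frac{2C\sqrt{\E_{i\sim W}\pE_\mu \norm{x_i - \pE_\mu x_i}^2}}{\eta} + 2}\,.
\]

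To finish, substitute $\E_{i\sim W}\pE_\mu\norm{x_i - \pE_\mu x_i}^2 = \E_{i\sim W}\tilde{\Var}(x_i) \leq \bigO{1/\delta^{10}}\,\E_{i,j\sim W} d_{ij}^2$ (the lemma's hypothesis) and $\eta = \epsilon\sqrt{\tfrac1k \E_{i,j\sim W} d_{ij}^2}$: the common factor $\E_{i,j\sim W} d_{ij}^2$ cancels inside the square root, leaving $\E_{i\sim W} H(\widetilde{x}_i) \leq k\log\bigl(\bigO{C\sqrt k/(\epsilon\delta^5)} + 2\bigr)$. Since the distances have aspect ratio $\Delta$ we have $1 \leq \E_{i,j\sim W} d_{ij}^2 \leq \Delta^2$, and $\delta<1$, so this is comfortably at most $\bigO{k\log(C\Delta k/(\epsilon\delta^{11}))}$, as claimed; the slack in the exponents and the spurious $\Delta$ simply keep the statement robust to which precise normalization of the grid one adopts.

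I do not expect a genuine obstacle in this lemma: every step is elementary volume counting plus two applications of Jensen, exactly as in \cref{lem:rs-entropy-bound}. The only point deserving a sentence of care is that under the $\delta n$-regularity assumption $\E_{i\sim W}$ coincides with $\E_{i\sim[n]}$, which is what lets us quote the variance bound of \cref{cor:magnitude-points-wrs} directly; all of the real work --- obtaining that bound after conditioning on $\bigO{1/\delta^2}$ points --- is already done there through the expander decomposition of \cref{lem:expander-decomposition} together with the Poincar\'e/SoS inequality of \cref{fact:nonnegative-quadratic}.
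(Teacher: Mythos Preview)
Your proposal is correct and is exactly the approach the paper intends: the paper's entire proof of this lemma is the sentence ``The proof of this lemma follows the proof of~\cref{lem:rs-entropy-bound},'' and your write-up is precisely that adaptation --- bound $H(\widetilde{x}_i)$ by the log of the number of grid points in the cube of side $2r_i$, apply Jensen for $\log$ and then for $\sqrt{\cdot}$, and plug in the variance hypothesis so that $\E_{i,j\sim W} d_{ij}^2$ cancels. Your observation that the argument actually gives the sharper $\bigO{k\log(C\sqrt{k}/(\epsilon\delta^5))}$, with the $\Delta$ and the larger $\delta$-power in the stated bound being slack, is correct.
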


The proof of this lemma follows the proof of~\cref{lem:rs-entropy-bound}.

\begin{lemma}[Rounding Weighted \textsf{$k$-EMV} ]
\label{lem:weighted-rs-gcr-error}
    Let $\epsilon \geq 0$ and let $\mu$ be an initial pseudo-distribution which satisfies
    \[ \E_{i \sim  W } \pE_\mu  \tilde{\Var}(x_i ) \leq \bigO{1/\delta^{10}} \E_{i,j \sim  W } d_{ij}^2\,.\]
    Then there exists a set $\calT$ of size at most $\bigO{k \log (k/\epsilon) /\epsilon^2 \delta }$ such that after conditioning on the values of $x_i \in \calT$,
    \[ \left\vert \E_{\{\mu_\calT\}} \pE_{\mu_{\calT}} \E_{i,j \sim W } (d_{ij} - \norm{x_i - x_j})^2 - \pE_{\mu^{\otimes}_{\calT}} \E_{i,j \sim  W} (d_{ij} - \norm{x_i - x_j})^2 \right\vert \leq \epsilon \cdot  \E_{i,j\sim W } d_{ij}^2 \,.\]
\end{lemma}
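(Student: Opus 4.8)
\textbf{Proof plan for~\cref{lem:weighted-rs-gcr-error}.} The plan is to follow the proof of~\cref{lem:rs-gcr-error}, with two changes forced by the weights: the ``trivial'' bound $\E_i\pE_\mu\norm{x_i}^2\le 6\,\E_{i,j\sim[n]}d_{ij}^2$ is replaced by the hypothesis of the lemma, $\E_{i\sim W}\pE_\mu\,\tilde{\Var}(x_i)\le\bigO{\delta^{-10}}\E_{i,j\sim W}d_{ij}^2$, which by $\delta n$-regularity is the same as $\E_{i\sim[n]}\pE_\mu\norm{x_i-\pE_\mu x_i}^2\le\bigO{\delta^{-10}}\E_{i,j\sim W}d_{ij}^2$ (since $i\sim W$ is then uniform); and every average $\E_{i,j\sim[n]}$ becomes $\E_{i,j\sim W}$. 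First I would expand
\[
\E_{i,j\sim W}\Paren{d_{ij}-\norm{x_i-x_j}}^2=\E_{i,j\sim W}\Brac{d_{ij}^2-2d_{ij}\norm{x_i-x_j}+\norm{x_i}^2+\norm{x_j}^2-2\langle x_i,x_j\rangle}\,,
\]
and note that $d_{ij}^2,\norm{x_i}^2,\norm{x_j}^2$ depend on the marginal of at most one $x_i$, so $\pE_{\mu_\calT}$ and $\pE_{\mu_\calT^{\otimes}}$ agree on them and they contribute no rounding error. It thus suffices to bound the rounding error of the Lipschitz term $\E_{i,j\sim W}2d_{ij}\norm{x_i-x_j}$ and the quadratic term $\E_{i,j\sim W}2\langle x_i,x_j\rangle$, each of which I will argue is potential aligned, and then invoke~\cref{lem:gcr_linear_combo} with the truncated-entropy potential $\Phi_{\mathrm{ent}}=\E_{i\sim[n]}H(\widetilde{x}_i)$ and the pseudo-variance potential $\Phi_{\mathrm{var}}=\E_{i\sim[n]}\tr(\widetilde{\Sigma}_i)$.

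The key new ingredient is to move the weights into the functions. Since the graph is $\delta n$-regular, $\sum_{ij}w_{ij}=\delta n^2$, so $\E_{i,j\sim W}g_{ij}=\tfrac1\delta\,\E_{i,j\sim[n]}w_{ij}g_{ij}$ for any family $\{g_{ij}\}$. Hence $\E_{i,j\sim W}2d_{ij}\norm{x_i-x_j}=\E_{i,j\sim[n]}f_{ij}$ where $f_{ij}(x_i,x_j)=\tfrac2\delta w_{ij}d_{ij}\norm{x_i-x_j}$ is Lipschitz with constant $L_{ij}=\bigO{w_{ij}d_{ij}/\delta}$, and $\E_{i,j\sim W}2\langle x_i,x_j\rangle=\E_{i,j\sim[n]}A_{ij}\langle x_i,x_j\rangle$ with $A_{ij}=2w_{ij}/\delta$. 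The place where density matters is the elementary bound $w_{ij}^2\le w_{ij}$ (valid since $w_{ij}\in[0,1]$), which gives
\[
\E_{i,j\sim[n]}L_{ij}^2=\bigO{\delta^{-2}}\E_{i,j\sim[n]}w_{ij}^2d_{ij}^2\le\bigO{\delta^{-2}}\E_{i,j\sim[n]}w_{ij}d_{ij}^2=\bigO{\delta^{-1}}\E_{i,j\sim W}d_{ij}^2
\]
and likewise $\E_{i,j\sim[n]}A_{ij}^2\le\bigO{\delta^{-1}}$; that is, the $\ell_2$-mass of the rescaled Lipschitz constants and of $A$ is still controlled by the weighted distance scale $\E_{i,j\sim W}d_{ij}^2$. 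For sparse $w$ the normalization $1/\delta$ blows up and this conversion fails, which is exactly why the regularity hypothesis appears and why a direct analogue of the Max-CSP ``local-to-global'' argument does not suffice here.

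Given this, I would finish as follows. Applying~\cref{lem:lipschitz-potential-aligned} to $\E_{i,j\sim[n]}f_{ij}$ with internal accuracy $\eps_1$ (so $C=\eps_1^{-1/4}$) together with the variance bound above shows the Lipschitz term is $\Paren{\bigO{\eps_1^{1/4}\delta^{-11/2}}\,\E_{i,j\sim W}d_{ij}^2,\ \eps_1}$-potential aligned for $\Phi_{\mathrm{ent}}$; and applying~\cref{fact:deg-2-potential-aligned} with $\eta=\bigO{\delta^{-10}}\E_{i,j\sim W}d_{ij}^2$ (a valid bound on $\E_i\tr(\widetilde{\Sigma}_i)$ by hypothesis) and internal accuracy $\eps_2$ shows the quadratic term is potential aligned for $\Phi_{\mathrm{var}}$ with error $\bigO{\eps_2\delta^{-1}}$ and potential decrease $\Omega\Paren{\eps_2^2/(\eta k\delta)}$. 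Choosing $\eps_1,\eps_2$ so that both error parameters are at most $\tfrac\eps2\,\E_{i,j\sim W}d_{ij}^2$ and invoking~\cref{lem:gcr_linear_combo} produces a set $\calT$ with
\[
|\calT|\ \le\ \frac{\Phi_{\mathrm{ent}}(\mu)}{\eps_1}\ +\ \frac{\Phi_{\mathrm{var}}(\mu)}{\Omega(\eps_2^2/(\eta k\delta))}\,,
\]
and then bounding $\Phi_{\mathrm{ent}}(\mu)\le\bigO{k\log(C\Delta k/(\eps\delta))}$ via~\cref{lem:weighted-rs-entropy} (the $\Delta$ enters because the grid covers $[-n\Delta,n\Delta]^k$) and $\Phi_{\mathrm{var}}(\mu)\le\E_{i\sim[n]}\pE_\mu\norm{x_i-\pE_\mu x_i}^2\le\bigO{\delta^{-10}}\E_{i,j\sim W}d_{ij}^2$ yields the claimed $\poly(k,\eps^{-1},\delta^{-1})\cdot\log(\Delta k/(\eps\delta))$ bound on $|\calT|$. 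The main obstacle is the density step of the previous paragraph: it is a one-line calculation, but it is the only place the regularity of $w$ is used in this lemma, and it is what forces the polynomial dependence on $1/\delta$; the remainder is careful bookkeeping to check that the several sources of $1/\delta$ (the normalization, the $1/\delta^{10}$ variance bound, and the factors inside the entropy potential) compose to a single $\poly(1/\delta)$-size conditioning set, and that the hypotheses of each invoked lemma survive the regularity reduction.
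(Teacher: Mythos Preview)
Your proposal is correct and follows the same high-level strategy as the paper: expand the objective, discard the one-variable terms, and show the Lipschitz and quadratic cross-terms are potential aligned so that \cref{lem:gcr_linear_combo} applies. The one genuine difference is in the treatment of the Lipschitz term. The paper does not invoke \cref{lem:lipschitz-potential-aligned} as a black box; instead it redoes the truncation-and-TV argument from \cref{lem:lipschitz-rounding-error} directly for the $W$-weighted sum, carrying the factor $w_{ij}/(\delta n)$ through the Cauchy--Schwarz/Pinsker chain and arriving at a bound in terms of $\E_{i\sim\psi}\E_{j\sim[n]}[H(\tilde x_j)-H(\tilde x_j\mid \tilde x_i)]$ (see the display chain culminating in the choice $C=O(1/(\eps\delta^5))$). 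Your route---absorbing $w_{ij}$ into the Lipschitz constants via $\E_{ij\sim W}g_{ij}=\tfrac1\delta\E_{ij\sim[n]}w_{ij}g_{ij}$ and then using $w_{ij}^2\le w_{ij}$ to bound $\E_{ij\sim[n]}L_{ij}^2\le O(\delta^{-1})\E_{ij\sim W}d_{ij}^2$---is cleaner and more modular, and it is exactly what the paper does for the quadratic term anyway. Both routes yield a $|\calT|$ bound of the form $k\log(\Delta k/(\eps\delta))\cdot\poly(1/\eps,1/\delta)$; the precise $\delta$-exponent differs (you get roughly $\delta^{-22}$ from the Lipschitz piece, the paper's computation gives $\delta^{-11}$), but neither matches the $O(k\log(k/\eps)/(\eps^2\delta))$ stated in the lemma---the proof in the paper, and the degree used in \cref{algo:weighted-raw-stress} and \cref{thm:regular-raw-stress}, both carry the $\log\Delta$ and the higher $\delta$-power, so the stated bound appears to be a typo.
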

\begin{proof}
    Note that we can write
    \[ \E_{i,j\sim W } (d_{ij} - \norm{x_i - x_j})^2  = \E_{i,j\sim W  } \left[d_{ij}^2 - 2 d_{ij} \norm{x_i - x_j} + \norm{x_i}^2 + \norm{x_j}^2 - 2 \langle x_i, x_j\rangle \right]\,.\]
    Note that out of these terms the only two which are different in expectation under $\mu_\calT$ versus $\mu_\calT^\otimes$ are $2 d_{ij} \norm{x_i - x_j}$ and $2 \langle x_i, x_j\rangle$ and therefore, 
    \begin{equation}
    \label{eqn:rounding-cost-on-weighted-raw-stress}
    \begin{split}
         & \left\vert \E_{\{\mu_\calT\}} \pE_{\mu_{\calT}} \E_{i,j \sim W } (d_{ij} - \norm{x_i - x_j})^2 - \pE_{\mu^{\otimes}_{\calT}} \E_{i,j \sim  W} (d_{ij} - \norm{x_i - x_j})^2 \right\vert \\
         & \leq   2 \underbrace{ \Abs{ \E_{\{\mu_\calT\}} \pE_{\mu_{\calT}}  \expecf{i,j \sim W}{  \Iprod{x_i , x_j } } - \pE_{\mu_{\calT}^\otimes} \expecf{i,j \sim W}{  \Iprod{x_i, x_j}}  } }_{\eqref{eqn:rounding-cost-on-weighted-raw-stress}.(1) }  \\
          & \quad  \quad + 2 \underbrace{ \Abs{  \E_{\{\mu_\calT\}} \pE_{\mu_{\calT}}  \expecf{i,j \sim W}{ d_{ij} \cdot \norm{x_i - x_j} } - \pE_{\mu_{\calT}^\otimes} \expecf{i,j \sim W}{ d_{ij} \cdot \norm{x_i - x_j} }   } }_{\eqref{eqn:rounding-cost-on-weighted-raw-stress}.(2)}  \, 
    \end{split}
    \end{equation}
    Focusing on term \eqref{eqn:rounding-cost-on-weighted-raw-stress}.(1), 
    \begin{equation*}
    \begin{split}
        \eqref{eqn:rounding-cost-on-weighted-raw-stress}.(1) & \leq  \expecf{i,j \sim W}{  \Abs{ \E_{\{\mu_\calT\}} \pE_{\mu_{\calT}} \Iprod{x_i , x_j}    -  \pE_{\mu_{\calT}^\otimes } \Iprod{x_i , x_j}  }   } \\
        & \leq \Paren{ \frac{1}{\delta} }  \expecf{i,j \sim [n]}{  \Abs{ \E_{\{\mu_\calT\}} \pE_{\mu_{\calT}} \Iprod{x_i , x_j}    -  \pE_{\mu_{\calT}^\otimes } \Iprod{x_i , x_j}  }   } 
    \end{split}
    \end{equation*}
    Following the proof of~\cref{lem:weighted-rs-gcr-error}, we have that that the function $\expecf{i,j \sim [n]}{ \Abs{\Iprod{x_i , x_j} } }$ is 
    \begin{equation*}
        \Paren{ \eps \cdot \E_{i,j \sim W } d_{ij}^2  ,  \hspace{0.1in}\Paren{\frac{\eps^2 \delta^{10}}{k }}\cdot  \E_{i,j \sim W } d_{ij}^2   }-\textrm{potential aligned}\,,
    \end{equation*}
    for the potential $\Phi = \expecf{i \sim [n]}{ \tr(\tilde{\Sigma}_i) }$.
    Next, for  term \eqref{eqn:rounding-cost-on-weighted-raw-stress}.(2) we need to modify the potential function for Lipschitz functions as follows: similar to the proof of \cref{lem:lipschitz-rounding-error}, let $\widetilde{x_i} = x_i E_i + (\pE_{\mu}[ x_i] ) \overline{E}_i$, where $E_i$ be the event that $\norm{x_i - \pE_\mu x_i} \leq C \cdot \sqrt{\pE_\mu \norm{x_i - \pE_\mu x_i}^2}$. Then,

    \begin{equation}
    \label{eqn:split-for-lipschitz-term-under-w}
    \begin{split}
        \eqref{eqn:rounding-cost-on-weighted-raw-stress}.(2) & \leq   \underbrace{ \left\vert \E_{\{\mu'\}} \E_{i,j \sim W} \left[\pE_{\mu^{'\otimes}} d_{ij}\cdot \norm{ \widetilde{x}_i-  \widetilde{x}_j } - \pE_{\mu'} d_{ij} \cdot \norm{ \widetilde{x}_i-  \widetilde{x}_j} \right] \right\vert }_{\eqref{eqn:split-for-lipschitz-term-under-w}.(1)} \\
        &  \qquad + \underbrace{ \left\vert \E_{\{\mu'\}} \E_{i,j \sim W } \pE_{\mu} \left[ \left( d_{ij}\norm{ \widetilde{x}_i-\widetilde{x}_j } - d_{ij} \norm{ x_i - x_j} \right) \right] \right\vert}_{\eqref{eqn:split-for-lipschitz-term-under-w}.(2)} \\
        &  \qquad + \underbrace{\left\vert \E_{\{\mu'\}} \E_{i,j \sim W} \pE_{\mu^{'\otimes}} \left[\left( d_{ij}\norm{ y_i- y_j } - d_{ij}\norm{ \widetilde{y}_i-\widetilde{y}_j}\right)\right] \right\vert}_{\eqref{eqn:split-for-lipschitz-term-under-w}.(3)} \,. 
    \end{split}
    \end{equation}
    Starting with \eqref{eqn:split-for-lipschitz-term-under-w}.(1), we have 
    \begin{equation}
    \label{eqn:yet-another-equation-to-bound}
        \begin{split}
            & \left\vert \E_{\{\mu'\}} \E_{i,j \sim W} \left[\pE_{\mu^{'\otimes}} d_{ij}\cdot \norm{ \widetilde{x}_i-  \widetilde{x}_j } - \pE_{\mu'} d_{ij} \cdot \norm{ \widetilde{x}_i-  \widetilde{x}_j} \right] \right\vert \\
            & \leq \E_{\{\mu'\}} \E_{i,j \sim W} \left[ \tv \Paren{ \mu'_{\{\widetilde{x_i}, \widetilde{x_j}\}}, \mu'_{\{\widetilde{x_i}\} \otimes \{ \widetilde{x_j} \}}} \cdot 2d_{ij} \cdot C \cdot \left(\sqrt{\pE_\mu \norm{x_i - \pE_\mu x_i}^2} + \sqrt{\pE_\mu \norm{x_j - \E x_j}^2} \right) \right] \\
        &\leq 4C \E_{\{\mu'\}} \E_{i \sim  W } \left[ \sqrt{\pE_\mu \norm{x_i - \pE_\mu x_i}^2} \cdot  \sum_{j \in [n]}  \Paren{ \frac{ w_{ij}\cdot d_{ij} }{\delta n} } \tv \Paren{ \mu'_{\{\widetilde{x_i}, \widetilde{x_j}\}}, \mu'_{\{\widetilde{x_i}\} \otimes \{ \widetilde{x_j} \}}}   \right] \\
        & \leq 4C \Paren{ \expecf{i \sim W}{    \pE_\mu \norm{x_i - \pE_\mu x_i}^2 } }^{\frac12} \cdot \underbrace{ \Paren{\E_{\{\mu'\}} \E_{i \sim  W } \Paren{ \sum_{j \in [n]}  \Paren{ \frac{ w_{ij}\cdot d_{ij} }{\delta n} } \tv \Paren{ \mu'_{\{\widetilde{x_i}, \widetilde{x_j}\}}, \mu'_{\{\widetilde{x_i}\} \otimes \{ \widetilde{x_j} \}}}}^2  }^{\frac12}}_{\eqref{eqn:yet-another-equation-to-bound}.(1)} 
        \end{split}
    \end{equation}
    By Cauchy-Schwarz followed by Pinsker's we have
    \begin{equation}
    \begin{split}
         \eqref{eqn:yet-another-equation-to-bound}.(1) & \leq   \Paren{ \expecf{i \sim W}{ \Paren{ \sum_{j \in [n]}  \Paren{ \frac{ w_{ij} }{\delta n}   } d_{i,j}^2  } \cdot  \Paren{ \E_{\{\mu'\}}   \sum_{j \in [n]} \Paren{ \frac{ w_{ij} }{\delta n}   }  \tv \Paren{ \mu'_{\{\widetilde{x_i}, \widetilde{x_j}\}}, \mu'_{\{\widetilde{x_i}\} \otimes \{ \widetilde{x_j} \}}}^2    } }   }^{1/2}  \\
         & \leq \Paren{ \expecf{i \sim W}{ \Paren{ \sum_{j \in [n]}  \Paren{ \frac{ w_{ij} }{\delta n}   } d_{i,j}^2  } \cdot  \Paren{ \E_{\{\mu'\}}   \sum_{j \in [n]} \Paren{ \frac{ w_{ij} }{\delta n}   }     \mathsf{I}(\mu'_{\{\widetilde{x_i}, \widetilde{x_j}\}}; \mu'_{\{\widetilde{x_i}\} \otimes \{ \widetilde{x_j} \}})    } }   }^{1/2} 
    \end{split}
    \end{equation}
    Let $\psi$ be the distribution where $i$ is sampled proportional to $\sum_{j \in [n]}  \Paren{ \frac{ w_{ij} }{\delta n}   } d_{i,j}^2$. Then, 
    \begin{equation*}
    \begin{split}
        \Paren{ \eqref{eqn:yet-another-equation-to-bound}.(1)}^2 & \leq \cdot \expecf{i,j \sim W}{ d_{i,j}^2 } \cdot  \E_{\Set{\mu'}} \expecf{i \sim \psi}{ \sum_{j \in [n]} \Paren{ \frac{ w_{ij} }{\delta n}   } \Paren{  H(\widetilde{x}_j) - H(\widetilde{x}_j \vert \widetilde{x}_i)  }    } \\
        & \leq \Paren{ \frac{1}{\delta} } \expecf{i,j \sim W}{ d_{i,j}^2 } \cdot \E_{\Set{\mu'}} \expecf{i \sim \psi}{  \expecf{j \sim [n]}{  H(\widetilde{x}_j) - H(\widetilde{x}_j \vert \widetilde{x}_i)  }    } \,. 
    \end{split}
    \end{equation*}
Substituting this back into \cref{eqn:yet-another-equation-to-bound}, we have
\begin{equation}
\begin{split}
    \eqref{eqn:split-for-lipschitz-term-under-w}.(1) & \leq \Paren{ \frac{4C}{\delta} } \Paren{ \expecf{i \sim W}{    \pE_\mu \norm{x_i - \pE_\mu x_i}^2 } }^{\frac12} \Paren{ \expecf{i,j \sim W}{ d_{i,j}^2 } \cdot \E_{\Set{\mu'}} \expecf{i \sim \psi}{  \expecf{j \sim [n]}{  H(\widetilde{x}_j) - H(\widetilde{x}_j \vert \widetilde{x}_i)  }    } }^{\frac12} \\
    & \leq \Paren{ \frac{4C}{\delta^6} } \cdot \expecf{i,j \sim W}{ d_{i,j}^2 }  \cdot \Paren{  \E_{\Set{\mu'}} \expecf{i \sim \psi}{  \expecf{j \sim [n]}{  H(\widetilde{x}_j) - H(\widetilde{x}_j \vert \widetilde{x}_i)  }    } }^{\frac12}
\end{split}
\end{equation}
Now, we observe that the proof of \cref{lem:tilde-vs-not-diff} remains unchanged if we switch to $i,j \sim W$, and we can conclude that 
\begin{equation*}
\begin{split}
   \eqref{eqn:split-for-lipschitz-term-under-w}.(2),  \eqref{eqn:split-for-lipschitz-term-under-w}.(3) & \leq \bigO{1/C} \left(\E_{i \sim W } \pE_\mu \norm{x_i - \pE_\mu x_i}^2\right)^{1/2} \cdot \left( \E_{i,j\sim W} d_{ij}^2\right)^{1/2} \\
   & \leq \bigO{1/(C\delta^5 ) } \expecf{i,j \sim W}{ d_{ij}^2 }  
\end{split}
\end{equation*}
Setting $C = \bigO{ 1/ (\eps \delta^5) }$, we can conclude that 
if $ \eqref{eqn:rounding-cost-on-weighted-raw-stress}.(2) > \eps \expecf{i,j \sim W}{ d_{i,j}^2 } $, the decrease in potential is at least $ \eps^2\delta^{11} $. Further, since the potential starts out at at most $\bigO{ k  \log( \Delta \cdot k/(\eps \delta))) }$ by~\cref{lem:weighted-rs-entropy} it is bounded by $\eps$ after $\bigO{  k \log( \Delta \cdot k/(\eps \delta))/(\eps^2 \delta^{11})  }$ rounds.  Since both our potentials are now aligned, we can conclude the proof.

\end{proof}

We can now prove \cref{thm:regular-raw-stress} similar to how we proved \cref{thm:raw-stress-main}, and note that the running time is dominated by computing a pseudo-distribution of degree $$\bigO{k^2 \log( \Delta \cdot k/(\eps \delta))/(\eps^2 \delta^{11})},$$ since each round of conditioning requires degree $k$.

\section{Entry-wise Low-Rank Approximation}
In this section, we present an additive approximation scheme for entry-wise rank-$1$ approximation.

\begin{theorem}[Entry-wise LRA]
\label{thm:lra}
Given a matrix $A \in \R^{n \times m}$, integer $k \in N$, even $p \in N$, and $0<\eps<1$, there exists an algorithm that runs in $n^{ \mathcal{O}_p(\log^2 (n)/\poly_p(\eps))}$ time and outputs matrices $\hat{u} \in \R^{n}, \hat{v} \in \R^{m}$ such that with probability at least $0.99$, 
\begin{equation*}
    \norm{ A - \hat{u} \hat{v}^\top }_p^p \leq \OPT_{\textrm{LRA}} + \eps \cdot \Norm{A}_p^p\,. 
\end{equation*}
\end{theorem}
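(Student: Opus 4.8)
\textbf{Proof proposal for \cref{thm:lra}.} The plan is to follow the template of \cref{thm:raw-stress-main}: discretize, solve a Sherali--Adams / Sum-of-Squares relaxation of the discretized objective (enriched with a few magnitude constraints), and round it by conditioning on a small set of variables selected via the potential-alignment framework of \cref{lem:gcr_linear_combo}. Two features of the even-$p$, rank-one problem drive the argument. First, $\norm{A-uv^\top}_p^p=\sum_{ij}(A_{ij}-u_iv_j)^p$ is a degree-$2p$ polynomial, and expanding $(A_{ij}-u_iv_j)^p=A_{ij}^p+\sum_{t=1}^p\binom pt A_{ij}^{p-t}(-u_iv_j)^t$ shows the $t=0$ term is a constant (incurring no rounding error), so the only rounding cost comes from the ``bilinear-in-powers'' terms $\sum_{t=1}^p\binom pt A_{ij}^{p-t}(-1)^t u_i^t v_j^t$. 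Second, $uv^\top=(cu)(v/c)^\top$ is scale invariant, so (away from the degenerate case $u=0$ or $v=0$, where the objective is $\norm A_p^p$) every optimal product has a \emph{balanced} representation $\norm{u^\star}_p=\norm{v^\star}_p$; combined with $\norm{u^\star v^{\star\top}}_p\le\norm A_p+\norm{A-u^\star v^{\star\top}}_p$ and $\norm{uv^\top}_p^p=\norm u_p^p\norm v_p^p$, this gives $\norm{u^\star}_p^p=\norm{v^\star}_p^p\le C_p\norm A_p^{p/2}$, hence a uniform entrywise bound $|u^\star_i|,|v^\star_j|\le R:=\mathcal O_p(\norm A_p^{1/2})$ and, via $\ell_q$-interpolation, control on all even moments $\tfrac1n\sum_i(u^\star_i)^{2t}$. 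The triangle inequality and moment comparisons used here --- $\norm{X+Y}_p^p\le 2^{p-1}(\norm X_p^p+\norm Y_p^p)$ and power-mean inequalities between $\ell_q$ norms --- rest on convexity of $z\mapsto z^p$ over $\R$ and admit low-degree SoS proofs precisely because $p$ is even.

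Concretely I would discretize each coordinate onto a grid $\Sigma\subset[-R,R]$ with spacing chosen so that rounding $(u^\star,v^\star)$ onto $\Sigma$ perturbs the objective by at most $\mathcal O_p(\eps)\norm A_p^p$ (a perturbation argument like \cref{lem:discretization-rs}, using the pointwise bound $(a+\xi)^p\le(1+\eps)a^p+C_{p,\eps}\xi^p$, again SoS for even $p$, together with $\sum_{ij}|u^\star_iv^\star_j|^p=\norm{u^\star}_p^p\norm{v^\star}_p^p\le C_p\norm A_p^p$); this makes $|\Sigma|=\poly(n,m,B,\eps^{-1})$. I then compute (via \cref{fact:eff-pseudo-distribution}) an optimal SoS pseudo-distribution $\mu$ over $\Sigma^n\times\Sigma^m$ for $\pE_\mu\sum_{ij}(A_{ij}-u_iv_j)^p$, subject to the balance constraint $\sum_i u_i^p=\sum_j v_j^p$ (SoS-expressible since $u_i^p=(u_i^{p/2})^2$) and the even-moment constraints $\sum_i u_i^{2t}\le C_{p,t}\,n\,\norm A_p^{t}$ together with their $v$-analogues, for $t\le p$ --- all satisfied, up to $(1+\eps)$ factors, by the point mass at the balanced discretized optimum, so $\mu$ is feasible with value at most $\OPT_{\textrm{LRA}}+\mathcal O_p(\eps)\norm A_p^p$.

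For the rounding, each term $\E_{ij}\binom pt A_{ij}^{p-t}(-1)^t u_i^t v_j^t$ is a bilinear form in the bounded ``power variables'' $u_i^t,v_j^t$; since conditioning on $u_i$ refines conditioning on $u_i^t$, \cref{fact:deg-2-potential-aligned} applies with coefficient matrix $\{\binom pt A_{ij}^{p-t}\}$ and variance potential $\Phi_t=\E_i\Var(u_i^t)$ (plus its $v$-side), whose initial value is bounded by the imposed moment constraints; the more Lipschitz-friendly low-$t$ pieces may instead be handled with the entropy potential as in \cref{lem:lipschitz-potential-aligned}. Combining all of these potentials through \cref{lem:gcr_linear_combo} produces a set $\calT$ such that, after conditioning on $x_\calT$, independently sampling $\hat u_i,\hat v_j$ from the $1$-local marginals of $\mu_\calT$ has expected objective at most $\OPT_{\textrm{LRA}}+\mathcal O_p(\eps)\norm A_p^p$; since every rounded solution has value at least $\OPT_{\textrm{LRA}}$, Markov's inequality together with returning the best solution over all $\calT$ yields the probability-$0.99$ guarantee, and the running time is $(nmB)^{\mathcal O(|\calT|)}$, which the bookkeeping should pin at $(nmB)^{\mathcal O_p(\poly(\log n)\cdot\poly_p(\eps^{-1}))}$, matching the claimed bound. (For rank $k>1$ one uses $k$ scalars per point, costing a factor $k$ in the degree.)

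The main obstacle is exactly this last bookkeeping: showing the combined potentials decrease fast enough that only $\poly(\log n)$ rounds of conditioning are needed. The delicate case is the top-degree term $\E_{ij}(u_iv_j)^p=(\tfrac1n\sum_i u_i^p)(\tfrac1m\sum_j v_j^p)$, whose rounding error is (up to normalization) $\Cov_\mu(\sum_i u_i^p,\sum_j v_j^p)$, a covariance of \emph{global} scalars: it is uncontrolled without the balance constraint and, even with it, is governed by $\Var_\mu(\sum_i u_i^p)$, a quantity that can be as large as $\Theta(n)\cdot\tfrac1n\sum_{ij}A_{ij}^p$ on spiky instances, so naively it would force polynomially many conditioning rounds; taming it --- for instance by additionally guessing and pinning (over a $\poly(\log n/\eps)$-size grid of values) the $O(p)$ global statistics $\sum_i u_i^t$, which makes them pseudo-deterministic and kills those covariances --- is where the bulk of the technical work lives, and is the reason the running time is quasi-polynomial rather than polynomial as in \cref{thm:raw-stress-main}. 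Underlying everything is the need for SoS certificates of even-degree convexity and power-mean inequalities; their failure for non-even $p$ is the same obstruction (absence of $p$-stable random variables) that blocks the $p<2$ PTAS from extending, and finding a scale-adaptive discretization that shrinks the potentials --- the analogue of the anchoring step used for \textsf{$k$-EMV} --- is the natural route to a PTAS, which we leave open.
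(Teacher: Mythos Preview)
Your high-level plan matches the paper: discretize, enumerate the value of the global scalar $\sum_i u_i^p$ (and its $v$-counterpart) and impose it as a hard equality constraint so that the top-degree term $\sum_{ij}u_i^pv_j^p=s_u\cdot s_v$ is \emph{deterministic} under $\mu$ and hence incurs zero rounding error, then handle the remaining terms $\sum_{ij}A_{ij}^q(u_iv_j)^{p-q}$ for $1\le q\le p-1$ via potential-aligned functions and \cref{lem:gcr_linear_combo}. Your suggestion to ``guess and pin $\sum_i u_i^t$'' is exactly what the paper does (for $t=p$), and this is what kills the global-covariance obstruction you identified.

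There are two places where your sketch diverges from what actually works. First, invoking \cref{fact:deg-2-potential-aligned} directly on the ``power variables'' $u_i^t,v_j^t$ with coefficient matrix $\{A_{ij}^{p-t}\}$ does not give the right error scale: that fact yields error proportional to $\E_{ij}A_{ij}^{2(p-t)}$, not to $\sum_{ij}A_{ij}^p$, and for $t\neq p/2$ these differ by polynomial-in-$n$ factors. The paper instead proves two new potential-alignment lemmas, splitting on whether the degree $p-q$ in $u,v$ is above or below $p/2$. For $q<p/2$ it uses a \emph{weighted} entropy potential $\Phi=\sum_j H(\tilde v_j)\,\pE_\mu v_j^p$ (with truncation $\tilde v_j=v_j\cdot\Ind[|v_j|\le C(\pE v_j^p)^{1/p}]$), and for $q\ge p/2$ a \emph{nonlinear} variance potential $\sum_j\Var(v_j^{p-q})^{p/(2(p-q))}$; in both cases repeated H\"older with exponents $q/p,(p-q)/p$ is what produces error in units of $(\sum_{ij}A_{ij}^p)^{1/2}$ rather than $\sum_{ij}A_{ij}^{2(p-t)}$. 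Neither \cref{fact:deg-2-potential-aligned} nor \cref{lem:lipschitz-potential-aligned} is invoked as-is.

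Second, you misattribute the source of the quasi-polynomial running time. Once $s_u,s_v$ are pinned, the top-degree term costs nothing; the $\log^2 n$ comes from the \emph{intermediate} terms handled by the weighted entropy potential. There the initial potential is at most $\Phi_{\max}=O_p(\log n)\cdot(\sum_{ij}A_{ij}^p)^{1/2}$ (since $H(\tilde v_j)\le O(\log n)$ after discretization) while each conditioning step decreases it by only $\Omega_p(\eps/\log n)\cdot(\sum_{ij}A_{ij}^p)^{1/2}$, the extra $1/\log n$ arising because one step of the argument upper-bounds a power of the potential drop by $\Phi_{\max}$ times the drop itself. This yields $|\calT|=O_p(\log^2 n/\poly_p(\eps))$ rather than the $\poly(\eps^{-1})$ you would get from the variance-potential terms alone.
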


The algorithm we give for this problem is based on the global correlation rounding algorithm and thus the key technical innovation is to show that the objective function can be decomposed into a sum of potential-aligned functions.

\subsection{Rounding Polynomial Growth Functions}
\label{sec:poly-growth-potential}
In this section, we will show that functions of the form $\sum_{i \in [n], j \in [m]} A_{ij}^q \langle u_i, v_j \rangle^{p-q}$ are potential aligned for an appropriately chosen potential function and a certain class of pseudo-expectations, where $p,q$ are integers. Note that depending on the degree $q$ we will need a different potential function -- we first prove this in the case where $q < p/2$ and then for the case where $q \in [p/2, p-1]$.

\subsubsection{Rounding High Degree Polynomial Growth Functions}

\begin{lemma}[Polynomial-growth functions are potential aligned]
\label{lem:poly-potential-aligned}
    Let $0 < \epsilon \leq 1$ and let $\mu$ be some pseudo-distribution. Let $p, q \in \mathbb{N}^{>0}$ such that $q < p/2$ and let $\{A_{ij}\}_{i \in [n], j \in [m]} \in \mathbb{R}$ be a collection of real coefficients. Furthermore, let $\mu$ satisfy the following inequalities:
    \begin{enumerate}
        \item $\sum_{i \in [n]} \E_{\mu} u_i^p \leq 2^{p} \sqrt{\sum_{i \in [n], j \in [m]} A_{ij}^p}$
        \item $\sum_{j \in [m]} \E_{\mu} v_j^p \leq 2^{p} \sqrt{\sum_{i \in [n], j \in [m]} A_{ij}^p}$
        \item $\sum_{i \in [n], j \in [m]} \left(\E_{\mu} u_i^p\right) \cdot \left(\E_{\mu} v_j^p\right) \leq 2^{p+1} \sum_{i \in [n], j \in [m]} A_{ij}^p$
    \end{enumerate}
    Further, let $E_j^{v}$ be the event that $\abs{v_j} \leq C ( \pE_{\mu} v_j^p )^{1/p}$ Suppose $\mu'$ is a random pseudo-distribution such that $\E \{\mu'\} = \{\mu\}$ and everything in the support of $\mu'$ also satisfies the inequalities above. Furthermore, $\Phi$ be a function such that the inequalities above imply that
    \[ \Phi(\mu') \leq \Phi_{\max} \leq 2^{p-1} \cdot \log n \cdot \sqrt{\sum_{i, j \in [m]} A_{ij}^p}\,,\]
    where $\Phi = \sum_{j \in [m]} H(\widetilde{v}_j) \E_\mu v_j^p$, and $\tilde{v}_j = v_j \cdot E_j^{v}$. Then $\sum_{i \in [n], j \in [m]} A_{ij}^q (u_i v_j)^{p-q}$ is
    \[\left(1/\poly_{p,q}(\epsilon) \left(\sum_{i \in [n], j \in [m]} A_{ij}^p\right)^{1/2}, \Omega_p(\epsilon/\log n) \left(\sum_{i \in [n], j \in [m]} A_{ij}^p\right)^{1/2}\right)\]
    potential aligned for $\mu$ and $\Phi$.
\end{lemma}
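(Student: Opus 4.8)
The plan is to follow the two‑step template from the Lipschitz case (\cref{lem:lipschitz-potential-aligned}): first bound the independent‑rounding error of $f(u,v)=\sum_{i,j}A_{ij}^q(u_iv_j)^{p-q}$ by a weighted average of total‑variation distances between joint and product pseudo‑distributions of the \emph{truncated} variables, and then argue that whenever this quantity is large, conditioning on a suitable $u_a$ drops the potential $\Phi=\sum_j H(\widetilde{v}_j)\,\pE_\mu v_j^p$. The thresholds defining $E_i^u,E_j^v$ and the weights $\pE_\mu v_j^p$ are frozen at $\mu=\mu_0$; this makes $\Phi$ a genuine pseudo‑distribution potential, since conditioning on a single variable $x$ changes it by $\sum_j\pE_\mu v_j^p\cdot I(x;\widetilde{v}_j)\ge 0$ in expectation, which is exactly the decrease we will extract.

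\textbf{Step 1 (rounding error to TV).} Write $\widetilde{u}_i=u_iE_i^u$, $\widetilde{v}_j=v_jE_j^v$ and decompose, for each $\mu'$ in the support of $\{\mu'\}$,
\[
\pE_{\mu'}(u_iv_j)^{p-q}-\pE_{\mu'}u_i^{p-q}\,\pE_{\mu'}v_j^{p-q}=\Cov_{\mu'}\bigl(\widetilde{u}_i^{\,p-q},\widetilde{v}_j^{\,p-q}\bigr)+(\text{terms supported on }\overline{E_i^u}\cup\overline{E_j^v}).
\]
For the tail terms the relevant estimate is that $|u_iv_j|^{p-q}$ is large but rare: Markov gives $\Pr_\mu(\overline{E_i^u})\le C^{-p}$, and since (by hypothesis) the moment bounds (1)--(3) hold pointwise for \emph{every} $\mu'$ in the support, one may use $\pE_{\mu'}u_i^p\le 2^p(\sum_{ij}A_{ij}^p)^{1/2}$ to absorb the excess exponent $2(p-q)-p>0$ into powers of $(\sum_{ij}A_{ij}^p)^{1/2}$; together with H\"older (applied through constraint (3) for the $A^q(\pE u^p)(\pE v^p)$‑type sums) this bounds the total tail contribution by $\poly_{p,q}(1/C)\cdot(\sum_{ij}A_{ij}^p)^{1/2}$. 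The covariance term is at most $\|\widetilde{u}_i^{\,p-q}\widetilde{v}_j^{\,p-q}\|_\infty\cdot 2\,\tv\bigl(\mu'_{\{\widetilde{u}_i,\widetilde{v}_j\}},\mu'_{\{\widetilde{u}_i\}}\otimes\mu'_{\{\widetilde{v}_j\}}\bigr)$, and by the definitions of $E_i^u,E_j^v$ the $\ell_\infty$ norm is at most $C^{2(p-q)}(\pE_\mu u_i^p)^{(p-q)/p}(\pE_\mu v_j^p)^{(p-q)/p}$.

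\textbf{Step 2 (TV to potential drop).} Two applications of Cauchy--Schwarz peel off the TV factors; the remaining weights $w_{ij}:=A_{ij}^q(\pE_\mu u_i^p)^{(p-q)/p}(\pE_\mu v_j^p)^{(p-q)/p}$ satisfy $\sum_{ij}w_{ij}=\mathcal{O}_p(\sum_{ij}A_{ij}^p)$ by H\"older together with constraint (3) (which is precisely calibrated for this), so the covariance contribution is at most $C^{2(p-q)}(\sum_{ij}A_{ij}^p)^{1/2}\bigl(\sum_{ij}w_{ij}\,\E_{\{\mu'\}}\tv_{ij}^2\bigr)^{1/2}$. Pinsker gives $\tv_{ij}^2\le\tfrac12 I_{\mu'}(\widetilde{u}_i;\widetilde{v}_j)\le\tfrac12 I_{\mu'}(u_i;\widetilde{v}_j)$, and bounding $I_{\mu'}(u_i;\widetilde{v}_j)\le H(\widetilde{v}_j)\le\mathcal{O}(\log n)$ on the indices $j$ with small $\pE_\mu v_j^p$ lets us swap the weights $w_{ij}$ for $\pE_\mu v_j^p$ at the cost of a $\log n$ factor, reducing the bound to a multiple of $\sum_i\E_{\{\mu'\}}\bigl[\sum_j\pE_\mu v_j^p\,I_{\mu'}(u_i;\widetilde{v}_j)\bigr]$ --- exactly the total expected drop of $\Phi$ over all conditioning choices $u_i$. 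Choosing $C$ a suitable power of $1/\epsilon$ so the Step‑1 tail term is at most half of $\delta:=\poly_{p,q}(\epsilon)^{-1}(\sum_{ij}A_{ij}^p)^{1/2}$, we conclude: if the rounding error exceeds $\delta$ then $\sum_i\E_{\{\mu'\}}\E_{u_i}[\Phi(\mu')-\Phi(\mu'\mid u_i)]\ge\Omega_p(\epsilon/\log n)(\sum_{ij}A_{ij}^p)^{1/2}$, and averaging over $i$ exhibits a single index witnessing the required decrease $\delta'$.

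\textbf{Main obstacle.} The crux is the exponent accounting in the $q<p/2$ regime: because $2(p-q)>p$, both the $\ell_\infty$ bound on $\widetilde{u}_i^{\,p-q}\widetilde{v}_j^{\,p-q}$ and the tail estimate ``overspend'' the per‑coordinate $L^p$ budgets (1),(2), so the argument closes only thanks to (i) the product constraint (3), which forces $\sum_{ij}w_{ij}=\mathcal{O}_p(\sum_{ij}A_{ij}^p)$, and (ii) the pointwise validity of (1)--(3) on $\supp(\{\mu'\})$, which converts the surplus exponents into bounded powers of $(\sum_{ij}A_{ij}^p)^{1/2}$. This is exactly why the case $q<p/2$ must be separated from $q\in[p/2,p-1]$, where H\"older closes directly without the product constraint. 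The remaining ingredients --- Markov/Chebyshev tail bounds, Pinsker, and the potential‑aligned bookkeeping of \cref{def:potential-aligned} --- are routine.
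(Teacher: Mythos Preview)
Your Step~1 is fine and matches the paper's \cref{lem:poly-tilde-vs-not}/\cref{lem:polynomial-rounding-error}: truncate, bound the tail by H\"older with exponents $q/p,(p-q)/p$, and bound the bulk via the $\ell_\infty$ bound on $\widetilde{u}_i^{\,p-q}\widetilde{v}_j^{\,p-q}$ times a TV factor.

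The gap is in Step~2. After your single Cauchy--Schwarz you are left with
\[
\sum_{i,j} w_{ij}\,\tv_{ij}^2,\qquad w_{ij}=A_{ij}^q(\pE_\mu u_i^p)^{(p-q)/p}(\pE_\mu v_j^p)^{(p-q)/p},
\]
and via Pinsker with $\sum_{i,j}w_{ij}I_{\mu'}(u_i;\widetilde{v}_j)$. You then claim that ``bounding $I\le H(\widetilde{v}_j)\le O(\log n)$ on indices $j$ with small $\pE_\mu v_j^p$ lets us swap the weights $w_{ij}$ for $\pE_\mu v_j^p$.'' This does not follow. The ratio $w_{ij}/\pE_\mu v_j^p$ contains both a factor $(\pE_\mu v_j^p)^{-q/p}$, which blows up as $\pE_\mu v_j^p\to 0$, \emph{and} the $i,j$-coupled factor $A_{ij}^q$ that no case split on $j$ alone removes. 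A threshold on $\pE_\mu v_j^p$ does not decouple $A_{ij}^q$ from the $j$-sum, so you cannot reduce to $\sum_j \pE_\mu v_j^p\,I_{\mu'}(u_i;\widetilde{v}_j)$, which is what the potential drop actually is.

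The paper does not use Cauchy--Schwarz here. It applies H\"older twice with exponents $q/p$ and $(p-q)/p$ (inside \cref{lem:polynomial-rounding-error}): the first peels off $(\sum_i\pE u_i^p)^{(p-q)/p}$, and the second, applied to the inner $j$-sum, pairs $|A_{ij}|^q$ with exponent $p/q$ against $\tv_{ij}(\pE v_j^p)^{(p-q)/p}$ with exponent $p/(p-q)$, yielding the clean weight
\[
\Bigl(\sum_j A_{ij}^p\Bigr)\Bigl(\sum_j \tv_{ij}^{\,p/(p-q)}\,\pE_\mu v_j^p\Bigr)^{(p-q)/q}.
\]
The $A_{ij}$'s are now fully absorbed into the $i$-marginal (defining the sampling distribution $\psi$), and the $j$-sum already carries the correct weight $\pE_\mu v_j^p$. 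After Pinsker this becomes $\sum_j I_{ij}^{\,p/(2(p-q))}\pE_\mu v_j^p$, and a \emph{third} H\"older with exponents $2(p-q)/p$ and $2(p-q)/(p-2q)$ linearizes the mutual information, giving $(\sum_j I_{ij}\,\pE_\mu v_j^p)^{p/(2q)}(\sum_j\pE_\mu v_j^p)^{(p-2q)/(2q)}$; the $\Phi_{\max}$ bound then absorbs the super-linear power of the potential drop. This last H\"older is exactly where $q<p/2$ is used (so that $p-2q>0$). Your ``Main obstacle'' section attributes the $q<p/2$ restriction to constraint~(3), but in fact $\sum_{ij}w_{ij}=O_p(\sum A_{ij}^p)$ holds for all $q<p$; the restriction enters only through this third H\"older, which your outline never invokes.
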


Before proving this lemma, we will first show that if a pseudoexpectation satisfies a certain approximate pairwise independence property that the difference in expectation over $\mu$ versus $\mu^{\otimes}$ is bounded.
\begin{lemma}
\label{lem:polynomial-rounding-error}
    Let $0 < \epsilon \leq 1$ and let $\mu$ be some pseudo-distribution. Suppose $\mu'$ is a random psuedodistribution such that $\E \mu' = \mu$. Furthermore, let $E_i^u$ be the event that $\abs{u_i} \leq C \left( \E_{\mu} u_i^p \right)^{1/p}$. Furthermore, let $\widetilde{u}_i = u_i E_i^u$. Let $E_j^v$ and $\widetilde{v}_j$ be defined analogously for all $\{v_j\}_{j \in [m]}$. Then, if we have that,
    \[ \E_{\{\mu'\}} \sum_{i \in [n]} \left(\sum_{j \in [m]} A_{ij}^p\right) \left( \sum_{j \in [m]} \tv \Paren{ \mu_{\{\widetilde{u_i}, \widetilde{v_j}\}}, \mu_{\{\widetilde{u_i}\} \otimes \{ \widetilde{v_j} \}}}^{\frac{p}{p-q} } \left( \E_{\mu} v_j^p\right) \right)^{\frac{p-q}{q}} \leq \epsilon \left(\sum_{i \in [n], j \in [m]} A_{ij}^p \right)^{\frac{p+q}{2q}}\,,\]
    then 
    \begin{equation*}
    \begin{split}
        &\left\vert \E_{\{\mu'\}} \sum_{i \in [n], j \in [m]} \pE_{\mu'} A_{ij}^q (u_i v_j)^{p-q} - \sum_{i \in [n], j \in [m]} \pE_{\mu'} A_{ij}^q (u_i v_j)^{p-q} \right\vert \\
        &\leq C^{2(p-q)}\epsilon^{q/p} \left( \sum_{i \in [n]}  \E_{\mu} u_i^p \right)^{\frac {p-q} p} \cdot \left(\sum_{i \in [n], j \in [m]} A_{ij}^p \right)^{\frac{p+q}{2p}} \\
        & \qquad
        + 1/C^q \cdot \left(\sum_{i \in [n], j \in [m]} A_{ij}^p \right)^{\frac q p} \left( \sum_{i \in [n], j \in [m]} \pE_{\mu} \left(u_i v_j\right)^p \right)^{\frac{p-q}{p}}\,.
    \end{split}
    \end{equation*} 
\end{lemma}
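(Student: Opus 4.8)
My plan is to mirror the structure of the proof of~\cref{lem:lipschitz-rounding-error}: split the rounding error into a ``main term'' controlled by the approximate-pairwise-independence (total variation) quantity in the hypothesis, and a ``truncation error'' controlled by the tail probabilities $\Pr_\mu[\overline{E_i^u}],\Pr_\mu[\overline{E_j^v}]$. Write $F=\sum_{ij}A_{ij}^q(u_iv_j)^{p-q}$ and $\widetilde{F}=\sum_{ij}A_{ij}^q(\widetilde{u}_i\widetilde{v}_j)^{p-q}$, and note that $\widetilde{u}_i\widetilde{v}_j=u_iv_j\cdot\mathbb{1}[E_i^u\wedge E_j^v]$, so $F-\widetilde{F}=\sum_{ij}A_{ij}^q(u_iv_j)^{p-q}\,\mathbb{1}[\overline{E_i^u}\vee\overline{E_j^v}]$. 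Using $\E_{\{\mu'\}}\pE_{\mu'}=\pE_\mu$ and the triangle inequality,
\begin{align*}
\Abs{\E_{\{\mu'\}}\Brac{\pE_{\mu'}F-\pE_{(\mu')^{\otimes}}F}}
&\le \Abs{\E_{\{\mu'\}}\Brac{\pE_{\mu'}\widetilde{F}-\pE_{(\mu')^{\otimes}}\widetilde{F}}}\\
&\quad+\Abs{\pE_\mu[F-\widetilde{F}]}+\Abs{\E_{\{\mu'\}}\pE_{(\mu')^{\otimes}}[F-\widetilde{F}]}\,,
\end{align*}
and I would bound the first summand (the main term) and the last two (the truncation error) separately.

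\textbf{Truncation error.} I would first bound $\Abs{\pE_\mu[F-\widetilde{F}]}\le\sum_{ij}\abs{A_{ij}}^q\,\pE_\mu\Brac{\abs{u_iv_j}^{p-q}\big(\mathbb{1}[\overline{E_i^u}]+\mathbb{1}[\overline{E_j^v}]\big)}$. For each $(i,j)$, applying H\"older's inequality with respect to the genuine two-variable local distribution $\mu_{ij}$ with conjugate exponents $\tfrac{p}{p-q}$ and $\tfrac pq$ --- and using that $p$ is even, so $\abs{u_iv_j}^p=(u_iv_j)^p$ is a polynomial --- yields $\pE_\mu\Brac{\abs{u_iv_j}^{p-q}\mathbb{1}[\overline{E_i^u}]}\le\Paren{\pE_\mu(u_iv_j)^p}^{(p-q)/p}\Paren{\Pr_\mu[\overline{E_i^u}]}^{q/p}$. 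Markov's inequality applied to $u_i^p$ (a one-variable function, hence evaluated against a genuine marginal) gives $\Pr_\mu[\overline{E_i^u}]\le C^{-p}$, so $\Paren{\Pr_\mu[\overline{E_i^u}]}^{q/p}\le C^{-q}$. Summing over $(i,j)$ and applying H\"older once more, this time with exponents $\tfrac pq$ and $\tfrac p{p-q}$, collapses everything to $C^{-q}\Paren{\sum_{ij}A_{ij}^p}^{q/p}\Paren{\sum_{ij}\pE_\mu(u_iv_j)^p}^{(p-q)/p}$, which is the second term of the statement; the $(\mu')^{\otimes}$ version is identical after replacing local distributions by product marginals, with the extra factor $\sum_{ij}(\pE_\mu u_i^p)(\pE_\mu v_j^p)$ absorbed via hypothesis~(3).

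\textbf{Main term.} Since $\abs{\widetilde{u}_i}\le C(\pE_\mu u_i^p)^{1/p}$ and $\abs{\widetilde{v}_j}\le C(\pE_\mu v_j^p)^{1/p}$ hold with probability one, the monomial $\widetilde{u}_i^{p-q}\widetilde{v}_j^{p-q}$ is bounded, deterministically, by $C^{2(p-q)}(\pE_\mu u_i^p)^{(p-q)/p}(\pE_\mu v_j^p)^{(p-q)/p}$ in absolute value. Hence, for any $\mu'$ in the support, the estimate $\abs{\E_P\phi-\E_Q\phi}\le\norm{\phi}_\infty\,\tv(P,Q)$ --- applied with $\phi=\widetilde{u}_i^{p-q}\widetilde{v}_j^{p-q}$, $P$ the joint law of $(\widetilde{u}_i,\widetilde{v}_j)$ under $\mu'$, and $Q$ the product of its marginals --- shows that $\Abs{\pE_{\mu'}[\widetilde{u}_i^{p-q}\widetilde{v}_j^{p-q}]-\pE_{\mu'}[\widetilde{u}_i^{p-q}]\,\pE_{\mu'}[\widetilde{v}_j^{p-q}]}$ is at most $2C^{2(p-q)}(\pE_\mu u_i^p)^{(p-q)/p}(\pE_\mu v_j^p)^{(p-q)/p}\,\tv\Paren{\mu'_{\{\widetilde{u}_i,\widetilde{v}_j\}},\mu'_{\{\widetilde{u}_i\}\otimes\{\widetilde{v}_j\}}}$. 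Summing over $(i,j)$ and writing $\tv_{ij}$ for this total variation distance, the main term is at most $2C^{2(p-q)}\,\E_{\{\mu'\}}\sum_{ij}\abs{A_{ij}}^q(\pE_\mu u_i^p)^{(p-q)/p}(\pE_\mu v_j^p)^{(p-q)/p}\,\tv_{ij}$. I would then reduce this to the hypothesis by a chain of H\"older inequalities: for fixed $i$, H\"older over $j$ with exponents $\tfrac pq$ on $\abs{A_{ij}}^q$ and $\tfrac p{p-q}$ on $(\pE_\mu v_j^p)^{(p-q)/p}\tv_{ij}$ extracts $\Paren{\sum_j A_{ij}^p}^{q/p}$ and $\Paren{\sum_j\pE_\mu(v_j^p)\,\tv_{ij}^{p/(p-q)}}^{(p-q)/p}$; then H\"older over $i$ with exponents $\tfrac p{p-q}$ on $(\pE_\mu u_i^p)^{(p-q)/p}$ and $\tfrac pq$ on the product of the two preceding factors extracts $\Paren{\sum_i\pE_\mu u_i^p}^{(p-q)/p}$ together with the $\tfrac qp$-th power of precisely the left-hand side of the hypothesis; finally Jensen's inequality (concavity of $t\mapsto t^{q/p}$, as $q<p$) moves $\E_{\{\mu'\}}$ inside that power, and invoking the hypothesis gives the first term of the statement, $C^{2(p-q)}\epsilon^{q/p}\Paren{\sum_i\pE_\mu u_i^p}^{(p-q)/p}\Paren{\sum_{ij}A_{ij}^p}^{(p+q)/(2p)}$.

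\textbf{Main obstacle.} The conceptual content here is modest; the delicate part is purely the bookkeeping of H\"older exponents --- at each step one must verify that the chosen exponents are conjugate and that raising each factor to them produces \emph{exactly} $\sum_{ij}A_{ij}^p$, $\sum_i\pE_\mu u_i^p$, and the total-variation functional of the hypothesis with the particular powers $\tfrac pq$, $\tfrac p{p-q}$, $\tfrac{p+q}{2q}$ (indeed, the shape of the hypothesis is essentially reverse-engineered so that this chain closes). A secondary point to watch is polynomiality: every pseudo-expectation that occurs must be of a polynomial of degree at most that of $\mu$, and the only non-polynomial quantities --- the truncated monomials $\widetilde{u}_i^{p-q}\widetilde{v}_j^{p-q}$ and the tail indicators --- appear only inside one- or two-variable local distributions, on which pseudo-expectation coincides with ordinary expectation and H\"older, Markov and Jensen apply verbatim. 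Finally, $E_i^u$ and $E_j^v$ are defined through $\mu$ rather than $\mu'$; I would handle this by consistently using $\E_{\{\mu'\}}\pE_{\mu'}=\pE_\mu$ to transfer tail probabilities and second-moment bounds back to $\mu$.
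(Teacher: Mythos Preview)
Your proposal is correct and follows essentially the same approach as the paper: the same triangle-inequality split into a truncated main term (bounded via the $\|\cdot\|_\infty\cdot\tv$ estimate and a chain of H\"older inequalities that collapses to the hypothesis) plus two truncation-error terms (bounded via H\"older and the Markov tail bound $\Pr_\mu[\overline{E_i^u}]\le C^{-p}$, exactly as in the paper's \cref{lem:poly-tilde-vs-not}). The only cosmetic differences are that you apply H\"older over $j$ then $i$ while the paper does $i$ then $j$, and you use Jensen to pull $\E_{\{\mu'\}}$ inside the $q/p$-th power where the paper applies H\"older jointly over $i$ and the randomness in $\{\mu'\}$; both orderings land on the same expression.
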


In the proof of~\cref{lem:polynomial-rounding-error} we will also need the following lemma, which controls the difference between $u_i, \widetilde{u}_i$ and $v_j, \widetilde{v}_j$ under $\mu$ and $\mu^{\otimes}$.

\begin{lemma}
\label{lem:poly-tilde-vs-not}
    Let $0 < \epsilon \leq 1$ and let $\mu$ be some pseudo-distribution. Suppose $\mu'$ is a random pseudo-distribution such that $\E \mu' = \mu$. Furthermore, let $E_i^u$ be the event that $\abs{u_i} \leq C \left( \E_{\mu} u_i^p \right)^{1/p}$. Furthermore, let $\widetilde{u}_i = u_i E_i^u$. Let $E_j^v$ and $\widetilde{v}_j$ be defined analogously for all $\{v_j\}_{j \in [m]}$. Then,
    \[ \left\vert \E_{\{\mu'\}} \sum_{i \in [n], j \in [m]} \pE_{\mu'} \left[ \left( A_{ij}^q \left(\widetilde{u}_i \widetilde{v}_j\right)^{p-q} - A_{ij}^q \left(u_i v_j\right)^{p-q} \right) \right] \right\vert \leq \frac{1}{C^q}  \left(\sum_{i \in [n], j \in [m]} A_{ij}^p \right)^{\frac q p} \left( \sum_{i \in [n], j \in [m]} \pE_{\mu} \left(u_i v_j\right)^p \right)^{\frac {p-q} p}\,,\]
    and
    \[ \left\vert \E_{\{\mu'\}} \sum_{i \in [n], j \in [m]} \pE_{(\mu')^{\otimes}} \left[ \left( A_{ij}^q \left(\widetilde{u}_i \widetilde{v}_j\right)^{p-q} - A_{ij}^q \left(u_i v_j\right)^{p-q} \right) \right] \right\vert \leq \frac{1}{C^q} \left(\sum_{i \in [n], j \in [m]} A_{ij}^p \right)^{\frac q p} \left( \sum_{i \in [n], j \in [m]} \pE_{\mu} \left(u_i v_j\right)^p \right)^{\frac{p-q} p}\,.\]
\end{lemma}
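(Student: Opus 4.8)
The plan rests on a pointwise algebraic identity together with Markov's inequality. Since $E_i^u$ and $E_j^v$ are $\{0,1\}$-valued and $p-q\geq 1$, we have $\widetilde u_i^{\,p-q}=u_i^{p-q}\,\mathbb{1}[E_i^u]$ and therefore
\[
(\widetilde u_i\widetilde v_j)^{p-q}=(u_iv_j)^{p-q}\,\mathbb{1}[E_i^u\wedge E_j^v]=(u_iv_j)^{p-q}\bigl(1-\mathbb{1}[\overline{E_i^u}\vee\overline{E_j^v}]\bigr),
\]
so that, after taking absolute values and using $\mathbb{1}[\overline{E_i^u}\vee\overline{E_j^v}]\leq\mathbb{1}[\overline{E_i^u}]+\mathbb{1}[\overline{E_j^v}]$, the quantity to be bounded is at most twice a term of the form $\sum_{i,j}|A_{ij}|^q\,\pE\bigl[|u_iv_j|^{p-q}\mathbb{1}[\overline{E_i^u}]\bigr]$, with $\pE$ either $\pE_{\mu'}$ (first bound) or $\pE_{(\mu')^\otimes}$ (second bound). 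The point of the truncation is that $\overline{E_i^u}=\{|u_i|^p>C^p\,\pE_\mu u_i^p\}$ is a rare event: Markov's inequality gives $\Pr_\mu(\overline{E_i^u})\leq C^{-p}$, and since $\overline{E_i^u}$ is an event in the alphabet of $u_i$, $\Pr_{\mu'_i}(\overline{E_i^u})$ is an honest probability against the $1$-local marginal $\mu'_i$ with $\E_{\{\mu'\}}\Pr_{\mu'_i}(\overline{E_i^u})=\Pr_\mu(\overline{E_i^u})\leq C^{-p}$.

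For the first bound I would average over $\{\mu'\}$ first, using $\E_{\{\mu'\}}\pE_{\mu'}=\pE_\mu$ to reduce to a single pseudo-expectation over $\mu$. Then Hölder inside the genuine $2$-local distribution $\mu_{ij}$ (exponents $\tfrac{p}{p-q}$ and $\tfrac pq$) gives
\[
\pE_\mu\bigl[|u_iv_j|^{p-q}\mathbb{1}[\overline{E_i^u}]\bigr]\leq\bigl(\pE_\mu(u_iv_j)^p\bigr)^{\frac{p-q}{p}}\Pr_\mu(\overline{E_i^u})^{\frac qp}\leq C^{-q}\bigl(\pE_\mu(u_iv_j)^p\bigr)^{\frac{p-q}{p}},
\]
and a final Hölder over the index set $(i,j)$ (exponents $\tfrac pq$ and $\tfrac p{p-q}$) pulls out $\bigl(\sum_{i,j}A_{ij}^p\bigr)^{q/p}$ and leaves $\bigl(\sum_{i,j}\pE_\mu(u_iv_j)^p\bigr)^{(p-q)/p}$, which is the stated bound (the $2$ from the union bound $\Pr_\mu(\overline{E_i^u}\vee\overline{E_j^v})\leq 2C^{-p}$ being absorbed into the constant).

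The second bound is the delicate step, and I expect it to be the main obstacle: $\E_{\{\mu'\}}$ does not commute with the product $(\mu')^\otimes$, so the clean identity used above is unavailable and the three averagings — over the alphabet via $(\mu')^\otimes$, over indices $(i,j)$, and over the randomness $\{\mu'\}$ — must be interleaved carefully. The plan is to Hölder first inside $(\mu')^\otimes$ (which, restricted to $(u_i,v_j)$, is an honest product measure) to get $\pE_{(\mu')^\otimes}[|u_iv_j|^{p-q}\mathbb{1}[\overline{E_i^u}]]\leq\bigl((\E_{\mu'_i}u_i^p)(\E_{\mu'_j}v_j^p)\bigr)^{\frac{p-q}{p}}\Pr_{\mu'_i}(\overline{E_i^u})^{\frac qp}$; then Hölder over $(i,j)$, pairing $|A_{ij}|^q$ with $\Pr_{\mu'_i}(\overline{E_i^u})^{q/p}$, so that raising to the conjugate exponent produces $\sum_{i,j}|A_{ij}|^p\Pr_{\mu'_i}(\overline{E_i^u})$ alongside the bare linear quantity $\sum_{i,j}\pE_{(\mu')^\otimes}(u_iv_j)^p$; and only then Hölder over $\{\mu'\}$, again with exponents $\tfrac pq$ and $\tfrac p{p-q}$. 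At this final step $\E_{\{\mu'\}}\Pr_{\mu'_i}(\overline{E_i^u})\leq C^{-p}$ supplies the saving factor, while $\E_{\{\mu'\}}\sum_{i,j}\pE_{(\mu')^\otimes}(u_iv_j)^p$ remains a first-power moment quantity — which, in the settings where this lemma is applied, the normalization conditions on $\mu'$ (conditions 1--3 of \cref{lem:poly-potential-aligned}, valid throughout $\supp(\mu')$) bound by $O(\sum_{i,j}A_{ij}^p)$, exactly matching the $\mu$-side estimate. The crux is to perform these three Hölder steps in precisely this order: if one takes $\E_{\{\mu'\}}$ before reducing the probability factor to its first power, a super-linear power of $\E_{\mu'_i}u_i^p$ survives and cannot be controlled by $\pE_\mu u_i^p$. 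With the ordering fixed, both bounds emerge in the claimed form $C^{-q}\bigl(\sum_{i,j}A_{ij}^p\bigr)^{q/p}\bigl(\sum_{i,j}\pE_\mu(u_iv_j)^p\bigr)^{(p-q)/p}$.
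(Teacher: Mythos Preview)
Your treatment of the first inequality matches the paper's: split on the indicator, observe the difference vanishes on $E_i^u\wedge E_j^v$, bound the complement by $2|A_{ij}|^q|u_iv_j|^{p-q}\mathbb{1}[\overline{E}_i^u\vee\overline{E}_j^v]$ via the contraction $|\widetilde u_i\widetilde v_j|\le|u_iv_j|$, then a single H\"older with exponents $\tfrac pq,\tfrac p{p-q}$ over the combined measure, followed by $\E_{\{\mu'\}}\pE_{\mu'}=\pE_\mu$ and Markov on the tail events.

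For the second inequality the paper does \emph{not} treat it as a delicate separate case: it says the same sequence of inequalities applies verbatim with $\pE_{(\mu')^\otimes}$ in place of $\pE_{\mu'}$. Your three nested H\"older steps land at exactly the same two factors as a single H\"older over the combined measure would, so the extra staging buys nothing. Concretely, both routes arrive at
\[
\Bigl(\E_{\{\mu'\}}\sum_{i,j}A_{ij}^p\,\pE_{(\mu')^\otimes}\mathbb{1}[\overline{E}_i^u\vee\overline{E}_j^v]\Bigr)^{q/p}\Bigl(\E_{\{\mu'\}}\sum_{i,j}\pE_{(\mu')^\otimes}(u_iv_j)^p\Bigr)^{(p-q)/p}.
\]
The first factor is handled exactly as before, since $\mathbb{1}[\overline{E}_i^u]$ and $\mathbb{1}[\overline{E}_j^v]$ are one-variable functions and hence $\pE_{(\mu')^\otimes}\mathbb{1}[\overline{E}_i^u]=\pE_{\mu'}\mathbb{1}[\overline{E}_i^u]$.

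Where you are right to pause is the second factor: $\E_{\{\mu'\}}\sum_{i,j}(\pE_{\mu'}u_i^p)(\pE_{\mu'}v_j^p)$ is not, under the bare hypotheses of the lemma, equal to or bounded by $\sum_{i,j}\pE_\mu(u_iv_j)^p$. The paper does not spell this out. However, your proposed fix --- invoking conditions 1--3 of \cref{lem:poly-potential-aligned} --- yields a bound of $O_p(\sum_{i,j}A_{ij}^p)$, not $\sum_{i,j}\pE_\mu(u_iv_j)^p$, so it does not establish the inequality exactly as stated either; your claim that this ``exactly match[es] the $\mu$-side estimate'' is not correct. In the paper's actual application the LP constraints \eqref{eq:constraint1}--\eqref{eq:constraint2} force $\sum_i\pE_{\mu'}u_i^p=s_u$ and $\sum_j\pE_{\mu'}v_j^p=s_v$ for every $\mu'$ in the support, whence $\sum_{i,j}\pE_{(\mu')^\otimes}(u_iv_j)^p=s_us_v=\sum_{i,j}\pE_\mu(u_iv_j)^p$ and the issue evaporates; but as a standalone proof of the lemma as stated, both your argument and the paper's one-line reduction leave this point open.
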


We will first use~\cref{lem:poly-tilde-vs-not} to prove ~\cref{lem:polynomial-rounding-error} and then return to its proof.

\begin{proof}
    Let $E_i^u$ be the event that $\abs{u_i} \leq C \left( \E_{\mu} u_i^p \right)^{1/p}$. Furthermore, let $\widetilde{u}_i = u_i E_i^u$. Let $E_j^v$ and $\widetilde{v}_j$ be defined analogously for all $\{v_j\}_{j \in [m]}$. Let $f_{ij}(u_i, v_j) = A_{ij}^q \left(u_i v_j \right)^{p-q}$.  Note that we have by Triangle Inequality that
    \begin{equation}
    \label{eq:split-tilde-lra}
    \begin{split}
        & \left\vert \E_{\{\mu'\}} \sum_{i \in [n], j \in [m]} \pE_{\mu'} f_{ij}(u_i, v_j) - \sum_{i \in [n], j \in [m]} \pE_{(\mu')^\otimes} f_{ij}(u_i, v_j) \right\vert \\
        & \quad \leq \left\vert \E_{\{\mu'\}} \sum_{i \in [n], j \in [m]} \pE_{\mu'} f_{ij}(\widetilde{u}_i, \widetilde{v}_j) - \pE_{\{\mu\}} \sum_{i \in [n], j \in [m]} \pE_{(\mu')^{\otimes}} f_{ij}(\widetilde{u}_i, \widetilde{v}_j) \right\vert \\
        & \qquad + \left\vert \E_{\{\mu'\}} \sum_{i \in [n], j \in [m]} \pE_{\mu'} f_{ij}(u_i, v_j) - f_{ij}(\widetilde{u}_i, \widetilde{v}_j)\right\vert + \left\vert \E_{\{\mu'\}} \sum_{i \in [n], j \in [m]} \pE_{(\mu')^{\otimes}} f_{ij}(u_i, v_j) - f_{ij}(\widetilde{u}_i, \widetilde{v}_j)\right\vert
    \end{split}
    \end{equation}
    We will focus on bounding the first term, bounds on the second and third will follow via~\cref{lem:poly-tilde-vs-not}. Note that $\widetilde{u}_i$ has magnitude at most $C \left( \E_{\mu} u_i^p\right)^{1/p}$ and similarly $\widetilde{v}_j$ always has magnitude at most $C \left( \E_{\mu} v_j^p\right)^{1/p}$. Thus, we have with probability $1$ that
    \begin{align*}
        \vert f_{ij}(\widetilde{u}_i, \widetilde{v}_j) \vert &= \vert A_{ij}^{q} \left(\widetilde{u}_i \widetilde{v}_j \right)^{p-q} \vert \\
        &\leq \vert A_{ij}^q \vert \cdot C^{2(p-q)} \cdot \left( \E_{\mu} u_i^p\right)^{(p-q)/p} \left( \E_{\mu} v_j^p\right)^{(p-q)/p} \\
    \end{align*}
    Note that since there exists a coupling where $\mu'$ and $(\mu')^{\otimes}$ are equal on $\widetilde{u}_i, \widetilde{v}_j$ except with probability $\tv \Paren{ \mu'_{\{\widetilde{u_i}, \widetilde{v_j}\}}, \mu'_{\{\widetilde{u_i}\} \otimes \{ \widetilde{v_j} \}}}$, we have that
    \begin{align*}
        &\left\vert \E_{\{\mu'\}} \sum_{i \in [n], j \in [m]} \pE_{\mu'} f_{ij}(\widetilde{u}_i, \widetilde{v}_j) -  \sum_{i \in [n], j \in [m]} \pE_{(\mu')^{\otimes}} f_{ij}(\widetilde{u}_i, \widetilde{v}_j) \right\vert \\
        &\quad\leq \left\vert \E_{\{\mu'\}} \sum_{i \in [n], j \in [m]} \tv \Paren{ \mu'_{\{\widetilde{u_i}, \widetilde{v_j}\}}, \mu'_{\{\widetilde{u_i}\} \otimes \{ \widetilde{v_j} \}}} \cdot \vert A_{ij}^q \vert \cdot C^{2(p-q)} \cdot \left( \E_{\mu} u_i^p\right)^{(p-q)/p} \left( \E_{\mu} v_j^p\right)^{(p-q)/p}\right\vert \\
        &\quad\leq C^{2(p-q)} \left\vert \E_{\{\mu'\}} \sum_{i \in [n]} \left( \E_{\mu'} u_i^p\right)^{(p-q)/p} \sum_{j \in [m]} \tv \Paren{ \mu'_{\{\widetilde{u_i}, \widetilde{v_j}\}}, \mu'_{\{\widetilde{u_i}\} \otimes \{ \widetilde{v_j} \}}} \cdot \vert A_{ij}^q \vert \cdot \left( \E_{\mu} v_j^p\right)^{(p-q)/p} \right\vert\,.
    \end{align*}
    Applying H\"older with $q/p$ and $(p-q)/p$ we have that
    \begin{align*}
        &\left\vert \E_{\{\mu'\}} \sum_{i \in [n], j \in [m]} \pE_{\mu} f_{ij}(\widetilde{u}_i, \widetilde{v}_j) -  \sum_{i \in [n], j \in [m]} \pE_{\mu^{\otimes}} f_{ij}(\widetilde{u}_i, \widetilde{v}_j) \right\vert \\
        &\quad\leq C^{2(p-q)} \left(\E_{\{\mu'\}} \sum_{i \in [n]}  \E_{\mu} u_i^p \right)^{\frac{p-q}{p}}   \left( \E_{\{\mu'\}}\sum_{i \in [n]} \left(\sum_{j \in [m]} \tv \Paren{ \mu'_{\{\widetilde{u_i}, \widetilde{v_j}\}}, \mu'_{\{\widetilde{u_i}\} \otimes \{ \widetilde{v_j} \}}}   \vert A_{ij}^q \vert  \left( \E_{\mu} v_j^p\right)^{\frac{p-q}{p}}\right)^{\frac{p}{q}}\right)^{\frac{q}{p}}\,.
    \end{align*}
    We now focus on bounding $\E_{\{\mu'\}} \left(\sum_{j \in [m]} \tv \Paren{ \mu'_{\{\widetilde{u_i}, \widetilde{v_j}\}}, \mu'_{\{\widetilde{u_i}\} \otimes \{ \widetilde{v_j} \}}} \cdot \vert A_{ij}^q \vert \cdot \left( \E_{\mu} v_j^p\right)^{\frac{p-q}{p}}\right)^{\frac{p}{q}}$. 
    
    Note that by once again applying H\"older with $q/p$ and $(p-q)/p$ we have that 
    \begin{align*}
        & \E_{\{\mu'\}} \left(\sum_{j \in [m]} \tv \Paren{ \mu'_{\{\widetilde{u_i}, \widetilde{v_j}\}}, \mu'_{\{\widetilde{u_i}\} \otimes \{ \widetilde{v_j} \}}} \cdot \vert A_{ij}^q \vert \cdot \left( \E_{\mu} v_j^p\right)^{(p-q)/p}\right)^{p/q} \\
        &\quad\leq \E_{\{\mu'\}} \left(\sum_{j \in [m]} A_{ij}^p\right) \left( \sum_{j \in [m]} \tv \Paren{ \mu'_{\{\widetilde{u_i}, \widetilde{v_j}\}}, \mu'_{\{\widetilde{u_i}\} \otimes \{ \widetilde{v_j} \}}}^{p/(p-q)} \left( \E_{\mu} v_j^p\right) \right)^{(p-q)/q}
    \end{align*}
    However, by our assumption, we have that this is at most $\epsilon \cdot \left(\sum_{i \in [n], j \in [m]} A_{ij}^p \right)^{\frac{p+q}{2q}} $ and combining this all we have that
    \begin{align*}
        \left\vert \E_{\{\mu'\}} \sum_{i \in [n], j \in [m]} \pE_{\mu} f_{ij}(\widetilde{u}_i, \widetilde{v}_j) -  \sum_{i \in [n], j \in [m]} \pE_{\mu^{\otimes}} f_{ij}(\widetilde{u}_i, \widetilde{v}_j) \right\vert &\leq C^{2(p-q)} \left( \sum_{i \in [n]}  \E_{\mu} u_i^p \right)^{\frac{p-q}{p}}  \left(\epsilon \left(\sum_{i \in [n], j \in [m]} A_{ij}^p \right)^{\frac{p+q}{2q}}\right)^{\frac{q}{p}}\,.
    \end{align*}
    Furthermore, we can bound the second and third term in~\cref{eq:split-tilde-lra} via~\cref{lem:poly-tilde-vs-not}, which completes the proof.
\end{proof}

We now prove~\cref{lem:poly-tilde-vs-not}.
\begin{proof}[Proof of~\cref{lem:poly-tilde-vs-not}]
    We will bound 
    \[ \left\vert \E_{\{\mu'\}} \sum_{i \in [n], j \in [m]} \pE_{\mu'} \left[ \left( A_{ij}^q \left(\widetilde{u}_i \widetilde{v}_j\right)^{p-q} - A_{ij}^q \left(u_i v_j\right)^{p-q} \right) \right] \right\vert\,,\] 
    and the expression with $(\mu')^{\otimes}$ can be bounded via the same sequence of inequalities. We break up this term as follows:
    \begin{equation}
    \label{eqn:introduce-indicators-for-E-poly-lra}
    \begin{split}
        &\left\vert \E_{\{\mu'\}} \sum_{i \in [n], j \in [m]} \pE_{\mu'} \left[ \left( A_{ij}^q \left(\widetilde{u}_i \widetilde{v}_j\right)^{p-q} - A_{ij}^q \left(u_i v_j\right)^{p-q} \right) \right] \right\vert \\
        &\leq \underbrace{ \left\vert \E_{\{\mu'\}} \sum_{i \in [n], j \in [m]} \pE_{\mu'} \left[ \mathbb{1}[E_i^u \wedge E_j^v]\left( A_{ij}^q \left(\widetilde{u}_i \widetilde{v}_j\right)^{p-q} - A_{ij}^q \left(u_i v_j\right)^{p-q}\right) \right] \right\vert}_{\eqref{eqn:introduce-indicators-for-E-poly-lra}.(1) } \\
        &\hspace{0.2pt}\quad+ \underbrace{ \left\vert \E_{\{\mu'\}} \sum_{i \in [n], j \in [m]} \pE_{\mu'} \left[ \mathbb{1}[\overline{E}_i^u \vee \overline{E}_j^v]\left( A_{ij}^q \left(\widetilde{u}_i \widetilde{v}_j\right)^{p-q} - A_{ij}^q \left(u_i v_j\right)^{p-q} \right) \right] \right\vert}_{\eqref{eqn:introduce-indicators-for-E-poly-lra}.(2)} \,.
    \end{split}
    \end{equation}
    Note that when $E_i^u$ and $E_j^v$ both occur then $\widetilde{u}_i = u_i$ and $\widetilde{v}_j = v_j$, so thus $\eqref{eqn:introduce-indicators-for-E-poly-lra}.(1) =0$, and it remains to bound term \eqref{eqn:introduce-indicators-for-E-poly-lra}.(2). Note that via Triangle Inequality we have that this term is at most
    \begin{equation}
    \label{eqn:tail-events-lra}
    \begin{split}
        \E_{\{\mu'\}}  & \sum_{i \in [n], j \in [m]} \pE_{\mu'} \left[ \mathbb{1}[\overline{E}_i^u \vee \overline{E}_j^v]\left( \vert A_{ij} \vert ^q \abs{\widetilde{u}_i \widetilde{v}_j}^{p-q} - \vert A_{ij} \vert^q \abs{u_i v_j}^{p-q} \right) \right] \\
        & \leq 2 \underbrace{ \E_{\{\mu'\}} \sum_{i \in [n], j \in [m]} \pE_{\mu'} \left[ \mathbb{1}[\overline{E}_i^u \vee \overline{E}_j^v]\left( \vert A_{ij} \vert^q  \abs{u_i v_j}^{p-q} \right) \right] }_{\eqref{eqn:tail-events-lra}.(1)} \,,
    \end{split}
    \end{equation}
    since $\widetilde{u}_i, \widetilde{v}_j$ always contract values towards $0$ compared to $u_i, v_j$.
    Furthermore, by H\"older we have that
    \begin{equation*}
        \begin{split}
            \eqref{eqn:tail-events-lra}.(1) & \leq \left(\E_{\{\mu'\}} \sum_{i \in [n], j \in [m]} A_{ij}^p \pE_{\mu'} \mathbb{1}[\overline{E}_i^u \vee \overline{E}_j^v] \right)^{\frac{q}{p}} \left(\E_{\{\mu'\}} \sum_{i \in [n], j \in [m]} \pE_{\mu'} \left(u_i v_j\right)^p \right)^{\frac{p-q}{p}}\\
            & \leq \left(\sum_{i \in [n], j \in [m]} A_{ij}^p \pE_{\mu} \mathbb{1}[\overline{E}_i^u \vee \overline{E}_j^v] \right)^{q/p} \left( \sum_{i \in [n], j \in [m]} \pE_{\mu} \left(u_i v_j\right)^p \right)^{(p-q)/p}\, ,
        \end{split}
    \end{equation*}
    where the last inequality follows since  $\E_{\{\mu'\}} \pE_{\mu'} = \pE_{\mu}$.
    Note that we have that $\mathbb{1}[\overline{E}_i^u \vee \overline{E}_j^v] \leq \mathbb{1}[\overline{E}_i^u] + \mathbb{1}[\overline{E}_j^v]$ and for all $i,j$ we have that $\Pr[\overline{E}_i^u], \Pr[\overline{E}_j^v] \leq 1/C^p$. Thus, we have that
    \[ \eqref{eqn:tail-events-lra}.(1) \leq 1/C^q \cdot \left(\sum_{i \in [n], j \in [m]} A_{ij}^p \right)^{q/p} \left( \sum_{i \in [n], j \in [m]} \pE_{\mu} \left(u_i v_j\right)^p \right)^{(p-q)/p} \,.\]

\end{proof}

Now we are finally able to use~\cref{lem:polynomial-rounding-error} to show that $\sum_{i \in [n], j \in [m]} A_{ij}^q \left(u_i v_j\right)^{p-q}$ is \emph{potential aligned}.

\begin{proof}[Proof of~\cref{lem:poly-potential-aligned}]
    Note that by~\cref{lem:polynomial-rounding-error} (setting $C = 1/\poly(\epsilon)$) and using our bounds on $\sum_{i \in [n]} \E_{\mu} u_i^p, \sum_{i \in [n], j \in [m]} \left(\E_{\mu} u_i^p\right) \cdot \left(\E_{\mu} v_j^p\right)$ from the lemma assumptions we know that if the difference in expectation under $\mu$ versus $\mu^{\otimes}$ is large then we must have that
    \[ \E_{\{\mu'\}} \sum_{i \in [n]} \left(\sum_{j \in [m]} A_{ij}^p\right) \left( \sum_{j \in [m]} \tv \Paren{ \mu_{\{\widetilde{u_i}, \widetilde{v_j}\}}, \mu_{\{\widetilde{u_i}\} \otimes \{ \widetilde{v_j} \}}}^{\frac{p}{p-q} } \left( \E_{\mu} v_j^p\right) \right)^{\frac{p-q}{q}} \geq \epsilon \left(\sum_{i \in [n], j \in [m]} A_{ij}^p \right)^{\frac{p+q}{2q}}\,. \]
    Consider a distribution $\psi$ which samples $i$ with probability proportional to $\left(\sum_{j \in [m]} A_{ij}^p\right)$. Then we have by rewriting the above inequality that 
    \[ \E_{\{\mu'\}} \left(\sum_{i \in [n], j \in [m]} A_{ij}^p\right) \left[\E_{i \sim \psi} \left( \sum_{j \in [m]} \tv \Paren{ \mu'_{\{\widetilde{u_i}, \widetilde{v_j}\}}, \mu'_{\{\widetilde{u_i}\} \otimes \{ \widetilde{v_j} \}}}^{\frac{p}{p-q} } \left( \E_{\mu} v_j^p\right) \right)^{\frac{p-q}{p}}\right] \geq \epsilon \left(\sum_{i \in [n], j \in [m]} A_{ij}^p \right)^{\frac{p+q}{2q}} \,. \]
    Furthermore, by Pinskers, we know that
    \[ \E_{\{\mu'\}} \left[\E_{i \sim \psi} \left( \sum_{j \in [m]} \left( H(\widetilde{v}_j) - H(\widetilde{v}_j \vert \widetilde{u}_i)\right)^{p/(2(p-q))} \left( \E_{\mu} v_j^p\right) \right)^{(p-q)/q}\right] \geq \epsilon \left(\sum_{i \in [n], j \in [m]} A_{ij}^p \right)^{(p-q)/(2q)}\,.\]
    Note that by rearranging the inner sum and applying H\"older on the inner sum over $j$ (with $p/(2(p-q))$ and $(p-2q)/(2(p-q))$) we have that %
    \begin{align*}
        &\left( \sum_{j \in [m]} \left( H(\widetilde{v}_j) - H(\widetilde{v}_j \vert \widetilde{u}_i)\right)^{p/(2(p-q))} \left( \E_{\mu} v_j^p\right) \right)^{(p-q)/q} \\
        &\leq \left( \sum_{j \in [m]} \left( H(\widetilde{v}_j) - H(\widetilde{v}_j \vert \widetilde{u}_i)\right) \left(\E_{\mu} v_j^p\right) \right)^{p/(2q)} \left(\sum_{j \in [m]} \E_{\mu} v_j^p\right)^{(p-2q)/(2q)}
    \end{align*}
    Thus, we have concluded that
    \begin{align*}
        \epsilon \left(\sum_{i \in [n], j \in [m]} A_{ij}^p \right)^{\frac{p-q}{2q}} \leq \E_{\{\mu'\}} \left[\E_{i \sim \psi} \left( \sum_{j \in [m]} \left( H(\widetilde{v}_j) - H(\widetilde{v}_j \vert \widetilde{u}_i)\right) \left(\E_{\mu} v_j^p\right) \right)^{\frac{p}{2q}}\right] \cdot \left(\sum_{j \in [m]} \E_{\mu} v_j^p\right)^{\frac{p-2q}{2q}}
    \end{align*}
    We can now observe that the decrease in potential is at most the maximum value of the potential, or $\Phi_{\max}$. Thus, we can also conclude that
        \begin{align*}
        \epsilon \left(\sum_{i \in [n], j \in [m]} A_{ij}^p \right)^{\frac{p-q}{2q}} \leq \E_{\{\mu'\}} \left[\E_{i \sim \psi} \left( \sum_{j \in [m]} \left( H(\widetilde{v}_j) - H(\widetilde{v}_j \vert \widetilde{u}_i)\right) \left(\E_{\mu} v_j^p\right) \right)\right] \cdot \Phi_{\max}^{\frac{p-2q}{2q}} \cdot \left(\sum_{j \in [m]} \E_{\mu} v_j^p\right)^{\frac{p-2q}{2q}}
    \end{align*}
    Note that since
    \[ \Phi_{\max} \leq \log n \cdot  O_p\left(  \sum_{i \in [n], j \in [m]} A_{ij}^p\right)^{1/2}\]
    and 
    \[ \sum_{j \in [m]} \E_{\mu} v_j^p \leq O_p\left( \sum_{i \in [n], j \in [m]} A_{ij}^p\right)^{1/2} \]
    we have that
    \[ \E_{\{\mu'\}} \left[\E_{i \sim \psi} \left( \sum_{j \in [m]} \left( H(\widetilde{v}_j) - H(\widetilde{v}_j \vert \widetilde{u}_i)\right) \left(\E_{\mu} v_j^p\right) \right)\right] \geq \Omega_p(\epsilon/\log n) \left( \sum_{i \in [n], j \in [m]} A_{ij}^p\right)^{1/2}\,,\]
    and thus our potential function decreases by $\Omega_p(\epsilon) \left( \sum_{i \in [n], j \in [m]} A_{ij}^p\right)^{1/2}$ in expectation over the choice of $i$ and thus there is a fixed $i$ which also achieves this decrease. 
\end{proof}

\subsubsection{Rounding Low Degree Polynomial Growth Functions}

\begin{lemma}
\label{lem:low-deg-poly-aligned}
    Let $0 < \epsilon \leq 1$ and let $\mu$ be some pseudo-distribution. Let $p, q \in \mathbb{N}^{>0}$ such that $q \geq p/2$ and let $\{A_{ij}\}_{i \in [n], j \in [m]} \in \mathbb{R}$ be a collection of real coefficients. Furthermore, let $\mu$ satisfy:
    \[\sum_{i \in [n]} \pE_{\mu} u_i^p \leq \sqrt{\sum_{i \in [n], j \in [m]} A_{ij}^p}
    ,.\]
    Suppose $\mu'$ is a random psuedodistribution such that $\E_{\{\mu'\}} \mu' = \mu$ and everything in the support of $\mu'$ also satisfy 
    \[\sum_{i \in [n]} \pE_{\mu'} u_i^p \leq \sqrt{\sum_{i \in [n], j \in [m]} A_{ij}^p}
    ,.\]
    Then $\sum_{i \in [n], j \in [m]} A_{ij}^q (u_i v_j)^{p-q}$ is 
    \[(\epsilon \sum_{i \in [n], j \in [m]} A_{ij}^p, \Omega_p(\poly_{p,q} (\epsilon)) \left(\sum_{i \in [n], j \in [m]} A_{ij}^p\right)^{1/2})\text{-potential aligned} \]
    for $\sum_{j \in [m]} \Var(v_j^{p-q})^{p/(2(p-q))}$ and $\mu$.
\end{lemma}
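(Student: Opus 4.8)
Since the indeterminates $u_i,v_j$ are scalars and $q\ge p/2$, writing $\alpha:=p-q\in\{1,\dots,p/2\}$ we have $(u_iv_j)^{p-q}=u_i^{\alpha}v_j^{\alpha}$, so the target function is the \emph{bilinear} form $F:=\sum_{i\in[n],\,j\in[m]}A_{ij}^{q}\,u_i^{\alpha}v_j^{\alpha}$ in the lifted indeterminates $a_i:=u_i^{\alpha}$, $b_j:=v_j^{\alpha}$. The plan is to replay the degree-$2$ argument of \cref{fact:deg-2-potential-aligned}: write the rounding error as $\bigl|\E_{\{\mu'\}}\sum_{ij}A_{ij}^{q}\,\widetilde{\Cov}_{\mu'}(a_i,b_j)\bigr|$, bound each covariance through the variance-reduction inequality \cref{fact:var-reduction}, and deduce that a large rounding error forces a large expected drop, after conditioning on one indeterminate, of a potential living on the $v$-side. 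The new ingredient is the \emph{choice of potential}: one cannot use $\sum_j\Var(v_j^{\alpha})$ as in \cref{fact:deg-2-potential-aligned}, because when $\alpha<p/2$ neither $\sum_i\Var(u_i^{\alpha})$ nor $\sum_j\Var(v_j^{\alpha})$ is bounded by $\bigl(\sum_{ij}A_{ij}^{p}\bigr)^{1/2}$ --- only the $p$-th moments are (here and below we use, besides the stated bound on $\sum_i\pE_\mu u_i^{p}$, the corresponding $v$-side bound $\sum_j\pE_\mu v_j^{p}\le\bigl(\sum_{ij}A_{ij}^{p}\bigr)^{1/2}$, which holds for the pseudo-distributions produced by the algorithm). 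Instead take the \emph{reweighted} potential $\Phi:=\sum_{j\in[m]}\Var(v_j^{\alpha})^{c}$ with $c:=\tfrac{p}{2\alpha}\ge 1$; since $2\alpha c=p$, convexity of $t\mapsto t^{c}$ applied to $v_j^{2\alpha}$, together with $\Var(v_j^{\alpha})\le\pE_\mu v_j^{2\alpha}$, gives
\[ \Phi(\mu)\;\le\;\sum_{j}\bigl(\pE_\mu v_j^{2\alpha}\bigr)^{c}\;\le\;\sum_{j}\pE_\mu v_j^{p}\;\le\;\Bigl(\textstyle\sum_{ij}A_{ij}^{p}\Bigr)^{1/2}. \]

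I would first check that $\Phi$ is a pseudo-distribution potential in the sense of \cref{def:pseudo-distribution-potential}: non-negativity is clear since pseudo-variances are non-negative and $c>0$; the initial bound is the display above; and the non-increasing property under conditioning on any single indeterminate reduces, when $c=1$ (i.e.\ $q=p/2$), to the law of total pseudo-variance, and in general follows by combining the law of total pseudo-variance with the elementary convexity bound $a^{c}-\overline{b}^{\,c}\le c\,a^{c-1}(a-\overline{b})$ for $0\le\overline{b}\le a$.

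Next I would run the rounding-error estimate. Starting from $\sum_{ij}|A_{ij}|^{q}\bigl|\widetilde{\Cov}_{\mu'}(u_i^{\alpha},v_j^{\alpha})\bigr|$, apply \cref{fact:var-reduction} as $\widetilde{\Cov}(u_i^{\alpha},v_j^{\alpha})^{2}\le 4\,\widetilde{\Var}(u_i^{\alpha})\bigl(\widetilde{\Var}(v_j^{\alpha})-\E_{u_i}\widetilde{\Var}(v_j^{\alpha}\mid u_i)\bigr)$, and then (i) peel off the $v$-side with Cauchy--Schwarz over $j$, pulling out $\bigl(\sum_jA_{ij}^{2q}\bigr)^{1/2}$; (ii) handle the $u$-side with H\"older at the conjugate pair $\bigl(\tfrac{p}{\alpha},\tfrac{p}{q}\bigr)$, converting sums of $(\pE_\mu u_i^{2\alpha})^{1/2}$-type terms, via $(\pE_\mu u_i^{2\alpha})^{p/(2\alpha)}\le\pE_\mu u_i^{p}$, into the controlled $\sum_i\pE_\mu u_i^{p}\le\bigl(\sum_{ij}A_{ij}^{p}\bigr)^{1/2}$; and (iii) convert the plain variance drop $\widetilde{\Var}(v_j^{\alpha})-\E_{u_i}\widetilde{\Var}(v_j^{\alpha}\mid u_i)$ into the drop of the $c$-th power appearing in $\Phi$, using $a-\overline{b}\ge(a^{c}-\overline{b}^{\,c})/(c\,a^{c-1})$ together with $\overline{b}^{\,c}\le\E_{u_i}\widetilde{\Var}(v_j^{\alpha}\mid u_i)^{c}$, and reabsorbing the leftover $\widetilde{\Var}(v_j^{\alpha})^{c-1}$ weights against $\sum_j(\pE_\mu v_j^{2\alpha})^{c}$ by one further Cauchy--Schwarz. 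Passing to the law $\psi$ on $[n]$ that samples $i\propto\sum_jA_{ij}^{p}$ and invoking \cref{fact:pinsker's} exactly as in \cref{fact:deg-2-potential-aligned}, this shows that whenever the rounding error exceeds $\epsilon\sum_{ij}A_{ij}^{p}$, conditioning on a suitable $u_i$ decreases $\Phi$ by $\Omega_{p}\bigl(\poly_{p,q}(\epsilon)\bigr)\cdot\bigl(\sum_{ij}A_{ij}^{p}\bigr)^{1/2}$ in expectation --- the asserted alignment.

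The main obstacle will be the H\"older bookkeeping in step (iii): tracking the $c$-th powers through the two or three applications of H\"older/Cauchy--Schwarz and the norm comparisons $2\alpha\le p$, $2(p-q)\le p$, and making sure the reabsorbed factors land on $\sum_j(\pE_\mu v_j^{2\alpha})^{c}=O_{p}(1)\cdot\bigl(\sum_{ij}A_{ij}^{p}\bigr)^{1/2}$ rather than on an uncontrolled higher moment --- so that the final lower bound on the potential drop is $\poly_{p,q}(\epsilon)\cdot\bigl(\sum_{ij}A_{ij}^{p}\bigr)^{1/2}$ with no hidden dependence on $n$, $m$ or on the magnitudes of the $A_{ij}$. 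A secondary point is to confirm carefully that the reweighted potential is genuinely non-increasing under conditioning when $c>1$; this is exactly where the convexity bound $a^{c}-\overline{b}^{\,c}\le c\,a^{c-1}(a-\overline{b})$ is needed. Once both are in place, the lemma follows, and feeding it into \cref{lem:gcr_linear_combo} together with $\Phi(\mu)\le\bigl(\sum_{ij}A_{ij}^{p}\bigr)^{1/2}$ costs only $O_{p,q}(\poly(1/\epsilon))$ rounds of conditioning.
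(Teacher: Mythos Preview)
Your high-level strategy is exactly the paper's: treat $\sum_{ij}A_{ij}^{q}(u_iv_j)^{p-q}$ as bilinear in $u_i^{\alpha},v_j^{\alpha}$ with $\alpha=p-q$, write the rounding error as a sum of pseudo-covariances, use \cref{fact:var-reduction}, and push the resulting variance drops into the reweighted potential $\Phi=\sum_j\Var(v_j^{\alpha})^{c}$ with $c=p/(2\alpha)\ge 1$. Your observation that $\Phi(\mu)\le\sum_j\pE_\mu v_j^{p}$ is the right initial bound.

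There is, however, a concrete gap in step~(i). Cauchy--Schwarz over $j$ pulls out $(\sum_jA_{ij}^{2q})^{1/2}$, but for $q>p/2$ the quantity $\sum_jA_{ij}^{2q}$ is a \emph{higher} power than $\sum_jA_{ij}^{p}$ and is not controlled by it; your final inequality would carry an uncontrolled $\sum_{ij}A_{ij}^{2q}$ rather than $\sum_{ij}A_{ij}^{p}$. The paper instead applies H\"older with the conjugate pair $(q/p,(p-q)/p)$ over $j$ (and again over $i$), which lands exactly on $(\sum_jA_{ij}^{p})$ and $(\sum_j\Delta_j^{\,p/(2(p-q))})^{(p-q)/q}$ where $\Delta_j=\Var_{\mu'}(v_j^{\alpha})-\E_{u_i}\Var_{\mu'}(v_j^{\alpha}\mid u_i)$. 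This specific choice of exponents is what makes all the $A$-dependence collapse to $\sum_{ij}A_{ij}^{p}$.

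Your step~(iii) is also more roundabout than needed. Rather than using the convexity bound $a-\bar b\ge(a^{c}-\bar b^{\,c})/(c\,a^{c-1})$ and then ``reabsorbing'' the leftover $\Var(v_j^{\alpha})^{c-1}$ by an additional Cauchy--Schwarz, the paper works directly with $\Delta_j^{\,c}$ (which H\"older already produced) and uses the elementary inequality $(a-a')^{c}\le a^{c}-(a')^{c}$ for $c\ge1$, $0\le a'\le a$; this converts $\Delta_j^{\,c}$ into the potential drop in one line, with no residual weights to reabsorb. Finally, the appeal to \cref{fact:pinsker's} is misplaced: neither this lemma nor \cref{fact:deg-2-potential-aligned} uses Pinsker; both operate purely through covariances and \cref{fact:var-reduction}. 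Pinsker's enters only in the entropy-based Lipschitz argument (\cref{lem:lipschitz-potential-aligned}).
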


In the proof of~\cref{lem:low-deg-poly-aligned} we will need the following fact:
\begin{fact}
\label{fact:decr-of-powers}
    For all $c\geq 1$ and positive numbers $0 <= a' <= a$ we have that
    \[ (a-a')^c \leq a^c - (a')^c\,.\]
\end{fact}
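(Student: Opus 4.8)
The plan is to reduce the inequality to the one-variable statement that $y^{c} \le y$ for $y \in [0,1]$ whenever $c \ge 1$, which is immediate since raising a number in $[0,1]$ to a power $\ge 1$ can only decrease it. First I would dispose of the degenerate case: if $a = 0$, then $0 \le a' \le a$ forces $a' = 0$, and both sides of the claimed inequality equal $0$, so the bound holds. Hence we may assume $a > 0$.

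Next, I would homogenize. Dividing the desired inequality $(a - a')^{c} + (a')^{c} \le a^{c}$ through by $a^{c} > 0$ and writing $x = a'/a \in [0,1]$, it suffices to prove
\[
(1 - x)^{c} + x^{c} \le 1 \qquad \text{for all } x \in [0,1],\ c \ge 1 .
\]
Since $0 \le x \le 1$ and $0 \le 1 - x \le 1$, and since $t \mapsto t^{c}$ is nondecreasing with $t^{c} \le t$ on $[0,1]$ when $c \ge 1$, we get $x^{c} \le x$ and $(1-x)^{c} \le 1-x$; adding these two bounds yields $(1-x)^{c} + x^{c} \le x + (1-x) = 1$, as needed. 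Undoing the normalization gives $(a-a')^{c} \le a^{c} - (a')^{c}$.

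There is no real obstacle here; the only point to be careful about is the boundary behavior (the case $a=0$, and the fact that for $c = 1$ the inequality is an equality, which the argument correctly allows since $t^{1} = t$). If one prefers to avoid the normalization step, an equivalent route is to fix $a' \ge 0$ and consider $h(a) = a^{c} - (a - a')^{c} - (a')^{c}$ for $a \ge a'$; then $h(a') = 0$ and $h'(a) = c\big(a^{c-1} - (a-a')^{c-1}\big) \ge 0$ because $c - 1 \ge 0$ and $a \ge a - a' \ge 0$, so $h$ is nondecreasing and therefore nonnegative on $[a', \infty)$, which is exactly the claim.
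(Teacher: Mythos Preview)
Your proof is correct. The paper actually states this as a bare fact without supplying any argument, so there is nothing to compare against; your normalization to $(1-x)^c + x^c \le 1$ on $[0,1]$ via $t^c \le t$ is a clean and complete justification, and the alternative monotonicity argument you sketch would work equally well.
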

\begin{proof}[Proof of~\cref{lem:low-deg-poly-aligned}]
    If we have that 
    \[\left\vert \E_{\{\mu'\}} \sum_{i \in [n], j \in [m]} \pE_{\mu'} A_{ij}^q (u_i v_j)^{p-q} - \sum_{i \in [n], j \in [m]} \pE_{(\mu')^\otimes} A_{ij}^q (u_i v_j)^{p-q} \right\vert \geq \epsilon \sum_{i \in [n], j \in [m]} A_{ij}^p\,,\]
    then it also holds that
    \[ \sum_{i \in [n], j \in [m]} \vert A_{ij} \vert^q \E_{\{\mu'\}} \Cov_{\mu'}(u_i^{p-q}, v_j^{p-q}) \geq \epsilon \sum_{i \in [n], j \in [m]} A_{ij}^p\,.\]
    Applying~\cref{fact:var-reduction} we have that
    \[ \sum_{i \in [n], j \in [m]} \E_{\{\mu'\}} \vert A_{ij} \vert^q \sqrt{\Paren{\Var_{\mu'} (v_j^{p-q}) - \E_{u_i \sim \mu'} \Var_{\mu'} (v_j^{p-q} \vert u_i^{p-q})} \cdot \Paren{\Var_{\mu'} (u_i^{p-q})}} \geq \epsilon \sum_{i \in [n], j \in [m]} A_{ij}^p\,. \]
    We can further bound the LHS of the inequality via applying H\"older with $(p-q)/p$ and $q/p$: 
    \begin{align*}
        &\sum_{i \in [n], j \in [m]} \E_{\{\mu'\}} \vert A_{ij} \vert^q \sqrt{\left(\Var_{\mu'} (v_j^{p-q}) - \E_{u_i \sim \mu'} \Var_{\mu'}(v_j^{p-q} \vert u_i^{p-q})\right) \cdot \left(\Var_{\mu'} (u_i^{p-q})\right)} \\
        &\qquad= \sum_{i \in [n]} \E_{\{\mu'\}} \sqrt{ {\Var_{\mu'}(u_i^{p-q})}} \sum_{j \in [m]} \vert A_{ij} \vert^q \sqrt{\Var_{\mu'}(v_j^{p-q}) - \E_{u_i \sim \mu'} \Var_{\mu'}(v_j^{p-q} \vert u_i^{p-q})} \\
        &\qquad\leq \left( \sum_{i \in [n]} \E_{\{\mu'\}} \Var_{\mu'}(u_i^{p-q})^{p/(2(p-q))} \right)^{(p-q)/p}  \\
        &\qquad\qquad\left( \sum_{i \in [n]} \E_{\{\mu'\}} \left( \sum_{j \in [m]} \vert A_{ij} \vert^q \sqrt{\Var_{\mu'}(v_j^{p-q}) - \E_{u_i \sim \mu'} \Var_{\mu'}(v_j^{p-q})}\right)^{p/q}\right)^{q/p} \,.
    \end{align*}
    Note that by constraints and Jensens we have that $\Var_{\mu'}(u_i^{p-q})^{p/(2(p-q))} \leq \pE_{\mu'} u_i^p$ and thus
    \[ \sum_{i \in [n]} \E_{\{\mu'\}} \Var_{\mu'}(u_i^{p-q})^{p/(2(p-q))} \leq \sum_{i \in [n]} \pE_{\mu'} u_i^p \leq O_p\left(\sqrt{\sum_{i \in [n], j \in [m]} A_{ij}^p} \right)\,.\]
    Furthermore, by applying H\"older again with $(p-q)/p$ and $q/p$ we have that 
    \begin{align*}
        &O_p(1) \cdot \left( \sum_{j \in [m]} \vert A_{ij} \vert^q \sqrt{\Var_{\mu'}(v_j^{p-q}) - \E_{u_i \sim \mu'} \Var_{\mu'}(v_j^{p-q})}\right)^{p/q} \\
        &\qquad\leq \left(\sum_{j \in [m]} A_{ij}^p\right) \left( \sum_{j \in [m]} \left( \Var_{\mu'}(v_j^{p-q}) - \E_{u_i \sim \mu'} \Var_{\mu'}(v_j^{p-q}) \right)^{p/(2(p-q))}\right)^{(p-q)/q}\,.
    \end{align*}
    Combining all of these steps, we have that
    \begin{align*}
        \begin{aligned}
        \epsilon \sum_{i \in [n], j \in [m]} A_{ij}^p &\leq O_p\left( 1 \right)\left( \sum_{i \in [n], j \in [m]} A_{ij}^p\right)^{\frac {p-q} {2p}} \\
        &\qquad \left(\sum_{i \in [n]} \E_{\{\mu'\}} \left(\sum_{j \in [m]} A_{ij}^p\right) \left( \sum_{j \in [m]} \left( \Var_{\mu'}(v_j^{p-q}) - \E_{u_i \sim \mu'} \Var_{\mu'}(v_j^{p-q}) \right)^{\frac p {2(p-q)}}\right)^{\frac {p-q} q} \right)^{\frac q p}\,.
        \end{aligned}
    \end{align*}
    Equivalently, if we define $\psi$ as the distribution which samples $i$ proportional to $\sum_{j \in [m]} A_{ij}^p$ we have that
    \begin{equation}
    \label{eqn:var-bound-lra-start}
    \begin{split}
        \epsilon \sum_{i \in [n], j \in [m]} A_{ij}^p &\leq O_p\left( 1 \right) \left( \sum_{i \in [n], j \in [m]} A_{ij}^p\right)^{\frac {p+q} {2p}} \cdot \\
        & \qquad \underbrace{ \left(\E_{i \sim \psi} \E_{\{\mu'\}} \left( \sum_{j \in [m]} \left( \Var_{\mu'}(v_j^{p-q}) - \E_{u_i \sim \mu'} \Var_{\mu'}(v_j^{p-q}) \right)^{\frac p {2(p-q)}}\right)^{\frac {p-q} {q}} \right)^{\frac q p} }_{\eqref{eqn:var-bound-lra-start}.(1)}\,.
    \end{split}
    \end{equation}
    Applying~\cref{fact:decr-of-powers} and Jensens (since $p/(2(p-q)) \geq 1$) we have that
    \begin{equation*}
    \begin{split}
        \eqref{eqn:var-bound-lra-start}.(1) & \leq    \left(\E_{i \sim \psi} \E_{\{\mu'\}} \left( \sum_{j \in [m]} \left( \Var_{\mu'}(v_j^{p-q}) - \E_{u_i \sim \mu'} \Var_{\mu'}(v_j^{p-q}) \right)^{\frac p {2(p-q)}}\right)^{\frac {p-q} {q}} \right)^{\frac q p}  \\
        &\leq   \left(\E_{i \sim \psi} \E_{\{\mu'\}} \left( \sum_{j \in [m]} \Var_{\mu'}(v_j^{p-q})^{\frac p {2(p-q)}} - \left(\E_{u_i \sim \mu'} \Var_{\mu'}(v_j^{p-q}) \right)^{\frac p {2(p-q)}}\right)^{\frac {p-q} {q}} \right)^{\frac q p}  \\
        &\leq    \left(\E_{i \sim \psi} \E_{\{\mu'\}} \left( \sum_{j \in [m]} \Var_{\mu'}(v_j^{p-q})^{\frac p {2(p-q)}} - \E_{u_i \sim \mu'}  \left(\Var_{\mu'}(v_j^{p-q}) \right)^{\frac p {2(p-q)}}\right)^{\frac {p-q} {q}} \right)^{\frac q p} \\
        &\leq  \left(\E_{i \sim \psi} \E_{\{\mu'\}}  \sum_{j \in [m]} \Var_{\mu'}(v_j^{p-q})^{\frac p {2(p-q)}} - \E_{u_i \sim \mu'}  \left(\Var_{\mu'}(v_j^{p-q}) \right)^{\frac p {2(p-q)}} \right)^{\frac {p-q} p} \,. 
    \end{split}
    \end{equation*}
    
    Thus, we have concluded that the decrease in potential is at least
    \[ \Omega_p\left( \epsilon  \left( \sum_{i \in [n], j \in [m]} A_{ij}^p \right)^{\frac {p-q}{2p}} \right)^{\frac p {p-q}} = \Omega_p(\poly_{p,q} (\epsilon)) \left(\sum_{i \in [n], j \in [m]} A_{ij}^p\right)^{1/2}\,.  \]
\end{proof}

\subsection{Entrywise Low Rank Approximation: Algorithm and Analysis}

We can now proceed to the algorithm and proof of~\cref{thm:lra}

\begin{mdframed}
  \begin{algorithm}[$\ell_p$ LRA Additive Approximation]
    \label{algo:lra}\mbox{}
    \begin{description}
    \item[Input:] Matrix $A \in \mathbb{R}^{n \times m}$ and target accuracy $0<\epsilon<1$.
    
    \item[Operations:]\mbox{}
    \begin{enumerate}
        \item Let $\calS$ be a $\frac{\epsilon^p}{n^3} \left(\sum_{i \in [n], j \in [m]} A_{ij}^p\right)^{1/2}$ grid of \[\left[-2 \left(\sum_{i \in [n], j \in [m]} A_{ij}^p\right)^{1/2}, 2 \left(\sum_{i \in [n], j \in [m]} A_{ij}^p\right)^{1/2}\right]\,.\]
        \item For each $s_u \in \calS, s_v \in \calS$ and set $\calT$ of size at most $O_p\left(\log^2 (n)  /\poly_p(\epsilon) \right)$:
        \begin{enumerate}
            \item Let $\Sigma$ be a $\frac{\epsilon^p}{n^2} \cdot \left(\sum_{i \in [n], j \in [m]} A_{ij}^p\right)^{1/(2p)}$ grid of 
            \[\left[-\left(\sum_{i \in [n], j \in [m]} A_{ij}^p\right)^{1/(2p)}, \left(\sum_{i \in [n], j \in [m]} A_{ij}^p\right)^{1/(2p)}\right]\,.\]
            \item Let $\mu$ be a $O_p\left(\log^2 (n)  /\poly_p(\epsilon) \right)$-degree Sherali-Adams pseudo-distribution over $\left(\Sigma\right)^{n+m}$ such that $\pE_{\mu}$ optimizes 
            \begin{align}
                &\min_{\pE_{\mu}} \quad \pE_{\mu} \sum_{i\in[n],j\in[m]} (A_{ij} - u_i v_j)^p \notag \\
                &\text{s.t.\ } \forall \text{ monomials } r, \notag \\
                &\quad \sum_{i \in [n]} \pE_{\mu} r \cdot u_i^p = \pE_{\mu} r \cdot s_u \label{eq:constraint1} \\
                &\quad \sum_{j \in [m]} \pE_{\mu} r \cdot v_j^p = \pE_{\mu} r \cdot s_v \label{eq:constraint2}
            \end{align}
            \item Let $\hat{x}_\calT$ be a draw from the local distribution $\{x_{\calT}\}$. Let $\mu_\calT$ be the pseudodistribution obtained by conditioning on $\{u_i = \hat{u}_i\}_{i \in \calT}$.
            \item Let $\hat{u}, \hat{v}$ be the solution where $\hat{u}_i, \hat{v}_j$ are sampled independently from their local distributions in $\mu_\calT$.
        \end{enumerate}
    \end{enumerate}
    \item[Output:] The embedding $\hat{u}, \hat{v}$ with the lowest LRA objective value over all iterations of the loop.
    \end{description}
  \end{algorithm}
\end{mdframed}

We will first argue that our discretization of a subset of $\mathbb{R}$ and optimizing over this set with the above constraints approximately preserves the value of the objective for at least one iteration of the loop. We will then argue that for some set $\calT$, the rounding procedure also produces a solution which in expectation has value close to the value of the LP.

Note that the analysis below is stated for the case of $m=n$ but easily generalizes to the case of unequal $m,n$. %

\paragraph{Value of the LP under discretization and constraints.}

Before we show that at least one of the LPs in the algorithm has good objective value, we will first need to give a characterization of optimal solutions to~\cref{prob:lra}.
\begin{fact}
\label{fact:lra-sol-char}
    Let $p$ be an even integer. For any matrix $A \in \mathbb{R}^{n\times n}$ there exists an optimal solution $\{u_i\}_{i \in [n]}$ and $\{v_j\}_{j \in [m]}$ to~\cref{prob:lra} such that:
    \begin{enumerate}
        \item $\sum_{i \in [n]} u_i^p = \sum_{j \in [m]} v_j^p \leq 2^{p-1} \sqrt{\sum_{i, j \in [m]} A_{ij}^p}$.
        \item $\sum_{i \in [n], j \in [m]} u_i^p v_j^p \leq 2^{p} \sum_{i, j \in [m]} A_{ij}^p$.
    \end{enumerate}
\end{fact}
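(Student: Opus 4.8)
The plan is to start from any optimal factorization, rescale it so that the two factors are ``balanced'' in the $\ell_p$ sense, and then read off both conclusions from a single application of the triangle inequality for the entrywise $\ell_p$ norm together with the trivial bound $\OPT_{\textrm{LRA}} \le \|A\|_p^p$. Throughout, $\|M\|_p := \big(\sum_{i,j}|M_{ij}|^p\big)^{1/p}$ denotes the entrywise $\ell_p$ norm, which is a genuine norm on matrices since $p \ge 1$ (Minkowski's inequality).

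First I would observe that an optimal solution exists: writing $M = uv^\top$, the map $M \mapsto \|A-M\|_p^p$ is continuous and tends to $\infty$ as $\|M\|_p \to \infty$, and the set of rank-at-most-one matrices is closed, so the minimum over rank-$\le 1$ matrices $M$ is attained at some $M^\star$; choose $u^\star,v^\star$ with $u^\star (v^\star)^\top = M^\star$. If $M^\star = 0$ we may take $u=v=0$ and both conclusions hold vacuously, so assume $M^\star \ne 0$, so that $\sum_i (u_i^\star)^p>0$ and $\sum_j (v_j^\star)^p > 0$ (using that $p$ is even, hence these are sums of nonnegative terms). Replacing $(u^\star,v^\star)$ by $(\lambda u^\star, \lambda^{-1} v^\star)$ for the appropriate $\lambda > 0$ does not change $M^\star$, so we may assume $\sum_i u_i^p = \sum_j v_j^p =: S$; this is the equality asserted in conclusion (1), and $(u,v)$ remains optimal.

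Now, by the triangle inequality, $\|uv^\top\|_p \le \|A\|_p + \|A - uv^\top\|_p$. Since $(u,v)$ is optimal, $\|A - uv^\top\|_p^p = \OPT_{\textrm{LRA}} \le \|A\|_p^p$, where the last inequality uses the feasible choice $u=v=0$; hence $\|A-uv^\top\|_p \le \|A\|_p$ and so $\|uv^\top\|_p \le 2\|A\|_p$. Because $p$ is even, $\|uv^\top\|_p^p = \sum_{i,j} u_i^p v_j^p = \big(\sum_i u_i^p\big)\big(\sum_j v_j^p\big) = S^2$. Combining, $S^2 = \|uv^\top\|_p^p \le 2^p\|A\|_p^p = 2^p \sum_{i,j} A_{ij}^p$, which is exactly conclusion (2); taking square roots and using $2^{p/2} \le 2^{p-1}$ for $p \ge 2$ gives $S \le 2^{p/2}\big(\sum_{i,j}A_{ij}^p\big)^{1/2} \le 2^{p-1}\big(\sum_{i,j}A_{ij}^p\big)^{1/2}$, which is conclusion (1). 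There is no serious obstacle here: the only two points needing care are the existence of an optimal solution (the routine compactness argument above) and the degenerate case $M^\star = 0$; the rest is the rescaling trick plus one use of Minkowski's inequality.
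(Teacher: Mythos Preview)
Your proof is correct and follows essentially the same approach as the paper: bound $\sum_{i,j} u_i^p v_j^p$ via a triangle-type inequality combined with $\OPT_{\textrm{LRA}}\le\|A\|_p^p$, then rescale to balance. The only cosmetic difference is that you invoke Minkowski's inequality $\|uv^\top\|_p \le \|A\|_p + \|A-uv^\top\|_p$ whereas the paper uses the convexity bound $(a+b)^p \le 2^{p-1}(a^p+b^p)$; both yield the same constant $2^p$ in conclusion~(2), and your treatment of existence and the degenerate $M^\star=0$ case is in fact more careful than the paper's.
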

\begin{proof}
    Consider an optimal solution $\{u_i\}_{i \in [n]}$ and $\{v_j\}_{j \in [m]}$. Note that we have by Almost Triangle Inequality that
    \[ \sum_{i \in [n], j \in [m]} u_i^p v_j^p = \sum_{i \in [n], j \in [m]} (u_iv_j \pm A_{ij})^p \leq 2^{p-1} \left( \OPT_{\mathrm{LRA}} + \sum_{i \in [n], j \in [m]} A_{ij}^p \right)\,.\]
    However, $\OPT_{\mathrm{LRA}} \leq \sum_{i \in [n], j \in [m]} A_{ij}^p$ since the solution which sets $u_i = v_j = 0$ for all $i,j$ achieves this value. Thus, we have that 
    \[ \sum_{i \in [n], j \in [m]} u_i^p v_j^p \leq 2^{p} \sum_{i, j \in [m]} A_{ij}^p \,,\]
    as desired. Furthermore,
    \[ \sum_{i \in [n], j \in [m]} u_i^p v_j^p = \left(\sum_{i \in [n]} u_i^p \right) \left( \sum_{j \in [m]} v_j^p\right)\,,\]
    and since the value of the solution is invariant to dividing all $u_i$ by some constant $c$ and multiplying $v_j$ by the same constant we can always scale $u_i, v_j$ to get a new optimal solution $u_i', v_j'$ such that 
    \[ \sum_{i \in [n]} (u_i')^p = \sum_{j \in [m]} (v_j')^p\,,\]
    which completes the proof.
\end{proof}

We now argue that our discretized LRA program has value which is comparable to the original instance in at least one iteration of the loop.
\begin{lemma}
\label{lem:lra-discretized}
    There exists $\Sigma$, $\calS$ such that
    \[ \abs{\Sigma} \leq O_p(n^2/\epsilon^p), \abs{\calS} \leq O_p(n/\epsilon^p)\,,\]
    and there exists $\{\hat{u}_i\}_{i \in [n]}, \{\hat{v}_j\}_{j \in [m]} \in \Sigma$ such that
    \[ \sum_{i \in [n], j \in [m]} (A_{ij} - \hat{u}_i \hat{v}_j)^p \leq \OPT_{\mathrm{LRA}} + \epsilon \left( \sum_{i \in [n], j \in [m]} A_{ij}^p\right)\,,\]
    and 
    \begin{align*}
        \sum_{i \in [n]} \hat{u}_i^p, \sum_{j \in [m]} \hat{v}_j^p \in \calS\,.
    \end{align*}
\end{lemma}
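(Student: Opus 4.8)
Write $S := \sum_{i,j}A_{ij}^p = \|A\|_p^p$. The plan is to start from the \emph{exact} optimum guaranteed by Fact~\ref{fact:lra-sol-char}, round each coordinate to a fine grid $\Sigma$, control the change in the $\ell_p$ objective by a routine perturbation estimate, and then do a little extra work to pin the $p$-th power sums $\sum_i\hat u_i^p,\sum_j\hat v_j^p$ to the coarse grid $\calS$. Concretely, fix $u^\star\in\R^n,v^\star\in\R^m$ as in Fact~\ref{fact:lra-sol-char}, so $\sum_i(u^\star_i)^p=\sum_j(v^\star_j)^p\le 2^{p-1}S^{1/2}$ and $\sum_{i,j}(u^\star_i)^p(v^\star_j)^p\le 2^p S$. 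Since $p$ is even, every term $(u^\star_i)^p$ is nonnegative, hence $|u^\star_i|\le 2S^{1/(2p)}$ and likewise $|v^\star_j|\le 2S^{1/(2p)}$. Take $\Sigma$ to be the grid on $[-3S^{1/(2p)},3S^{1/(2p)}]$ with spacing $\sigma := c_p\epsilon S^{1/(2p)}/n^2$ for a small enough $p$-dependent constant $c_p$; this has $O_p(n^2/\epsilon)\le O_p(n^2/\epsilon^p)$ points, and $u^\star,v^\star$ lie comfortably inside its range.

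Round each coordinate to its nearest point of $\Sigma$, obtaining $\hat u\in\Sigma^n,\hat v\in\Sigma^m$ with $|\hat u_i-u^\star_i|,|\hat v_j-v^\star_j|\le\sigma/2$. To bound the objective change I use $|a^p-b^p|\le p\max(|a|,|b|)^{p-1}|a-b|$ with $a=A_{ij}-\hat u_i\hat v_j$, $b=A_{ij}-u^\star_iv^\star_j$: here $|a-b|=|\hat u_i\hat v_j-u^\star_iv^\star_j|\le|\hat u_i-u^\star_i||\hat v_j|+|u^\star_i||\hat v_j-v^\star_j|\le O_p(\sigma S^{1/(2p)})$ and $\max(|a|,|b|)\le|A_{ij}|+O_p(S^{1/p})$. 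Summing the per-entry bound $p(|A_{ij}|+O_p(S^{1/p}))^{p-1}\cdot O_p(\sigma S^{1/(2p)})$ over $i,j$, and using the power-mean inequality $\sum_{i,j}|A_{ij}|^{p-1}\le n^{2/p}S^{(p-1)/p}\le n^2 S^{(p-1)/p}$, gives total change at most $O_p(n^2\sigma S^{(2p-1)/(2p)})$, which for $\sigma=c_p\epsilon S^{1/(2p)}/n^2$ with $c_p$ small is at most $\epsilon S$. Hence $\sum_{i,j}(A_{ij}-\hat u_i\hat v_j)^p\le\OPT_{\mathrm{LRA}}+\epsilon S$. The same estimate with $n$ in place of $n^2$ shows $|\sum_i\hat u_i^p-\sum_i(u^\star_i)^p|=O_p(n\sigma S^{(p-1)/(2p)})\ll S^{1/2}$ (and similarly for $v$), so $t_u:=\sum_i\hat u_i^p$ and $t_v:=\sum_j\hat v_j^p$ are $O_p(S^{1/2})$, inside the range of $\calS$.

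It remains to force $t_u,t_v$ onto $\calS$; this is the one non-routine step. The cleanest argument uses that, $p$ being even, $\sum_i u_i^p$ is a sum of squares, so the only property of $s_u$ needed downstream by Lemmas~\ref{lem:poly-potential-aligned} and~\ref{lem:low-deg-poly-aligned} is an \emph{upper} bound $\sum_i\pE_\mu u_i^p\le s_u$ of the right order; so one takes $s_u\in\calS$ to be the least grid point $\ge t_u$ (which still satisfies $s_u=O_p(S^{1/2})$), and similarly for $s_v$. If literal membership $t_u\in\calS$ is required, one instead re-rounds a few coordinates of $\hat u$ (and of $\hat v$) among nearby $\Sigma$-points so as to move $\sum_i\hat u_i^p$ exactly onto $\calS$: since $\Sigma$ is a factor $\sim n$ (or more) finer than $\calS$, sliding a suitably chosen coordinate through consecutive $\Sigma$-points moves $\sum_i\hat u_i^p$ in increments below the spacing of $\calS$ and over a range exceeding it, so some choice lands exactly on $\calS$, and the displacement incurred is small enough that the extra objective change is absorbed into the $O_p$ estimate above (in the degenerate case where $u^\star v^{\star\top}$ is negligible in $\ell_p$, one simply outputs $\hat u=\hat v=0\in\Sigma$, which has $\sum\hat u_i^p=\sum\hat v_j^p=0\in\calS$ and objective $S\le\OPT_{\mathrm{LRA}}+\epsilon S$). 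Either way one obtains $\hat u,\hat v\in\Sigma$ with the claimed objective bound and with $\sum_i\hat u_i^p,\sum_j\hat v_j^p\in\calS$, and $|\Sigma|\le O_p(n^2/\epsilon^p)$, $|\calS|\le O_p(n/\epsilon^p)$ by construction.

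The routine parts are Fact~\ref{fact:lra-sol-char} and the $\ell_p$ perturbation bound of the second paragraph. The main obstacle is the third paragraph: coordinate-wise rounding to $\Sigma$ does not respect the requirement that $\sum_i\hat u_i^p$ sit on $\calS$, and reconciling the two discretization scales — so that the moment sums can be pinned to $\calS$ without spoiling the objective or blowing up $|\calS|$ — is the one place the exact grid parameters (the relative fineness of $\Sigma$ versus $\calS$, and the sum-of-squares structure available for even $p$) actually matter.
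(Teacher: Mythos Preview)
Your perturbation argument for the objective (the second paragraph) is correct and essentially parallel to the paper's: the paper expands $(A_{ij}-\hat u_i\hat v_j)^p$ binomially around $u_i^\star v_j^\star$ and applies H\"older to each cross term, whereas you use the mean-value bound $|a^p-b^p|\le p\max(|a|,|b|)^{p-1}|a-b|$; both yield the same $O_p(n^2\sigma S^{(2p-1)/(2p)})$ total change.

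The gap is in the third paragraph. Your Option~2 ``sliding'' argument fails on a discrete domain: the map $\hat u\mapsto\sum_i\hat u_i^p$ takes values only among integer multiples of $\sigma^p$, so moving a coordinate through consecutive $\Sigma$-points changes the sum in steps that are themselves multiples of $\sigma^p$, and a discrete intermediate-value argument can at best get you within one increment of a $\calS$-point, not \emph{onto} it---unless every point of $\calS$ is already a multiple of $\sigma^p$, in which case the sliding is unnecessary. Your Option~1 (replace the equality by an upper bound $s_u\ge t_u$) does not prove the lemma as stated, and more importantly the downstream analysis genuinely needs the \emph{equality} $\sum_i u_i^p=s_u$: in the proof of Lemma~\ref{lem:lra-gcr-error}, the $q=0$ term $\sum_{i,j}\pE_\mu u_i^p v_j^p$ is rounded with zero error precisely because the constraints \eqref{eq:constraint1}--\eqref{eq:constraint2} force it to equal the deterministic constant $s_us_v$ under both $\mu$ and $\mu^\otimes$; an inequality would not give this.

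The paper's route is simpler and sidesteps the whole issue: since the lemma only asks for \emph{existence} of $\calS$, one simply takes $\calS$ to be the set of values $\{\sum_i x_i^p: x\in\Sigma^n\}\cap[0,2^pS^{1/2}]$. Then $\sum_i\hat u_i^p\in\calS$ holds automatically once $\hat u\in\Sigma^n$, with no re-rounding or case analysis needed. The size bound follows because every element of $\calS$ is an integer multiple of $\sigma^p$ lying in a bounded interval (the precise count the paper states is optimistic, but any $n^{O_p(1)}/\eps^{O_p(1)}$ bound suffices for the quasi-polynomial running time in Theorem~\ref{thm:lra}).
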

\begin{proof}
    First, we begin by noting that there exists an optimal solution which satisfies
        \begin{enumerate}
        \item $\sum_{i \in [n]} u_i^p = \sum_{j \in [m]} v_j^p \leq 2^{p-1} \sqrt{\sum_{i, j \in [m]} A_{ij}^p}$.
        \item $\sum_{i \in [n], j \in [m]} u_i^p v_j^p \leq 2^{p} \sum_{i, j \in [m]} A_{ij}^p$,
    \end{enumerate}
    by~\cref{fact:lra-sol-char}. As a result, we also have that in this solution all $u_i, v_j$ have magnitude at most $\left(\sum_{i \in [n], j \in [m]} A_{ij}^p\right)^{1/(2p)}$. Let $\hat{u}_i, \hat{v}_j$ be some small perturbation of $u_i, v_j$. Whenever 
    \[ \vert \hat{u}_i - u_i \vert , \vert \hat{v}_j - v_j \vert \leq O_p\left(\frac{\epsilon^p}{n^2} \right) \left(\sum_{i \in [n], j \in [m]} A_{ij}^p\right)^{1/2p}\,\]
    and $\vert u_i \vert, \vert v_j \vert \leq \left(\sum_{i \in [n], j \in [m]} A_{ij}^p\right)^{1/2p}$ we have that
    \[ \vert \hat{u}_i \hat{v}_j - u_i v_j \vert \leq O_p\left(\frac{\epsilon^p}{n^2} \right) \left(\sum_{i \in [n], j \in [m]} A_{ij}^p\right)^{1/p}\,.\]
    Therefore, for such $\hat{u}_i, \hat{v}_j$ we have that
    \begin{align*}
        & \sum_{i \in [n], j \in [m]} (A_{ij} - \hat{u}_i \hat{v}_j)^p = \sum_{i \in [n], j \in [m]} (A_{ij} - \hat{u}_i \hat{v}_j \pm u_iv_j)^p \\
        &\leq \sum_{i \in [n], j \in [m]} (A_{ij} - u_i v_j)^p + \sum_{i \in [n], j \in [m]} \sum_{0 \leq k < p} (A_{ij} - u_iv_j)^k \left(u_i v_j - \hat{u}_i \hat{v}_j\right)^{p-k} \\
        &\leq \sum_{i \in [n], j \in [m]} (A_{ij} - u_i v_j)^p + \sum_{0 \leq k < p} \left( \sum_{i, j \in [m]} (A_{ij} - u_iv_j)^p \right)^{\frac{k}{p}} \left( \sum_{i, j \in [m]} \left(u_i v_j - \hat{u}_i \hat{v}_j\right)^{p}  \right)^{\frac{p-k}{p}}\,,
    \end{align*}
    where the last inequality is via H\"older with $k/p$ and $(p-k)/p$. Using our bound on $\vert \hat{u}_i \hat{v}_j - u_i v_j \vert$ we have that
    \[ \left( \sum_{i, j \in [m]} (A_{ij} - u_iv_j)^p \right)^{\frac{k}{p}} \left( \sum_{i, j \in [m]} \left(u_i v_j - \hat{u}_i \hat{v}_j\right)^{p}  \right)^{\frac{p-k}{p}} \leq \OPT_{\mathrm{LRA}}^{\frac{k}{p}} \cdot O_p\left( n^2 \cdot \frac{\epsilon^p}{n^2} \cdot \left(\sum_{i \in [n], j \in [m]} A_{ij}^p\right)\right)^{\frac{p-k}{p}}\,.\]
    Using that $\OPT_{\mathrm{LRA}} \leq \left(\sum_{i \in [n], j \in [m]} A_{ij}^p\right)$ and for a sufficiently small constant $O_p(1)$ we have that 
    \[ \sum_{i \in [n], j \in [m]} (A_{ij} - \hat{u}_i \hat{v}_j)^p \leq \sum_{i \in [n], j \in [m]} (A_{ij} - u_i v_j)^p + \epsilon \left( \sum_{i \in [n], j \in [m]} A_{ij}^p \right)\,.\]
    Thus, it suffices to show that there exist small enough sets $\Sigma, \calS$ such that for all $u_i, v_j$ such $\hat{u}_i, \hat{v}_j$ sufficiently close exist. First, note that allowing $\hat{u}_i, \hat{v}_j$ to belong to a $O_p\left(\frac{\epsilon^p}{n^2} \right) \left(\sum_{i \in [n], j \in [m]} A_{ij}^p\right)^{1/(2p)}$-grid of $\left[-\left(\sum_{i \in [n], j \in [m]} A_{ij}^p\right)^{1/(2p)}, \left(\sum_{i \in [n], j \in [m]} A_{ij}^p\right)^{1/(2p)}\right]$ suffices to have that for all $u_i,v_j$ there is $\hat{u}_i, \hat{v}_j$ which is $O_p\left(\frac{\epsilon^p}{n^2} \right) \left(\sum_{i \in [n], j \in [m]} A_{ij}^p\right)^{1/(2p)}$ close. Furthermore, this grid has size $O_p(n^2/\epsilon^p)$.

    We now consider the possible $\ell_p$ norms of $\hat{u}, \hat{v}$. Note that since their entries belong to the grid above they must be of the form
    \[ \left(C \cdot \frac{\epsilon^p}{n^2} \right) \left(\sum_{i \in [n], j \in [m]} A_{ij}^p\right)^{1/(2p)}\,,\]
    for some constant $C$ and thus all $\ell_p$ norms of these vectors have the form 
    \[ \left(C \cdot \frac{\epsilon^p}{n} \right) \left(\sum_{i \in [n], j \in [m]} A_{ij}^p\right)^{1/2}\,.\]
    There are at most $O_p(n/\epsilon^p)$ possible values of the norm since 
    \[ \sum_{i \in [n]} u_i^p = \sum_{j \in [m]} v_j^p \leq 2^{p-1} \sqrt{\sum_{i, j \in [m]} A_{ij}^p}\,,\]
    and thus $\norm{\hat{u}}_p^p, \norm{\hat{v}}_p^p \leq 2^{p} \sqrt{\sum_{i, j \in [m]} A_{ij}^p}$ so we can take the set of all possible norm values as $\calS$.
\end{proof}

\paragraph{Analyzing global correlation rounding with potential functions.}

Note that we can write our objective function as follows:
\[ \sum_{i, j \in [m]} \left(A_{ij} - u_i v_j \right)^p = \sum_{0 \leq q \leq p} \sum_{i \in [n], j \in [m]} \binom{p}{k} A_{ij}^q (u_i v_j)^{p-q}\]
We will aim to invoke~\cref{lem:gcr_linear_combo} to show that the rounding error is small. In~\cref{sec:poly-growth-potential} we showed that most components of the objective function are potential-aligned, so it suffices to bound their respective potential functions and analyze the remaining term which is degree $p$ in both $u,v$.
\begin{lemma}
\label{lem:lra-potentials-bounded}
    Let $0 < \epsilon \leq 1$ and let $\mu$ be some pseudo-distribution over $\Sigma$ as defined in~\cref{lem:lra-discretized} such that
    \[ \sum_{j \in [m]} \E_{\mu} v_j^p \leq 2^{p}\sqrt{\sum_{i \in [n], j \in [m]} A_{ij}^p}\,.\]
    Furthermore, let $E_i^v$ be the event that $\abs{v_j} \leq \poly_p(1/\epsilon) \left( \E_{\mu} v_j^p \right)^{1/p}$. Furthermore, let $\widetilde{v}_j = v_j E_j^v$. Then we have that 
    \[ \sum_{j \in [m]} H(\tilde{v}_j) \E_\mu v_{j}^p \leq \log \left(\poly_p(1/\epsilon) \cdot n\right) \cdot \sqrt{\sum_{i, j \in [m]} A_{ij}^p}\,.\]
    In addition, for all $q \leq p/2$ we have that
    \[ \sum_{j \in [m]} \Var(v_j^q)^{p/(2q)} \leq \sqrt{\sum_{i \in [n], j \in [m]} A_{ij}^p}\,.\]
\end{lemma}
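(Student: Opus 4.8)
The plan is to establish the two displayed inequalities separately; both are elementary once we unpack the grid $\Sigma$ (spacing $\tfrac{\epsilon^p}{n^2}(\sum_{ij}A_{ij}^p)^{1/(2p)}$ over $[-(\sum_{ij}A_{ij}^p)^{1/(2p)},(\sum_{ij}A_{ij}^p)^{1/(2p)}]$) built in \cref{algo:lra} and invoke the hypothesis $\sum_{j}\pE_\mu v_j^p\le 2^p\sqrt{\sum_{ij}A_{ij}^p}$. Throughout write $w_j=\pE_\mu v_j^p\ge 0$ (nonnegative since $p$ is even) and $S=\sum_{ij}A_{ij}^p$, so the hypothesis reads $\sum_j w_j\le 2^p\sqrt S$.

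For the variance bound, I would use that a degree-$\geq p$ Sherali--Adams solution induces a genuine probability distribution on each single variable $v_j$, so $\pE_\mu$ acts like an honest expectation on functions of $v_j$ alone. Since $p$ is even, $v_j^p\ge 0$ pointwise, and for $q\le p/2$ we have $2q/p\le 1$; applying Jensen's inequality to the concave map $t\mapsto t^{2q/p}$ on $[0,\infty)$ with $t=v_j^p$ gives $\pE_\mu v_j^{2q}=\pE_\mu(v_j^p)^{2q/p}\le(\pE_\mu v_j^p)^{2q/p}$. Hence $\Var_\mu(v_j^q)\le\pE_\mu v_j^{2q}\le(\pE_\mu v_j^p)^{2q/p}$, and raising to the power $p/(2q)\ge 1$ yields $\Var_\mu(v_j^q)^{p/(2q)}\le\pE_\mu v_j^p=w_j$. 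Summing over $j$ and using $\sum_j w_j\le 2^p\sqrt S$ gives the claim (up to the $p$-dependent constant, which is absorbed).

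For the entropy bound, bound $H(\widetilde v_j)\le\log(\text{support size of }\widetilde v_j)$. When the truncation event $E_j^v$ fails, $\widetilde v_j=0$; when it holds, $|v_j|\le\poly_p(1/\epsilon)\,w_j^{1/p}$ and $v_j$ lies on $\Sigma$, so the number of admissible values is at most $1+\poly_p(1/\epsilon)\cdot n^2\cdot(w_j/\sqrt S)^{1/p}$ (the degenerate case $w_j=0$ forces $v_j\equiv 0$ by evenness of $p$, hence $H(\widetilde v_j)=0$ and that term drops out). Taking logs, $H(\widetilde v_j)\le\log(\poly_p(1/\epsilon)\,n^2)+\tfrac1p\log^+(w_j/\sqrt S)$, where $\log^+=\max(\log,0)$. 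Multiplying by $w_j$ and summing: the first term contributes $\log(\poly_p(1/\epsilon)\,n^2)\sum_j w_j\le O_p\!\big(\log(\poly_p(1/\epsilon)\,n)\big)\sqrt S$ by the hypothesis on $\sum_j w_j$; for the second, each $w_j\le\sum_j w_j\le 2^p\sqrt S$ forces $\log^+(w_j/\sqrt S)\le\log(2^p)=O_p(1)$, so it contributes at most $O_p(\sqrt S)$. Combining the two and folding the $p$-dependent constants into $\poly_p(1/\epsilon)$ and the base of the logarithm gives $\sum_j H(\widetilde v_j)\,w_j\le\log(\poly_p(1/\epsilon)\cdot n)\cdot\sqrt S$.

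The only mildly delicate point is bookkeeping: absorbing factors of $2^p$ and similar $p$-dependent constants into $\poly_p(1/\epsilon)$ and into the logarithm's base (matching the precise form of $\Phi_{\max}$ assumed in \cref{lem:poly-potential-aligned}), and treating the degenerate coordinates $w_j=0$ correctly. The genuine mathematical content is just (i) Jensen/concavity applied to the honest single-variable marginal, and (ii) a grid-counting estimate plus the observation that a $w$-weighted average of $\log^+(w_j/\sum_j w_j)$ is $O(1)$ because each $w_j$ is dominated by the sum; I expect no real obstacle beyond these routine steps.
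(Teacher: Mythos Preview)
Your proposal is correct and follows essentially the same approach as the paper: the variance bound is identical (Jensen on the concave map $t\mapsto t^{2q/p}$ applied to the single-variable marginal), and the entropy bound is the same grid-counting estimate combined with the hypothesis $\sum_j w_j\le 2^p\sqrt S$. The only cosmetic difference is that the paper bounds $\sum_j H(\widetilde v_j)w_j\le(\max_j H(\widetilde v_j))\sum_j w_j$ and then bounds $\max_j H(\widetilde v_j)$ uniformly via $\max_j w_j\le\sum_j w_j$, whereas you do a slightly more refined per-$j$ split $H(\widetilde v_j)\le\log(\poly_p(1/\epsilon)n^2)+\tfrac1p\log^+(w_j/\sqrt S)$ before summing; both land in the same place.
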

\begin{proof}
    Note that we have that 
    \[ \sum_{j \in [m]} H(\tilde{v}_j) \E_\mu v_{j}^p \leq \left( \max_{j \in [m]} H(\tilde{v}_j)\right) \sum_{j \in [m]} \E_\mu v_{j}^p \leq \left( \max_{j \in [m]} H(\tilde{v}_j)\right) \sqrt{\sum_{i, j \in [m]} A_{ij}^p}\,.\]
    Thus, it suffices to bound $\max_{j \in [m]} H(\tilde{v}_j)$, which reduces to bounding the maximum alphabet size of $\tilde{v}_j$. Note that $\tilde{v}_j$ has magnitude at most $\poly_p(1/\epsilon) \cdot \left(E_\mu v_j^p\right)^{1/p}$ and takes value on a grid of size $O_p\left(\frac{\epsilon^p}{n^2} \right) \left(\sum_{i \in [n], j \in [m]} A_{ij}^p\right)^{1/2p}$. Thus, it has at most
    \[ n^2/\epsilon^p \cdot \frac{\poly_p(1/\epsilon) \cdot \left(E_\mu v_j^p\right)^{1/p}}{\left(\sum_{i \in [n], j \in [m]} A_{ij}^p\right)^{1/2p}}\]
    possible values. Furthermore, since 
    \[ \sum_{j \in [m]} \E_{\mu} v_j^p \leq 2^{p}\sqrt{\sum_{i \in [n], j \in [m]} A_{ij}^p}\,.\]
    we also have that
    \[ \max_{j \in [m]} \E_{\mu} v_j^p \leq n \cdot 2^{p}\sqrt{\sum_{i \in [n], j \in [m]} A_{ij}^p}\,,\]
    and thus the largest possible alphabet size of $\widetilde{v}_j$ is $O_p(\poly_p(1/\epsilon) \cdot n^3)$. We can therefore bound 
    \[ \sum_{j \in [m]} H(\tilde{v}_j) \E_\mu v_{j}^p  \leq \left( \max_{j \in [m]} H(\tilde{v}_j)\right) \sqrt{\sum_{i, j \in [m]} A_{ij}^p} \leq \log \left(n / \epsilon\right) \cdot \sqrt{\sum_{i, j \in [m]} A_{ij}^p}\,.\]
    Furthermore, we have by Jensens that
    \[\sum_{j \in [m]} \Var(v_j^q)^{p/(2q)} \leq \sum_{j \in [m]} \E_{\mu} v_j^p \leq \sqrt{\sum_{i, j \in [m]} A_{ij}^p}\,,\]
    which completes the proof.
\end{proof}

We can now show that the rounding error is small after conditioning. 

\begin{lemma}[Rounding LRA]
\label{lem:lra-gcr-error}
    Let $\epsilon \geq 0$ and let $\mu$ be an initial pseudo-distribution which satisfies
    \[ \sum_{i \in [n]} \pE_{\mu} r \cdot u_i^p = \pE_{\mu} r \cdot s_u \,,\]
    and 
    \[ \sum_{j \in [m]} \pE_{\mu} r \cdot v_j^p = \pE_{\mu} r \cdot s_v \,\]
    for any monomial $r$ and some $s_u, s_v \leq 2^p \sqrt{\sum_{i \in [n], j \in [m]} A_{ij}^p}$. 
    Then there exists a set $\calT$ of size at most $O_p\left(\log^2 (n)  /\poly_p(\epsilon) \right)$ such that after conditioning on the values of $v_j \in \calT$,
    \[ \left\vert \E_{\{\mu_\calT\}} \pE_{\mu_{\calT}} \sum_{i, j \in [m]} \left(A_{ij} - u_i v_j \right)^p - \pE_{\mu^{\otimes}_{\calT}} \sum_{i, j \in [m]} \left(A_{ij} - u_i v_j \right)^p \right\vert \leq \epsilon \sum_{i \in [n], j \in [m]} A_{ij}^p \,.\]
\end{lemma}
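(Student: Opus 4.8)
The plan is to expand the objective by the binomial theorem — since $p$ is even, $\sum_{i,j}(A_{ij}-u_iv_j)^p=\sum_{q=0}^{p}\binom{p}{q}(-1)^{p-q}\sum_{i,j}A_{ij}^q(u_iv_j)^{p-q}$ — and to bound the rounding error of each of the $p+1$ terms separately. Two of them cost nothing. The term $q=p$ is the constant $\sum_{ij}A_{ij}^p$, so it has no rounding error. The term $q=0$ equals $\sum_{ij}u_i^pv_j^p=(\sum_iu_i^p)(\sum_jv_j^p)$, and here I would use the linear constraints of the LP: testing $\sum_i\pE_\mu r\cdot u_i^p=\pE_\mu r\cdot s_u$ against all degree-$\le p$ monomials $r$ in the $v$'s, together with the $v$-constraint at $r=1$, forces $\pE_\mu\sum_{ij}u_i^pv_j^p=s_us_v$. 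These constraints survive conditioning on any set $\calT$ of coordinates (conditioning on $u_i=\hat u_i$ for $i\in\calT$ only multiplies the test monomial by an indicator, which is again an admissible test polynomial given the degree budget), and they are trivially inherited by the product pseudo-distribution $\mu_\calT^{\otimes}$, which has the same one-coordinate marginals; hence $\pE_{\mu_\calT}\sum_{ij}u_i^pv_j^p=\pE_{\mu_\calT^{\otimes}}\sum_{ij}u_i^pv_j^p=s_us_v$, so this term contributes $0$ to the rounding error. It therefore remains to handle $1\le q\le p-1$.

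For those terms I would invoke the machinery of Section~\ref{sec:poly-growth-potential}. For $1\le q<p/2$, Lemma~\ref{lem:poly-potential-aligned} makes $\sum_{ij}A_{ij}^q(u_iv_j)^{p-q}$ potential aligned with respect to $\Phi_1=\sum_jH(\widetilde v_j)\pE_\mu v_j^p$; for $p/2\le q\le p-1$, Lemma~\ref{lem:low-deg-poly-aligned} makes it potential aligned with respect to $\Phi_{2,q}=\sum_j\Var(v_j^{p-q})^{p/(2(p-q))}$. The hypotheses of both lemmas follow from the LP constraints: $\sum_i\pE_\mu u_i^p=s_u\le 2^p(\sum_{ij}A_{ij}^p)^{1/2}$, $\sum_j\pE_\mu v_j^p=s_v\le 2^p(\sum_{ij}A_{ij}^p)^{1/2}$, and $\sum_{ij}(\pE u_i^p)(\pE v_j^p)=s_us_v\le 2^{2p}\sum_{ij}A_{ij}^p$, which match the stated conditions up to a factor depending only on $p$ (harmless, absorbed into the $\poly_p$ and $O_p$ factors), and these bounds persist throughout the support of the conditioned pseudo-distributions because conditioning preserves the constraints. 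I would then feed the family $\{\binom{p}{q}\sum_{ij}A_{ij}^q(u_iv_j)^{p-q}\}_{1\le q\le p-1}$ into Lemma~\ref{lem:gcr_linear_combo}, choosing the internal accuracy parameter in Lemmas~\ref{lem:poly-potential-aligned} and~\ref{lem:low-deg-poly-aligned} so that each term is $\bigl(\frac{\epsilon}{p-1}\sum_{ij}A_{ij}^p,\ \delta_q\bigr)$-potential aligned with $\delta_q$ of order $\poly_p(\epsilon)(\sum_{ij}A_{ij}^p)^{1/2}/(\log n)^{O(1)}$. Lemma~\ref{lem:gcr_linear_combo} then produces a set $\calT$ with $|\calT|\le\sum_q\Phi_q(\mu)/\delta_q$ after conditioning on which the combined error of these $p-1$ terms is $\le\epsilon\sum_{ij}A_{ij}^p$; adding the zero contributions from $q\in\{0,p\}$ gives the claimed bound. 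Using the potential bounds of Lemma~\ref{lem:lra-potentials-bounded}, namely $\Phi_1(\mu)=O_p(\log(n/\epsilon))(\sum_{ij}A_{ij}^p)^{1/2}$ and $\Phi_{2,q}(\mu)\le(\sum_{ij}A_{ij}^p)^{1/2}$, the $(\sum_{ij}A_{ij}^p)^{1/2}$ factors cancel and $|\calT|=O_p(\log^2 n/\poly_p(\epsilon))$, which is comfortably within the degree budget of Algorithm~\ref{algo:lra} since each conditioning costs $O(1)$ degree while the objective and constraint tests have degree $O_p(1)$.

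The main obstacle is a normalization issue: Lemma~\ref{lem:poly-potential-aligned} only makes a term $(\delta,\delta')$-potential aligned with $\delta$ proportional to $(\sum_{ij}A_{ij}^p)^{1/2}$ rather than to $\sum_{ij}A_{ij}^p$, so forcing $\delta\le\frac{\epsilon}{p-1}\sum_{ij}A_{ij}^p$ without paying for it in the degree (which must stay polynomial in $B$ in the base) requires first rescaling $A$ so that $\|A\|_p^p=1$. This is without loss of generality: $\OPT_{\mathrm{LRA}}$ and $\|A\|_p^p$ both scale like $\lambda^p$ under $A\mapsto\lambda A$, so the additive guarantee is scale-invariant, and the grids $\calS,\Sigma$ in Algorithm~\ref{algo:lra} are defined relative to $(\sum_{ij}A_{ij}^p)^{1/2}$ and hence scale-equivariant; the degenerate case $A=0$ is trivial. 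A secondary point to get right is that $\Phi_1$ and $\Phi_{2,q}$ are genuine pseudo-distribution potentials in the sense of Definition~\ref{def:pseudo-distribution-potential} — nonnegativity is immediate, and monotonicity under conditioning follows from subadditivity of entropy for $\Phi_1$ and from the law of total variance together with Fact~\ref{fact:decr-of-powers} for $\Phi_{2,q}$ — and that the coordinates in $\calT$ produced by Lemma~\ref{lem:gcr_linear_combo} are $u$-coordinates, consistent with the proofs of Lemmas~\ref{lem:poly-potential-aligned} and~\ref{lem:low-deg-poly-aligned} (where the conditioned index is drawn from $[n]$) and with the conditioning step of Algorithm~\ref{algo:lra}; by the $u\leftrightarrow v$ symmetry of the problem one could equally take $\calT$ among the $v$-coordinates.
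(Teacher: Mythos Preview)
Your approach is essentially identical to the paper's: expand by the binomial theorem, observe that the $q=p$ term is constant and the $q=0$ term equals $s_us_v$ under both $\mu_\calT$ and $\mu_\calT^{\otimes}$ via the LP constraints, then invoke Lemmas~\ref{lem:poly-potential-aligned} and~\ref{lem:low-deg-poly-aligned} for the remaining terms and combine them through Lemma~\ref{lem:gcr_linear_combo}, with the potential bounds of Lemma~\ref{lem:lra-potentials-bounded} giving $|\calT|=O_p(\log^2 n/\poly_p(\epsilon))$. You are in fact more careful than the paper in two places---you spell out why the constraints survive conditioning and pass to $\mu_\calT^{\otimes}$, and you flag (and correctly resolve via scale-invariance) the mismatch between the $(\sum A_{ij}^p)^{1/2}$ normalization in Lemma~\ref{lem:poly-potential-aligned} and the $\sum A_{ij}^p$ normalization needed here, which the paper's proof glosses over.
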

\begin{proof}
    Note that we can rewrite our objective function as follows:
    \[ \sum_{i, j \in [m]} \left(A_{ij} - u_i v_j \right)^p = \sum_{0 \leq q \leq p} \sum_{i \in [n], j \in [m]} \binom{p}{k} A_{ij}^q (u_i v_j)^{p-q}\,.\]
    When $q = 0$ then we have that
    \[ \sum_{i, j \in [m]} \pE_{\mu} u_i^p v_j^p = \sum_{i \in [n]} \pE_\mu u_i^p \left( \norm{v}_p^p\right) = \sum_{i \in [n]} s_v \pE_\mu u_i^p  = s_v \cdot s_u\,.\]
    Note that this is also the pseudoexpectation of this term under independent rounding, so the contribution to the objective value of this term is always preserved in expectation under independent rounding. Furthermore, by~\cref{lem:lra-potentials-bounded},~\cref{lem:poly-potential-aligned}, and~\cref{lem:low-deg-poly-aligned} we have that all the remaining terms are $\left(\epsilon \left( \sum_{i, j \in [m]} A_{ij}^p\right), \frac{\poly_p(\epsilon)}{\log n} \cdot \left( \sum_{i, j \in [m]} A_{ij}^p \right)\right)$-potential aligned and thus by~\cref{lem:gcr_linear_combo} it suffices to take 
    \[ \vert \calT \vert \leq O_p\left(\log^2 (n)  /\poly_p(\epsilon) \right)\,.\]
\end{proof}

We can now prove \cref{thm:lra} similar to how we proved \cref{thm:raw-stress-main}, and note that the running time is dominated by computing a pseudo-distribution of degree $\bigO{\log^2 n /\poly_p(\epsilon)}$.

\printbibliography

\appendix

\section{Euclidean Metric Violation Runtime Speedup}
\label{sec:emv-speedup}

In this section, we will show how to use the rounding procedure described in~\cref{sec:rs} for complete EMV via a straightforward application of the fast LP solver in~\cite{gs12fastsdp}. Specifically, we will complete the proof of the following theorem from~\cref{sec:rs}:
\begin{theorem}[Additive Approximation Scheme for \textsf{$k$-EMV}]
\label{thm:raw-stress-speedup}
Given a set of distances $\calD = \Set{d_{i,j}}_{i,j \in [n]}$, $k \in \mathbb{N}$, where each $d_{ij}$ can be represented in $B$ bits, and $0<\epsilon <1$, there exists an algorithm that runs in $ B^{\mathcal{O}( k^2\log(k/\eps)/\eps^4)} \poly(n)$ time and outputs points $\Set{ \hat{x}_1, \ldots, \hat{x}_n }_{i \in [n]} \in \R^k$ such that with probability at least $0.99$, 
\begin{equation*}
    \E_{i,j\sim[n]} \Paren{ d_{i,j} - \norm{ \hat{x}_i - \hat{x}_j }_2 }^2 \leq \OPT_{\textrm{EMV}} + \eps  \cdot \E_{i,j\sim[n]} d_{i,j}^2. 
\end{equation*}
\end{theorem}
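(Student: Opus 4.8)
The plan is to re-use the algorithm and correctness analysis of~\cref{sec:rs} essentially unchanged, replacing only the subroutine that produces the relaxation (and, if convenient, running everything with the Sum-of-Squares relaxation, which is only stronger and which the analysis of~\cref{sec:rs} supports verbatim). Recall that \cref{lem:discretization-rs}, \cref{lem:rs-pe-low-value} and \cref{lem:rs-gcr-error} already show that for the right anchor $a\in[n]$ and conditioning set $\calT\subseteq[n]$ with $|\calT|=\mathcal{O}(k\log(k/\eps)/\eps^4)$, independently sampling each $\hat x_i$ from the one-point marginals of the anchored, $\calT$-conditioned pseudo-distribution yields an embedding of expected value $\OPT_{\textrm{EMV}}+\mathcal{O}(\eps)\cdot\E_{ij}d_{ij}^2$, hence of that value with probability $0.99$ after rescaling $\eps$ and outputting the best embedding over all $(a,\calT)$. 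The point is that this analysis only queries the relaxation for (i) its objective value, (ii) the joint marginal on the $\le|\calT|$ seeded points, and (iii) the one- and two-point marginals after conditioning, and that by~\cref{lem:discretization-rs} (with $\Delta\le 2^B$) the alphabet has size $|\Sigma|=\mathcal{O}(\eps^{-1}\log(\Delta k/\eps))\le\poly(B,1/\eps)$. So the only step in~\cref{algo:raw-stress} costing more than $\poly(n)$ is solving the degree-$\mathcal{O}(k^2\log(k/\eps)/\eps^4)$ relaxation over $(\Sigma^k)^n$, which solved directly costs $(n|\Sigma|)^{\mathcal{O}(k^2\log(k/\eps)/\eps^4)}$; the goal is to bring this down to $\poly(n)\cdot B^{\mathcal{O}(k^2\log(k/\eps)/\eps^4)}$.

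To do this I would invoke the fast hierarchy solver of Guruswami and Sinop~\cite{gs12fastsdp}, after observing that our procedure is a \emph{local rounding scheme} in their sense: it conditions on only $r=\mathcal{O}(k\log(k/\eps)/\eps^4)$ of the $x$-variables (i.e.\ $kr$ scalar coordinates), then inspects only one- and two-point marginals, and --- via~\cref{lem:gcr_linear_combo} together with the potential-alignment of each term of the objective (\cref{lem:lipschitz-potential-aligned}, \cref{fact:deg-2-potential-aligned}) --- the coordinate conditioned on at each step is selected from a constant-degree certificate rather than from the full relaxation. Their framework then lets one solve only a constant-degree relaxation once (at cost $\poly(n,|\Sigma|)$), update the associated moment matrix under each conditioning by linear algebra, and pick the seed $\calT$ by greedy column selection, so that the only super-$\poly(n)$ cost is the search over the $|\Sigma|^{kr}$ value-assignments to the seeded coordinates (plus the fixed assignment $x_a=0$ and the $n$ choices of anchor). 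Since $\log|\Sigma|=\mathcal{O}(\log B+\log(1/\eps))$ and $kr=\mathcal{O}(k^2\log(k/\eps)/\eps^4)$, this is $\poly(n)\cdot B^{\mathcal{O}(k^2\log(k/\eps)/\eps^4)}$. The errors incurred by solving the relaxation only approximately are polynomially small in the bit complexity and, exactly as in~\cref{rmk:tedium}, are dominated by the $\eps\,\E_{ij}d_{ij}^2$ sampling error.

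The main obstacle I anticipate is checking that the framework of~\cite{gs12fastsdp} --- designed for \emph{maximizing} quadratic/CSP objectives under the average-variance potential over a Boolean alphabet --- accommodates the present setting: a \emph{minimization} objective; the combination, via~\cref{lem:gcr_linear_combo}, of the truncated-entropy potential $\E_i H(\widetilde x_i)$ used for the Lipschitz term $d_{ij}\norm{x_i-x_j}$ with the potential $\E_i\tr(\widetilde{\Sigma}_i)$ used for the bilinear term; a large but polynomial-size discretized alphabet $\Sigma\subseteq\R^k$; and the extra anchoring constraint $x_a=0$. I expect each of these to be bookkeeping rather than a genuinely new idea --- the anchor merely pins one variable and induces the outer loop over $a$, the discrete alphabet only replaces base-$2$ propagation factors by base-$|\Sigma|$ ones, and the potential-alignment lemmas of~\cref{sec:rs} are precisely what supplies the low-degree rule for choosing the next coordinate to condition on. The one place that genuinely needs care is verifying that this greedy propagation can be carried out using only a constant-degree relaxation, i.e.\ without ever forming the degree-$\Theta(kr)$ relaxation whose direct solution is the slow step we are trying to avoid.
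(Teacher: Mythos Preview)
Your high-level plan---reuse the analysis of \cref{sec:rs} and invoke the Guruswami--Sinop fast-solver framework~\cite{gs12fastsdp}---is exactly what the paper does. The difference is in how the seed set $\calT$ is chosen, and this difference is what resolves the obstacle you flag at the end.

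You propose to pick $\calT$ by \emph{greedy column selection}, i.e.\ have \textsc{Seed} inspect the current partial LP solution and choose the next coordinate via a ``low-degree certificate,'' and you correctly identify that making this work for the combined truncated-entropy/variance potential is the crux. The paper sidesteps this entirely: it observes that the potential-alignment proofs already established in \cref{sec:rs} (specifically the proofs of \cref{lem:lipschitz-potential-aligned} and \cref{fact:deg-2-potential-aligned}) show that a \emph{random} index---uniform for the bilinear term $\langle x_i,x_j\rangle$, and proportional to $\E_{j}d_{ij}^2$ for the Lipschitz term $d_{ij}\norm{x_i-x_j}$---achieves the required potential drop in expectation. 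Both sampling distributions depend only on the input $\{d_{ij}\}$, not on the LP. Hence \textsc{Seed} can simply draw $\calT=S_1\cup S_2$ once up front, with $S_1$ sampled proportional to $\E_j d_{ij}^2$ and $S_2$ uniform, and then request only those size-$(|\calT|+2)$ local distributions needed to condition and round. This yields $O(1)$ \textsc{Seed} rounds with rank ratio $s=\poly(n)\cdot|\Sigma|^{k|\calT|}$, after which \cref{thm:gs-speedup} gives the stated running time directly.

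So your plan is correct in spirit but takes a harder route: the greedy-selection verification you worry about is unnecessary once you notice that random seeding already suffices, and this is precisely what the potential-alignment arguments of \cref{sec:rs} buy you.
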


Before we prove~\cref{thm:raw-stress-speedup}, we will first give an overview of the tools from~\cite{gs12fastsdp} which we will use to improve the runtime of~\cref{thm:raw-stress-main}. Guraswami and Sinop give an algorithm for matching the guarantees of rounding algorithms that fall under the following ``local rounding algorithm'' structure (see section 2 of~\cite{gs12fastsdp}). A local rounding algorithm has the following form and is determined by two procedures, $\textsc{SEED}$ and $\textsc{FEASIBLE}$\footnote{In~\cite{gs12fastsdp} they require that $\textsc{SEED}$ and $\textsc{FEASIBLE}$ are deterministic procedures for simplicity but state that their work also applies to randomized $\textsc{SEED}$ procedures. We will use a randomized $\textsc{SEED}$ procedure in our application.}:
\begin{mdframed}
  \begin{algorithm}[Local Rounding Algorithm]
    \label{algo:local-rounding}\mbox{}
    \begin{description}
    \item[Input:] $x \in K_N$ where $K_N \subseteq \mathbb{R}^N$ is some convex body 
    
    \item[Operations:]\mbox{}
    \begin{enumerate}
        \item Let $S(0) \subseteq [N]$ be the initial solution fragment and $y(0) \leftarrow x_{S(0)}$ be the induced solution on those coordinates.
        \item For $i \in [\ell]$:
        \begin{enumerate}
            \item Fail if $\textsc{FEASIBLE}(y(i))$ asserts infeasible.
            \item If $i < \ell$ read more of the solution by setting $S(i+1) \leftarrow \textsc{SEED}(y(i))$ and $y(i+1) \leftarrow x_{S(i+1)}$.
        \end{enumerate}
        \item Perform rounding using only $S(\ell)$ and $y(\ell)$.
    \end{enumerate}
    \end{description}
  \end{algorithm}
\end{mdframed}

Note that the algorithm as stated in~\cref{algo:raw-stress} reads the entire solution in enumerating over all possible seed sets of size $k^2 \log(k/\eps)/\eps^4$. However, we note that this brute force method of determining the correct seed set is unnecessary, and we will overview later in this section how a good seed set can be sampled from a distribution which only depends on the input distances $\calD$.

Guraswami and Sinop give the following speedup for algorithms based on local rounding algorithms:
\begin{theorem}[Fast Solver for Local Rounding (Theorem 19 of~\cite{gs12fastsdp})]
\label{thm:gs-speedup}
    Let $N \in \mathbb{N}^{>0}$ and let $\{\Pi_S\}_{S \subseteq [N]}$ be a set of subspaces, represented by their projection matrices and
    associated with subsets of $[N]$. Furthermore, with each subspace $\Pi_S$, let $K_S \subseteq \Pi_S[0,1]^N$ be an associated convex body such that
    \[ \Pi_S \subseteq \Pi_T \implies \Pi_S K_T \subseteq K_S \,.\]
    Let $n, s$ be positive integers and let \textsc{Feasible}, \textsc{Round} and \textsc{Seed} be functions such that:
    \begin{itemize}
        \item \textbf{\textsc{Feasible}}$_s : \Pi_S \mathbb{Q}^N \to \{\text{feasible}, \Pi_S \mathbb{Q}^N\}$. 
        On input $S \subseteq N, y \in \Pi_S \mathbb{Q}^N$, it asserts feasible if $y \in K_S$ or returns 
        $c \in \Pi_S \mathbb{Q}^N : \|c\|_\infty = 1$ such that $\forall x \in K_S : \langle c, x \rangle < \langle c, y \rangle$
        in time $\mathrm{poly}(\mathrm{rank}(\Pi_S))$.
    
        \item \textbf{\textsc{Seed}}$_s : K_S \to 2^N$. 
        Given $S \subseteq [N]$ and $y \in \Pi_S \mathbb{Q}^N$, it returns subset $S' \supset S$ such that 
        $\frac{\mathrm{rank}(\Pi_{S'})}{\mathrm{rank}(\Pi_S)} \le s$ when $S \neq \emptyset$, and 
        $\mathrm{rank}(\Pi_{S'}) \le n$ when $S = \emptyset$. Its worst case running time is bounded by 
        $\mathrm{poly}(\mathrm{rank}(\Pi_S))$ (or $\mathrm{poly}(n)$ in the case of $S = \emptyset$).

        \item \textbf{\textsc{Round}}$_S : K_S \to L^{[n]}$. 
        On inputs $S \subseteq N$ and $y \in K_S$, returns an approximation to the original problem in time 
        $\mathrm{poly}(\mathrm{rank}(\Pi_S))$.
    \end{itemize}

    Then there exists an algorithm which runs in time $[s^\ell n \log(1/\varepsilon_0)]^{O(\ell)}$ 
    (\emph{compare this with the straightforward algorithm which runs in time $N^{O(1)} \log(1/\varepsilon_0)$}) 
    and achieves the following guarantee: Provided that $\mathrm{vol}(K) \ge \varepsilon_0$, it outputs $y^* \in K_{S(\ell)}$ 
    and $S(0),\ldots, S(\ell)$ set for all $i$:
    \begin{align}
        \Pi_{S(i)} y^* &\in K_{S(i)}, \tag{10} \\
        S(i+1) &= \textsc{Seed}_{S(i)}(y^*). \tag{11}
    \end{align}
    Otherwise it asserts $\mathrm{vol}(K) < \varepsilon_0$.
\end{theorem}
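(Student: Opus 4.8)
The plan is to prove this by running the ellipsoid method, but only ever over the low-dimensional subspace that the local rounding procedure actually inspects, rather than over all of $\R^N$. The first step is a crude but crucial dimension bound: the chain $S(0)\subseteq S(1)\subseteq\cdots\subseteq S(\ell)$ produced by the recursion satisfies $\mathrm{rank}(\Pi_{S(1)})\le n$ and $\mathrm{rank}(\Pi_{S(i+1)})\le s\cdot\mathrm{rank}(\Pi_{S(i)})$ by the contract imposed on \textsc{Seed}, so $\mathrm{rank}(\Pi_{S(\ell)})\le s^\ell n =: R$. Hence every object the algorithm must report — the point $y^*$ restricted to $S(\ell)$, and all intermediate sets — lives in an $R$-dimensional space, with $R$ independent of $N$. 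Moreover, by the nesting hypothesis $\Pi_S\subseteq\Pi_T\Rightarrow\Pi_S K_T\subseteq K_S$, the requirements $\Pi_{S(i)}y^*\in K_{S(i)}$ for $i<\ell$ are automatically implied by $y^*\in K_{S(\ell)}$, so it suffices to produce a single $y^*\in K_{S(\ell)}$ whose \textsc{Seed} recursion regenerates exactly the chosen chain.

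The core difficulty — and the reason the bound carries an $O(\ell)$ in the exponent — is circularity: $S(\ell)$ is not known in advance, it is defined by applying \textsc{Seed} to $y^*$, while $y^*$ is sought inside $K_{S(\ell)}$. I would resolve this with a staged search that grows the working subspace in $\ell$ rounds. In round $i$ we have committed to $S(0),\ldots,S(i)$ and a partial assignment $\bar y$ fixing the $\Pi_{S(i)}$-coordinates, and we run the ellipsoid method inside the affine slice $\{z\in\Pi_{S(i+1)}\R^N : \Pi_{S(i)}z=\bar y\}$, a space of dimension at most $\mathrm{rank}(\Pi_{S(i+1)})\le s^{i+1}n\le R$, using \textsc{Feasible} as the separation oracle (each call costs $\mathrm{poly}(R)$ and returns a hyperplane supported on $\Pi_{S(i+1)}$). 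If the ellipsoid finds a feasible $z$, we set $S(i+2)=\textsc{Seed}_{S(i+1)}(z)$, freeze $z$, and recurse into round $i+1$; if instead it certifies that the slice has volume below the round-$i$ threshold, we backtrack and resume the round-$(i{-}1)$ ellipsoid from exactly where it left off.

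The termination and volume bookkeeping is the delicate part. The plan is to propagate the hypothesis $\mathrm{vol}(K)\ge\varepsilon_0$ down the chain with a threshold schedule $\varepsilon_{i+1}:=\varepsilon_i/\mathrm{poly}(R)$: show that if the reduced body at round $i$ has volume $\ge\varepsilon_i$, then the round-$i$ ellipsoid either finds a feasible point whose round-$(i{+}1)$ slice has volume $\ge\varepsilon_{i+1}$, or correctly certifies $\varepsilon_i$-smallness; after at most $\ell$ rounds one still has $\varepsilon_\ell\ge\varepsilon_0/\mathrm{poly}(R)^{\ell}$, which is large enough that the final ellipsoid either succeeds or correctly reports $\mathrm{vol}(K)<\varepsilon_0$. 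Each ellipsoid run executes $\mathrm{poly}(R)\log(1/\varepsilon_i)=\mathrm{poly}(R,\ell)\log(1/\varepsilon_0)$ iterations at $\mathrm{poly}(R)$ cost each; since a single round-$i$ iteration may trigger the entire round-$(i{+}1)$-through-$\ell$ subcomputation before we backtrack, the work multiplies geometrically across the $\ell$ rounds, yielding total time $\bigl[s^\ell n\log(1/\varepsilon_0)\bigr]^{O(\ell)}$.

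The main obstacle I anticipate is making the interaction between staged subspace growth and the ellipsoid's volume certificates watertight: one must guarantee that a ``slice too thin'' verdict at round $i{+}1$ never permanently discards a $y^*$ that would have worked, which forces the backtracking to return to the exact ellipsoid state of the earlier round and forces the schedule $\{\varepsilon_i\}$ to be chosen so a genuine $\varepsilon_0$-volume solution survives all $\ell$ restrictions. A secondary point to verify is that the separating hyperplanes returned by \textsc{Feasible}, which by hypothesis are supported on the small subspace $\Pi_S$, remain valid cutting planes for the ellipsoid running in the (possibly strictly larger, still low-dimensional) affine slice — and this is precisely where the nesting condition $\Pi_S K_T\subseteq K_S$ is invoked a second time.
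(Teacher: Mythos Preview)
The paper does not prove this theorem; it is quoted verbatim as Theorem~19 of \cite{gs12fastsdp} and used as a black box in the appendix to speed up Algorithm~\ref{algo:raw-stress}. There is therefore no ``paper's own proof'' to compare your proposal against.

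For what it is worth, your sketch is in the right spirit for how Guruswami and Sinop actually prove the result: the dimension bound $\mathrm{rank}(\Pi_{S(\ell)})\le s^\ell n$ is exactly the starting observation, and the core mechanism is indeed to run the ellipsoid method only over the low-dimensional subspace that the local rounding procedure inspects, growing that subspace in $\ell$ stages dictated by \textsc{Seed} and using \textsc{Feasible} as the separation oracle. The recursive/backtracking structure you describe, where a failed deeper stage resumes the ellipsoid at the previous stage, is also how the circularity between $S(\ell)$ and $y^*$ is resolved. The volume-schedule bookkeeping you flag as the main obstacle is real and is where most of the care in the original argument goes; your outline does not yet pin down why a ``too thin'' verdict at stage $i{+}1$ translates into a valid separating hyperplane back at stage $i$, which is the step that makes the backtracking sound rather than merely heuristic.
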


Before we proceed to the proof of~\cref{thm:raw-stress-speedup}, we will first give a version of the rounding algorithm in~\cref{sec:rs} which follows the local rounding algorithm framework.

\begin{mdframed}
  \begin{algorithm}[Local Rounding Algorithm for $k$-$\mathrm{EMV}$]
    \label{algo:local-emv}\mbox{}
    \begin{description}
    \item[Input:] $\{\mu_S\}_{\vert S \vert \leq 2k^2 \log (k/\epsilon) /\epsilon^4 + 2}$ be local distributions over subsets of $\{x_i\}_{i \in [n]}$
    
    \item[Operations:]\mbox{}
    \begin{enumerate}
        \item Let $S_1$ be a subset of size $k^2 \log (k/\epsilon) /\epsilon^4$ where $i$ is sampled proportional to $\E_{j \sim [n]} d_{ij}^2$ and let $S_2$ be a uniformly random subset of size $k^2 \log (k/\epsilon) /\epsilon^4$. 
        \item Read the distribution of all $(2k^2 \log (k/\epsilon) /\epsilon^4 + 2)$-sized subsets including $S_1 \cup S_2$, and ensure that they are consistent on the local distribution of $S_1 \cup S_2$ and that the LP constraints in~\cref{algo:raw-stress} are satisfied.
        \item Sample an assignment to the $x_i \in S_1 \cup S_2$ from the local distribution and condition on the assignment.
        \item Using only the prior local distributions viewed in step 2, assign all other variables by sampling them independently from their marginal distributions. 
    \end{enumerate}
    \end{description}
  \end{algorithm}
\end{mdframed}

We now return to the proof of~\cref{thm:raw-stress-speedup}. 

\begin{proof}[Proof of~\cref{thm:raw-stress-speedup}]
    We will explain how to view the rounding algorithm in~\cref{sec:rs} as a local rounding algorithm and then apply~\cref{thm:gs-speedup}. First, we note that rather than brute forcing over all possible seed sets, it instead suffices to sample the seed set from a distribution that is independent of the LP solution. Specifically, in the analysis of the rounding algorithm, we show that a good point to condition on exists via showing that a random point suffices. 
    
    Recall that we separately analyzed the rounding error on different parts of the objective function via decomposing it into a sum. We can write the $k$-$\mathrm{EMV}$ objective function as follows:
    \[ \E_{i,j \sim [n]} d_{ij}^2 - 2 d_{ij} \norm{x_i-x_j} + x_i^2 + x_j^2 - 2x_i x_j\,.\]
    Note that only two terms in this sum contribute to the rounding error. To ensure rounding error is small on the $-2x_ix_j$ portion of the objective it suffices to sample a uniformly random subset to condition on. To ensure rounding error is small on the $-2d_{ij} \norm{x_i - x_j}$ portion of the objective function, it suffices to sample a subset where $i$ is sampled proportional to $\E_{j \sim [n]} d_{ij}^2$. If we select our set to condition on randomly via sampling from the above distributions, then we achieve the same rounding error guarantees of~\cref{thm:raw-stress-main}. Thus the rounding in~\cref{algo:local-emv} also produces a solution of cost $\OPT_{\mathrm{EMV}} + \eps \E_{i,j \sim [n]} d_{ij}^2$ with probability $0.99$.

    Thus, we can achieve the desired guarantees with an algorithm with $O(1)$ rounds of \textsc{SEED} where we look at at most $n^2$ local distributions, each of size $2k^2 \log (k/\epsilon) /\epsilon^4 + 2$. Since each $x_i$ takes on values in $\Sigma^k$, where $\vert \Sigma \vert \leq \bigO{\epsilon \cdot \log (\Delta k /\epsilon)}$, our seed procedure satisfies $\frac{\mathrm{rank}(\Pi_{S'})}{\mathrm{rank}(\Pi_S)} \le O \left(n^2 \cdot \left(\epsilon \cdot \log (\Delta k /\epsilon)\right)^{k^2 \log (k/\epsilon) /\epsilon^4}\right)$.

    For \textsc{FEASIBLE}, we simply check that the local distributions on subsets read so far are all consistent and that the size $2$ local distributions satisfy the constraints on the LP in~\cref{algo:raw-stress}. These are all linear inequality constraints, so there exists an efficient separation oracle.

    For \textsc{ROUND}, we look at the local distributions of size $1$ and round them to an assignment. This can be done in $\poly(n, B, k, 1/\eps)$ time and only requires the subsets read so far, since we have access to the local distributions of size $1$ from the \textsc{SEED} procedure described above.
    
    Thus, we can apply~\cref{thm:gs-speedup} to complete the proof of~\cref{thm:raw-stress-speedup}.
\end{proof}

\end{document}